\newtheorem{thm}{Theorem}
\newtheorem{rem}{Remark}
\newtheorem{cor}{Corollary}
\newtheorem{lem}{Lemma}
\newtheorem{dfn}{Definition}
\def\mb{\mathbf}
\def\opn{\operatorname*}
\def\bs{\boldsymbol}
\def\ds{\mathds}
\def\mc{\mathcal}
\def\ss#1{{\sf #1}}
\def\esc#1{\textrm{#1}}
\def\resc#1{\uppercase{#1}}
\def\vec#1{\mb{#1}}
\def\rvec#1{\bs{\uppercase{#1}}} 
\def\rvecr#1{\bs{\lowercase{#1}}} 
\def\vecs#1{\bs{#1}}
\def\rvecIndc#1{\rvec{#1}^{\textrm{ind cond.}}}
\def\mat#1{\mb{\uppercase{#1}}}
\def\mats#1{\bs{#1}}
\def\R{\ds{R}}
\def\SymM#1{\ds{S}^{#1}}
\def\PSD#1{\SymM{#1}_+}
\def\N#1#2{\mc{N}\left(#1,#2\right)}
\def\NZ#1{\N{\vecs{0}}{#1}}
\def\pdffun{P}
\def\pdf#1#2{\pdffun_{\rvec{#1}}(#2)}
\def\Parg#1{\pdffun_{\rvec{#1}}}
\def\EspOp{\ss{E}}
\newcommand{\Esp}[2][5]{%
  \ifcase#1
     \EspOp\{ #2 \}
     \or \EspOp \bigl\{ #2 \bigr\}
     \or \EspOp \Bigl\{ #2 \Bigr\}
     \or \EspOp \biggl\{ #2 \biggr\}
     \or \EspOp \Biggl\{ #2 \Biggr\}
  \else
     \EspOp \left\{ #2  \right\}
\fi}
\newcommand{\Earg}[3][5]{%
  \ifcase#1
     \EspOp_{#3} \{ #2 \}
     \or \EspOp_{#3} \bigl\{ #2 \bigr\}
     \or \EspOp_{#3} \Bigl\{ #2 \Bigr\}
     \or \EspOp_{#3} \biggl\{ #2 \biggr\}
     \or \EspOp_{#3} \Biggl\{ #2 \Biggr\}
  \else
     \EspOp_{#3} \left\{ #2  \right\}
\fi}
\newcommand{\CEsp}[3][5]{%
  \ifcase#1
     \EspOp\{ #2 \mid #3 \}
     \or \EspOp \bigl\{ #2 \bigm\vert #3 \bigr\}
     \or \EspOp \Bigl\{ #2 \Bigm\vert #3 \Bigr\}
     \or \EspOp \biggl\{ #2 \biggm\vert #3 \biggr\}
     \or \EspOp \Biggl\{ #2 \Biggm\vert #3 \Biggr\}
  \else
     \EspOp \left\{ #2  \,\middle\vert\, #3 \right\}
\fi}
\newcommand{\Diag}[2][5]{%
  \ifcase#1
     \mb{Diag}( #2 )
     \or \mb{Diag} \bigl( #2 \bigr)
     \or \mb{Diag} \Bigl( #2 \Bigr)
     \or \mb{Diag} \biggl( #2 \biggr)
     \or \mb{Diag} \Biggl( #2 \Biggr)
  \else
     \mb{Diag} \left( #2  \right)
\fi}
\newcommand{\diag}[2][5]{%
  \ifcase#1
     \mb{diag}( #2 )
     \or \mb{diag} \bigl( #2 \bigr)
     \or \mb{diag} \Bigl( #2 \Bigr)
     \or \mb{diag} \biggl( #2 \biggr)
     \or \mb{diag} \Biggl( #2 \Biggr)
  \else
     \mb{diag} \left( #2  \right)
\fi}
\def\Jacob{{\ss D}}
\def\Tr{\ss{Tr}}
\def\exp{\ss{exp}}
\def\log{\ss{log}}
\def\vecop{\ss{vec}}
\def\pinv{\textrm{\footnotesize{+}}}
\def\d{\opn{d}\!}
\def\T{\ss{T}}
\def\nume{\esc{e}}
\def\ones{\mats{1}}
\def\ie{i.e.}
\def\iid{i.i.d. }
\def\etal{et al. }
\def\EM{\mat{E}}
\def\MSE#1{\EM_{\rvec{#1}}}
\def\Sym#1{\mat{N}_{#1}}
\def\Dup#1{\mat{D}_{#1}}
\def\dim{n}
\def\I{I} 
\def\Ent{h} 
\def\mmse{\ss{mmse}}
\def\scalart{\gamma}
\def\f{f}
\def\ft{\f(\scalart)}
\def\q{q}
\def\CM{\mats{\Phi}}
\def\EM{\mat{E}}
\def\MSE#1{\EM_{\rvec{#1}}}
\def\CMSE#1#2{\CM_{\rvec{#1}}(\rvec{#2})}
\def\CMSEr#1#2{\CM_{\rvec{#1}}(\rvecr{#2})}
\def\JM{\mat{J}}
\def\J#1{\JM_{\rvec{#1}}}
\def\vectort{t}
\def\vectortEq{t_e}
\def\vectorT{\vecs{\phi}}
\def\Q{\mat{Q}}
\def \pQ{ \widetilde{\Q} }
\def\Qt{\Q(\vectort)}
\def\QT{\Q(\vectorT)}
\def\BQ{\ss{d}_i}
\def\BQsum{\ss{d}}
\def\Igen#1#2{ \I \left(  #1;#2 \right)}
\def\Icond#1#2#3{ \I \left( #1;#2 | #3 \right)}
\def\timeIn{m}
\def\Power{P}
\def\Identity{\mat{I}}
\def\rate{\ss{R}}
\def\MSElin#1{\MSE{#1}^{\ss{lin}}}
\def\Fisher{ \mat{J}}
\def\Gradient{ \nabla }
\def\QFisher{ \mat{W} }
\def\HdC#1#2{ \ss{h} \left( #1  | #2 \right)}
\def\CovMat{\mat{R}}
\def\Cov#1{\CovMat_{\rvec{#1}}}
\def\egvaliCov#1#2{\lambda_{\rvec{#1}, #2}}
\def\Chan{\mat{H}}
\def\ChanDChan{\mat{B}}
\def\snr{\ss{snr}}
\begin{document}

\title{On MMSE Properties and I-MMSE Implications in Parallel MIMO Gaussian
Channels}
\author{Ronit Bustin, Miquel Payar\'o, Daniel P.~Palomar, and Shlomo Shamai
(Shitz)}

\maketitle

\vspace{-1cm}

\begin{abstract}
The scalar additive Gaussian noise channel has the ``single crossing point'' property between the minimum-mean
square error (MMSE) in the estimation of the input given the channel output, assuming a Gaussian input to the
channel, and the MMSE assuming an arbitrary input. This paper extends the result to the parallel MIMO additive Gaussian
channel in three phases: i) The channel matrix is the identity matrix, and we limit the Gaussian input to a vector of Gaussian \iid elements. The ``single crossing point'' property is with respect to
the $\snr$ (as in the scalar case). ii) The channel matrix is arbitrary, the Gaussian input is limited to an independent Gaussian input. A ``single crossing point'' property is derived for each diagonal element of the MMSE
matrix. iii) The Gaussian input is allowed to be an arbitrary Gaussian random vector. A ``single crossing point'' property is derived for each eigenvalue of the MMSE matrix.

These three extensions are then translated to new information
theoretic properties on the mutual information, using the fundamental relationship between estimation theory and
information theory. The results of the last phase are also translated to a new
property of Fisher's information. Finally, the applicability of all three extensions on information theoretic problems
is demonstrated through: a proof of a special case of Shannon's vector EPI, a converse proof of the capacity region
of the parallel \emph{degraded} MIMO broadcast channel (BC) under per-antenna power constrains and under covariance
constraints, and a converse proof of the capacity region of the compound parallel \emph{degraded} MIMO BC under
covariance constraint.
\end{abstract}

%
%

\section{Introduction} \label{sec:introduction}

This paper considers parallel multiple-input multiple-output (MIMO) channels, with an arbitrary input distribution and additive standard Gaussian noise. These channels are a subset of the important family of
MIMO additive Gaussian noise channels, which have been extensively investigated in the literature.
For most Gaussian channel models studied in information theory, Gaussian signaling happens to be optimal, from point-to-point channels, to multiple-access channels (MAC), and broadcast channels (BC) \cite[Ch.~9 and 15]{CoverThomas} \cite{Hanan}.
The methods used to prove this optimality were not easy to come across, even when considering scalar Gaussian channels. For example, in order to prove that Gaussian inputs are optimal for the scalar Gaussian BC, Bergmans employed Shannon's entropy power inequality (EPI) \cite{Bergmans74}. The solution for the MIMO Gaussian BC came only 30 years later in \cite{Hanan}, using a new enhancement approach. Since then, several other proofs were derived, using different tools, such as, the extremal inequality in \cite{External}, the de Bruijn identity in coordination with Dembo's inequality in \cite{EkremUlukusMIMO},
and the ``single crossing point'' property presented by Guo \etal in \cite{PROP_full}.
The ``single crossing point'' stemmed from the I-MMSE relationship, a fundamental relationship between estimation theory and information theory revealed by Guo, Shamai, and Verd$\acute{\textrm{u}}$ in \cite{IMMSE}.

The relationship between estimation theory and information theory goes back to the late 1950's, when Stam \cite{STAM} used the de Bruijn's identity to prove Shannon's EPI, and then in the early 1970's when the mutual information was represented as a function of the causal filtering error by Duncan \cite{Duncan} and Kadota, Zakai and Ziv \cite{KadotaZivZakai}. The I-MMSE relationship, given for discrete-time and continuous-time, scalar and vector additive Gaussian noise channels, deepens the connection between these two fields. Specifically, for a scalar additive Gaussian noise channel,
\begin{gather} \label{eq:scalarModel}
 \resc{y} = \sqrt{\snr} \resc{x} + \resc{n}
\end{gather}
where $\resc{n}$ is standard Gaussian additive noise, then, regardless of the input distribution of $\resc{x}$, the mutual information, $\Igen{\resc{x}}{\resc{y}}$, and minimum-mean-square error (MMSE) in the estimation of $\resc{X}$ given the observation $\resc{Y}$, $\mmse(\resc{x}, \snr)$, are related (assuming
real-valued inputs/outputs) by
\begin{align}
\Jacob_{\snr} \Igen{\resc{x}}{\resc{y}} = \frac{1}{2} \mmse(\resc{x}, \snr)
\end{align}
where $\Jacob_{\snr}$ is the derivative with respect to $\snr$, and
\begin{gather} \label{eq:scalarMMSE}
 \mmse(\resc{x}, \snr) = \Esp[1]{ (\resc{x} - \CEsp[0]{\resc{x}}{\sqrt{\snr}
\resc{x} + \resc{n}})^2} .
\end{gather}
The work in \cite{IMMSE} has been extended in several directions, among which we have: the additive Poisson noise channel \cite{IMMSE_Poisson,WeissmanAtar_IMMSEPoisson}, the general additive noise channel \cite{IMMSE_nonGaussian}, arbitrary channels \cite{Palomar2}, representation of the relative entropy as a function of the difference between the mismatched MMSE and the matched MMSE in \cite{VerduIT,WeissmanMismatched}, and others. One important extension, on which we heavily rely, is the one done by Palomar and Verd$\acute{\textrm{u}}$ in \cite{Palomar}, where they obtain the gradient of the mutual information with respect to different parameters of the MIMO channel.

Going back to the ``single crossing point'' property, one of the goals in \cite{PROP_full} was to show the applicability of the I-MMSE relationship as a tool to solve information-theoretic problems. Specifically, the authors of \cite{PROP_full} examined the scalar Gaussian BC and gave an alternative proof for the optimality of Gaussian inputs.
In order to show this, Guo \etal defined the following function in \cite{PROP_full}:
\begin{gather} \label{eq:scalarF}
\f(\resc{x}, \scalart) = (1 + \scalart)^{-1} - \mmse(\resc{x}, \scalart)
\end{gather}
where the simplified notation $\ft$ will be used when there is no confusion about the distribution of $\resc{x}$. It was shown that $\ft$ has \emph{at most} a single crossing point of the horizontal axis. In other words, the first term, which is the MMSE assuming a standard Gaussian input, may be smaller than the second term in some range of $\snr$ values (note that the parameter $\scalart$ is the $\snr$); however, once the two terms are equal, at some $\gamma_0$, the MMSE of the standard Gaussian input remains greater than the MMSE of the arbitrary input for all $\gamma > \gamma_0$, and the function remains nonnegative.
This property together with the I-MMSE relationship, provides the missing link to derive a simple and elegant converse proof of the capacity region of the scalar Gaussian BC.

The ``single crossing point'' was derived only for the scalar additive Gaussian channel, as can be seen from the definition of the function $\ft$. The motivation of this work is to extend this property to the vector Gaussian channel.
This extension is done in three phases. First, we consider random input vectors instead of random input scalars, but keep the dependence on a scalar quantity -- the channel $\snr$. In this setting we also limit the Gaussian input to a vector of Gaussian \iid elements. In the second and third phases we consider dependence on a vector quantity -- a parallel channel matrix. The difference between the second and third phases is that in the second phase we limit the Gaussian input to an independent random vector, whereas in the third phase the Gaussian input is arbitrary. In all three phases, the general channel model considered is the following:
\begin{gather} \label{eq:generalModel}
 \rvec{y} = \Chan \rvec{x} + \rvec{n}
\end{gather}
where $\rvec{n}$ is a standard Gaussian random vector, and $\Chan$ is a square and diagonal channel matrix known to the receiver(s). In the vector case, the scalar MMSE does not capture all the needed information, and we need to resort to the matrix extension, the MMSE matrix defined as
\begin{gather} \label{eq:MMSEmatrix}
\MSE{x} = \Esp[1]{ (\rvec{x} - \CEsp{\rvec{x}}{\Chan \rvec{x} + \rvec{n}})
(\rvec{x} - \CEsp{\rvec{x}}{\Chan \rvec{x} + \rvec{n}})^\T}
\end{gather}
from which we can see that, in general, the MMSE matrix $\MSE{x}$
depends on the channel $\MSE{x} = \MSE{x}(\Chan)$, but whenever the channel
coefficients depend on other parameters $\Chan = \Chan(\vectorT)$, we will
write $\MSE{x}(\vectorT)$.
Observe that the standard scalar MMSE value in the vector case can be easily recovered from the MMSE matrix as follows:
\begin{gather} \label{eq:scalarMMSE_vector_arg}
 \mmse(\rvec{x}, \snr) = \Esp[1]{ \Vert\rvec{x} - \CEsp[0]{\rvec{x}}{\sqrt{\snr}
\rvec{x} + \rvec{n}}\Vert^2} = \Tr(\MSE{x}(\sqrt{\snr} \mat{I}_\dim)) .
\end{gather}
For the important case when the input distribution of $\rvec{x}$ is Gaussian with covariance matrix $\Cov{x_G}$ we will use the following notation:
\begin{gather} \label{eq:MMSEmatrix_Gaussin}
\EM_{G}(\Cov{x_G}, \Chan) = (\Cov{x_G}^{-1} + \Chan^\T \Chan)^{-1}
\end{gather}
where we assumed that $\Cov{x_G}$ is of full rank. As in the case of $\MSE{x}$, whenever the channel coefficients depend on other parameters $\Chan = \Chan(\vectorT)$, we will write $\EM_{G}(\Cov{x_G}, \vectorT)$.
Another important quantity is the MMSE given for a specific output, $\rvec{Y} = \rvecr{y}$, defined as:
\begin{gather} \label{eq:CMMSE_matrix}
\CMSEr{x}{y} = \Esp[1]{ (\rvec{x} - \CEsp{\rvec{x}}{\rvecr{y}})
(\rvec{x} - \CEsp{\rvec{x}}{\rvecr{y} })^\T | \rvecr{y}}.
\end{gather}
Although not specified explicitly, $\CMSEr{x}{y}$ depends on the channel matrix/parameters.
Note that $\MSE{x}(\Chan) = \Esp[1]{ \CMSE{x}{y}}$. Interestingly, when the input distribution of $\rvec{x}$ is Gaussian, $\CMSEr{x}{y}$ is independent of $\rvecr{y}$ and the following equality holds for all $\rvecr{y}$: $\EM_{G}(\Cov{x_G}, \Chan) = \CMSEr{x}{y}$.
Finally, given all these quantities we can define the main player in this work: the MMSE matrix difference (analog to $\ft$ in the scalar case)
\begin{align}
\Q(\rvec{x}, \Cov{x_G}, \vectorT)
  &= \EM_{G}(\Cov{x_G}, \vectorT) - \MSE{x}(\vectorT) \\ 
  &= \MSE{x_G}(\vectorT) - \MSE{x}(\vectorT) \label{eq:vectorQ_general} \\
  &= (\Cov{x_G}^{-1} + (\Chan(\vectorT))^\T \Chan(\vectorT))^{-1} -
    \MSE{x}(\vectorT) \label{eq:vectorQ}
\end{align}
where, similarly to the scalar case in (\ref{eq:scalarF}), we will use the simplified notation $\QT$
when the distribution of $\rvec{x}$ and the covariance matrix of the Gaussian
distribution $\Cov{x_G}$ are clear from the context. Note that there is \emph{no} requirement that the covariance of the random vector $\rvec{x}$ be equal to $\Cov{x_G}$.

As it has already been pointed out, the extension from scalar-to-vector is done in three phases. In the first step the dependence remains on a scalar parameter -- the $\snr$. This is obtained by setting $\Chan = \Identity$ in the general MIMO model in (\ref{eq:generalModel}). We further limit our observation to the comparison of an arbitrary input distribution with the subset of Gaussian random vectors with \iid elements. For this case, we show that the ``single crossing point'' property extends smoothly to any linear combination of $\Q(\rvec{x}, \Cov{x_G}, \vectorT)$ with a positive semidefinite matrix. Although this is the simplest scalar-to-vector extension, the proof is not straightforward.
In order to demonstrate the applicability of this result we extend the proof of a special case of Shannon's EPI, done in \cite{PROP_full}, to the vector case.

Proceeding with the scalar-to-vector extension, we assume that the channel matrix, $\Chan$, is parallel, thus our dependence is now on a vector parameter. In this setting we have two distinguishable results, given in phases two and three, that cannot be trivially deduced from each other. In phase two, we limit the Gaussian distribution, to which we compare, to any independent Gaussian distribution characterized by its diagonal covariance matrix, $\mat{\Lambda}_{\rvec{x_G}}$. Under this assumption we show that a ``single crossing point'' property exists for each and every diagonal element of $\Q(\rvec{x}, \mat{\Lambda}_{\rvec{x_G}}, \vectorT)$. Together with the I-MMSE relationship, this result provides some interesting properties of the mutual information, and its applicability is demonstrated by providing a simple converse proof for the parallel Gaussian BC capacity region under per-antenna power constraints.

The third phase, which is the main result of this work, does not require any further assumptions (apart from the diagonal channel matrix). That is, we compare an arbitrary input distribution with any general Gaussian input distribution, with covariance $\Cov{x_G}$. In this setting we show that a ``single crossing point'' property exists for each and every eigenvalue of the matrix $\Q(\rvec{x}, \Cov{x_G}, \vectorT)$. The applicability of this result is demonstrated with two information-theoretic problems: the converse proof of the parallel Gaussian BC capacity region under covariance constraint and the converse proof of the compound parallel Gaussian BC capacity region under covariance constraint.


Much of this work regards the behavior of functions around zeros, the existence and amount of actual crossings of the horizontal axis. Thus, before proceeding with the technical content of the paper and, in order to make these observations rigorous, we require the next definitions which will be used throughout the paper.
\begin{dfn} \label{dfn:negative2nonegative}
Given a function $h(t)$ continuous within the neighborhood of $t_0$, we say that a negative-to-nonnegative zero crossing occurs at $t=t_0$ if, and only if, $h(t_0)=0$ and there exists a positive value $\epsilon$ such that $h(t) < 0$ for $t \in (t_0 - \epsilon, t_0)$ and $h(t) \geq 0$ for $t \in (t_0, t_0 + \epsilon)$.
\end{dfn}
\begin{dfn} \label{dfn:nonnegative2negative}
Given a function $h(t)$ continuous within the neighborhood of $t_0$, we say that a nonnegative-to-negative zero crossing occurs at $t=t_0$ if, and only if, $h(t_0)=0$ and there exists a positive value $\epsilon$ such that $h(t) \geq 0$ for $t \in (t_0 - \epsilon, t_0)$ and $h(t) < 0$ for $t \in (t_0, t_0 + \epsilon)$.
\end{dfn}
Similar definitions can be given for positive-to-nonpositive and nonpositive-to-positive zero crossings. Another required definition is the following:
\begin{dfn} \label{dfn:negative-zero-positive}
Given a function $h(t)$ continuous within the neighborhood of $t_0$, we say that a negative-zero-positive crossing occurs at $t=t_0$ if, and only if, a negative-to-nonegative zero crossing occurs as $t=t_0$ and there exists a positive $\delta$ such that $h(t)=0$ for $t \in (t_0, t_0 + \delta)$ and a nonpositive-to-positive zero crossing occurs as $t_0 + \delta$.
\end{dfn}
Similarly we can define a positive-zero-negative crossing.

The remaining of this paper is organized as follows: Section \ref{sec:scalarMIMO} considers the first phase of our extension from scalar-to-vector, in which case the dependence is on the scalar parameter, $\snr$. In Section \ref{sec:scalartovector} we provide the framework in which we handle the assumption of a parallel channel matrix, $\Chan$. This framework is relevant for phases two and three of our scalar-to-vector extension. In Section \ref{sec:independentGaussian} we consider phase two of our extension, where we limit our observations to an independent Gaussian input distribution. Section \ref{sec:generalGaussian} considers phase three, where we compare the arbitrary input to any general Gaussian input distribution.


\emph{Notation:} Straight boldface denotes multivariate quantities such as vectors (lowercase) and matrices (uppercase). Uppercase italics denotes random variables (boldface if we consider random vectors rather then random variables), and their realizations are represented by lowercase italics. The set of $\dim$-dimensional positive semidefinite matrices is denoted by $\PSD{\dim}$. The elements of a matrix $\mat{A}$ are represented by $[ \mat{A} ]_{ij}$. The operator $\diag{ \mat{A}}$ represents a column vector with the diagonal entries of matrix $\mat{A}$, and $\Diag{ \bf{a} }$ represents a diagonal matrix whose non-zero elements are given by the elements of vector $\bf{a}$. The superscript $( \cdot )^\T$ denotes the transpose. The operator $\Tr( \cdot )$ denotes the trace function, and $| \cdot |$ denotes the determinant function. The operator $\Jacob_{\bs{\scalart}} \mat{A}$ denotes the Jacobian matrix of $\mat{A}$ with respect to $\bs{\scalart}$ \cite{Magnus}.

Note that we also consider the conditioned version of the above defined quantities. That is, when the random vector $\rvec{x}$ depends on the random vector $\rvec{u}$, we require, for example, a conditioned version for the MMSE and the matrix $\Q$ given for a specific value of $\rvec{u} = \rvecr{u}$. In this case both quantities depend on an additional parameter $\rvecr{u}$, \ie, $\MSE{x|u}(\vectorT, \rvecr{u})$ and $\Q(\rvec{x}|\rvec{u} = \rvecr{u}, \Cov{x_G}, \vectorT)$ (the precise definitions given in Section \ref{ssec:independentConditioned}).

\section{The Scalar MIMO Channel} \label{sec:scalarMIMO}

As pointed out in the introduction, 
we begin our study with the simplest multivariate extension of the result in
\cite[Prp.~16]{PROP_full}, that is, we consider that the scalar random variables
involved in the model in (\ref{eq:scalarModel}) become random vectors. In other words, in
this section, we consider the following model:
\begin{gather} \label{eq:modelSimplest}
 \rvec{y} = \sqrt{\snr} \rvec{x} + \rvec{n}
\end{gather}
where the input random vector $\rvec{x} \in \R^{\dim}$ is arbitrarily
distributed and $\rvec{n} \in \R^{\dim}$ follows a standard Gaussian
distribution. Observe that (\ref{eq:modelSimplest}) is obtained by setting
$\Chan = \sqrt{\snr}\mat{I}_\dim$ in the vector model in
(\ref{eq:generalModel}).

Moreover, we further limit our discussion in this section to the comparison with a Gaussian input with \iid elements,
\ie, we assume that $\Cov{x_G} = \sigma^2 \mat{I}_\dim$.

Thus, for the settings in this section, the general MMSE matrix difference
function in (\ref{eq:vectorQ}) simplifies to
\begin{gather} \label{eq:Fsnr}
\Q(\rvec{x}, \sigma^2 \mat{I}_\dim, \scalart) =  \frac{\sigma^2}{1 + \sigma^2
\scalart} \mat{I}_\dim - \MSE{x}(\scalart)
\end{gather}
where $\scalart$ plays the role of the estimation $\snr$.

\subsection{A Single Crossing Point} \label{ssec:singleCrossingSimplest}

Motivated by the ``single crossing property'' of $\f(\resc{x},
\scalart)$ presented in \cite[Prp.~16]{PROP_full}, an immediate question that comes to mind is ``does this property extend to the MIMO scenario?'' Our hypothesis was that for the setting in (\ref{eq:modelSimplest}) this property will have a simple extension.
Thus, we
examine the simplest scalar function of the MMSE matrix difference function of
(\ref{eq:Fsnr}), that is, we consider some linear combination of it.
Accordingly, we define
\begin{IEEEeqnarray}{rCl}
\q_{\mat{A}}(\rvec{x}, \sigma^2, \scalart) & = & \Tr \left( \mat{A}
\Q(\rvec{x}, \sigma^2 \mat{I}_\dim, \scalart) \right) \\ & = & \frac{\sigma^2}{1
+ \sigma^2 \scalart} \Tr \left( \mat{A} \right) - \Tr \left( \mat{A}
\MSE{x}(\scalart) \right) \label{eq:defqA}
\end{IEEEeqnarray}
where $\mat{A}$ is a weighting matrix. 

The ``single crossing point'' property of $\ft$ extends naturally to the function $\q_{\mat{A}}(\rvec{x}, \sigma^2, \scalart)$, for a specific subset of matrices $\mat{A}$. This result is given in the next theorem.
\begin{thm} \label{thm:ScalarUniqueCrossingPoint}
Let $\mat{A} \in \PSD{\dim}$ be a positive semidefinite matrix. Then, the
function $\scalart \mapsto \q_{\mat{A}}(\rvec{x}, \sigma^2, \scalart)$, defined
in (\ref{eq:defqA}), has no nonnegative-to-negative zero crossings and, at most,
a single negative-to-nonnegative zero crossing in the range $\scalart \in [0,
\infty)$.

Moreover, assume $\snr_0 \in [0,  \infty)$ is a
negative-to-nonnegative crossing point. Then,
\begin{enumerate}
\item $\q_{\mat{A}}(\rvec{x}, \sigma^2, 0) \leq 0$.
\item $\q_{\mat{A}}(\rvec{x}, \sigma^2, \scalart)$ is a strictly increasing
function in the range $\scalart \in [0, \snr_0)$.
\item $\q_{\mat{A}}(\rvec{x}, \sigma^2, \scalart) \geq 0$ for all $\scalart \in
[\snr_0, \infty)$.
\item $\lim_{\scalart \to \infty} \q_{\mat{A}}(\rvec{x}, \sigma^2, \scalart) =
0$.
\end{enumerate}
\end{thm}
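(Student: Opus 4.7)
The plan is to study the sign of $\Jacob_\gamma \q_{\mat{A}}(\rvec{x},\sigma^2,\gamma)$ and to establish the sharper statement that this derivative is strictly positive whenever $\q_{\mat{A}}(\rvec{x},\sigma^2,\gamma) < 0$. From this, the absence of nonnegative-to-negative zero crossings follows directly (strict increase on the set $\{\q_{\mat{A}}<0\}$ together with continuity prevents $\q_{\mat{A}}$ from dropping strictly below zero immediately after passing through zero), and uniqueness of the negative-to-nonnegative crossing is automatic, since a second such crossing would require $\q_{\mat{A}}$ to re-enter the negative region and therefore induce an intervening nonnegative-to-negative crossing.

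To compute $\Jacob_\gamma \q_{\mat{A}}$ I would invoke the vector MMSE derivative of \cite{Palomar}, namely $\Jacob_\gamma \MSE{x}(\sqrt{\gamma}\mat{I}_\dim) = -\Esp{\CMSEr{x}{y}^2}$, yielding
\begin{align*}
\Jacob_\gamma \q_{\mat{A}}(\rvec{x},\sigma^2,\gamma) = -\frac{\sigma^4\,\Tr(\mat{A})}{(1+\sigma^2\gamma)^2} + \Tr\bigl(\mat{A}\,\Esp{\CMSEr{x}{y}^2}\bigr) .
\end{align*}
The positive term is bounded from below by combining two standard inequalities. First, because $\mat{A}\succeq\mat{0}$, the quadratic form $\mat{X}\mapsto\Tr(\mat{X}\mat{A}\mat{X})$ is convex on the space of symmetric matrices (its $\vecop$-Hessian is $\mat{I}\otimes\mat{A}\succeq\mat{0}$), so Jensen's inequality gives $\Esp{\Tr(\mat{A}\,\CMSEr{x}{y}^2)}\geq\Tr(\mat{A}\,\MSE{x}(\gamma)^2)$. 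Second, Cauchy--Schwarz in the Frobenius inner product applied to $\mat{A}^{1/2}$ and $\mat{A}^{1/2}\MSE{x}(\gamma)$ gives $\bigl(\Tr(\mat{A}\,\MSE{x}(\gamma))\bigr)^2 \leq \Tr(\mat{A})\,\Tr(\mat{A}\,\MSE{x}(\gamma)^2)$.

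Writing $u = \Tr(\mat{A}\,\MSE{x}(\gamma))$ and $v = \sigma^2\,\Tr(\mat{A})/(1+\sigma^2\gamma)$, so that $\q_{\mat{A}}(\rvec{x},\sigma^2,\gamma) = v - u$, chaining the two inequalities collapses the matrix bound into the scalar form
\begin{align*}
\Jacob_\gamma \q_{\mat{A}}(\rvec{x},\sigma^2,\gamma) \geq \frac{u^2 - v^2}{\Tr(\mat{A})} = -\q_{\mat{A}}(\rvec{x},\sigma^2,\gamma)\cdot\frac{u + v}{\Tr(\mat{A})} ,
\end{align*}
and $u+v>0$ whenever $\mat{A}\neq\mat{0}$ (the case $\mat{A}=\mat{0}$ being trivial). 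This is exactly the strict positivity needed on $\{\q_{\mat{A}}<0\}$, and also gives $\Jacob_\gamma\q_{\mat{A}}\geq 0$ at every zero. The four enumerated properties then follow cleanly: property 1 because if $\q_{\mat{A}}(\rvec{x},\sigma^2,0) > 0$ then the absence of nonnegative-to-negative crossings would force $\q_{\mat{A}} \geq 0$ throughout $[0,\snr_0]$, contradicting the definition of a negative-to-nonnegative crossing at $\snr_0$; property 2 is immediate strict monotonicity from $\Jacob_\gamma\q_{\mat{A}}>0$ on $[0,\snr_0)$ where $\q_{\mat{A}}<0$; property 3 is the no-nonnegative-to-negative statement restated; and property 4 combines $\MSE{x}(\gamma)\to\mat{0}$ as $\gamma\to\infty$ with the explicit decay of $\sigma^2/(1+\sigma^2\gamma)$. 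The main obstacle I foresee is the matrix reduction itself: finding a lower bound on $\Tr(\mat{A}\,\Esp{\CMSEr{x}{y}^2})$ that factors as $-\q_{\mat{A}}$ times a positive quantity. Combining Jensen (to pass from the random $\CMSEr{x}{y}$ to the deterministic $\MSE{x}(\gamma)$) with the Frobenius Cauchy--Schwarz (to relate $\Tr(\mat{A}\,\MSE{x}^2)$ back to $\Tr(\mat{A}\,\MSE{x})$) is what makes the scalar argument of \cite[Prp.~16]{PROP_full} generalize to an arbitrary positive semidefinite weighting.
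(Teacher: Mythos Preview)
Your proof is correct and takes a genuinely different route from the paper's. The paper first invokes a reduction lemma (Lemma~\ref{lem:AequivI}): writing $\mat{A}=\alpha\bar{\mat{A}}\bar{\mat{A}}^\T$ and setting $\hat{\rvec{x}}=\bar{\mat{A}}^\T\rvec{x}$, one has $\q_{\mat{A}}(\rvec{x},\sigma^2,\gamma)=\alpha\,\q_{\mat{I}_n}(\hat{\rvec{x}},\sigma^2,\gamma)$, so it suffices to treat $\mat{A}=\mat{I}_n$. The paper then handles the no-zero case separately (Lemma~\ref{lem:GaussianOnTop}), computes the derivative (Lemma~\ref{lem:dqAdgamma}), and finally proves $\Jacob_\gamma\q>0$ on $\{\q<0\}$ via a Schur-product identity $\Tr(\Esp{\CMSE{x}{y}^2})=\mat{1}^\T\Esp{\CMSE{x}{y}\circ\CMSE{x}{y}}\mat{1}$ combined with Jensen and a dedicated inequality \cite[Prp.~H.9]{Palomar2}.

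You bypass the reduction entirely and work with general $\mat{A}\succeq\mat{0}$ throughout. Your two ingredients---convexity of $\mat{X}\mapsto\Tr(\mat{X}\mat{A}\mat{X})$ on symmetric matrices (Jensen) and the Frobenius Cauchy--Schwarz $\bigl(\Tr(\mat{A}\MSE{x})\bigr)^2\le\Tr(\mat{A})\Tr(\mat{A}\MSE{x}^2)$---are exactly what the paper's Schur-product argument encodes once specialized to $\mat{A}=\mat{I}_n$ (indeed \cite[Prp.~H.9]{Palomar2} reduces to $\Tr(\Phi^2)\ge(\Tr\Phi)^2/n$, i.e.\ Cauchy--Schwarz). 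What your approach buys is directness: no auxiliary random vector $\hat{\rvec{x}}$, no separate case analysis for the zero-free regime, and the factored form $\Jacob_\gamma\q_{\mat{A}}\ge -\q_{\mat{A}}\cdot(u+v)/\Tr(\mat{A})$ makes the sign conclusion transparent. What the paper's reduction buys is that all subsequent estimates live in the unweighted setting, which may be preferable if one later wants to reuse intermediate inequalities for $\mat{A}=\mat{I}_n$ specifically.
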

\begin{IEEEproof}
We start with the following three lemmas that are instrumental for this proof.
\begin{lem} \label{lem:AequivI}
 Let $\mat{A} \in \PSD{\dim}$ be a positive semidefinite matrix and let the
random vector $\rvec{x} \in \R^{\dim}$ be arbitrarily distributed. Then, we can
always find a random vector $\hat{\rvec{x}} \in \R^{\dim}$ such that the number
of nonnegative-to-negative and negative-to-nonnegative zero crossings of
$\q_{\mat{A}}(\rvec{x}, \sigma^2, \scalart)$ is the same as those of
$\q_{\mat{I}_\dim}(\hat{\rvec{x}}, \sigma^2, \scalart)$.
\end{lem}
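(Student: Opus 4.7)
The plan is to reduce the general PSD weighting case to the identity-weighting case by producing, for every pair $(\mat{A},\rvec{x})$, a new random vector $\hat{\rvec{x}}\in\R^\dim$ whose function $\q_{\mat{I}_\dim}(\hat{\rvec{x}},\sigma^2,\cdot)$ has the same number of zero crossings of each type as $\q_{\mat{A}}(\rvec{x},\sigma^2,\cdot)$. I would carry this out in two steps: a rotation to reduce to diagonal $\mat{A}$, and then a more delicate construction to reduce diagonal to identity.

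First, I would use the spectral decomposition $\mat{A}=\mat{U}\mat{\Lambda}\mat{U}^\T$ with $\mat{U}$ orthogonal and $\mat{\Lambda}=\Diag{\lambda_1,\ldots,\lambda_\dim}$ with $\lambda_i\geq 0$. Setting $\tilde{\rvec{x}}=\mat{U}^\T\rvec{x}$ and exploiting the orthogonal invariance of the law of the standard Gaussian noise $\rvec{n}$ in $\rvec{y}=\sqrt{\scalart}\rvec{x}+\rvec{n}$, one obtains the identity $\MSE{\tilde{x}}(\scalart)=\mat{U}^\T\MSE{x}(\scalart)\mat{U}$. Combining this with $\Tr(\mat{A})=\Tr(\mat{\Lambda})$ and the cyclic property of the trace yields $\q_{\mat{A}}(\rvec{x},\sigma^2,\scalart)=\q_{\mat{\Lambda}}(\tilde{\rvec{x}},\sigma^2,\scalart)$ for every $\scalart\geq 0$. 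Because the two functions are pointwise equal, their zero-crossing counts of both types coincide, and the problem is reduced to the case of a diagonal PSD weighting.

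Second, given the diagonal $\mat{\Lambda}$, I would build $\hat{\rvec{x}}\in\R^\dim$ so that $\q_{\mat{I}_\dim}(\hat{\rvec{x}},\sigma^2,\scalart)$ is a positive scalar multiple of $\sum_i\lambda_i\bigl(\tfrac{\sigma^2}{1+\sigma^2\scalart}-[\MSE{\tilde{x}}(\scalart)]_{ii}\bigr)$; since positive rescaling preserves both the number and type of zero crossings, this will finish the argument. A natural candidate is a mixture/randomized coordinate-selection construction: draw an auxiliary index $I$ with $\Pr(I=i)\propto\lambda_i$, place a suitably rescaled copy of $\tilde{x}_I$ in a prescribed coordinate of $\hat{\rvec{x}}$, and populate the remaining coordinates with auxiliary independent components whose contributions to $\Tr(\MSE{\hat{x}}(\scalart))$ and to the Gaussian term $\tfrac{\sigma^2}{1+\sigma^2\scalart}\Tr(\mat{I}_\dim)$ match the required bookkeeping. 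One would then compute $\Tr(\MSE{\hat{x}}(\scalart))$ using the tower property and the conditional MMSE formula and verify the matching algebraically. If $\mat{\Lambda}$ has zero eigenvalues (i.e., $\mat{A}$ is rank-deficient), I would either treat those coordinates as a degenerate sub-case or approximate $\mat{\Lambda}$ by $\mat{\Lambda}+\epsilon\mat{I}_\dim$ and pass to the limit $\epsilon\downarrow 0$ using continuity of the MMSE functional.

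The main obstacle is the second step. Naive attempts such as $\hat{\rvec{x}}=\mat{\Lambda}^{1/2}\tilde{\rvec{x}}$, or simply copying $\tilde{x}_i$ into coordinate $i$ with weight $\sqrt{\lambda_i}$, fail because rescaling an input coordinate changes its effective per-coordinate SNR, which corrupts the clean $1/(1+\sigma^2\scalart)$ dependence of the Gaussian term on the right-hand side of \eqref{eq:defqA}. The delicate point is to convert the weights $\lambda_i$ into an \emph{unweighted} MMSE trace while keeping the $\scalart$-dependence intact; that is why I expect a randomized (mixture) construction to be the right mechanism, with the $\lambda_i/\sum_j\lambda_j$ playing the role of probabilities, after which the remaining verification reduces to direct algebra together with the fact that the zero-crossing structure is invariant under positive scaling.
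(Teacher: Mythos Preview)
Your Step~1 (orthogonal rotation to a diagonal weighting) is correct and clean, but all of the content sits in Step~2, which you leave as a sketch. As described, the mixture scheme cannot deliver the required identity: if $\hat{\rvec{x}}$ carries a rescaled copy of a single $\tilde{x}_I$ in one coordinate together with independent auxiliaries in the others, then, conditional on $I=i$, the observation $\sqrt{\scalart}\,\hat{\rvec{x}}+\rvec{n}$ is effectively a scalar channel on $\tilde{x}_i$ alone. Averaging over $I$ therefore produces a $\lambda$-weighted sum of \emph{marginal} scalar MMSEs, not of the diagonal entries $[\EM_{\tilde{\rvec{x}}}(\scalart)]_{ii}$, which by definition use the full joint observation $\sqrt{\scalart}\,\tilde{\rvec{x}}+\rvec{n}$. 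These two quantities agree only when the coordinates of $\tilde{\rvec{x}}$ are independent, so the construction does not work for general $\rvec{x}$. The $\epsilon\downarrow 0$ limit for rank-deficient $\mat{A}$ is also fragile: the crossing count is integer-valued and hence not continuous in $\epsilon$, so MMSE continuity alone does not let you pass to the limit.

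For comparison, the paper dispatches the lemma in a single move: factor $\mat{A}=\alpha\,\bar{\mat{A}}\bar{\mat{A}}^\T$ with $\Tr(\bar{\mat{A}}\bar{\mat{A}}^\T)=\dim$, set $\hat{\rvec{x}}=\bar{\mat{A}}^\T\rvec{x}$, and write $\q_{\mat{A}}(\rvec{x},\sigma^2,\scalart)=\alpha\,\q_{\mat{I}_\dim}(\hat{\rvec{x}},\sigma^2,\scalart)$. Up to normalization this is exactly the ``naive'' linear map $\hat{\rvec{x}}=\mat{\Lambda}^{1/2}\tilde{\rvec{x}}$ that you dismissed. Your stated reason for dismissing it --- that rescaling the input changes the effective per-coordinate SNR, so $\EM_{\hat{\rvec{x}}}(\scalart)$ need not equal $\bar{\mat{A}}^\T\EM_{\rvec{x}}(\scalart)\bar{\mat{A}}$ --- is a legitimate concern about the paper's unargued step $\Tr\bigl(\bar{\mat{A}}^\T\EM_{\rvec{x}}(\scalart)\bar{\mat{A}}\bigr)=\Tr\bigl(\EM_{\bar{\mat{A}}^\T\rvec{x}}(\scalart)\bigr)$; but your own replacement remains a plan rather than a proof.
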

\begin{IEEEproof}
See Appendix \ref{app:AequivI}.
\end{IEEEproof}

\begin{lem} \label{lem:GaussianOnTop}
Let $\rvec{x} \in \R^{\dim}$ be a random vector such
that $\Tr(\Cov{x})/\dim \leq \sigma^2$. Then, for every $\scalart \geq 0$, we
have
\begin{eqnarray} \label{eq:GaussianOnTop}
\frac{\Tr(\MSE{x}(\scalart))}{\dim} \leq \frac{\sigma^2}{1 +
\sigma^2 \scalart}
\end{eqnarray}
with equality if and only if $\rvec{X}$ is a Gaussian vector with \iid elements of
variance $\sigma^2$.
\end{lem}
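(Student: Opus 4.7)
The plan is to bound the true MMSE by the linear MMSE (LMMSE) of a fictitious estimator and then exploit concavity in the eigenvalues of the input covariance. Concretely, recall that for any input $\rvec{x} \in \R^\dim$ with covariance $\Cov{x}$ observed through $\sqrt{\scalart}\rvec{x}+\rvec{n}$, the true MMSE matrix is dominated in the positive semidefinite order by the LMMSE matrix $\MSElin{x}(\scalart) = \bigl(\Cov{x}^{-1} + \scalart \mat{I}_\dim\bigr)^{-1}$ (the Gaussian MMSE with the same covariance). Taking traces and using monotonicity of the trace under the PSD order yields
\begin{equation*}
\Tr\bigl(\MSE{x}(\scalart)\bigr) \;\leq\; \Tr\Bigl(\bigl(\Cov{x}^{-1} + \scalart \mat{I}_\dim\bigr)^{-1}\Bigr).
\end{equation*}
So it suffices to prove the claimed bound with $\MSE{x}(\scalart)$ replaced by this LMMSE expression, which depends on $\rvec{x}$ only through $\Cov{x}$.

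Next, I would diagonalize $\Cov{x}$ and write its eigenvalues as $\lambda_1,\dots,\lambda_\dim \geq 0$. The trace inequality to be shown reduces to
\begin{equation*}
\frac{1}{\dim}\sum_{i=1}^{\dim} \frac{\lambda_i}{1+\scalart \lambda_i} \;\leq\; \frac{\sigma^2}{1+\sigma^2 \scalart},
\end{equation*}
under the constraint $\frac{1}{\dim}\sum_i \lambda_i \leq \sigma^2$. The function $g(\lambda) = \lambda/(1+\scalart \lambda)$ is concave and strictly increasing in $\lambda \geq 0$. By Jensen's inequality applied to $g$,
\begin{equation*}
\frac{1}{\dim}\sum_{i=1}^{\dim} g(\lambda_i) \;\leq\; g\!\left(\frac{1}{\dim}\sum_{i=1}^{\dim} \lambda_i\right) \;\leq\; g(\sigma^2),
\end{equation*}
where the last step uses monotonicity of $g$ together with $\frac{1}{\dim}\sum_i \lambda_i \leq \sigma^2$. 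This chains to the desired inequality.

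For the equality case, equality in the LMMSE bound $\MSE{x}(\scalart) \preceq \MSElin{x}(\scalart)$ (at any $\scalart > 0$) forces $\rvec{x}$ to be Gaussian; equality in Jensen's inequality (strict concavity of $g$ for $\scalart > 0$) forces all $\lambda_i$ to coincide; and equality in the monotonicity step forces that common value to be $\sigma^2$. Together with the fact that a Gaussian vector is determined by its covariance, this yields the stated characterization that $\rvec{x}$ must be Gaussian with i.i.d. components of variance $\sigma^2$ (the case $\scalart=0$ is immediate and imposes no constraint beyond $\Tr(\Cov{x})/\dim=\sigma^2$, which is consistent with the general case by continuity). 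The main subtlety, rather than a deep obstacle, is justifying the LMMSE domination cleanly when $\Cov{x}$ is singular: this is handled by adding an infinitesimal Gaussian perturbation and passing to the limit, or equivalently by noting that the LMMSE matrix $\Cov{x}(\mat{I}+\scalart\Cov{x})^{-1}$ is well defined without inverting $\Cov{x}$ and continues to dominate $\MSE{x}(\scalart)$.
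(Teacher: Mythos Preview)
Your proposal is correct and follows essentially the same route as the paper: bound the MMSE by the linear MMSE, diagonalize to reduce to the eigenvalues, and then use concavity of $\lambda \mapsto \lambda/(1+\scalart\lambda)$. The paper phrases the last step via Schur-concavity and majorization rather than Jensen plus monotonicity, but these are equivalent here; your treatment of the equality case and of singular $\Cov{x}$ is in fact more explicit than the paper's.
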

\begin{IEEEproof}
See Appendix \ref{app:GaussianOnTop}.
\end{IEEEproof}

\begin{lem} \label{lem:dqAdgamma}
 Let $\mat{A} \in \R^{\dim\times \dim}$ be a square matrix. The derivative of
the function $\q_{\mat{A}}(\rvec{x}, \sigma^2, \scalart)$ with respect to
$\scalart$ is given by
\begin{gather} \label{eq:dqAdgamma}
\Jacob_{\scalart} \q_{\mat{A}}(\rvec{x}, \sigma^2, \scalart) = \Tr \left(
\mat{A} \Esp{\CMSE{x}{y}^2} \right)- \frac{\sigma^4}{(1 + \sigma^2
\scalart)^2} \Tr \left( \mat{A}
\right).
\end{gather}
\end{lem}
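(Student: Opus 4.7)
The plan is to differentiate $\q_{\mat{A}}(\rvec{x}, \sigma^2, \scalart)$ term by term, using the expression
\begin{equation*}
\q_{\mat{A}}(\rvec{x}, \sigma^2, \scalart) = \frac{\sigma^2}{1+\sigma^2\scalart}\Tr(\mat{A}) - \Tr\bigl(\mat{A}\MSE{x}(\scalart)\bigr).
\end{equation*}
The first summand is a scalar rational function of $\scalart$, and elementary calculus immediately gives its derivative as $-\sigma^4(1+\sigma^2\scalart)^{-2}\Tr(\mat{A})$, which accounts for the second term in the target formula.

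The content of the lemma therefore reduces to establishing that
\begin{equation*}
\Jacob_{\scalart}\Tr\bigl(\mat{A}\MSE{x}(\scalart)\bigr) \;=\; -\Tr\!\left(\mat{A}\,\Esp{\CMSE{x}{y}^{2}}\right).
\end{equation*}
Since $\Tr$ commutes with $\Jacob_{\scalart}$ and $\mat{A}$ does not depend on $\scalart$, it suffices to show the matrix identity $\Jacob_{\scalart}\MSE{x}(\scalart) = -\Esp{\CMSE{x}{y}^{2}}$. To do this I would rewrite the model $\rvec{y}=\sqrt{\scalart}\rvec{x}+\rvec{n}$ in the general form $\rvec{y}=\Chan\rvec{x}+\rvec{n}$ with $\Chan=\sqrt{\scalart}\,\Identity$, and apply the gradient formulas of Palomar--Verd\'u \cite{Palomar} (the vector I-MMSE identity) with the chain rule in $\scalart$. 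That general result states that the Jacobian of the MMSE matrix with respect to a scalar parameter governing the channel yields precisely a second-moment expression in $\CMSE{x}{y}$; specializing it to $\Chan=\sqrt{\scalart}\,\Identity$ produces the $-\Esp{\CMSE{x}{y}^{2}}$ term.

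Combining the two pieces via linearity of the trace yields the claimed identity
\begin{equation*}
\Jacob_{\scalart}\q_{\mat{A}}(\rvec{x},\sigma^2,\scalart) = \Tr\!\left(\mat{A}\,\Esp{\CMSE{x}{y}^{2}}\right) - \frac{\sigma^{4}}{(1+\sigma^{2}\scalart)^{2}}\Tr(\mat{A}).
\end{equation*}
The main obstacle, if one wants a self-contained proof rather than citing \cite{Palomar}, is the matrix MMSE derivative itself. A direct argument would start from the orthogonality-principle decomposition $\MSE{x}(\scalart)=\Esp{\rvec{x}\rvec{x}^{\T}}-\Esp{\CEsp{\rvec{x}}{\rvec{y}}\CEsp{\rvec{x}}{\rvec{y}}^{\T}}$, differentiate under the expectation using the Gaussian density of $\rvec{n}$ (so that $\scalart$-derivatives of conditional densities can be traded for score-function multiplications à la de~Bruijn), and then identify the emerging quadratic-in-$\CMSE{x}{y}$ form; this computation is the most delicate step but follows the standard template used to derive scalar and vector I-MMSE identities. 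No additional assumptions on $\mat{A}$ are needed since the derivative step is linear in $\mat{A}$, which is why the lemma holds for arbitrary square $\mat{A}$.
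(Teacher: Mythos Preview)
Your proposal is correct and follows essentially the same approach as the paper: differentiate the rational term $\sigma^2/(1+\sigma^2\scalart)\,\Tr(\mat{A})$ directly, then obtain the derivative of $\Tr(\mat{A}\MSE{x}(\scalart))$ by invoking the Palomar--Verd\'u Jacobian formulas for the MMSE matrix with the chain rule through $\Chan=\sqrt{\scalart}\,\Identity$. The paper carries out this second step slightly more explicitly via the vectorization/Kronecker-product machinery of \cite{Palomar2} (their Eq.~(131)), but the content is the same as what you sketch.
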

\begin{IEEEproof}
See Appendix \ref{app:dqAdgamma}.
\end{IEEEproof}

With these three lemmas at hand, we are now ready to continue with the proof of Theorem \ref{thm:ScalarUniqueCrossingPoint}.

Since we are assuming that the matrix $\mat{A}$ is positive semidefinite and
the distribution of $\rvec{x}$ is arbitrary, from Lemma \ref{lem:AequivI}, we
see that we can restrict our study of $\q_{\mat{A}}(\rvec{x}, \sigma^2,
\scalart)$ to that of $\q_{\mat{I}_\dim}(\hat{\rvec{x}}, \sigma^2, \scalart)$.
For the sake of simplicity, throughout this proof we will use
$\q(\sigma^2, \scalart) = \q_{\mat{I}_\dim} (\hat{\rvec{x}}, \sigma^2,
\scalart)$.

Now, according to Lemma \ref{lem:GaussianOnTop}, for the case where
$\Tr(\Cov{x})/\dim < \sigma^2$, the function $\q(\sigma^2, \scalart)$ has no
zeros and the statement in Theorem \ref{thm:ScalarUniqueCrossingPoint} is true.
In addition, if $\rvec{x}$ is Gaussian distributed with covariance matrix equal to
$\sigma^2 \mat{I}_\dim$, then $\q(\sigma^2, \scalart) = 0$, $\forall \scalart$,
which also fulfills Theorem \ref{thm:ScalarUniqueCrossingPoint}.

Thus, from this point, we can assume that $\Tr(\Cov{x})/\dim \geq
\sigma^2$ and that $\rvec{x}$ is not a Gaussian vector with covariance matrix $\sigma^2
\mat{I}_\dim$. Now, for $\scalart = 0$ we have $\q(\sigma^2, 0) = \sigma^2 -
\Tr(\Cov{x})/\dim \leq 0$ as required.

From the smoothness of $\q(\sigma^2, \scalart)$ as a function of $\gamma$, and done in \cite[Prp.~16]{PROP_full}, in order to
prove that no nonnegative-to-negative and at most one negative-to-nonnegative
zero crossings of $\q(\sigma^2, \scalart)$ can occur, we only need to show that
the derivative of $\q(\sigma^2, \scalart)$ is positive for all values of
$\scalart$ for which $\q(\sigma^2, \scalart) < 0$. Observe that
$\q(\sigma^2, \scalart) < 0$ implies that
\begin{gather}
\dim \frac{\sigma^2}{1 + \sigma^2\scalart} < \Tr ( \MSE{x}(\scalart) ) =
\Tr \left( \Esp{ \CMSE{x}{y} } \right). \label{eq:cond} 
\end{gather}

Now, particularizing Lemma \ref{lem:dqAdgamma} for $\mat{A} = \mat{I}_\dim$, we
have that
\begin{eqnarray}
\Jacob_\scalart \q(\sigma^2, \scalart)
& = & \Tr \left( \Esp{ \CMSE{x}{y}^2 } \right)- \dim \frac{\sigma^4}{(1 +
\sigma^2 \scalart)^2} \label{eq:derivative_f_nonnegative_1} \\
& > & \Tr \left( \Esp{ \CMSE{x}{y}^2 } \right) - \frac{\left(\Tr
\left( \Esp{ \CMSE{x}{y} } \right) \right)^2}{\dim}
\label{eq:derivative_f_nonnegative_2} \\
& = &  \ones^\T \Esp{ \CMSE{x}{y} \circ \CMSE{x}{y} } \ones -
\ones^\T \frac{\Esp[1]{ \diag{\CMSE{x}{y}} } \Esp[1]{
\diag{\CMSE{x}{y}}^\T}}{\dim}
\ones \label{eq:derivative_f_nonnegative_3} \\
& \geq & \Esp{ \ones^\T \left( \CMSE{x}{y} \circ \CMSE{x}{y}
- \frac{ \diag{\CMSE{x}{y}} \diag{\CMSE{x}{y}}^\T}{\dim} \right) \ones}
\label{eq:derivative_f_nonnegative_3_1} \\
& \geq & 0 \label{eq:derivative_f_nonnegative_4}
\end{eqnarray}
where (\ref{eq:derivative_f_nonnegative_2}) follows directly
from (\ref{eq:cond}); in (\ref{eq:derivative_f_nonnegative_3}) we have defined
$\mat{1}$ as the column vector whose entries are all ones, and we used $\circ$
to denote the Schur product; and (\ref{eq:derivative_f_nonnegative_3_1}) follows
from Jensen's inequality; 
finally, (\ref{eq:derivative_f_nonnegative_4}) follows from
\cite[Prp.~H.9]{Palomar2}.

Observe that the inequality in (\ref{eq:derivative_f_nonnegative_4}), which
holds for values of $\scalart$ such that $\q(\sigma^2, \scalart) < 0$, also
proves the second item in Theorem \ref{thm:ScalarUniqueCrossingPoint} and the
third one follows directly from the inexistence of nonnegative-to-negative zero
crossings. Furthermore, regarding the fourth item, it is clear that
$\lim_{\scalart \to \infty} \q(\sigma^2, \scalart) = 0$, as both terms in
$\q(\sigma^2, \scalart)$ tend to zero.
\end{IEEEproof}

\begin{rem} \label{rem:averageMMSE}
Note that the above theorem also holds for the normalized function, $\frac{1}{n}\q(\sigma^2, \scalart)$. Specifically, for the case of $\mat{A} = \mat{I}_\dim$, this is simply the difference between the MMSE of a general Gaussian random variable, with variance $\sigma^2$, and the \emph{average} MMSE of the $n$ elements of the random vector $\rvec{x}$.
\end{rem}

\begin{rem} \label{rem:negativeA}
For negative semidefinite $\mat{A}$ it can easily be seen from the proof of Lemma \ref{lem:AequivI} that $\q_{\mat{A}}(\rvec{x}, \sigma^2, \scalart)$ has the inverse properties, since it is a mirroring of some $\q_{\mat{I}_\dim}(\hat{\rvec{x}}, \sigma^2, \scalart)$ over the x-axis. This is to say, that it has at most a single positive-to-nonpositive zero crossing and, if such crossing exists, $\q_{\mat{A}}(\rvec{x}, \sigma^2, \scalart)$ will be nonnegative at $\gamma = 0$, strictly decreasing up to the crossing, nonpositive after the crossing, and will tend to zero as $\gamma \to \infty$.
\end{rem}

\begin{rem} \label{rem:indefiniteA}
For indefinite $\mat{A}$, ``single crossing point'' properties, such as those shown in Theorem \ref{thm:ScalarUniqueCrossingPoint}, do not hold in general.
\end{rem}

\subsection{Application: A Proof of a Special Case of Shannon's Vector EPI}
\label{ssec:applicationVectorEPI}

We now show that Theorem \ref{thm:ScalarUniqueCrossingPoint} can be used to
prove a special case of Shannon's EPI \cite[Th.~17.7.3]{CoverThomas}, similarly as it was done
in \cite{PROP_full} for the scalar case. Precisely, we will show that
\begin{eqnarray} \label{eq:specialEPI}
\exp\left(\frac{2}{\dim} \Ent\left( \rvec{x} + \rvec{n} \right) \right) \geq
\exp\left(\frac{2}{\dim} \Ent\left( \rvec{x} \right) \right) + 2 \pi \nume \vert
\Cov{n} \vert ^{\frac{1}{\dim}}
\end{eqnarray}
for any independent $\dim$-dimensional vectors $\rvec{x}$ and $\rvec{n}$ as long
as the differential entropy of $\rvec{x}$ is well-defined and $\rvec{n}$ is
Gaussian distributed with a positive definite covariance matrix
$\Cov{n}$.

We define $\rvec{z}$ to be an $\dim$-dimensional Gaussian vector with
covariance $\Cov{z} = \Cov{n}$ and independent of both $\rvec{x}$ and
$\rvec{n}$. Thus, without making any assumptions on the covariance
matrix of $\rvec{x}$, we can find an $\alpha \in [0, \infty)$ such that
the following equality holds:
\begin{eqnarray} \label{eq:alpha}
\Ent\left( \rvec{x} \right) = \Ent\left( \alpha \rvec{z} \right) =
\frac{1}{2} \log \left( (2 \pi \nume)^\dim \alpha^{2\dim} \vert
\Cov{n} \vert \right).
\end{eqnarray}

Since $\Cov{n}$ is positive definite there exists an
invertible matrix $\mat{V}$ such that $\Cov{n} = \mat{V}\mat{V}^\T$. Defining
$\rvec{\tilde{x}} = \mat{V}^{-1} \rvec{x}$, $\rvec{\tilde{z}} = \mat{V}^{-1} \rvec{z}$ and $\rvec{\tilde{n}} = \mat{V}^{-1} \rvec{n}$ we have the following chain of equalities:
\begin{align}
\Delta \I(\snr) &=
\I
\left( \alpha \rvec{z}; \sqrt{\snr} \alpha \rvec{z} + \rvec{n} \right) - \I
\left( \rvec{x}; \sqrt{\snr}\rvec{x} + \rvec{n} \right) \\
&= \I \left( \alpha \tilde{\rvec{z}}; \sqrt{\snr} \alpha \tilde{\rvec{z}} +
\tilde{\rvec{n}} \right) - \I \left( \tilde{\rvec{x}};
\sqrt{\snr}\tilde{\rvec{x}} + \tilde{\rvec{n}} \right) \\
&= \Ent \left( \sqrt{\snr} \alpha \tilde{\rvec{z}} + \tilde{\rvec{n}}
\right) - \Ent \left( \sqrt{\snr}\tilde{\rvec{x}} + \tilde{\rvec{n}} \right)
\label{eq:snr2infty} \\
&= \frac{1}{2} \int_{0}^{\snr} ( \mmse( \alpha \tilde{ \rvec{z}}, \scalart) -
\mmse( \tilde{\rvec{x}}, \scalart) ) \d \scalart
\label{eq:transfromationEquality} \\
&= \frac{1}{2} \int_{0}^{\snr} \Tr( \EM_{\alpha \tilde{\rvec{z}}}(\scalart) -
\EM_{\tilde{\rvec{x}}}(\scalart) ) \d \scalart \label{eq:EPIScalartoVector} \\
&= \frac{1}{2} \int_{0}^{\snr} \q_{\mat{I}_\dim}(\tilde{\rvec{x}}, \alpha^2, \scalart) \d \scalart \label{eq:DeltaIintegral}
\end{align}
where we have used the $\mmse$ function defined in
(\ref{eq:scalarMMSE_vector_arg}) and
the integral expression for the entropy function in \cite{IMMSE}.


Now, from (\ref{eq:snr2infty}) together with (\ref{eq:alpha}), it follows that
\begin{gather} \label{eq:limDeltaI}
\lim_{\snr \rightarrow \infty} \Delta \I(\snr) = 0
\end{gather}
which, from the integral expression in (\ref{eq:DeltaIintegral}), further
implies that the (smooth) integrand must have, at least, one zero crossing.
However, from Theorem \ref{thm:ScalarUniqueCrossingPoint}, we know that
$\q_{\mat{I}_\dim}(\tilde{\rvec{x}}, \alpha^2, \scalart)$ can
have, at most, one zero crossing. Consequently, in this case,
$\q_{\mat{I}_\dim}(\tilde{\rvec{x}}, \alpha^2, \scalart)$ must
have exactly one zero crossing. Also, from Theorem
\ref{thm:ScalarUniqueCrossingPoint} and (\ref{eq:limDeltaI}), we can infer that
there exists some $\snr_0 \in (0,\infty)$ such that
$\q_{\mat{I}_\dim}(\tilde{\rvec{x}}, \alpha^2, \scalart) < 0$,
$\forall \scalart \in [0, \snr_0)$, and $\q_{\mat{I}_\dim}(\tilde{\rvec{x}},
\alpha^2, \scalart) \geq 0$, $\forall \scalart \in [\snr_0,
\infty)$. Thus, it immediately follows that for finite $\snr$, $\Delta\I(\snr) \leq 0$ and
\begin{align}
\Delta \I (\snr)
&= \I \left( \alpha \rvec{z}; \sqrt{\snr} \alpha \rvec{z} + \rvec{n}
\right) - \I\left( \rvec{x}; \sqrt{\snr}\rvec{x} + \rvec{n} \right)
\\
&= \Ent \left( \sqrt{\snr} \alpha \rvec{z} + \rvec{n} \right) - \Ent
\left(\sqrt{\snr}\rvec{x} + \rvec{n} \right) \leq 0.
\label{eq:EPI_largZero}
\end{align}
It is now straightforward to see that
\begin{align}
\exp \left( \frac{2}{\dim} \Ent \left(\sqrt{\snr}\rvec{x} +
\rvec{n} \right) \right)
&\geq \exp\left( \frac{2}{\dim} \Ent \left(
\sqrt{\snr} \alpha \rvec{z} + \rvec{n} \right) \right) \\
&= \exp \left( \frac{1}{\dim} \log \left( (2 \pi \nume)^\dim ( \snr \alpha^2
+ 1)^\dim \vert \Cov{n} \vert \right) \right) \\
&= (2 \pi \nume) ( \snr \alpha^2 + 1) \vert \Cov{n}
\vert^{\frac{1}{\dim}} \\
&= \exp \left( \frac{2}{\dim} \Ent \left( \sqrt{\snr} \alpha \rvec{z}
\right) \right) + (2 \pi \nume) \vert \Cov{n} \vert^{\frac{1}{\dim}} \\
&= \exp \left( \frac{2}{\dim} \Ent \left( \sqrt{\snr} \rvec{x} \right) \right) +
(2 \pi \nume) \vert \Cov{n} \vert^{\frac{1}{\dim}} \label{eq:EPI_final}
\end{align}
which is exactly (\ref{eq:specialEPI}) up to scaling in $\sqrt{\snr}$, which we
can always take equal to $1$. We note here that the I-MMSE relationship was
used in \cite{EPI} to prove Shannon's EPI, Costa's EPI and also the generalized EPI for linear
transformations of a random vector.

\section{From Scalar to Vector Channels: Definitions and Preliminaries}
\label{sec:scalartovector}

In the previous section, we discussed the simple model presented in (\ref{eq:modelSimplest}). We have shown that the ``single crossing point'' property initially proved for the scalar channel in \cite{PROP_full} extends very smoothly and intuitively on to this model. The reason for the smooth transition is that, even though we are considering a multivariate scenario, all elements of the input vector undergo the same effect in the channel. They are all amplified by $\snr$ and distorted by additive standard Gaussian noise. From a more technical viewpoint, when one wants to search for a ``single crossing point'' property, one must define some scalar function of some scalar parameter, for which the property holds. In the model of (\ref{eq:modelSimplest}) the intuitive choice is simply to take the trace of the MMSE as a function of $\snr$. And indeed, this is just one possible linear combination included in Theorem \ref{thm:ScalarUniqueCrossingPoint}, for which we have shown that the property can be extended.

Taking the next step, from this initial extension to the general model of (\ref{eq:generalModel}), is a harder task.
Moreover, there is no single method of doing so. In fact there are two degrees of freedom in this transition. First of all there is a need for some scalar parameter that will define $\Chan$. This parameter will be equivalent to the $\snr$ parameter in the scalar case or the simple model of (\ref{eq:modelSimplest}). Secondly, there is a need for some scalar function of the matrix $\Q$. In the simple model of (\ref{eq:modelSimplest}) we defined the function $\q_{\mat{A}}(\rvec{x}, \sigma^2, \scalart)$ which was simply taking some linear (positive semidefinite) combination of the elements of the matrix. The trace function is one example of such a combination, which is also the most intuitive extension; however, in the general model (or even the parallel model, which we will discuss shortly) the ``single crossing point'' property does not hold, in general, for the trace function. Thus, our goal is to find a ``single crossing point'' property that will be both elegant and, more importantly, useful and applicable.

As such, in this work we narrowed our investigation to the subset of parallel channels or diagonal matrices $\Chan$, for which we have the following result.

\begin{lem} \label{lem:path}
For any two diagonal channel matrices, $\Chan_1$ and $\Chan_2$, such that $\mat{0} \preceq \Chan_1 \preceq \Chan_2$, there exists a path $\Chan( \vectort )$ such that the following holds:
\begin{itemize}

\item For all $\vectort$, $\Chan( \vectort) \succeq \mat{0}$ and is a diagonal matrix.

\item For all $\vectort$, $\Jacob_{\vectort} \Chan( \vectort) \succeq \mat{0}$ and is a diagonal matrix.

\item $\Chan( 0 ) = \mat{0}$.

\item $\Chan( \vectort_1 ) = \Chan_1$ and $\Chan( \vectort_2 ) = \Chan_2$ where $0 \leq \vectort_1 \leq \vectort_2$.

\item The diagonal elements of $\Chan( \vectort)$ go to $\infty$ in a linear rate.
\end{itemize}
\end{lem}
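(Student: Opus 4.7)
The plan is to construct an explicit piecewise-linear path and verify the five bullets one at a time. Since $\Chan_1$ and $\Chan_2$ are diagonal with $\mathbf{0} \preceq \Chan_1 \preceq \Chan_2$, I can write $\Chan_1 = \Diag{\mathbf{a}}$ and $\Chan_2 = \Diag{\mathbf{b}}$ with $0 \le a_i \le b_i$ for every $i = 1, \dots, \dim$. Choosing the anchor parameters $\vectort_1 := 1$ and $\vectort_2 := 2$, I would define the diagonal entries entrywise by
\begin{equation*}
[\Chan(\vectort)]_{ii} \;=\;
\begin{cases}
a_i \, \vectort, & \vectort \in [0, 1],\\
a_i + (b_i - a_i)(\vectort - 1), & \vectort \in [1, 2],\\
b_i + (\vectort - 2), & \vectort \in [2, \infty),
\end{cases}
\end{equation*}
and set all off-diagonal entries to zero.

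Then I would verify the bullets in turn: (i) every diagonal entry is nonnegative on each piece and every off-diagonal entry is identically zero, so $\Chan(\vectort) \succeq \mathbf{0}$ and is diagonal; (ii) on each of the three pieces the derivative $\Jacob_\vectort \Chan(\vectort)$ is a constant diagonal matrix whose nonnegative entries are $a_i$, $b_i - a_i$, or $1$ respectively, hence PSD and diagonal; (iii) $\Chan(0) = \mathbf{0}$ by inspection; (iv) $\Chan(\vectort_1) = \Chan_1$, $\Chan(\vectort_2) = \Chan_2$, and $0 \le \vectort_1 \le \vectort_2$ are immediate from the construction; (v) as $\vectort \to \infty$ the $i$-th diagonal entry is $b_i + \vectort - 2$, so every diagonal element grows linearly in $\vectort$.

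I do not anticipate any real obstacle: the lemma is essentially a normalization statement meant to be combined later with monotonicity results along the path $\vectort \mapsto \MSE{x}(\vectort)$, and it is enough to exhibit any componentwise monotone curve that reaches both anchor matrices and then escapes to infinity at a linear rate. The only minor subtlety is that $\Jacob_\vectort \Chan(\vectort)$ is only piecewise constant and has jumps at the two knots $\vectort = 1, 2$; since the lemma only requires the derivative, where it exists, to be diagonal and PSD, this is harmless. If the applications invoking this lemma need a fully smooth path, the same scheme goes through with each entry $[\Chan(\vectort)]_{ii}$ replaced by any nondecreasing $C^\infty$ function of $\vectort$ that agrees with the piecewise-linear recipe at $0, \vectort_1, \vectort_2$ and equals $b_i + (\vectort - \vectort_2)$ for large $\vectort$.
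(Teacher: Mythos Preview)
Your proposal is correct and is essentially the same construction as the paper's: the paper defines each diagonal entry as $g_i(\vectort)=\int_0^{\vectort} h_i(\tau)\,\d\tau$ for some nonnegative $h_i$ with the right areas on $[0,\vectort_1]$ and $[\vectort_1,\vectort_2]$, then lets $h_i$ plateau so that $g_i$ grows linearly---your piecewise-linear recipe is exactly this with $h_i$ piecewise constant and the specific choices $\vectort_1=1$, $\vectort_2=2$. Your remark about the derivative only existing piecewise (and the easy $C^\infty$ fix) is a fair caveat that the paper glosses over.
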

\begin{IEEEproof}
We need to define a function, $g_i(\vectort)$, for each diagonal element of the matrix $\Chan( \vectort )$. It suffices to choose any non-negative function $h_i(\vectort)$ such that the area from $0$ to $\vectort_1$ will equal $[\Chan_1]_{ii}$ and the area from $\vectort_1$ to $\vectort_2$ will equal $[\Chan_2]_{ii} - [\Chan_1]_{ii}$. Given that, we can set the function to be $g_i(\vectort) = \int_0^\vectort h_i(\tau) \d \tau$. The entire path, $\Chan(\vectort)$, will be given by:
\begin{eqnarray} \label{eq:path}
\Chan(\vectort) = \Diag{ \{g_i(\vectort)\} } .
\end{eqnarray}
As required, this path passes between the zero matrix at $\vectort=0$,
$\Chan_1$ at $\vectort_1$ and $\Chan_2$ at $\vectort_2$. Since $h_i(\vectort)$ are chosen nonnegative for all $i$ we
have a nonnegative and monotonically nondecreasing path for all $\vectort$. The above construction guarantees that both $\Chan(\vectort)$ and $\Jacob_{\vectort} \Chan(\vectort)$ will be diagonal matrices for all $\vectort$. Moreover, we may also assume that the functions $h_i(\vectort)$ plateau after complying with all other requirements, that is, from $\vectort_2$ onwards. This assures that $g_i(\vectort)$ goes to $\infty$ in a linear rate.
\end{IEEEproof}
Note that the above lemma can be extended to $M$ matrices $\Chan_j \preceq \Chan_{j+1}$ for $j=1,...,M-1$, using a similar construction.

Under the above detailed limitation, of restricting ourselves to parallel channels, we examine two different cases: phases two and three of our extension. In phase two, detailed in Section \ref{sec:independentGaussian}, we assume that the Gaussian covariance matrix defining the matrix $\Q$ in (\ref{eq:Fsnr}) is that of a Gaussian distribution with independent elements, that is $\Cov{x_G} = \mat{\Lambda}_{\rvec{x_G}}$ is a diagonal matrix. In this case we will see that the ``single crossing point'' property occurs for each and every diagonal element of $\Q$. This is not a straightforward extension of the scalar property, since the elements of the random input vector $\rvec{x}$ are, in general, not independent. In Section \ref{sec:generalGaussian} we proceed to phase three where we allow any Gaussian distribution in the definition of $\Q$. In this phase we will see that the ``single crossing point'' property occurs for each and every eigenvalue of the matrix $\Q$. Surely, this is not a straightforward extension of any of the previous results. Moreover, the results of Section \ref{sec:independentGaussian} cannot be trivially deduced from the results of phase three, since restricting only the Gaussian covariance to be diagonal does not guarantee that the eigenvalues of $\Q$ will be on its diagonal. Thus, we have two distinctive results. All results (including those of the previous section), fall back to the scalar ``single crossing point'' property result \cite{PROP,PROP_full} when both the arbitrary input vector $\rvec{x}$ and the Gaussian input random vector are restricted to have independent elements.

Before proceeding to examine these two cases we require a preliminary result.
The basis for the applicability of the ``single crossing point'' property in the scalar case and in the simple model of (\ref{eq:generalModel}) is the I-MMSE relationship \cite{IMMSE}. This is still the case in the extensions we are considering next, however, we require also an extension of the I-MMSE result which was derived by Palomar and Verd$\acute{\textrm{u}}$ in \cite{Palomar}:
\begin{gather} \label{eq:palomarVerdu}
\nabla_{\Chan} \Igen{\rvec{x}}{\Chan \rvec{x}+\rvec{n}} = \Chan \EM .
\end{gather}
This relationship was derived for complex-valued variables, however it holds verbatim for real-valued variables. Assuming the channel coefficients can be written as a function of a single parameter, $\vectort$, we can rewrite the above relationship as an integral over this parameter, which results with the following expression:
\begin{align} \label{eq:lineIntegral}
\Igen{\rvec{x}}{\rvec{y}(\vectort)} &=  \Igen{\rvec{x}}{\Chan(\vectort) \rvec{x} + \rvec{n}} \nonumber \\
 &= \int_{\tau=0}^{\vectort} \ones^\T \left( \Chan(\tau) \MSE{\rvec{x}}(\tau) \circ \Jacob_{\tau}\Chan(\tau) \right) \ones \d \tau \nonumber \\
 &=  \int_{\tau=0}^{\vectort} \Tr \left( \left(
\Chan(\tau) \MSE{\rvec{x}}(\tau) \right)^\T  \Jacob_{\tau}\Chan(\tau)
\right) \d \tau \\ \label{eq:lineIntegral2}
&=  \int_{\tau=0}^{\vectort} \Tr \left(
\ChanDChan(\tau) \MSE{\rvec{x}}(\tau)
\right) \d \tau
\end{align}
where we have used the following definition:
\begin{gather} \label{eq:definitionB}
\ChanDChan(\vectort) \equiv \Chan(\vectort) \left( \Jacob_{\vectort}\Chan(\vectort) \right)^\T.
\end{gather}
This also carries over to the conditioned case as follows:
\begin{align} \label{eq:lineIntegral_cond}
\Icond{\rvec{x}}{\rvec{y}(\vectort)}{\rvec{u}} &=  \Icond{\rvec{x}}{\Chan(\vectort) \rvec{x} + \rvec{n}}{\rvec{u}} \nonumber \\
 &=  \int_{\tau=0}^{\vectort} \Tr \left( \left(
\Chan(\tau) \MSE{\rvec{x}|\rvec{u}}(\tau) \right)^\T  \Jacob_{\tau}\Chan(\tau)
\right) \d \tau \\ \label{eq:lineIntegral_cond2}
&=  \int_{\tau=0}^{\vectort} \Tr \left(
\ChanDChan(\tau) \MSE{\rvec{x}|\rvec{u}}(\tau)
\right) \d \tau.
\end{align}


\section{Vector Channel: Comparing with an Independent Gaussian Distribution}
\label{sec:independentGaussian}

We begin our analysis of the extended model (\ref{eq:generalModel}), limited to parallel channel matrices, by assuming that the Gaussian covariance matrix, defining the matrix $\Q$, is that of an independent distribution, that is, $\Cov{x_G} = \mat{\Lambda}_{\rvec{x_G}}$, throughout this section. Recall, nonetheless, that $\rvec{x}$ remains completely arbitrary. More precisely, we consider the following matrix:
\begin{align}
\Q(\rvec{x}, \mat{\Lambda}_{\rvec{x_G}}, \vectort)
  &= \EM_{G}(\mat{\Lambda}_{\rvec{x_G}}, \vectort) - \MSE{x}(\vectort)
\label{eq:defQind} \\
 &= \Diag{\left\{ \frac{[\mat{\Lambda}_{\rvec{x_G}}]_{ii}}{1 +
[\Chan(\vectort)]_{ii}^2 [\mat{\Lambda}_{\rvec{x_G}}]_{ii}} \right\}} -
\MSE{x}(\vectort). \label{eq:Qindexpr}
\end{align}
Under these assumptions we will see, in Section \ref{ssec:singleCrossingPointInd}, that a ``single crossing point'' property occurs for each and every diagonal element of the matrix $\Q$. After extending this result to the conditioned case, in Section \ref{ssec:independentConditioned}, we will use the I-MMSE relationship, in Section \ref{ssec:independentMIproperties}, to show the effect of this property on information-theoretic quantities, and more specifically on the mutual information. Finally, in Section \ref{ssec:applicationInd}, we will put these results to use on a variant of the \emph{degraded} BC, in order to show their applicability to information theory problems.

\subsection{A Single Crossing Point Property on the Diagonal Elements of $\Q$}
\label{ssec:singleCrossingPointInd}

As pointed out above, our main result, in this section, is an extension of the ``single crossing point'' property. Precisely, we show that the property extends on each and every diagonal element of the matrix $\Q$. This result is given in the next theorem.

\begin{thm} \label{thm:DiagonalsUniqueCrossingPoint}
The diagonal entries of the matrix-valued function $\vectort \mapsto
\Q(\rvec{x}, \mat{\Lambda}_{\rvec{x_G}}, \vectort)$, defined
in (\ref{eq:defQind}), have no nonnegative-to-negative zero crossings and, at
most, a single negative-to-nonnegative zero crossing in the range $\vectort \in
[0, \infty)$. Moreover, let $\vectort_0 \in [0,  \infty)$ be the
negative-to-nonnegative crossing point for $[\Q(\rvec{x}, \mat{\Lambda}_{\rvec{x_G}}, \vectort)]_{ii}$. Then,
\begin{enumerate}

\item $[\Q(\rvec{x}, \mat{\Lambda}_{\rvec{x_G}}, 0)]_{ii} \leq 0$.
\item $[\Q(\rvec{x}, \mat{\Lambda}_{\rvec{x_G}}, \vectort)]_{ii}$ is a strictly
increasing function in the range $\vectort \in [0, \vectort_0)$.
\item $[\Q(\rvec{x}, \mat{\Lambda}_{\rvec{x_G}}, \vectort)]_{ii} \geq 0$ for
all $\vectort \in [\snr_0, \infty)$.
\item Assuming $\lim_{\vectort \rightarrow \infty} [\Chan(\vectort)]_{ii} = \infty$, we have that $\lim_{\vectort \to \infty} [\Q(\rvec{x}, \mat{\Lambda}_{\rvec{x_G}},
\vectort)]_{ii} = 0$.

\item $[\Q(\rvec{x}, \mat{\Lambda}_{\rvec{x_G}}, \vectort)]_{ii}$ is a continuous and monotonically increasing function in $[\mat{\Lambda}_{\rvec{x_G}}]_{ii}$.

\end{enumerate}
\end{thm}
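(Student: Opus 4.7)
The plan is to mirror the derivative technique used in the proof of Theorem \ref{thm:ScalarUniqueCrossingPoint}: along the path $\Chan(\vectort)$ furnished by Lemma \ref{lem:path}, I would show that the derivative of $[\Q(\rvec{x}, \mat{\Lambda}_{\rvec{x_G}}, \vectort)]_{ii}$ with respect to $\vectort$ is strictly positive at every $\vectort$ at which this entry is negative. This single fact simultaneously rules out nonnegative-to-negative zero crossings, limits the number of negative-to-nonnegative crossings to at most one, and, for $\vectort \in [0,\vectort_0)$, produces the strict monotonicity claimed in item (2).

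The core algebraic ingredient is an identity for the derivative of $[\MSE{x}(\vectort)]_{ii}$ with respect to the diagonal channel parameters $g_j(\vectort) = [\Chan(\vectort)]_{jj}$. A matrix-valued counterpart of the scalar relation used in Lemma \ref{lem:dqAdgamma}, obtained by differentiating the Palomar--Verd\'u gradient identity (\ref{eq:palomarVerdu}) once more, gives
\[
\frac{\partial [\MSE{x}(\vectort)]_{ii}}{\partial g_k} = -2\, g_k \Esp[1]{[\CMSEr{x}{y}]_{ik}^2}.
\]
Combining this with the elementary derivative of the Gaussian term in (\ref{eq:Qindexpr}) and writing $g_k' = \Jacob_{\vectort} g_k(\vectort) \geq 0$ from Lemma \ref{lem:path}, the chain rule delivers
\begin{align*}
\Jacob_{\vectort} [\Q(\rvec{x}, \mat{\Lambda}_{\rvec{x_G}}, \vectort)]_{ii} =\;& 2\, g_i g_i' \left( \Esp[1]{[\CMSEr{x}{y}]_{ii}^2} - \frac{[\mat{\Lambda}_{\rvec{x_G}}]_{ii}^2}{(1 + g_i^2\, [\mat{\Lambda}_{\rvec{x_G}}]_{ii})^2} \right) \\
& + 2 \sum_{k \neq i} g_k g_k' \, \Esp[1]{[\CMSEr{x}{y}]_{ik}^2} .
\end{align*}
The cross-channel sum is automatically nonnegative. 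For the diagonal term, Jensen's inequality yields $\Esp[1]{[\CMSEr{x}{y}]_{ii}^2} \geq [\MSE{x}(\vectort)]_{ii}^2$, and the hypothesis $[\Q]_{ii}(\vectort) < 0$ is precisely $[\MSE{x}(\vectort)]_{ii} > [\mat{\Lambda}_{\rvec{x_G}}]_{ii}/(1 + g_i^2 [\mat{\Lambda}_{\rvec{x_G}}]_{ii})$; together they make the parenthesized quantity strictly positive whenever $g_i g_i' > 0$.

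With derivative positivity established, items (2) and (3) follow in the same way as in the scalar proof, and item (1) follows by combining strict monotonicity on $[0,\vectort_0)$ with continuity of $[\Q]_{ii}$ at $\vectort_0$. For item (4), as $g_i(\vectort)\to\infty$ both the Gaussian term in (\ref{eq:Qindexpr}) and $[\MSE{x}(\vectort)]_{ii}$ tend to zero, the latter because $\rvec{y}_i = g_i \rvec{x}_i + \rvec{n}_i$ renders $\rvec{x}_i$ perfectly observable in the limit. Item (5) is elementary: $[\MSE{x}(\vectort)]_{ii}$ does not depend on $[\mat{\Lambda}_{\rvec{x_G}}]_{ii}$, and the scalar map $\lambda \mapsto \lambda/(1+g_i^2 \lambda)$ is continuous and strictly increasing in $\lambda$. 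The chief technical hurdle is establishing the matrix-valued derivative identity for $[\MSE{x}]_{ii}$ in the correlated, non-Gaussian regime; once that identity is in hand, the rest of the argument reduces to the Jensen-plus-monotonicity skeleton already used for the scalar single-crossing-point result.
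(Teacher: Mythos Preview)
Your proposal is correct and follows essentially the same approach as the paper: compute $\Jacob_{\vectort}[\Q]_{ii}$ via the chain rule and the identity $\partial[\MSE{x}]_{ii}/\partial g_k = -2 g_k\,\Esp[1]{[\CMSEr{x}{y}]_{ik}^2}$ (which the paper cites from \cite[Eq.~(131)]{Palomar2}), drop the nonnegative off-diagonal sum, and bound the diagonal term using Jensen together with the hypothesis $[\Q]_{ii}<0$. The only organizational difference is that the paper first invokes an auxiliary lemma (Lemma~\ref{lem:indGaussianOnTop}) to dispose of the case $[\Cov{x}]_{ii}<[\mat{\Lambda}_{\rvec{x_G}}]_{ii}$, where $[\Q]_{ii}$ is strictly positive for all $\vectort$ and no crossing occurs, and then obtains item~(1) directly from $[\Q]_{ii}(0)=[\mat{\Lambda}_{\rvec{x_G}}]_{ii}-[\Cov{x}]_{ii}\le 0$ in the complementary case; your route to item~(1) via strict monotonicity on $[0,\vectort_0)$ is equally valid.
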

\begin{IEEEproof}
Before giving the actual proof, let us first present an intermediate result.
\begin{lem} \label{lem:indGaussianOnTop}
Let $\rvec{x} \in \R^{\dim}$ be a random vector such
that $[\Cov{x}]_{ii} \leq [\mat{\Lambda}_{\rvec{x_G}}]_{ii}$,
where $i\in[1, \dim]$. Then, for
every $\vectort \geq 0$, we have
\begin{eqnarray} \label{eq:indGaussianOnTop}
[\MSE{x}(\vectort)]_{ii} \leq [\EM_{G}(\mat{\Lambda}_{\rvec{x_G}},
\vectort)]_{ii} = \frac{[\mat{\Lambda}_{\rvec{x_G}}]_{ii}}{1 +
[\Chan(\vectort)]_{ii}^2 [\mat{\Lambda}_{\rvec{x_G}}]_{ii}}
\end{eqnarray}
with equality if and only if $[\rvec{X}]_i$ is Gaussian distributed,
independent of the other entries of $\rvec{x}$ and such that $[\Cov{x}]_{ii} =
[\mat{\Lambda}_{\rvec{x_G}}]_{ii}$.
\end{lem}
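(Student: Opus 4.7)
The plan is to view $[\MSE{x}(\vectort)]_{ii}$ as the scalar MMSE of estimating $[\rvec{x}]_i$ from the full observation $\rvec{y} = \Chan(\vectort)\rvec{x} + \rvec{n}$, and then bound it by a chain of three elementary inequalities. First, drop the ``off-diagonal'' components of $\rvec{y}$, which can only enlarge the MMSE; because $\Chan(\vectort)$ is diagonal, the surviving observation $[\rvec{y}]_i = [\Chan(\vectort)]_{ii}\,[\rvec{x}]_i + [\rvec{n}]_i$ is a bona fide scalar AWGN measurement of $[\rvec{x}]_i$ at gain $[\Chan(\vectort)]_{ii}$. Second, upper-bound this scalar MMSE by the LMMSE of the same scalar problem, which evaluates in closed form to $[\Cov{x}]_{ii}/(1+[\Chan(\vectort)]_{ii}^{2}[\Cov{x}]_{ii})$. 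Third, invoke the strict monotonicity of $v \mapsto v/(1+av)$ in $v \geq 0$, together with the hypothesis $[\Cov{x}]_{ii} \leq [\mat{\Lambda}_{\rvec{x_G}}]_{ii}$, to arrive at the stated bound.

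Concretely, I would start from the identity $[\MSE{x}(\vectort)]_{ii} = \Esp{([\rvec{x}]_i - \CEsp{[\rvec{x}]_i}{\rvec{y}})^2}$, and use that any estimator of $[\rvec{x}]_i$ built from $[\rvec{y}]_i$ alone is admissible given $\rvec{y}$, to write
\[
[\MSE{x}(\vectort)]_{ii} \;\leq\; \Esp{([\rvec{x}]_i - \CEsp{[\rvec{x}]_i}{[\rvec{y}]_i})^2}.
\]
The right-hand side is a purely scalar MMSE to which the standard MMSE--LMMSE inequality applies, yielding the bound in terms of $[\Cov{x}]_{ii}$; variance monotonicity then finishes the job. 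An equivalent packaging would be to apply the scalar case of Lemma~\ref{lem:GaussianOnTop} (or even the original scalar ``single crossing'' result it generalizes) to the marginal distribution of $[\rvec{x}]_i$ observed through the scalar channel with gain $[\Chan(\vectort)]_{ii}$.

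For the ``if and only if'' characterization, I would trace back which each of the three inequalities demands for tightness. Tightness of the variance-monotonicity step forces $[\Cov{x}]_{ii} = [\mat{\Lambda}_{\rvec{x_G}}]_{ii}$. Tightness of MMSE$=$LMMSE on the scalar AWGN channel forces the posterior mean $\CEsp{[\rvec{x}]_i}{[\rvec{y}]_i}$ to be affine in $[\rvec{y}]_i$, which is the classical characterization of Gaussianity of $[\rvec{x}]_i$. Tightness of the information-dropping step requires that conditioning additionally on $\{[\rvec{y}]_j\}_{j\neq i}$ do not alter the estimate of $[\rvec{x}]_i$; since $\rvec{n}$ has independent entries, this collapses to $[\rvec{x}]_i$ being independent of the other entries of $\rvec{x}$. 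Jointly these three conditions give exactly the stated equality case.

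The only non-routine step is the equality characterization in the scalar MMSE$=$LMMSE bound, i.e., justifying that an affine conditional mean in a scalar AWGN channel forces the input to be Gaussian; this is a classical Gaussian characterization that can be obtained by a characteristic-function argument applied to the identity $[\rvec{x}]_i = \CEsp{[\rvec{x}]_i}{[\rvec{y}]_i} + ([\rvec{x}]_i - \CEsp{[\rvec{x}]_i}{[\rvec{y}]_i})$. The rest is a short and direct combination of the tower property for conditional expectations, the closed-form scalar LMMSE, and elementary monotonicity, so I do not expect any further obstacles.
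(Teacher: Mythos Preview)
Your proof is correct and follows a slightly different—and arguably more elementary—route than the paper's. Both arguments establish the chain
\[
[\MSE{x}(\vectort)]_{ii} \;\leq\; \frac{[\Cov{x}]_{ii}}{1+[\Chan(\vectort)]_{ii}^{2}[\Cov{x}]_{ii}} \;\leq\; \frac{[\mat{\Lambda}_{\rvec{x_G}}]_{ii}}{1+[\Chan(\vectort)]_{ii}^{2}[\mat{\Lambda}_{\rvec{x_G}}]_{ii}},
\]
with the last step being variance monotonicity in both cases. The difference is in how the first inequality is reached. The paper stays at the matrix level: it first invokes $\MSE{x}(\vectort) \preceq \EM_{G}(\Cov{x},\vectort)$ (the MMSE matrix is dominated by the linear/Gaussian MMSE matrix with the \emph{same} covariance $\Cov{x}$), extracts the $(i,i)$ entry, and then applies a second inequality, ``correlation can only help a Gaussian estimator,'' to pass from $[\EM_{G}(\Cov{x},\vectort)]_{ii}$ to the independent-Gaussian value $[\EM_{G}(\mat{I}_\dim\circ\Cov{x},\vectort)]_{ii}$. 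You instead reduce to a scalar problem immediately by discarding $\{[\rvec{y}]_j\}_{j\neq i}$ and then apply the scalar MMSE\,$\leq$\,LMMSE bound. Your route avoids any matrix ordering and makes the equality analysis more transparent: the three scalar tightness conditions map directly onto the three claimed properties (equal variance, Gaussianity of $[\rvec{x}]_i$, independence from the remaining coordinates), whereas the paper's matrix-level equality statement (``$\rvec{x}\sim\NZ{\Cov{x}}$'') is in fact stronger than what is needed for a single diagonal entry. Your flagging of the scalar MMSE\,$=$\,LMMSE $\Rightarrow$ Gaussian characterization as the only non-routine step is accurate.
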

\begin{IEEEproof}
See Appendix \ref{app:indGaussianOnTop}.
\end{IEEEproof}

Now, according to Lemma \ref{lem:indGaussianOnTop}, for the case where
$[\Cov{x}]_{ii} < [\mat{\Lambda}_{\rvec{x_G}}]_{ii}$, the function
$[\Q(\rvec{x}, \mat{\Lambda}_{\rvec{x_G}}, \vectort)]_{ii}$ has no zeros and
the statement in Theorem \ref{thm:ScalarUniqueCrossingPoint} is true. In
addition, if $[\rvec{x}]_i$ is Gaussian distributed (and independent of the other
entries of the vector $\rvec{x}$) with variance equal to $[\Cov{x}]_{ii} =
[\mat{\Lambda}_{\rvec{x_G}}]_{ii}$, then $[\Q(\rvec{x},
\mat{\Lambda}_{\rvec{x_G}}, \vectort)]_{ii} = 0$, $\forall \vectort$, which
also fulfills Theorem \ref{thm:ScalarUniqueCrossingPoint}.

Thus, from this point, we can assume that $[\Cov{x}]_{ii} \geq
[\mat{\Lambda}_{\rvec{x_G}}]_{ii}$ and that $[\rvec{x}]_i$ is not: Gaussian
distributed, independent of the other entries of $\rvec{x}$, and with
$[\Cov{x}]_{ii} = [\mat{\Lambda}_{\rvec{x_G}}]_{ii}$. Now, for
$\vectort = 0$ we have $[\Q(\rvec{x}, \mat{\Lambda}_{\rvec{x_G}},
0)]_{ii} = [\mat{\Lambda}_{\rvec{x_G}}]_{ii} - [\Cov{x}]_{ii} \leq 0$ as
required.

Similarly as it was done in the proof of Theorem
\ref{thm:ScalarUniqueCrossingPoint}, in order to prove that no
nonnegative-to-negative and at most one negative-to-nonnegative zero crossings
of $[\Q(\rvec{x}, \mat{\Lambda}_{\rvec{x_G}}, \vectort)]_{ii}$ can occur, we
only need to show that the derivative of
$[\Q(\rvec{x}, \mat{\Lambda}_{\rvec{x_G}}, \vectort)]_{ii}$ with respect to
$\vectort$ is positive for all values of $\vectort$ for which $[\Q(\rvec{x},
\mat{\Lambda}_{\rvec{x_G}}, \vectort)]_{ii} < 0$. Observe that $[\Q(\rvec{x},
\mat{\Lambda}_{\rvec{x_G}}, \vectort)]_{ii} < 0$ implies
\begin{gather}
\frac{[\mat{\Lambda}_{\rvec{x_G}}]_{ii}}{1 + [\Chan(\vectort)]_{ii}^2
[\mat{\Lambda}_{\rvec{x_G}}]_{ii}} = [\EM_{G}(\mat{\Lambda}_{\rvec{x_G}},
\vectort)]_{ii} <
[\MSE{x}(\vectort)]_{ii} .
\end{gather}


Now, from (\ref{eq:defQind}), it is clear that, in order to compute the
derivative of $[\Q(\rvec{x}, \mat{\Lambda}_{\rvec{x_G}}, \vectort)]_{ii}$, we
first need the derivative of $[\MSE{x}(\vectort)]_{ii}$:
\begin{align} \label{eq:proofThmSingleCrossingInd}
 \Jacob_\vectort [\MSE{x}(\vectort)]_{ii}
&= \Jacob_{\Chan(\vectort)} [\MSE{x}(\vectort)]_{ii} \Jacob_\vectort
\Chan(\vectort) \\
&= \sum_{j=1}^{\dim} \Jacob_{[\Chan(\vectort)]_{jj}} [\MSE{x}(\vectort)]_{ii}
\Jacob_\vectort [\Chan(\vectort)]_{jj}
\end{align}
where, in the last step, we have used the assumption that $\Chan(\vectort)$ is
a diagonal matrix for all $\vectort$. From \cite[Eq.~(131)]{Palomar2}, we have
\begin{align}
 \Jacob_{[\Chan(\vectort)]_{jj}} [\MSE{x}(\vectort)]_{ii}
&= -2 \Esp[1]{[\CMSE{x}{y}]_{ij}[\CMSE{x}{y} \Chan(\vectort)^\T]_{ij}} \\
&= -2 [\Chan(\vectort)]_{jj} \Esp[1]{[\CMSE{x}{y}]_{ij}^2}.
\end{align}

Recalling the definition $[\ChanDChan(\vectort)]_{ii} = [\Chan(\vectort)]_{ii}
\Jacob_\vectort [\Chan(\vectort)]_{ii}$ in (\ref{eq:definitionB}), we are now ready to compute
the derivative of $[\Q(\rvec{x}, \mat{\Lambda}_{\rvec{x_G}}, \vectort)]_{ii}$,
which reads as
\begin{IEEEeqnarray}{rCl} 
\IEEEeqnarraymulticol{3}{l}{
\Jacob_\vectort [\Q(\rvec{x}, \mat{\Lambda}_{\rvec{x_G}}, \vectort)]_{ii}} \\
\quad & = & 2 \sum_{j=1}^{\dim} [ \ChanDChan(\vectort) ]_{jj} \left(
\Esp[1]{[\CMSE{x}{y}]_{ij}^2} - [ \EM_{G}(\mat{\Lambda}_{\rvec{x_G}}, \vectort)
]_{ij}^2 \right) \label{eq:derivativeHi_1} \\
& = & 2 [\ChanDChan(\vectort)]_{ii} \left( \Esp[1]{[\CMSE{x}{y}]_{ii}^2} -
[\EM_{G}(\mat{\Lambda}_{\rvec{x_G}}, \vectort)]_{ii}^2
\right)
+ 2 \sum_{j \neq i} [ \ChanDChan(\vectort) ]_{jj}
\Esp[1]{[\CMSE{x}{y}]_{ij}^2}
\label{eq:derivativeHi_2} \\
& \geq & 2 [\ChanDChan(\vectort)]_{ii} \left( \Esp[1]{[\CMSE{x}{y}]_{ii}^2} -
[\EM_{G}(\mat{\Lambda}_{\rvec{x_G}}, \vectort)]_{ii}^2
\right) \label{eq:derivativeHi_3} \\
& > & 2 [\ChanDChan(\vectort)]_{ii} \left( \Esp[1]{[\CMSE{x}{y}]_{ii}^2} -
\big(\Esp[1]{[\CMSE{x}{y}]_{ii}} \big)^2
\right) \label{eq:derivativeHi_4}  \\
& \geq & 0 \label{eq:derivativeHi_5}
\end{IEEEeqnarray}
where (\ref{eq:derivativeHi_1}) follows from the fact that for
Gaussian input distributions (not necessarily \iid), the conditional MMSE
matrix $\CMSEr{\rvec{x}_G}{y}$ does not depend on the observation
$\rvecr{y}$, \ie, $\EM_G (\CovMat_{\rvec{x}_G}, \vectort) = \CM_{\rvec{x_G}}$.
Equation (\ref{eq:derivativeHi_2}) is due to the
fact that the entries of the Gaussian input distribution $\rvec{x_G}$ are
independent and, thus, its MMSE matrix is diagonal; (\ref{eq:derivativeHi_3}) is
due to the fact that $[\ChanDChan(\vectort)]_{ii} \geq 0$, as shown in Lemma \ref{lem:path};
(\ref{eq:derivativeHi_4}) follows from the assumption $[\Q(\rvec{x},
\mat{\Lambda}_{\rvec{x_G}}, \vectort)]_{ii} < 0$ and
(\ref{eq:derivativeHi_5}) can be derived from Jensen's inequality.

Observe that the inequality in (\ref{eq:derivativeHi_5}), which
holds for values of $\vectort$ such that $[\Q(\rvec{x},
\mat{\Lambda}_{\rvec{x_G}}, \vectort)]_{ii} < 0$, also
proves the second item in Theorem \ref{thm:DiagonalsUniqueCrossingPoint} and the
third one follows directly from the inexistence of nonnegative-to-negative zero
crossings. Regarding the fourth item, it is clear that
$\lim_{\vectort \to \infty} [\Q(\rvec{x}, \mat{\Lambda}_{\rvec{x_G}},
\vectort)]_{ii} = 0$, as both terms in the expression of $[\Q(\rvec{x},
\mat{\Lambda}_{\rvec{x_G}}, \vectort)]_{ii}$ in (\ref{eq:Qindexpr}) tend to
zero, when $\lim_{\vectort \rightarrow \infty} [\Chan(\vectort)]_{ii} = \infty$.
Finally, the last property is a direct consequence of the definition of the function $\Q(\rvec{x}, \mat{\Lambda}_{\rvec{x_G}},
\vectort)$ (\ref{eq:Qindexpr}).
\end{IEEEproof}

We now define the following function:
\begin{gather} \label{eq:BQ}
\BQ(\rvec{x},\mat{\Lambda}_{\rvec{x_G}},\vectort) =  [\ChanDChan(\vectort)]_{ii} [\Q(\rvec{x},\mat{\Lambda}_{\rvec{x_G}}, \vectort)]_{ii}
\end{gather}
and also,
\begin{gather} \label{eq:BQsum}
\BQsum(\rvec{x},\mat{\Lambda}_{\rvec{x_G}},\vectort) = \sum_{i=1}^{\dim} \BQ(\rvec{x},\mat{\Lambda}_{\rvec{x_G}},\vectort)
\end{gather}
For which we can give the following two corollaries,
\begin{cor} \label{cor:BQ}
Let $\rvec{x} \in \R^{\dim}$ be any random vector. The function $\BQ(\rvec{x},\mat{\Lambda}_{\rvec{x_G}},\vectort)$ has the following properties:
\begin{enumerate}

\item $\BQ(\rvec{x},\mat{\Lambda}_{\rvec{x_G}},0) = 0$.

\item It has at most a single negative-zero-positive crossing in the range $\vectort \in
(0, \infty)$.

\item When $\lim_{\vectort \rightarrow \infty} [\Chan(\vectort)]_{ii} = \infty$ we have that, $\lim_{t \to \infty} \BQ(\rvec{x},\mat{\Lambda}_{\rvec{x_G}},\vectort) = 0$.

\item If $[\mat{\Lambda}_{\rvec{x_G}}]_{ii} = [\Cov{\rvec{x}}]_{ii}$, then $\BQ(\rvec{x},\mat{\Lambda}_{\rvec{x_G}},\vectort) \geq 0$ for all $\vectort$. Furthermore, $\BQ(\rvec{x},\mat{\Lambda}_{\rvec{x_G}},\vectort)$ is a continuous and monotonically increasing function in $[\mat{\Lambda}_{\rvec{x_G}}]_{ii}$.
\end{enumerate}
\end{cor}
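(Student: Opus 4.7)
My plan is to derive all four properties of $\BQ(\rvec{x}, \mat{\Lambda}_{\rvec{x_G}}, \vectort)$ directly from Theorem \ref{thm:DiagonalsUniqueCrossingPoint}, exploiting that $\BQ$ is the product of the nonnegative scalar $[\ChanDChan(\vectort)]_{ii}$ (guaranteed by Lemma \ref{lem:path}) and the diagonal entry $[\Q(\rvec{x}, \mat{\Lambda}_{\rvec{x_G}}, \vectort)]_{ii}$, whose sign and crossing structure are already pinned down. No new estimation-theoretic machinery should be needed; the work is simply to transfer each relevant property of $[\Q]_{ii}$ and $[\ChanDChan]_{ii}$ through this product.

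I would first dispatch the easy items (1) and (4). For item (1), Lemma \ref{lem:path} gives $\Chan(0) = \mat{0}$, hence $[\ChanDChan(0)]_{ii} = [\Chan(0)]_{ii} \Jacob_\vectort [\Chan(0)]_{ii} = 0$, which forces $\BQ = 0$ at $\vectort = 0$ regardless of the value of $[\Q]_{ii}$. For item (4), the hypothesis $[\mat{\Lambda}_{\rvec{x_G}}]_{ii} = [\Cov{x}]_{ii}$ gives $[\Q(\rvec{x}, \mat{\Lambda}_{\rvec{x_G}}, 0)]_{ii} = 0$, and since Theorem \ref{thm:DiagonalsUniqueCrossingPoint} forbids nonnegative-to-negative zero crossings, $[\Q]_{ii}$ cannot dip below zero after starting at zero; thus $[\Q]_{ii} \geq 0$ for every $\vectort \geq 0$, and multiplying by the nonnegative $[\ChanDChan]_{ii}$ yields $\BQ \geq 0$. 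Continuity and monotonicity of $\BQ$ in $[\mat{\Lambda}_{\rvec{x_G}}]_{ii}$ then transfer directly from property (5) of Theorem \ref{thm:DiagonalsUniqueCrossingPoint}, because $[\ChanDChan]_{ii}$ is independent of the Gaussian covariance and nonnegative.

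Item (3) I would attack by a sandwich argument. Since $[\MSE{x}]_{ii} \geq 0$ we always have $[\Q]_{ii} \leq [\EM_{G}(\mat{\Lambda}_{\rvec{x_G}}, \vectort)]_{ii}$, and for $\vectort \geq \vectort_0$ property (3) of Theorem \ref{thm:DiagonalsUniqueCrossingPoint} gives $[\Q]_{ii} \geq 0$; combining these and multiplying by $[\ChanDChan]_{ii} \geq 0$ yields
\[
0 \leq \BQ(\rvec{x}, \mat{\Lambda}_{\rvec{x_G}}, \vectort) \leq \frac{[\Chan(\vectort)]_{ii}\, \Jacob_\vectort [\Chan(\vectort)]_{ii}\, [\mat{\Lambda}_{\rvec{x_G}}]_{ii}}{1 + [\Chan(\vectort)]_{ii}^2\, [\mat{\Lambda}_{\rvec{x_G}}]_{ii}}.
\]
Under the construction of Lemma \ref{lem:path} the derivative $\Jacob_\vectort [\Chan(\vectort)]_{ii}$ eventually plateaus and stays bounded, while $[\Chan(\vectort)]_{ii} \to \infty$, so the upper bound vanishes (the denominator grows quadratically while the numerator grows at most linearly) and the squeeze concludes the limit.

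Item (2) is where the main bookkeeping lies. Nonnegativity of $[\ChanDChan]_{ii}$ means $\BQ$ inherits the sign of $[\Q]_{ii}$ wherever $[\ChanDChan]_{ii} > 0$ and vanishes on any interval where $[\ChanDChan]_{ii} = 0$. Theorem \ref{thm:DiagonalsUniqueCrossingPoint} puts $[\Q]_{ii}$ negative on $[0, \vectort_0)$ and nonnegative on $[\vectort_0, \infty)$ (or sign-definite throughout), so $\BQ$ can change sign only in the nonpositive-to-nonnegative direction, and at most once. The subtlety, which I expect to be the main obstacle, is matching this unique sign change to Definition \ref{dfn:negative-zero-positive}: if $[\ChanDChan]_{ii}$ vanishes on a nondegenerate interval surrounding $\vectort_0$ then $\BQ$ genuinely exhibits the negative-zero-positive pattern, whereas otherwise one obtains an ordinary negative-to-nonnegative zero crossing regarded as the degenerate instance ($\delta = 0$). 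The remaining case analysis must verify that isolated zeros of $[\ChanDChan]_{ii}$ away from $\vectort_0$ do not introduce any additional crossings, which follows because on either side of such an isolated zero the sign of $\BQ$ is dictated by $[\Q]_{ii}$ and is therefore unchanged.
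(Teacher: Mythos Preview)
Your proposal is correct and follows essentially the same route as the paper: items (1)--(3) are derived from Theorem~\ref{thm:DiagonalsUniqueCrossingPoint} together with the nonnegativity and growth properties of $[\ChanDChan(\vectort)]_{ii}$ from Lemma~\ref{lem:path}, and item (4) from the monotonicity property (5) of Theorem~\ref{thm:DiagonalsUniqueCrossingPoint}. The one minor difference is that for the nonnegativity claim in item (4), the paper invokes Lemma~\ref{lem:indGaussianOnTop} directly (which immediately gives $[\MSE{x}]_{ii}\le[\EM_G]_{ii}$ under the variance hypothesis), whereas you argue via $[\Q(0)]_{ii}=0$ and the absence of nonnegative-to-negative crossings; both are valid, and your sandwich argument for item (3) is in fact more explicit than the paper's one-line appeal to the linear growth rate.
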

\begin{IEEEproof} The first three properties follow from Theorem
\ref{thm:DiagonalsUniqueCrossingPoint} and the fact that $[\ChanDChan(\vectort)]_{ii}$ is zero at $\vectort=0$, non-negative for all other values of $\vectort \in (0, \infty)$ and $[\ChanDChan(\vectort)]_{ii}$ goes to $\infty$ in a linear rate, as shown in Lemma \ref{lem:path}. The fourth property is a direct result of Lemma \ref{lem:indGaussianOnTop} and the fifth item of Theorem \ref{thm:DiagonalsUniqueCrossingPoint}.
\end{IEEEproof}
Figure \ref{f:f_i_example} illustrates this property, in which the negative-zero-positive crossing of $\BQ(\rvec{x},\mat{\Lambda}_{\rvec{x_G}},\vectort)$ is simply a negative-to-nonnegative zero crossing and, thus, agrees with the negative-to-nonegative zero crossing of $[\Q(\rvec{x}, \mat{\Lambda}_{\rvec{x_G}},\vectort)]_{ii}$.

\begin{figure}
\begin{center}
\setlength{\unitlength}{.1cm}
    \includegraphics[width=1\textwidth]{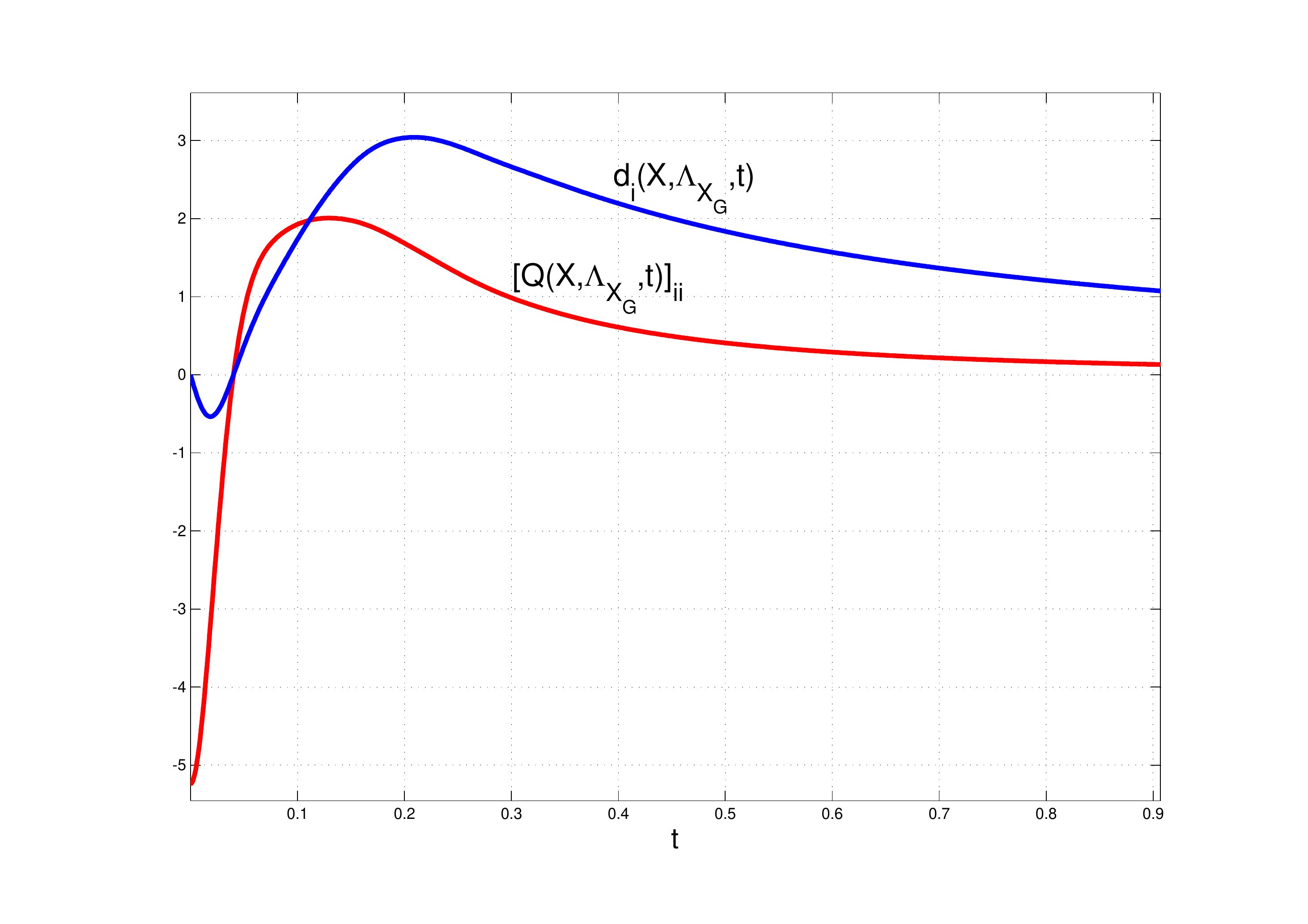}
    \caption{An example of the function
$[\Q(\rvec{x}, \mat{\Lambda}_{\rvec{x_G}},\vectort)]_{ii}$ (in red) and the matching
function $\BQ(\rvec{x},\mat{\Lambda}_{\rvec{x_G}},\vectort)$ (in blue). Both have the same single negative-to-nonnegative zero
crossing in the range $\vectort \in (0, \infty)$.} \label{f:f_i_example}
\end{center}
\end{figure}


\begin{cor} \label{cor:BQsum}
Let $\rvec{x} \in \R^{\dim}$ be any random vector. The function $\BQsum(\rvec{x},\mat{\Lambda}_{\rvec{x_G}},\vectort)$ is either negative for all $\vectort$, or there exists $\vectort' \in [0,\infty)$ such that for all $\vectort > \vectort'$ the function $\BQsum(\rvec{x},\mat{\Lambda}_{\rvec{x_G}},\vectort)$ is nonnegative. Moreover, when $\lim_{\vectort \rightarrow \infty} [\Chan(\vectort)]_{ii} = \infty$ we have that, $\lim_{\vectort \to \infty} \BQsum(\rvec{x},\mat{\Lambda}_{\rvec{x_G}},\vectort) = 0$, and if $[\mat{\Lambda}_{\rvec{x_G}}]_{ii} = [\Cov{\rvec{x}}]_{ii}$ for all $i$, then $\BQsum(\rvec{x},\mat{\Lambda}_{\rvec{x_G}},\vectort) \geq 0$ for all $\vectort$.
\end{cor}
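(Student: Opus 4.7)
The plan is to reduce the statement to the per-coordinate result of Corollary \ref{cor:BQ} together with the sign-monotonicity coming from Theorem \ref{thm:DiagonalsUniqueCrossingPoint}. For each index $i$, the summand $\BQ_i$ starts at zero (property 1 of Corollary \ref{cor:BQ}), tends to zero under the hypothesis $[\Chan(\vectort)]_{ii} \to \infty$ (property 3), and, because $[\ChanDChan(\vectort)]_{ii} \geq 0$ by Lemma \ref{lem:path} while $[\Q(\rvec{x},\mat{\Lambda}_{\rvec{x_G}},\vectort)]_{ii}$ has no nonnegative-to-negative zero crossing by Theorem \ref{thm:DiagonalsUniqueCrossingPoint}, once $\BQ_i$ becomes nonnegative it remains so. Consequently, for each $i$, either $\BQ_i(\vectort)<0$ for all $\vectort\in(0,\infty)$, or there is a smallest $\vectort_i'\in[0,\infty)$ with $\BQ_i(\vectort)\geq 0$ for every $\vectort\geq\vectort_i'$.

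With this dichotomy in hand, I would then split into two cases for the sum. If every $i$ falls in the second regime, choose $\vectort'=\max_i \vectort_i'$, which is finite; for $\vectort>\vectort'$ every summand is nonnegative and hence $\BQsum(\vectort)\geq 0$, establishing the second alternative. The complementary case, in which some $\BQ_{i_0}$ is strictly negative on all of $(0,\infty)$, is the main obstacle: a priori, positive contributions from other summands could temporarily make $\BQsum$ nonnegative. I would argue this away by combining property 3 of Corollary \ref{cor:BQ} (all summands share the terminal value $0$) with the monotone dependence of each $[\Q(\rvec{x},\mat{\Lambda}_{\rvec{x_G}},\vectort)]_{jj}$ on $[\mat{\Lambda}_{\rvec{x_G}}]_{jj}$ (property 5 of Theorem \ref{thm:DiagonalsUniqueCrossingPoint}), which precludes the kind of persistent compensation that would allow the sum to dip into nonnegative territory; the conclusion is that $\BQsum$ stays negative throughout, yielding the first alternative.

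The remaining two claims are immediate. The asymptotic statement $\lim_{\vectort\to\infty}\BQsum(\rvec{x},\mat{\Lambda}_{\rvec{x_G}},\vectort)=0$ follows by summing the per-coordinate limits from property 3 of Corollary \ref{cor:BQ}. For the final part, if $[\mat{\Lambda}_{\rvec{x_G}}]_{ii}=[\Cov{\rvec{x}}]_{ii}$ for every $i$, then property 4 of Corollary \ref{cor:BQ} gives $\BQ_i(\vectort)\geq 0$ for every $i$ and every $\vectort$, so summation over $i$ preserves the sign and $\BQsum(\rvec{x},\mat{\Lambda}_{\rvec{x_G}},\vectort)\geq 0$ for all $\vectort$.
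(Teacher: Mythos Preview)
Your handling of the limit claim and of the case $[\mat{\Lambda}_{\rvec{x_G}}]_{ii}=[\Cov{\rvec{x}}]_{ii}$ is correct, and so is the first half of the dichotomy (if every $\BQ_i$ is eventually nonnegative, take $\vectort'=\max_i\vectort_i'$). The paper gives no proof of this corollary, so there is no argument to compare against there.

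The gap is in your ``complementary case''. Property~5 of Theorem~\ref{thm:DiagonalsUniqueCrossingPoint} is monotonicity of $[\Q(\cdot)]_{jj}$ in the Gaussian variance $[\mat{\Lambda}_{\rvec{x_G}}]_{jj}$ for \emph{fixed} $\vectort$; it says nothing about how the different summands $\BQ_i(\vectort)$, for a fixed $\mat{\Lambda}_{\rvec{x_G}}$, trade off against one another as $\vectort$ varies. Invoking it cannot ``preclude compensation'', and in fact no such argument can succeed because the conclusion you are aiming for is false in that case. Take $\rvec{x}$ independent Gaussian with variances $(4,1)$, $\mat{\Lambda}_{\rvec{x_G}}=\Diag{(1,4)}$, and the admissible path $\Chan(\vectort)=\Diag{(\vectort,2\vectort)}$. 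Then $\BQ_1(\vectort)<0$ and $\BQ_2(\vectort)>0$ for all $\vectort>0$, yet
\[
\BQsum(\vectort)=\frac{\vectort}{1+\vectort^2}+\frac{16\vectort}{1+16\vectort^2}-\frac{8\vectort}{1+4\vectort^2}
\]
is strictly positive for small $\vectort$ (e.g.\ $\BQsum(0.1)\approx 0.71$) and strictly negative for all large $\vectort$ (its leading asymptotic is $-9/(16\vectort^{3})$). Thus $\BQsum$ is neither negative for all $\vectort$ nor eventually nonnegative, so the first alternative of the dichotomy cannot be derived from Corollary~\ref{cor:BQ} alone in the way you sketch. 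The portion of the corollary that is actually needed later in the paper (eventual nonnegativity once each $\BQ_i$ has its crossing point below some finite $\vectortEq$) is exactly the case you did prove correctly.
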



\subsection{The Conditioned Case} \label{ssec:independentConditioned}

Before proceeding to understanding the implications of the above results on information-theoretic quantities, we would like to extend
these results to the conditioned case.

Let us begin with the conditioned MMSE matrix. We first consider the
following matrix quantity:
\begin{align} \label{eq:defineE_X_u}
\EM_{\rvec{x}|\rvec{u}}(\vectort, \rvecr{u})
&= \CEsp[1]{ (\rvec{x} - \CEsp{\rvec{x}}{\Chan(\vectort) \rvec{x} + \rvec{n},
\rvec{u} = \rvecr{u}}) (\rvec{x} - \CEsp{\rvec{x}}{\Chan(\vectort) \rvec{x} +
\rvec{n}, \rvec{u} = \rvecr{u}})^\T}{\rvec{u} = \rvecr{u}} \\
&= \Esp[1]{ (\rvec{x}_{\rvecr{u}} - \CEsp{\rvec{x}_{\rvecr{u}}}{\Chan(\vectort)
\rvec{x}_{\rvecr{u}} + \rvec{n}}) (\rvec{x}_{\rvecr{u}} -
\CEsp{\rvec{x}_{\rvecr{u}}}{\Chan(\vectort) \rvec{x}_{\rvecr{u}} +
\rvec{n}})^\T} \label{eq:defineE_X_u_2}
\end{align}
where $\rvec{x}_{\rvecr{u}} $ is a random vector distributed according to
$\pdffun_{\rvec{x} | \rvec{u} = \rvecr{u}}$. The conditioned MMSE matrix is
simply the expectation of (\ref{eq:defineE_X_u}) according to the distribution
of the
random vector $\rvec{u}$:
\begin{gather} \label{eq:EuFromE_X_u}
\EM_{\rvec{x}|\rvec{u}}(\vectort) = \Esp{\EM_{\rvec{x}|\rvec{u}}(\vectort,
\rvec{u})} = \Esp[1]{ (\rvec{x} - \CEsp{\rvec{x}}{\Chan(\vectort) \rvec{x} +
\rvec{n}, \rvec{u}}) (\rvec{x} - \CEsp{\rvec{x}}{\Chan(\vectort)
\rvec{x} + \rvec{n}, \rvec{u} })^\T} .
\end{gather}
Another important quantity that needs to be extended to the conditioned case is:
\begin{align} \label{eq:CMMSE_matrix_u}
\CMSEr{x_{\rvecr{u}}}{y} &= \Esp[1]{ (\rvec{x}_{\rvecr{u}} - \CEsp{\rvec{x}_{\rvecr{u}}}{\rvecr{y}})
(\rvec{x}_{\rvecr{u}} - \CEsp{\rvec{x}_{\rvecr{u}}}{\rvecr{y} })^\T | \rvecr{y}} \\
&= \Esp[1]{ (\rvec{x} - \CEsp{\rvec{x}}{\rvecr{y},\rvec{u} = \rvecr{u}})
(\rvec{x} - \CEsp{\rvec{x}}{\rvecr{y},\rvec{u} = \rvecr{u} })^\T | \rvecr{y},\rvec{u} = \rvecr{u}} \\
& = \CMSEr{x}{y,\rvec{u}=\rvecr{u}}
\end{align}
where, as in the unconditioned case, this function, in general, depends on both $\rvecr{u}$ and $\rvecr{y}$, thus, we have $\EM_{\rvec{x}|\rvec{u}}(\vectort, \rvecr{u}) = \Esp[1]{ \CMSEr{x}{\rvec{y},\rvec{u}=\rvecr{u}} }$, where the expectation is over $\rvec{y}$. However, when the input distribution of $\rvec{x}_{\rvecr{u}} $ is Gaussian $\CMSEr{x}{y,\rvec{u}=\rvecr{u}}$ is independent of $\rvecr{y}$.
In a similar manner, we have the following:
\begin{gather} \label{eq:assistHi_u}
\Q(\rvec{x}|\rvec{u} = \rvecr{u}, \mat{\Lambda}_{\rvec{x_G}}, \vectort) = \EM_{G}(\mat{\Lambda}_{\rvec{x_G}}, \vectort) -
\EM_{\rvec{x}|\rvec{u}}(\vectort, \rvecr{u})
\end{gather}
and, thus, we also have:
\begin{gather} \label{eq:QmatrixConditioned}
\Q(\rvec{x}|\rvec{u}, \mat{\Lambda}_{\rvec{x_G}}, \vectort) = \EspOp_{\rvec{u}}\left\{
\Q(\rvec{x}|\rvec{u} = \rvecr{u}, \mat{\Lambda}_{\rvec{x_G}}, \vectort) \right\} =
\EM_{G}(\mat{\Lambda}_{\rvec{x_G}}, \vectort) -
\EM_{\rvec{x}|\rvec{u}}(\vectort) .
\end{gather}
Using these definitions we can now extend the results of Theorem \ref{thm:DiagonalsUniqueCrossingPoint} to the conditioned case in the following theorem.
\begin{thm} \label{thm:DiagonalsUniqueCrossingPoint_Cond}
Let $\rvec{u} - \rvec{x} - \rvec{y}$ form a Markov chain. Then, the diagonal
entries of the matrix-valued function $\vectort \mapsto
\Q(\rvec{x}|\rvec{u}, \mat{\Lambda}_{\rvec{x_G}}, \vectort)$, defined
in (\ref{eq:QmatrixConditioned}), have no nonnegative-to-negative zero crossings
and, at most, a single negative-to-nonnegative zero crossing in the range
$\vectort \in [0, \infty)$. Moreover, let $\vectort_0 \in [0,  \infty)$ be the
negative-to-nonnegative crossing point for $[\Q(\rvec{x}|\rvec{u}, \mat{\Lambda}_{\rvec{x_G}}, \vectort)]_{ii}$. Then,
\begin{enumerate}
\item $[\Q(\rvec{x}|\rvec{u}, \mat{\Lambda}_{\rvec{x_G}}, 0)]_{ii} \leq 0$.
\item $[\Q(\rvec{x}|\rvec{u}, \mat{\Lambda}_{\rvec{x_G}}, \vectort)]_{ii}$ is a
strictly increasing function in the range $\vectort \in [0, \vectort_0)$.
\item $[\Q(\rvec{x}|\rvec{u}, \mat{\Lambda}_{\rvec{x_G}}, \vectort)]_{ii} \geq
0$ for all $\vectort \in [\snr_0, \infty)$.
\item When $\lim_{\vectort \rightarrow \infty} [\Chan(\vectort)]_{ii} = \infty$ we have that $\lim_{\vectort \to \infty} [\Q(\rvec{x}|\rvec{u},
\mat{\Lambda}_{\rvec{x_G}}, \vectort)]_{ii} = 0$.
\item $[\Q(\rvec{x}|\rvec{u}, \mat{\Lambda}_{\rvec{x_G}}, \vectort)]_{ii}$ is a continuous and monotonically increasing function in $[\mat{\Lambda}_{\rvec{x_G}}]_{ii}$.
\end{enumerate}
\end{thm}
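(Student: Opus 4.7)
The plan is to transport the proof of Theorem \ref{thm:DiagonalsUniqueCrossingPoint} into the conditioned setting. A tempting shortcut --- applying the unconditional theorem to each $[\Q(\rvec{x}|\rvec{u}=\rvecr{u}, \mat{\Lambda}_{\rvec{x_G}}, \vectort)]_{ii}$ and averaging over $\rvec{u}$ --- fails, since averaging functions each with one negative-to-nonnegative crossing does not in general preserve the single-crossing structure. So I would redo the derivative-sign argument directly on the $\rvec{u}$-averaged function, in complete analogy with the proof of Theorem \ref{thm:DiagonalsUniqueCrossingPoint}.

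After dispatching the trivial regime through the conditional analog of Lemma \ref{lem:indGaussianOnTop} (if $[\EM_{\rvec{x}|\rvec{u}}(0)]_{ii} \le [\mat{\Lambda}_{\rvec{x_G}}]_{ii}$, the diagonal of $\Q$ stays nonpositive for every $\vectort$ and there is nothing to prove), I would interchange $\Jacob_\vectort$ with the outer expectation over $\rvec{u}$ and repeat the chain (\ref{eq:proofThmSingleCrossingInd})--(\ref{eq:derivativeHi_5}) verbatim, with $\CMSE{x}{y}$ replaced by the further-conditioned $\CMSE{x}{y,\rvec{u}}$. This produces a derivative of the form
\[
\Jacob_\vectort [\Q(\rvec{x}|\rvec{u}, \mat{\Lambda}_{\rvec{x_G}}, \vectort)]_{ii} = 2[\ChanDChan(\vectort)]_{ii}\bigl(\Esp[1]{[\CMSE{x}{y,\rvec{u}}]_{ii}^2} - [\EM_G(\mat{\Lambda}_{\rvec{x_G}}, \vectort)]_{ii}^2\bigr) + 2\sum_{j\neq i}[\ChanDChan(\vectort)]_{jj}\Esp[1]{[\CMSE{x}{y,\rvec{u}}]_{ij}^2},
\]
in which the off-diagonal sum is nonnegative by Lemma \ref{lem:path} combined with the diagonal structure of the independent Gaussian reference $\EM_G(\mat{\Lambda}_{\rvec{x_G}}, \vectort)$. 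The Markov hypothesis $\rvec{u}-\rvec{x}-\rvec{y}$ legitimizes treating $\CMSE{x}{y,\rvec{u}=\rvecr{u}}$ as the conditional-MMSE matrix of $\rvec{x}_{\rvecr{u}}$ through the same channel, so the differentiation identities from \cite{Palomar2} transport intact.

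The crux is then a Jensen-type squeeze on the $i,i$ term: if $[\Q(\rvec{x}|\rvec{u}, \mat{\Lambda}_{\rvec{x_G}}, \vectort)]_{ii} < 0$, the joint $(\rvec{y},\rvec{u})$-expectation $\Esp[1]{[\CMSE{x}{y,\rvec{u}}]_{ii}} = [\EM_{\rvec{x}|\rvec{u}}(\vectort)]_{ii}$ strictly exceeds $[\EM_G(\mat{\Lambda}_{\rvec{x_G}}, \vectort)]_{ii}$; squaring (both sides nonnegative) and invoking $\Esp[1]{Z^2}\ge(\Esp[1]{Z})^2$ forces the first parenthesis above to be strictly positive, making the whole derivative strictly positive for every $\vectort > 0$ at which $[\Q]_{ii}$ is negative. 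This simultaneously rules out nonnegative-to-negative crossings, yields strict monotonicity on $[0,\vectort_0)$ (item 2), and delivers item 3 from the absence of downward crossings. Item 4 follows from $[\EM_G(\mat{\Lambda}_{\rvec{x_G}}, \vectort)]_{ii}\to 0$ together with the sandwich $0 \le [\EM_{\rvec{x}|\rvec{u}}(\vectort)]_{ii} \le [\MSE{x}(\vectort)]_{ii}$ and the vanishing of the unconditional MMSE as $[\Chan(\vectort)]_{ii}\to\infty$, while item 5 reads off directly from (\ref{eq:QmatrixConditioned}).

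The main obstacle I anticipate is the technical bookkeeping around two interchanges: swapping $\Jacob_\vectort$ with the $\rvec{u}$-expectation, and verifying that the Markov condition really lets the identity $\CMSE{x}{y,\rvec{u}=\rvecr{u}} = \CMSEr{x_{\rvecr{u}}}{y}$ hold, so that the unconditional derivative machinery applies pointwise in $\rvecr{u}$ before being averaged. Once those points are secured, the rest of the argument is a line-by-line translation of the proof of Theorem \ref{thm:DiagonalsUniqueCrossingPoint}.
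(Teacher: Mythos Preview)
Your proposal is correct and follows essentially the same derivative-sign argument as the paper: compute the $\vectort$-derivative of $[\Q(\rvec{x}|\rvec{u},\mat{\Lambda}_{\rvec{x_G}},\vectort)]_{ii}$ by averaging the unconditioned formula over $\rvec{u}$, drop the nonnegative off-diagonal sum, and apply Jensen on the joint $(\rvec{y},\rvec{u})$-expectation once $[\Q]_{ii}<0$ is assumed. Two small points: (i) you have a sign slip --- in the ``trivial regime'' where $[\EM_{\rvec{x}|\rvec{u}}(0)]_{ii}\le[\mat{\Lambda}_{\rvec{x_G}}]_{ii}$ the diagonal of $\Q$ stays \emph{nonnegative}, not nonpositive; (ii) the paper deliberately does \emph{not} invoke a conditional analogue of Lemma~\ref{lem:indGaussianOnTop} up front, remarking that the no-zeros cases are harder to enumerate in the conditioned setting, and instead obtains item~1 as a corollary of the crossing argument itself --- your route via a conditional Gaussian-on-top lemma does work (concavity of $\sigma^2\mapsto\sigma^2/(1+h^2\sigma^2)$ plus Jensen over $\rvec{u}$), but it is an extra lemma the paper avoids.
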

\begin{IEEEproof}
If $[\rvec{x}]_i$ is Gaussian distributed (independent of $\rvec{u}$ and independent of the other
entries of the vector $\rvec{x}$) with variance equal to $[\Cov{x}]_{ii} =
[\mat{\Lambda}_{\rvec{x_G}}]_{ii}$, then $[\Q(\rvec{x}|\rvec{u},
\mat{\Lambda}_{\rvec{x_G}}, \vectort)]_{ii} = 0$, $\forall \vectort$, which
also fulfills Theorem \ref{thm:DiagonalsUniqueCrossingPoint_Cond}.
Thus, from this point, we can assume that 
$[\rvec{x}]_i$ is not ``Gaussian
distributed, independent of $\rvec{u}$ and independent of the other entries of $\rvec{x}$, and such that
$[\Cov{x}]_{ii} = [\mat{\Lambda}_{\rvec{x_G}}]_{ii}$''.

In this conditioned case, it is harder to determine, up front, all cases in which the function $[\Q(\rvec{x}|\rvec{u}, \mat{\Lambda}_{\rvec{x_G}},\vectort)]_{ii}$ has no zeros. Thus, contrary to the approach used in the proof of
Theorem \ref{thm:DiagonalsUniqueCrossingPoint}, we first prove that no nonnegative-to-negative and at most one negative-to-nonnegative zero crossings of $[\Q(\rvec{x}|\rvec{u}, \mat{\Lambda}_{\rvec{x_G}},
\vectort)]_{ii}$ can occur. The first property is a direct consequence of this, and there is no need to determine the exact conditions under which the function has no zeros. This approach could have also been used in proving Theorem \ref{thm:DiagonalsUniqueCrossingPoint}, however in the unconditioned case we can easily determine the set of cases in which $[\Q(\rvec{x}, \mat{\Lambda}_{\rvec{x_G}},\vectort)]_{ii}$ has no zeros.

Similarly to the proof of Theorem \ref{thm:DiagonalsUniqueCrossingPoint}, in order to prove that no nonnegative-to-negative and at most one negative-to-nonnegative zero crossings of $[\Q(\rvec{x}|\rvec{u}, \mat{\Lambda}_{\rvec{x_G}},
\vectort)]_{ii}$ can occur, we
only need to show that the derivative of
$[\Q(\rvec{x}|\rvec{u}, \mat{\Lambda}_{\rvec{x_G}}, \vectort)]_{ii}$ with respect to
$\vectort$ is positive for all values of $\vectort$ for which $[\Q(\rvec{x}|\rvec{u},
\mat{\Lambda}_{\rvec{x_G}}, \vectort)]_{ii} < 0$.
According to equations (\ref{eq:derivativeHi_1}) and (\ref{eq:derivativeHi_3}) we have the following lower bound:
\begin{align}
\Jacob_\vectort [\Q(\rvec{x}|\rvec{u} = \rvecr{u}, \mat{\Lambda}_{\rvec{x_G}}, \vectort)]_{ii} &= 2 \sum_{j=1}^{\dim} [ \ChanDChan(\vectort) ]_{jj} \left(
\Esp[1]{[\CMSE{x_{\rvecr{u}}}{y}]_{ij}^2} - [ \EM_{G}(\mat{\Lambda}_{\rvec{x_G}}, \vectort)
]_{ij}^2 \right) \label{eq:derivativeHi_1_Cond} \\
& \geq  2 [\ChanDChan(\vectort)]_{ii} \left( \Esp[1]{[\CMSE{x_{\rvecr{u}}}{y}]_{ii}^2} -
[\EM_{G}(\mat{\Lambda}_{\rvec{x_G}}, \vectort)]_{ii}^2
\right) . \label{eq:derivativeHi_3_Cond}
\end{align}
Now we can take expectation over $\rvec{u}$ on both sides and attain the following:
\begin{align}
\Jacob_\vectort [\Q(\rvec{x}|\rvec{u}, \mat{\Lambda}_{\rvec{x_G}}, \vectort)]_{ii} &= \Earg[1]{\Jacob_\vectort [\Q(\rvec{x}|\rvec{u}, \mat{\Lambda}_{\rvec{x_G}}, \vectort)]_{ii}}{\rvec{u}} \label{eq:derivativeHi_4_Cond} \\
& \geq \Earg[1]{2 [\ChanDChan(\vectort)]_{ii} \left( \Esp[1]{[\CMSE{x_{\rvecr{u}}}{y}]_{ii}^2} -
[\EM_{G}(\mat{\Lambda}_{\rvec{x_G}}, \vectort)]_{ii}^2
\right)}{\rvec{u}} \label{eq:derivativeHi_5_Cond} \\
&= 2 [\ChanDChan(\vectort)]_{ii} \left( \Esp[1]{[\CMSE{x}{y,\rvec{u}}]_{ii}^2} -
[\EM_{G}(\mat{\Lambda}_{\rvec{x_G}}, \vectort)]_{ii}^2  \right) \label{eq:derivativeHi_6_Cond} \\
& > 2 [\ChanDChan(\vectort)]_{ii} \left( \Esp[1]{[\CMSE{x}{y,\rvec{u}}]_{ii}^2} -
\left(\Esp[1]{[\CMSE{x}{y,\rvec{u}}]_{ii}}\right)^2  \right) \label{eq:derivativeHi_7_Cond} \\
& \geq 0 \label{eq:derivativeHi_8_Cond}
\end{align}
where (\ref{eq:derivativeHi_5_Cond}) is due to (\ref{eq:derivativeHi_3_Cond}), (\ref{eq:derivativeHi_7_Cond}) follows from the assumption $[\Q(\rvec{x}|\rvec{u}, \mat{\Lambda}_{\rvec{x_G}}, \vectort)]_{ii} < 0$ and (\ref{eq:derivativeHi_8_Cond}) can be derived from Jensen's inequality.

Observe that the inequality in (\ref{eq:derivativeHi_7_Cond}), which
holds for values of $\vectort$ such that $[\Q(\rvec{x}|\rvec{u}, \mat{\Lambda}_{\rvec{x_G}}, \vectort)]_{ii} < 0$, also
proves the second item in Theorem \ref{thm:DiagonalsUniqueCrossingPoint_Cond} and the
third one follows directly from the inexistence of nonnegative-to-negative zero
crossings. Regarding the fourth item, it is clear that
$\lim_{\vectort \to \infty} [\Q(\rvec{x}|\rvec{u}, \mat{\Lambda}_{\rvec{x_G}}, \vectort)]_{ii} = 0$, as both terms in the expression of $[\Q(\rvec{x}|\rvec{u}, \mat{\Lambda}_{\rvec{x_G}}, \vectort)]_{ii}$ in (\ref{eq:QmatrixConditioned}) tend to zero, when $\lim_{\vectort \rightarrow \infty} [\Chan(\vectort)]_{ii} = \infty$.
Finally, the last property is a direct consequence of the definition of the function $\Q(\rvec{x}|\rvec{u}, \mat{\Lambda}_{\rvec{x_G}},
\vectort)$ in (\ref{eq:QmatrixConditioned}).
\end{IEEEproof}



We now extend the definition of the function $\BQ(\rvec{x},\mat{\Lambda}_{\rvec{x_G}},\vectort)$ (\ref{eq:BQ}) and the function $\BQsum(\rvec{x},\mat{\Lambda}_{\rvec{x_G}},\vectort)$ (\ref{eq:BQsum}) to the conditioned case:
\begin{align} \label{eq:BQ_cond}
\BQ(\rvec{x}|\rvec{u} = \rvecr{u},\mat{\Lambda}_{\rvec{x_G}},\vectort) &=  [\ChanDChan(\vectort)]_{ii} [\Q(\rvec{x}|\rvec{u} = \rvecr{u},\mat{\Lambda}_{\rvec{x_G}}, \vectort)]_{ii} \\
\BQ(\rvec{x}|\rvec{u},\mat{\Lambda}_{\rvec{x_G}},\vectort) &=  [\ChanDChan(\vectort)]_{ii} [\Q(\rvec{x}|\rvec{u},\mat{\Lambda}_{\rvec{x_G}}, \vectort)]_{ii}
\end{align}
and also,
\begin{align} \label{eq:BQsum_cond}
\BQsum(\rvec{x}|\rvec{u} = \rvecr{u},\mat{\Lambda}_{\rvec{x_G}},\vectort) &= \sum_{i=1}^{\dim} \BQ(\rvec{x}|\rvec{u} = \rvecr{u},\mat{\Lambda}_{\rvec{x_G}},\vectort) \\
\BQsum(\rvec{x}|\rvec{u} ,\mat{\Lambda}_{\rvec{x_G}},\vectort) &= \sum_{i=1}^{\dim} \BQ(\rvec{x}|\rvec{u},\mat{\Lambda}_{\rvec{x_G}},\vectort)
\end{align}
For which we can extend corollaries \ref{cor:BQ} and \ref{cor:BQsum} as follows,
\begin{cor} \label{cor:BQ_cond}
Let $\rvec{u} - \rvec{x} - \rvec{y}$ form a Markov chain such that the random
vector $\rvec{x}|\rvec{u}=\rvecr{u} \in \R^{\dim}$ has covariance matrix $\CovMat_{\rvec{x}|\rvec{u}=\rvecr{u}}$. The function $\BQ(\rvec{x}|\rvec{u},\mat{\Lambda}_{\rvec{x_G}},\vectort)$ has the following properties:
\begin{enumerate}

\item $\BQ(\rvec{x}|\rvec{u},\mat{\Lambda}_{\rvec{x_G}},0) = 0$

\item It has at most a single negative-zero-positive crossing in the range $\vectort \in
(0, \infty)$.

\item When $\lim_{\vectort \rightarrow \infty} [\Chan(\vectort)]_{ii} = \infty$ we have that, $\lim_{\vectort \to \infty} \BQ(\rvec{x}|\rvec{u},\mat{\Lambda}_{\rvec{x_G}},\vectort) = 0$.

\item If $[\mat{\Lambda}_{\rvec{x_G}}]_{ii} = [\Cov{\rvec{x}}]_{ii}$, then $\BQ(\rvec{x}|\rvec{u},\mat{\Lambda}_{\rvec{x_G}},\vectort) \geq 0$ for all $\vectort$. Furthermore, $\BQ(\rvec{x}|\rvec{u},\mat{\Lambda}_{\rvec{x_G}},\vectort)$ is a continuous and monotonically increasing function in $[\mat{\Lambda}_{\rvec{x_G}}]_{ii}$.

\end{enumerate}
\end{cor}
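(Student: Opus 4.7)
\medskip
\noindent\textbf{Proof proposal.} The plan is to mirror the short proof of Corollary \ref{cor:BQ}, substituting Theorem \ref{thm:DiagonalsUniqueCrossingPoint_Cond} for Theorem \ref{thm:DiagonalsUniqueCrossingPoint} throughout, and handling the conditioning with care only in the fourth item. The starting observation is that $\BQ(\rvec{x}|\rvec{u},\mat{\Lambda}_{\rvec{x_G}},\vectort)$ is the product of the scalar weight $[\ChanDChan(\vectort)]_{ii}=[\Chan(\vectort)]_{ii}\Jacob_{\vectort}[\Chan(\vectort)]_{ii}$ and the diagonal entry $[\Q(\rvec{x}|\rvec{u},\mat{\Lambda}_{\rvec{x_G}},\vectort)]_{ii}$. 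By the construction in Lemma \ref{lem:path}, the weight is zero at $\vectort=0$, nonnegative for $\vectort\geq 0$, and (when $[\Chan(\vectort)]_{ii}\to\infty$) grows linearly in $\vectort$. Hence the sign of $\BQ$ is governed by the sign of the matrix element, while $\BQ(\rvec{x}|\rvec{u},\mat{\Lambda}_{\rvec{x_G}},0)=0$, giving item~1 immediately.

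For items~2 and~3, I would combine these structural facts with Theorem \ref{thm:DiagonalsUniqueCrossingPoint_Cond}. Items~1--3 of that theorem say that $[\Q(\rvec{x}|\rvec{u},\mat{\Lambda}_{\rvec{x_G}},\vectort)]_{ii}$ has no nonnegative-to-negative crossing and at most one negative-to-nonnegative crossing on $[0,\infty)$. Multiplication by the nonnegative weight preserves the absence of sign reversals from nonnegative to negative; any stretch where $[\ChanDChan(\vectort)]_{ii}$ vanishes (e.g.\ before the path ``takes off'' or wherever $h_i$ equals zero) produces a flat zero interval that converts the single zero crossing of the matrix element into the negative-zero-positive crossing of Definition \ref{dfn:negative-zero-positive}. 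For item~3, item~4 of Theorem \ref{thm:DiagonalsUniqueCrossingPoint_Cond} gives $[\Q]_{ii}\to 0$ at the rate $1/[\Chan(\vectort)]_{ii}^{2}$ (inspect the expression \eqref{eq:Qindexpr} and note the sandwich $0\leq[\EM_{\rvec{x}|\rvec{u}}(\vectort)]_{ii}\leq[\EM_{G}(\mat{\Lambda}_{\rvec{x_G}},\vectort)]_{ii}$ established below), while $[\ChanDChan(\vectort)]_{ii}$ grows only linearly by Lemma \ref{lem:path}; the product vanishes.

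The only place where the conditioning matters is item~4. The subtlety is that, unlike in the unconditioned case, one cannot invoke Lemma \ref{lem:indGaussianOnTop} pointwise on the conditional laws $\rvec{x}_{\rvecr{u}}$, because the conditional covariances $[\CovMat_{\rvec{x}|\rvec{u}=\rvecr{u}}]_{ii}$ may exceed $[\mat{\Lambda}_{\rvec{x_G}}]_{ii}=[\CovMat_{\rvec{x}}]_{ii}$ for some $\rvecr{u}$, violating the lemma's hypothesis. The way around this is a two-step chain: first, the Markov structure $\rvec{u}-\rvec{x}-\rvec{y}$ implies by the usual ``additional conditioning reduces MMSE'' argument that $\EM_{\rvec{x}|\rvec{u}}(\vectort)\preceq\MSE{x}(\vectort)$ in the PSD order, and in particular on diagonals $[\EM_{\rvec{x}|\rvec{u}}(\vectort)]_{ii}\leq[\MSE{x}(\vectort)]_{ii}$; second, Lemma \ref{lem:indGaussianOnTop} applied to the unconditional $\rvec{x}$ (whose covariance does satisfy the hypothesis by assumption) gives $[\MSE{x}(\vectort)]_{ii}\leq[\EM_{G}(\mat{\Lambda}_{\rvec{x_G}},\vectort)]_{ii}$. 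Chaining the two yields $[\Q(\rvec{x}|\rvec{u},\mat{\Lambda}_{\rvec{x_G}},\vectort)]_{ii}\geq 0$, and multiplying by $[\ChanDChan(\vectort)]_{ii}\geq 0$ proves $\BQ\geq 0$. The continuity and monotonicity in $[\mat{\Lambda}_{\rvec{x_G}}]_{ii}$ follow directly from item~5 of Theorem \ref{thm:DiagonalsUniqueCrossingPoint_Cond}, since the prefactor $[\ChanDChan(\vectort)]_{ii}$ is nonnegative and independent of $[\mat{\Lambda}_{\rvec{x_G}}]_{ii}$.

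The main obstacle is precisely the bypass just described in item~4: realizing that the natural pointwise-in-$\rvecr{u}$ application of Lemma \ref{lem:indGaussianOnTop} fails, and that the correct route is to dominate the conditional MMSE by the unconditional one via data processing before invoking the Gaussian-is-worst-case bound. Once this is in place, every remaining step is a routine transcription of the proof of Corollary \ref{cor:BQ}.
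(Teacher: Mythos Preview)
Your proposal is essentially the paper's proof, only spelled out in more detail; in particular, the extra data-processing step you insert for item~4 (bounding $[\EM_{\rvec{x}|\rvec{u}}(\vectort)]_{ii}\le[\MSE{x}(\vectort)]_{ii}$ before invoking Lemma~\ref{lem:indGaussianOnTop} on the unconditional $\rvec{x}$) is exactly what the paper's terse ``direct result of Lemma~\ref{lem:indGaussianOnTop}'' must mean, and your observation that a pointwise-in-$\rvecr{u}$ application would fail is well taken.

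One small gap: in your argument for item~3 you invoke the sandwich $0\le[\EM_{\rvec{x}|\rvec{u}}(\vectort)]_{ii}\le[\EM_{G}(\mat{\Lambda}_{\rvec{x_G}},\vectort)]_{ii}$ ``established below,'' but that inequality is only proved in item~4 under the extra hypothesis $[\mat{\Lambda}_{\rvec{x_G}}]_{ii}=[\Cov{x}]_{ii}$, whereas item~3 must hold for arbitrary $\mat{\Lambda}_{\rvec{x_G}}$. The fix is immediate: use instead $[\EM_{\rvec{x}|\rvec{u}}(\vectort)]_{ii}\le[\MSE{x}(\vectort)]_{ii}\le [\Cov{x}]_{ii}/(1+[\Chan(\vectort)]_{ii}^{2}[\Cov{x}]_{ii})$, the last step being Lemma~\ref{lem:indGaussianOnTop} with the diagonal of $\Cov{x}$, which gives the needed $1/[\Chan(\vectort)]_{ii}^{2}$ decay independently of $\mat{\Lambda}_{\rvec{x_G}}$.
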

\begin{IEEEproof} The first three properties follow directly from Theorem
\ref{thm:DiagonalsUniqueCrossingPoint_Cond} and the fact that $[\ChanDChan(\vectort)]_{ii}$ is zero at $\vectort=0$ and nonnegative for all other values of $\vectort \in (0, \infty)$ and $[\ChanDChan(\vectort)]_{ii}$ goes to $\infty$ in a linear rate, as shown in Lemma \ref{lem:path}. The fourth property is a direct result of Lemma \ref{lem:indGaussianOnTop}, and the fifth property in Theorem
\ref{thm:DiagonalsUniqueCrossingPoint_Cond}.
\end{IEEEproof}
\begin{cor} \label{cor:BQsum_cond}
Let $\rvec{u} - \rvec{x} - \rvec{y}$ form a Markov chain. The function $\BQsum(\rvec{x}|\rvec{u},\mat{\Lambda}_{\rvec{x_G}},\vectort)$ is either negative for all $\vectort$, or there exists $\vectort' \in [0,\infty)$ such that for all $\vectort > \vectort'$ the function $\BQsum(\rvec{x}|\rvec{u},\mat{\Lambda}_{\rvec{x_G}},\vectort)$ is nonnegative. Moreover, when $\lim_{\vectort \rightarrow \infty} [\Chan(\vectort)]_{ii} = \infty$ we have that, $\lim_{\vectort \to \infty} \BQ(\rvec{x}|\rvec{u},\mat{\Lambda}_{\rvec{x_G}},\vectort) = 0$, and if $[\mat{\Lambda}_{\rvec{x_G}}]_{ii} = [\Cov{\rvec{x}}]_{ii}$ for all $i$, then $\BQsum(\rvec{x}|\rvec{u},\mat{\Lambda}_{\rvec{x_G}},\vectort) \geq 0$ for all $\vectort$.
\end{cor}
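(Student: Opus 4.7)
The plan is to derive the dichotomy by summing, coordinate by coordinate, the characterization provided by Corollary \ref{cor:BQ_cond}, following the same template used for the unconditioned Corollary \ref{cor:BQsum}. Since $\BQ(\rvec{x}|\rvec{u},\mat{\Lambda}_{\rvec{x_G}},\vectort)=[\ChanDChan(\vectort)]_{ii}\,[\Q(\rvec{x}|\rvec{u},\mat{\Lambda}_{\rvec{x_G}},\vectort)]_{ii}$ with the first factor nonnegative by Lemma \ref{lem:path}, and since Theorem \ref{thm:DiagonalsUniqueCrossingPoint_Cond} guarantees that each $[\Q]_{ii}$ has no nonnegative-to-negative zero crossings and at most one negative-to-nonnegative crossing, every summand falls into exactly one of three sign patterns on $(0,\infty)$: (a) always nonnegative; (b) always nonpositive; or (c) nonpositive on an initial interval $[0,\vectort_i']$ and nonnegative on $[\vectort_i',\infty)$.

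The easy branch is when no coordinate is of type (b). Taking $\vectort' = \max_i \vectort_i'$ (with $\vectort_i'=0$ for type (a)) produces a finite threshold past which every $\BQ_i\geq 0$, so $\BQsum(\vectort)\geq 0$ for all $\vectort>\vectort'$, establishing the second alternative of the dichotomy. The two ``moreover'' claims follow directly from Corollary \ref{cor:BQ_cond}: its item 3 gives $\BQ_i(\vectort)\to 0$ whenever $\lim_{\vectort\to\infty}[\Chan(\vectort)]_{ii}=\infty$, so $\BQsum(\vectort)\to 0$; and its item 4 gives $\BQ_i\geq 0$ for every $\vectort\geq 0$ whenever $[\mat{\Lambda}_{\rvec{x_G}}]_{ii}=[\Cov{\rvec{x}}]_{ii}$ for all $i$, from which $\BQsum\geq 0$ is immediate.

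The harder branch is when at least one summand $\BQ_j$ is persistently nonpositive on $(0,\infty)$, in which case one must show $\BQsum(\vectort)<0$ for every $\vectort>0$, corresponding to the first alternative. My plan is to combine the conditioned integral identity obtained by applying \req{eq:lineIntegral_cond2} to both $\rvec{x}$ and a Gaussian surrogate $\rvec{x_G}$ of covariance $\mat{\Lambda}_{\rvec{x_G}}$ independent of $\rvec{u}$,
\[
\int_0^\vectort \BQsum(\rvec{x}|\rvec{u},\mat{\Lambda}_{\rvec{x_G}},\tau)\,d\tau = \Igen{\rvec{x_G}}{\rvec{y}(\vectort)} - \Icond{\rvec{x}}{\rvec{y}(\vectort)}{\rvec{u}},
\]
with the monotonicity of each $[\Q(\rvec{x}|\rvec{u},\mat{\Lambda}_{\rvec{x_G}},\vectort)]_{ii}$ in $[\mat{\Lambda}_{\rvec{x_G}}]_{ii}$ (item 5 of Theorem \ref{thm:DiagonalsUniqueCrossingPoint_Cond}), in order to rule out any transient dominance of positive type-(c) contributions over the persistently negative $\BQ_j$'s. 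This is the principal obstacle, since the individual sign patterns of the summands do not, on their own, constrain the sign of their sum; overcoming it requires showing that the presence of a single ``uncrossable'' coordinate propagates to the entire sum, forcing $\BQsum$ to stay nonpositive throughout.
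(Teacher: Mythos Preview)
The paper offers no proof for this corollary (nor for its unconditioned analogue, Corollary~\ref{cor:BQsum}); it is simply asserted after Corollary~\ref{cor:BQ_cond}. Your easy branch is correct and is the entirety of what the paper implicitly relies on: when each summand $\BQ$ becomes nonnegative at some finite time $\vectort_i'$, taking $\vectort'=\max_i\vectort_i'$ delivers the second alternative, and the two ``moreover'' clauses follow from items~3 and~4 of Corollary~\ref{cor:BQ_cond} exactly as you say.

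The harder branch, however, is not a gap to be filled but a false target. Take $\dim=2$, $\Chan(\vectort)=\vectort\,\mat{I}_2$, $\rvec{u}$ constant, $\rvec{x}\sim\NZ{\mat{I}_2}$, and $\mat{\Lambda}_{\rvec{x_G}}$ diagonal with $[\mat{\Lambda}_{\rvec{x_G}}]_{11}=0$ and $[\mat{\Lambda}_{\rvec{x_G}}]_{22}=10$. Then $[\Q(\vectort)]_{11}=-1/(1+\vectort^2)<0$ and $[\Q(\vectort)]_{22}=9/\bigl[(1+10\vectort^2)(1+\vectort^2)\bigr]>0$ for all $\vectort\geq 0$, so both diagonals satisfy Theorem~\ref{thm:DiagonalsUniqueCrossingPoint_Cond}. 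Yet
\[
\BQsum(\vectort)\;=\;\vectort\Bigl(\frac{10}{1+10\vectort^2}-\frac{2}{1+\vectort^2}\Bigr)\;=\;\frac{\vectort\,(8-10\vectort^2)}{(1+10\vectort^2)(1+\vectort^2)}
\]
is strictly positive on $(0,\sqrt{4/5})$ and strictly negative on $(\sqrt{4/5},\infty)$. Thus $\BQsum$ is neither negative for all $\vectort$ nor eventually nonnegative: the literal dichotomy fails in this example, and no I-MMSE or monotonicity argument can establish the implication you are after. The obstacle you flag is not merely technical; the claim you would need in your harder branch is false.

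This does not damage the paper's applications. Every downstream use of Corollary~\ref{cor:BQsum_cond} (see the proofs of Theorems~\ref{thm:CrossingPointUpperBoundUsingIndpDist} and~\ref{thm:MIparallel}) takes place after enforcing, for each $i$, a constraint such as $\int_0^{\vectortEq}\BQ\,\d\tau=0$, which rules out a persistently negative summand and forces each $\BQ$ to cross no later than $\vectortEq$. That lands squarely in your easy branch, so the argument you already have covers everything the paper actually needs; the first alternative of the dichotomy is never invoked.
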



\subsection{Properties of the Mutual Information} \label{ssec:independentMIproperties}
\label{ssec:mutualInformationInd}
So far, we have seen properties of the matrix $\Q(\rvec{x}, \mat{\Lambda}_{\rvec{x_G}},
\vectort)$ or, more precisely, of its diagonal elements. We have seen that these properties extend naturally to the conditioned case, and also to the function $\BQ(\rvec{x},\mat{\Lambda}_{\rvec{x_G}},\vectort)$ and its conditioned version. In this section, our goal is to use these results to derive new properties on the mutual information between the input and the output of parallel Gaussian channels.
In order to derive these results we put to use the I-MMSE relationship, as given in equations (\ref{eq:lineIntegral})-(\ref{eq:lineIntegral2}) and (\ref{eq:lineIntegral_cond})-(\ref{eq:lineIntegral_cond2}).

For the sake of compactness we will write the properties in this section only for the more general, conditioned case, from which one can easily derive the respective unconditioned theorems.

\begin{thm} \label{thm:CrossingPointUpperBoundUsingIndpDist}
Let $\rvec{u} - \rvec{x} - \rvec{y}$ form a Markov chain. Assume an independent Gaussian input, $\rvec{x_G}$, with covariance $\mat{\Lambda}_{\rvec{x_G}}$, such that for all $i$,
\begin{gather} \label{eq::CrossingPointUpperBoundUsingIndpDist}
\Icond{ [\rvec{x}]_i}{ [\rvec{y}(\vectortEq)]_i}{ \rvec{u}} = \Igen{ [\rvec{x_G}]_i}{ [\rvec{y_G}(\vectortEq)]_i}
\end{gather}
where
\begin{align} \label{eq::CrossingPointUpperBoundUsingIndpDist_2}
\rvec{y}(\vectortEq) &= \Chan(\vectortEq) \rvec{x} + \rvec{n} \quad \textrm{and}\\
\rvec{y_G}(\vectortEq) &= \Chan(\vectortEq) \rvec{x_G} + \rvec{n}. \label{eq::CrossingPointUpperBoundUsingIndpDist_3}
\end{align}
Then $\BQsum(\rvec{x}|\rvec{u},\mat{\Lambda}_{\rvec{x_G}},\vectort) \geq 0$ for all $\vectort \geq \vectortEq$.
\end{thm}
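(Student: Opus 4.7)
The plan is to reduce the theorem to a per-coordinate claim: show that for each index $i$, the function $\vectort \mapsto \BQ(\rvec{x}|\rvec{u},\mat{\Lambda}_{\rvec{x_G}},\vectort)$ is nonnegative on $[\vectortEq,\infty)$; summing over $i$ then delivers $\BQsum \geq 0$ on the same range, which is the desired conclusion. This reduction is natural because the hypothesis (\ref{eq::CrossingPointUpperBoundUsingIndpDist}) is itself stated per coordinate, and because Corollary \ref{cor:BQ_cond} constrains the sign pattern of each per-coordinate function to essentially a single crossing.

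The main quantitative step is to establish the integral inequality $\int_0^{\vectortEq} \BQ(\rvec{x}|\rvec{u},\mat{\Lambda}_{\rvec{x_G}},\tau)\, d\tau \geq 0$ for each $i$. Since $\Chan(\tau)$ is diagonal, the $i$-th marginal observation $[\rvec{y}(\tau)]_i$ depends only on $[\rvec{x}]_i$, so the conditional scalar I-MMSE identity yields
\begin{align*}
\Icond{[\rvec{x}]_i}{[\rvec{y}(\vectortEq)]_i}{\rvec{u}} &= \int_0^{\vectortEq} [\ChanDChan(\tau)]_{ii}\, \mmse([\rvec{x}]_i \,|\, \rvec{u},\, [\Chan(\tau)]_{ii}^2)\, d\tau,\\
\Igen{[\rvec{x_G}]_i}{[\rvec{y_G}(\vectortEq)]_i} &= \int_0^{\vectortEq} [\ChanDChan(\tau)]_{ii}\, [\EM_{G}(\mat{\Lambda}_{\rvec{x_G}},\tau)]_{ii}\, d\tau.
\end{align*}
The crucial inequality is that $[\EM_{\rvec{x}|\rvec{u}}(\tau)]_{ii}$, the MMSE of $[\rvec{x}]_i$ given the \emph{full} vector $\rvec{y}$ and $\rvec{u}$, cannot exceed $\mmse([\rvec{x}]_i\,|\,\rvec{u},[\Chan(\tau)]_{ii}^2)$, which uses only $[\rvec{y}]_i$ and $\rvec{u}$ -- enlarging the conditioning set can only shrink the MMSE. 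Combining this with the equal-mutual-information hypothesis yields
\begin{align*}
\int_0^{\vectortEq} \BQ(\rvec{x}|\rvec{u},\mat{\Lambda}_{\rvec{x_G}},\tau)\, d\tau \;\geq\; \Igen{[\rvec{x_G}]_i}{[\rvec{y_G}(\vectortEq)]_i} - \Icond{[\rvec{x}]_i}{[\rvec{y}(\vectortEq)]_i}{\rvec{u}} \;=\; 0.
\end{align*}

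To convert this integral bound into pointwise nonnegativity on $[\vectortEq,\infty)$, I invoke the single-crossing structure. Corollary \ref{cor:BQ_cond} together with Theorem \ref{thm:DiagonalsUniqueCrossingPoint_Cond} tells us that for $\tau > 0$ the sign of $\BQ(\rvec{x}|\rvec{u},\mat{\Lambda}_{\rvec{x_G}},\tau)$ agrees with that of $[\Q(\rvec{x}|\rvec{u},\mat{\Lambda}_{\rvec{x_G}},\tau)]_{ii}$, which is nonpositive on some initial interval and nonnegative thereafter (with the initial interval possibly empty or the entire ray). The ``entirely nonpositive'' possibility is ruled out by the integral inequality. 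If the crossing threshold lies in $[0,\vectortEq]$ the claim is immediate; if instead it lies beyond $\vectortEq$, then $\BQ$ is nonpositive on $[0,\vectortEq]$, and the integral being $\geq 0$ together with continuity forces $\BQ \equiv 0$ on $[0,\vectortEq]$, hence $[\Q(\rvec{x}|\rvec{u},\mat{\Lambda}_{\rvec{x_G}},\tau)]_{ii} = 0$ on $(0,\vectortEq]$; the ``no nonnegative-to-negative zero crossings'' property of $[\Q]_{ii}$ from Theorem \ref{thm:DiagonalsUniqueCrossingPoint_Cond} then forbids $[\Q]_{ii}$ from dipping below zero on $[\vectortEq,\infty)$. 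In either case $\BQ \geq 0$ on $[\vectortEq,\infty)$, and summing over $i$ closes the argument.

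The hard part is the careful bookkeeping between two distinct MMSE quantities -- the matrix-diagonal $[\EM_{\rvec{x}|\rvec{u}}(\tau)]_{ii}$ (estimation from the full output vector) that appears inside $\BQ$, versus the scalar $\mmse([\rvec{x}]_i|\rvec{u},[\Chan(\tau)]_{ii}^2)$ (estimation from the $i$-th output alone) that appears in the scalar I-MMSE expansion of the hypothesis -- and in upgrading the integral bound to a pointwise one, where the no-nonnegative-to-negative-crossing property is precisely what handles the degenerate configuration in which $\BQ$ hugs zero on $[0,\vectortEq]$.
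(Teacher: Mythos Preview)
Your proof is correct and follows essentially the same approach as the paper's. Both arguments rest on the same two ingredients: (i) the per-coordinate MMSE comparison $[\EM_{\rvec{x}|\rvec{u}}(\tau)]_{ii} \leq \mmse([\rvec{x}]_i \mid \rvec{u}, [\Chan(\tau)]_{ii}^2)$ (estimation from the full output vector can only beat estimation from the $i$-th output alone), and (ii) the single-crossing structure of $[\Q(\rvec{x}|\rvec{u},\mat{\Lambda}_{\rvec{x_G}},\vectort)]_{ii}$ from Theorem~\ref{thm:DiagonalsUniqueCrossingPoint_Cond}/Corollary~\ref{cor:BQ_cond}. The only organizational difference is that the paper introduces an auxiliary vector $\rvecIndc{x}$ (conditionally independent entries with the same marginals as $\rvec{x}$) so that the scalar MMSE $\mmse([\rvec{x}]_i\mid\rvec{u},\cdot)$ becomes literally $[\EM_{\rvecIndc{x}|\rvec{u}}]_{ii}$; this turns your integral \emph{inequality} $\int_0^{\vectortEq}\BQ(\rvec{x}|\rvec{u},\cdot)\,d\tau \geq 0$ into an \emph{equality} $\int_0^{\vectortEq}\BQ(\rvecIndc{x}|\rvec{u},\cdot)\,d\tau = 0$, locates the crossing of $\BQ(\rvecIndc{x}|\rvec{u},\cdot)$ inside $[0,\vectortEq]$, and then compares crossing points via $\BQ(\rvec{x}|\rvec{u},\cdot) \geq \BQ(\rvecIndc{x}|\rvec{u},\cdot)$ pointwise. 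Your route applies the MMSE comparison at the level of the integral and invokes the single-crossing property directly on $\BQ(\rvec{x}|\rvec{u},\cdot)$, which is slightly more direct and avoids the auxiliary object, at the cost of the extra case analysis you carried out at the end.
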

\begin{IEEEproof}
Let us define $\rvecIndc{x} \in \R^{\dim}$ as a random vector with independent elements when conditioned on $\rvec{u}$, and with distribution of each pair $\left( [\rvecIndc{x}]_i, \rvec{u}  \right)$ being the same as the marginal distribution of the corresponding pair $\left( [\rvec{x}]_i, \rvec{u} \right)$.
Thus, $\left[ \MSE{\rvecIndc{x}|\rvec{u}} \right]_{ii}$ is basically the MMSE of $[ \rvecIndc{x}]_i$ from $\rvec{u}$ and $[\rvec{y}(\vectort)]_i$, which is:
\begin{eqnarray}\label{eq:Y_i}
[\rvec{y}(\vectort)]_i = \left[ \Chan(\vectort) \rvecIndc{x} +\rvec{n} \right]_i =
\left[ \Chan(\vectort) \right]_{ii} [\rvecIndc{x}]_i+[\rvec{n}]_i
\end{eqnarray}
where the equality holds due to the fact that the channel matrix
$\Chan(\vectort)$ is diagonal for all $\vectort$ and $\rvec{n}$ is
standard Gaussian.
Using these definitions we can give the following special case of (\ref{eq:lineIntegral_cond2}):
\begin{align} \label{eq:lineIntegralIndependent}
\Icond{[\rvec{x}]_i}{[\rvec{y}(\vectort)]_i}{\rvec{u}} &= \Icond{[\rvec{x}]_i}{\left[ \Chan(\vectort) \right]_{ii} [\rvecIndc{x}]_i+[\rvec{n}]_i}{\rvec{u}} \\
&= \int_{\tau=0}^{\vectort} \left[ \Chan(\tau) \right]_{ii}  \left[ \MSE{\rvecIndc{x}|\rvec{u}}(\tau) \right]_{ii} \left[ \Jacob_{\tau}\Chan(\tau) \right]_{ii} \d \tau \\
&=  \int_{\tau=0}^{\vectort} \left[ \ChanDChan(\tau) \right]_{ii}  \left[ \MSE{\rvecIndc{x}|\rvec{u}}(\tau) \right]_{ii} \d \tau .
\end{align}
Putting this together with the assumption, we have,
\begin{align}
0 = \Igen{ [\rvec{x_g}]_i}{ [\rvec{y_g}(\vectortEq)]_i} - \Icond{ [\rvec{x}]_i}{ [\rvec{y}(\vectortEq)]_i}{ \rvec{u}} = \int_{\tau=0}^{\vectortEq} \BQ(\rvecIndc{x}|\rvec{u},\mat{\Lambda}_{\rvec{x_G}},\tau) \d \tau .
\end{align}
Now, due to Corollary \ref{cor:BQsum_cond} we can conclude that there exists a $\vectort_0 \in [0,\vectortEq]$ such that $\BQ(\rvecIndc{x}|\rvec{u},\mat{\Lambda}_{\rvec{x_G}},\vectort) \geq 0$ for all $\vectort > \vectort_0$ and as a result, $\BQ(\rvecIndc{x}|\rvec{u},\mat{\Lambda}_{\rvec{x_G}},\vectort) \geq 0$ for all $\vectort > \vectortEq$. Now,
for all $\vectort$ we have that $[ \MSE{\rvec{x}|\rvec{u}}(\vectort) ]_{ii} \leq [ \MSE{\rvecIndc{x}|\rvec{u}}(\vectort) ]_{ii}$.
Thus, if the negative-zero-positive crossing of $\BQ(\rvecIndc{x}|\rvec{u},\mat{\Lambda}_{\rvec{x_G}},\vectort)$ is at $\vectort_0$, the
negative-zero-positive crossing of $\BQ(\rvec{x}|\rvec{u},\mat{\Lambda}_{\rvec{x_G}},\vectort)$ is at a $\vectort_0' \leq \vectort_0$.
From this we can conclude that also $\BQ(\rvec{x}|\rvec{u},\mat{\Lambda}_{\rvec{x_G}},\vectort) \geq 0$ for all $\vectort > \vectortEq$. Finally, since this holds for every $i$, it also holds for the summation over $i$, \ie, for the function $\BQsum(\rvec{x}|\rvec{u},\mat{\Lambda}_{\rvec{x_G}},\vectort)$, concluding the proof.
\end{IEEEproof}

We are now ready to give the main theorem of this section.
\begin{thm} \label{thm:MIparallel}
Let $\rvec{u} - \rvec{x} - \rvec{y}$ form a Markov chain. For any $\vectortEq \in [0, \infty)$, there exists an independent Gaussian input, $\rvec{x_G}$, with covariance $\mat{\Lambda}_{\rvec{x_G}}$ such that the following properties hold:
\begin{enumerate}

\item $\BQsum(\rvec{x}|\rvec{u},\mat{\Lambda}_{\rvec{x_G}},\vectort) \geq 0$ for all $\vectort \geq \vectortEq$.

\item $\Icond{\rvec{x}}{\rvec{y}(\vectortEq)}{\rvec{u}} = \Igen{\rvec{x_G}}{\rvec{y_G}(\vectortEq)}$, where $\rvec{y}(\vectortEq)$ and $\rvec{y_G}(\vectortEq)$ are as defined in (\ref{eq::CrossingPointUpperBoundUsingIndpDist_2}) and (\ref{eq::CrossingPointUpperBoundUsingIndpDist_3}) respectively.

\item $[\mat{\Lambda}_{\rvec{x_G}}]_{ii} \leq [\Cov{\rvec{x}} ]_{ii}$ for all $i$.
\end{enumerate}

\end{thm}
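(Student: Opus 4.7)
The plan is to construct $\mat{\Lambda}_{\rvec{x_G}}$ as a single-parameter scaling of $\Diag{\diag{\Cov{x}}}$, invoke the intermediate value theorem to secure property~2, and then deduce property~1 from Corollary~\ref{cor:BQsum_cond} via the conditional I-MMSE identity in (\ref{eq:lineIntegral_cond2}).

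Set $\mat{\Lambda}(\lambda)=\lambda\,\Diag{\diag{\Cov{x}}}$ for $\lambda\in[0,1]$; property~3 is then automatic since $[\mat{\Lambda}(\lambda)]_{ii}=\lambda[\Cov{x}]_{ii}\leq[\Cov{x}]_{ii}$. The Gaussian mutual information
\[
F(\lambda)\;\equiv\;\Igen{\rvec{x_G}(\lambda)}{\rvec{y_G}(\vectortEq)}\;=\;\sum_{i=1}^{\dim}\tfrac{1}{2}\log\bigl(1+\lambda[\Chan(\vectortEq)]_{ii}^{2}[\Cov{x}]_{ii}\bigr)
\]
is continuous and monotonically increasing in $\lambda$ with $F(0)=0$. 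The essential inequality for the IVT is $F(1)\geq\Icond{\rvec{x}}{\rvec{y}(\vectortEq)}{\rvec{u}}$, which I would establish by chaining two standard bounds. First, since $\Chan(\vectortEq)$ is diagonal and $\rvec{n}$ has independent entries, $h(\rvec{y}(\vectortEq)|\rvec{u})\leq\sum_i h([\rvec{y}(\vectortEq)]_i|\rvec{u})$ (subadditivity), while $h(\rvec{y}(\vectortEq)|\rvec{x},\rvec{u})=\sum_i h([\rvec{n}]_i)$, giving $\Icond{\rvec{x}}{\rvec{y}(\vectortEq)}{\rvec{u}}\leq\sum_i\Icond{[\rvec{x}]_i}{[\rvec{y}(\vectortEq)]_i}{\rvec{u}}$. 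Second, applying the Gaussian maximum-entropy bound on $h([\rvec{y}(\vectortEq)]_i|\rvec{u}=\rvecr{u})$ pointwise in $\rvecr{u}$, then invoking concavity of $\log$ and the law of total variance $\Esp{\mathrm{Var}([\rvec{x}]_i|\rvec{u})}\leq[\Cov{x}]_{ii}$, yields $\Icond{[\rvec{x}]_i}{[\rvec{y}(\vectortEq)]_i}{\rvec{u}}\leq\tfrac{1}{2}\log(1+[\Chan(\vectortEq)]_{ii}^{2}[\Cov{x}]_{ii})$. By the IVT there exists $\lambda^{*}\in[0,1]$ with $F(\lambda^{*})=\Icond{\rvec{x}}{\rvec{y}(\vectortEq)}{\rvec{u}}$, which is property~2.

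For property~1, apply the integral identity (\ref{eq:lineIntegral_cond2}) to both $\rvec{x}$ and $\rvec{x_G}(\lambda^{*})$ and subtract; since $\ChanDChan(\tau)$ and $\EM_{G}(\mat{\Lambda}(\lambda^{*}),\tau)$ are diagonal, the traces reduce to sums of diagonal entries and one obtains
\[
\Igen{\rvec{x_G}(\lambda^{*})}{\rvec{y_G}(\vectortEq)}-\Icond{\rvec{x}}{\rvec{y}(\vectortEq)}{\rvec{u}}\;=\;\int_{0}^{\vectortEq}\BQsum\bigl(\rvec{x}\vert\rvec{u},\mat{\Lambda}(\lambda^{*}),\tau\bigr)\,\d\tau\;=\;0
\]
by property~2. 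Corollary~\ref{cor:BQsum_cond} dichotomizes $\BQsum$ into (i)~negative throughout $[0,\infty)$, which is ruled out by the vanishing integral for $\vectortEq>0$, or (ii)~eventually nonnegative from a minimal threshold $\vectort'\geq0$, with $\BQsum<0$ arbitrarily close to $\vectort'$ from below. If $\vectort'>\vectortEq$, the integrand would be strictly negative on an interval contained in $[0,\vectortEq]$ with no positive mass to compensate, contradicting the zero integral; hence $\vectort'\leq\vectortEq$, giving property~1. The corner case $\vectortEq=0$ is handled separately by picking $\lambda^{*}=1$, so that property~2 holds trivially and property~1 follows from item~4 of Corollary~\ref{cor:BQsum_cond}.

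The main obstacle is securing the bound $F(1)\geq\Icond{\rvec{x}}{\rvec{y}(\vectortEq)}{\rvec{u}}$ that enables the IVT; the remainder is a clean combination of continuity and integral-sign analysis leveraging Corollary~\ref{cor:BQsum_cond}.
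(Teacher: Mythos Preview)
Your IVT construction for property~2 and the bound $F(1)\geq\Icond{\rvec{x}}{\rvec{y}(\vectortEq)}{\rvec{u}}$ are fine, and property~3 is indeed automatic. The gap is in the passage to property~1.

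Corollary~\ref{cor:BQsum_cond} does \emph{not} assert a single crossing for $\BQsum$; it only says that $\BQsum$ is eventually nonnegative. The function $\BQsum$ is a sum of the individual $\BQ$'s, and while each $\BQ$ has the sign pattern ``nonpositive on $[0,t_i]$, nonnegative on $[t_i,\infty)$'' by Corollary~\ref{cor:BQ_cond}, the sum of such functions can oscillate: if the $t_i$'s differ, $\BQsum$ may well go positive, then negative, then nonnegative again. Consequently, with $t'$ taken to be the minimal eventual-nonnegativity threshold, it is \emph{not} true that $\BQsum\leq 0$ on $[0,t')$. Your sentence ``the integrand would be strictly negative on an interval contained in $[0,\vectortEq]$ with no positive mass to compensate'' implicitly uses exactly that; if $t'>\vectortEq$, nothing prevents $\BQsum$ from being positive on part of $[0,\vectortEq]$ and negative on another part with the two contributions cancelling, while $\BQsum$ is still negative somewhere in $(\vectortEq,t']$. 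Hence $\int_0^{\vectortEq}\BQsum=0$ does not force $t'\leq\vectortEq$, and property~1 is not established.

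This is precisely why the paper does not use a single scalar $\lambda$. It introduces a separate $\eta_i\in[0,1]$ for each coordinate, setting $[\mat{\Lambda}_{\rvec{x_G}}]_{ii}=\eta_i[\Cov{x}]_{ii}$, and chooses each $\eta_i$ so that $\int_0^{\vectortEq}\BQ(\rvec{x}|\rvec{u},\mat{\Lambda}_{\rvec{x_G}},\tau)\,\d\tau=0$ holds coordinate by coordinate (possible by the monotonicity in item~4 of Corollary~\ref{cor:BQ_cond}). Because each individual $\BQ$ \emph{does} have the single-crossing structure (Corollary~\ref{cor:BQ_cond}, item~2), the vanishing of each coordinate integral forces each crossing into $[0,\vectortEq]$, whence every $\BQ\geq 0$ on $[\vectortEq,\infty)$ and therefore $\BQsum\geq 0$ there. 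Summing the coordinate identities also yields property~2. Your one-parameter family collapses these $n$ degrees of freedom into one, which is enough to match the total mutual information but not enough to control the sign of $\BQsum$ beyond $\vectortEq$.
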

\begin{IEEEproof}
We provide a constructive proof, and show how one can build an independent Gaussian input distribution complying with all three requirements. We begin by examining the meaning of the second requirement. First, recall the I-MMSE relationship in the parallel setting, given in equation (\ref{eq:lineIntegral_cond2}),
\begin{gather} \label{eq:thm:MIparallel_proof_1}
\Icond{\rvec{x}}{\rvec{y}(\vectortEq)}{\rvec{u}} =  \int_{\tau=0}^{\vectortEq} \Tr \left(
\ChanDChan(\tau) \MSE{\rvec{x}|\rvec{u}}(\tau)
\right) \d \tau.
\end{gather}
Now, the second requirement is equivalent to the following equality,
\begin{align} \label{eq:thm:MIparallel_proof_2}
0 = \Igen{\rvec{x_G}}{\rvec{y_G}(\vectortEq)} - \Icond{\rvec{x}}{\rvec{y}(\vectortEq)}{\rvec{u}} &= \int_{\tau=0}^{\vectortEq} \Tr \left(
\ChanDChan(\tau) \Q(\rvec{x}|\rvec{u},\mat{\Lambda}_{\rvec{x_G}},\tau) \right) \d \tau \\ \nonumber
& = \int_{\tau=0}^{\vectortEq} \sum_{i=1}^{\dim} \BQ(\rvec{x}|\rvec{u},\mat{\Lambda}_{\rvec{x_G}},\tau)
\d \tau \\ \nonumber
&= \sum_{i=1}^{\dim} \int_{\tau=0}^{\vectortEq} \BQ(\rvec{x}|\rvec{u},\mat{\Lambda}_{\rvec{x_G}},\tau)
\d \tau.
\end{align}
Thus, we wish to show the existence of an independent Gaussian input distribution which complies with requirements 1, 3 and (\ref{eq:thm:MIparallel_proof_2}). There are different ways to attain equality in (\ref{eq:thm:MIparallel_proof_2}), however since we need only to show the existence of a specific independent Gaussian distribution, we follow one possible approach, which is to require the following,
\begin{gather} \label{eq:thm:MIparallel_proof_3}
\int_{\tau=0}^{\vectortEq} \BQ(\rvec{x}|\rvec{u},\mat{\Lambda}_{\rvec{x_G}},\tau)
\d \tau = 0, \quad \forall i.
\end{gather}
Now, according to the fourth property in Corollary \ref{cor:BQ_cond} we know that,
\begin{gather} \label{eq:thm:MIparallel_proof_4}
\BQ(\rvec{x}|\rvec{u},\mat{\Lambda}_{\rvec{x_G}},\vectort) \geq 0
\end{gather}
for all $\vectort$, when $[\mat{\Lambda}_{\rvec{x_G}}]_{ii} = [\Cov{\rvec{x}}]_{ii}$, and that it is continuous and monotonically increasing in the value of $[\mat{\Lambda}_{\rvec{x_G}}]_{ii}$ (and trivially negative, for all $\vectort$, when $[\mat{\Lambda}_{\rvec{x_G}}]_{ii} = 0$). Thus, there exists a number $\eta_i \in [0,1]$ such that setting $[\mat{\Lambda}_{\rvec{x_G}}]_{ii} = \eta_i [\Cov{\rvec{x}}]_{ii}$ results with the equality in (\ref{eq:thm:MIparallel_proof_3}).
Due to the second property in Corollary \ref{cor:BQ_cond}, we know that either  $\BQ(\rvec{x}|\rvec{u},\mat{\Lambda}_{\rvec{x_G}},\vectort) = 0$ for all $\vectort$ or that there exists a single negative-zero-positive crossing in the range $[0,\vectortEq]$. In both cases the setting $[\mat{\Lambda}_{\rvec{x_G}}]_{ii} = \eta_i [\Cov{\rvec{x}}]_{ii}$ results with $\BQ(\rvec{x}|\rvec{u},\mat{\Lambda}_{\rvec{x_G}},\vectort) \geq 0$ for all $\vectort > \vectortEq$. Since there exists such an $\eta_i$ for every $i$ we comply also with requirements 1 and 3, and conclude the proof.
\end{IEEEproof}

\begin{rem}
Note that the above choice of $\mat{\Lambda}_{\rvec{x_G}}$ does not necessarily imply $\mat{\Lambda}_{\rvec{x_G}} \preceq \Cov{\vec{x}}$. However, we can conclude that $\mat{\Lambda}_{\rvec{x_G}} \nsucc \Cov{\vec{x}}$.
\end{rem}

The following is a simple corollary of the above theorem.
\begin{cor} \label{cor:MIparallel_ind}
Given any arbitrary \emph{independent} input distribution over $\rvec{x} \in \R^{\dim}$, with covariance $\mat{\Lambda}_{\rvec{x}}$, and any $\vectortEq$, there exists an independent Gaussian input, $\rvec{x_G}$, with covariance $\mat{\Lambda}_{\rvec{x_G}}$ such that
\begin{align} \label{eq:cor:MIparallel_ind}
\Igen{\rvec{x}}{\Chan(\vectortEq) \rvec{x} + \rvec{n}} &= \Igen{\rvec{x_G}}{\Chan(\vectortEq) \rvec{x_G} + \rvec{n}} \\
\mat{\Lambda}_{\rvec{x_G}} & \preceq \mat{\Lambda}_{\rvec{x}} \\
\textrm{and} \quad \EM_G (\mat{\Lambda}_{\rvec{x_G}}, \vectortEq) & \preceq \MSE{\rvec{x}}(\vectortEq)
\end{align}
\end{cor}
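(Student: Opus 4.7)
The plan is to obtain the corollary as a direct specialization of Theorem \ref{thm:MIparallel} to the unconditioned setting, exploiting the fact that independent inputs passed through a diagonal channel collapse all relevant matrices to diagonal form. First, I would apply Theorem \ref{thm:MIparallel} with a degenerate (constant) conditioning variable $\rvec{u}$, so that every conditional quantity reduces to its unconditional counterpart. This immediately produces an independent Gaussian $\rvec{x_G}$ with diagonal covariance $\mat{\Lambda}_{\rvec{x_G}}$ satisfying the mutual-information matching $\Igen{\rvec{x}}{\Chan(\vectortEq) \rvec{x} + \rvec{n}} = \Igen{\rvec{x_G}}{\Chan(\vectortEq) \rvec{x_G} + \rvec{n}}$ together with the coordinatewise bound $[\mat{\Lambda}_{\rvec{x_G}}]_{ii} \leq [\Cov{\rvec{x}}]_{ii}$ for every $i$.

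Since $\rvec{x}$ is assumed to have independent components, $\Cov{\rvec{x}} = \mat{\Lambda}_{\rvec{x}}$ is diagonal, so an elementwise bound between the two diagonal matrices $\mat{\Lambda}_{\rvec{x_G}}$ and $\mat{\Lambda}_{\rvec{x}}$ lifts at once to the Loewner relation $\mat{\Lambda}_{\rvec{x_G}} \preceq \mat{\Lambda}_{\rvec{x}}$. By the same decoupling argument, $\MSE{\rvec{x}}(\vectortEq)$ is itself diagonal (the channel splits into $\dim$ independent scalar estimation problems), and $\EM_G(\mat{\Lambda}_{\rvec{x_G}}, \vectortEq)$ is diagonal by construction. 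Hence the MMSE comparison also reduces to an entrywise inequality between diagonal entries.

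To establish that entrywise inequality, I would revisit the constructive proof of Theorem \ref{thm:MIparallel}: the scaling $\eta_i$ is picked \emph{separately for each $i$} so that (\ref{eq:thm:MIparallel_proof_3}) is enforced term by term, not merely summed. Combined with the single negative-zero-positive crossing of $\BQ$ from Corollary \ref{cor:BQ}, this upgrades $\BQsum \geq 0$ to the per-$i$ statement $\BQ(\rvec{x}, \mat{\Lambda}_{\rvec{x_G}}, \vectort) \geq 0$ for all $\vectort \geq \vectortEq$. Because $[\ChanDChan(\vectortEq)]_{ii} \geq 0$ by Lemma \ref{lem:path}, this transfers to a definite sign of $[\Q(\rvec{x}, \mat{\Lambda}_{\rvec{x_G}}, \vectortEq)]_{ii}$ for every $i$, and thus to the required Loewner comparison between $\EM_G(\mat{\Lambda}_{\rvec{x_G}}, \vectortEq)$ and $\MSE{\rvec{x}}(\vectortEq)$. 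The only delicate point will be the possibility that $[\ChanDChan(\vectortEq)]_{ii}$ vanishes for some $i$; but since the diagonals of $\Chan(\vectort)$ are monotonically nondecreasing and ultimately grow linearly in $\vectort$ (Lemma \ref{lem:path}), such degeneracies occur only at isolated values and are absorbed by a continuity argument on $\Q$.
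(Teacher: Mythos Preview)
Your route is exactly the one the paper intends: it records Corollary~\ref{cor:MIparallel_ind} as ``a simple corollary of the above theorem'' (Theorem~\ref{thm:MIparallel}) with no further argument, and your specialization (take $\rvec{u}$ degenerate, use independence of $\rvec{x}$ together with the diagonal channel to make $\Cov{\rvec{x}}$ and $\MSE{\rvec{x}}(\vectortEq)$ diagonal, then upgrade the coordinatewise bounds to Loewner order) is precisely the intended unpacking.

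There is, however, a genuine issue you slide past with the phrases ``a definite sign'' and ``the required Loewner comparison.'' Your own chain gives $\BQ(\rvec{x},\mat{\Lambda}_{\rvec{x_G}},\vectort)\geq 0$ for $\vectort\geq\vectortEq$, hence $[\Q(\rvec{x},\mat{\Lambda}_{\rvec{x_G}},\vectortEq)]_{ii}\geq 0$, i.e.\ $[\EM_G(\mat{\Lambda}_{\rvec{x_G}},\vectortEq)]_{ii}\geq[\MSE{\rvec{x}}(\vectortEq)]_{ii}$. That is $\EM_G(\mat{\Lambda}_{\rvec{x_G}},\vectortEq)\succeq\MSE{\rvec{x}}(\vectortEq)$, the \emph{reverse} of the inequality printed in the corollary. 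This is not a flaw in your reasoning: the per-coordinate construction in the proof of Theorem~\ref{thm:MIparallel} forces the crossing of each $[\Q]_{ii}$ to lie in $[0,\vectortEq]$, so $[\Q(\vectortEq)]_{ii}\geq 0$ is unavoidable (and indeed if $[\Q(\vectortEq)]_{ii}<0$ the single-crossing property would make $[\Q]_{ii}<0$ on all of $[0,\vectortEq)$, contradicting the vanishing integral). The displayed $\preceq$ in the corollary is almost certainly a typo for $\succeq$; you should state the direction you actually prove rather than leaving it ambiguous.

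A minor point: your claim that zeros of $[\ChanDChan(\vectort)]_{ii}$ are isolated is not guaranteed by Lemma~\ref{lem:path} (the functions $h_i$ there may vanish on intervals). You do not need this anyway, since the sign of $[\Q(\vectortEq)]_{ii}$ follows directly from Theorem~\ref{thm:DiagonalsUniqueCrossingPoint} once the crossing is located in $[0,\vectortEq]$, without passing through $\BQ$.
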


\subsection{Application: The Degraded Parallel Gaussian BC Capacity
Region under Per-antenna Power Constraint} \label{ssec:applicationInd}

We now show that Theorem \ref{thm:MIparallel} can be used in providing a converse proof for the \emph{degraded} parallel Gaussian BC capacity region under a per-antenna power constraint.
We consider the following model,
\begin{align} \label{eq:app:perAntenna_model}
\rvec{y}_1[\timeIn] &= \Chan_{1} \rvec{x}[\timeIn] + \rvec{n}_1[\timeIn] \nonumber \\
\rvec{y}_2[\timeIn] &= \Chan_{2} \rvec{x}[\timeIn] + \rvec{n}_2[\timeIn]
\end{align}
where $\rvec{n}_1[\timeIn]$ and $\rvec{n}_2[\timeIn]$ are standard additive Gaussian noise vectors independent for different time indices $\timeIn$, and $\Chan_{1}$ and $\Chan_{2}$ are diagonal positive semidefinite matrices such that $\Chan_{1} \preceq \Chan_{2}$. $\rvec{x} \in \R^{\dim}$ is the random input vector, and it is assumed independent for different time indices $\timeIn$. Note that $\timeIn$ is the time index and should not be confused with the scalar parameter $\vectort$ which is used as a ``MIMO $\snr$ parameter'', \ie, the parameter $\vectort$ determines the channel matrix $\Chan(\vectort)$.

We consider a per-antenna power constraint:
\begin{eqnarray} \label{eq:app:perAntenna_constraintPerAntenna}
\left[ \Esp[1]{\rvec{x}\rvec{x}^\T } \right]_{ii} \leq \Power_i \quad \forall i, 1 \leq i \leq \dim  \text{.}
\end{eqnarray}
Since we have a \emph{degraded} BC, we can use the single-letter expression given in \cite{Comments},
\begin{align} \label{eq:app:perAntenna_singleLetterExpr}
\rate_1 & \leq \Igen{\rvec{u}}{\rvec{y}_1} \nonumber \\
\rate_2 & \leq \Icond{\rvec{x}}{\rvec{y}_2}{\rvec{u}}
\end{align}
where $\rvec{u}$ is an auxiliary random vector over a certain alphabet
that satisfies the Markov relation $\rvec{u} - \rvec{x} - (\rvec{y}_1,\rvec{y}_2)$. The following proof was originally given for the scalar
Gaussian BC in \cite{PROP,PROP_full} and we now extend it to the \emph{degraded} parallel Gaussian channel. Using Lemma \ref{lem:path} we can construct a path such that:
\begin{align} \label{eq:app:perAntenna_path}
\Chan(\vectort_2) & = \Chan_2 \nonumber \\
\Chan(\vectort_1) & = \Chan_1 \nonumber \\
\Chan(0) & = \mat{0}
\end{align}
where $0 \leq \vectort_1 \leq \vectort_2$ and $\Chan(\vectort)$ is diagonal for all $\vectort \in [0, \vectort_2]$.

Now, assume a pair $(\rvec{u},\rvec{x})$ such that $\rvec{x}$ has covariance $\Cov{\rvec{x}}$. According to Theorem \ref{thm:MIparallel}, there exists an independent Gaussian vector, $\rvec{x}_G$, with covariance matrix $\mat{\Lambda}_{\rvec{x_G}}$ such that the following properties hold:
\begin{align} \label{eq:app:perAntenna_usingTheoremP1}
\Icond{\rvec{x}}{\rvec{y}_1}{\rvec{u}} = \Icond{\rvec{x}}{\Chan(\vectort_1) \rvec{x} + \rvec{n}}{\rvec{u}} &=
\Igen{\rvec{x_g}}{\rvec{y_g}(\vectort_1)} = \Igen{\rvec{x_g}}{\Chan(\vectort_1) \rvec{x_g} + \rvec{n}} \\ \label{eq:app:perAntenna_usingTheoremP2}
\BQsum(\rvec{x}|\rvec{u},\mat{\Lambda}_{\rvec{x_G}},\vectort) &\geq 0, \quad \forall \vectort \geq \vectort_1 \\
\label{eq:app:perAntenna_usingTheoremP3}
[\mat{\Lambda}_{\rvec{x_G}}]_{ii} & \leq [\Cov{\rvec{x}} ]_{ii} \quad \forall i.
\end{align}
Using the I-MMSE relationship (\ref{eq:lineIntegral_cond2}) we can write,
\begin{align} \label{eq:app:perAntenna_IMMSE_1}
\Igen{\rvec{x_g}}{\Chan(\vectort) \rvec{x_g} + \rvec{n}} - \Icond{\rvec{x}}{\Chan(\vectort) \rvec{x} + \rvec{n}}{\rvec{u}} &=
\int_{\tau=0}^{\vectort} \Tr \left(
\ChanDChan(\tau) \Q(\rvec{x}|\rvec{u},\mat{\Lambda}_{\rvec{x_G}},\tau) \right) \d \tau \\
\label{eq:app:perAntenna_IMMSE_2}
& = \int_{\tau=0}^{\vectort} \sum_{i=1}^{\dim} \BQ(\rvec{x}|\rvec{u},\mat{\Lambda}_{\rvec{x_G}},\tau)
\d \tau \\
\label{eq:app:perAntenna_IMMSE_3}
&= \int_{\tau=0}^{\vectort} \BQsum(\rvec{x}|\rvec{u},\mat{\Lambda}_{\rvec{x_G}},\tau)
\d \tau .
\end{align}
Using the above properties on (\ref{eq:app:perAntenna_IMMSE_3}) we have that for any $\vectort' > \vectort_1$,
\begin{align} \label{eq:app:perAntenna_conclusingFromThm}
\Igen{\rvec{x_g}}{\Chan(\vectort') \rvec{x_g} + \rvec{n}} - \Icond{\rvec{x}}{\Chan(\vectort') \rvec{x} + \rvec{n}}{\rvec{u}}
&= \int_{\tau=0}^{\vectort_1} \BQsum(\rvec{x}|\rvec{u},\mat{\Lambda}_{\rvec{x_G}},\tau)
\d \tau + \int_{\tau=\vectort_1}^{\vectort'} \BQsum(\rvec{x}|\rvec{u},\mat{\Lambda}_{\rvec{x_G}},\tau)
\d \tau \\
&= 0 + \int_{\tau=\vectort_1}^{\vectort'} \BQsum(\rvec{x}|\rvec{u},\mat{\Lambda}_{\rvec{x_G}},\tau)
\d \tau \geq 0
\end{align}
where the second transition is due to (\ref{eq:app:perAntenna_usingTheoremP1}) and the inequality is due to (\ref{eq:app:perAntenna_usingTheoremP2}). Thus, we have shown the existence of an independent Gaussian vector, $\rvec{x}_G$, with covariance matrix $\mat{\Lambda}_{\rvec{x_G}}$, with the following properties:
\begin{align}
\Icond{\rvec{x}}{\Chan_1 \rvec{x} + \rvec{n}}{\rvec{u}} &= \frac{1}{2}\log |\Identity + \Chan_1 \mat{\Lambda}_{\rvec{x_G}} \Chan_1^\T |  \\
\Icond{\rvec{x}}{\Chan_2 \rvec{x} + \rvec{n}}{\rvec{u}} &\leq \frac{1}{2}\log |\Identity + \Chan_2 \mat{\Lambda}_{\rvec{x_G}} \Chan_2^\T |  \\
\textrm{and} \quad [\mat{\Lambda}_{\rvec{x_G}}]_{ii} & \leq [\Cov{\rvec{x}} ]_{ii} \quad \forall i.
\end{align}
Using these properties on the single-letter expression (\ref{eq:app:perAntenna_singleLetterExpr}) we obtain the following outer bound:
\begin{align} \label{eq:using Theorem6_u_5}
\rate_1 & \leq \Igen{\rvec{u}}{\rvec{y}_1} = \Igen{\rvec{x}}{\rvec{y}_1} - \Icond{\rvec{x}}{\rvec{y}_1}{\rvec{u}} \nonumber \\
& \leq \frac{1}{2} \log | \Identity + \Chan_1 \mat{P} \Chan_1^\T |
- \frac{1}{2} \log | \Identity + \Chan_1 \mat{\Lambda}_{\rvec{x_G}} \Chan_1^\T |
= \frac{1}{2} \log  \frac{| \Identity + \Chan_1 \mat{P} \Chan_1^\T |}{| \Identity + \Chan_1 \mat{\Lambda}_{\rvec{x_G}} \Chan_1^\T |} \\
\rate_2 & \leq \Icond{\rvec{x}}{\rvec{y}_2}{\rvec{u}} \leq
\frac{1}{2} \log | \Identity + \Chan_2 \mat{\Lambda}_{\rvec{x_G}} \Chan_2^\T |
\end{align}
where $\mat{P}$ is a diagonal matrix with $[\mat{P}]_{ii} = \Power_i$ for all $i$. This outer bound is tight and the
achievability is well-known using superposition coding. 
This approach can be extended to the M-user scenario as shown in Appendix \ref{app:converseProofBCPerAntennaMUsers}.

\section{Vector Channel: Comparing with a General Gaussian Distribution}
\label{sec:generalGaussian}
In this section we extend our analysis of the previous section. We continue looking into the model given in (\ref{eq:generalModel}), limited to parallel channel matrices, however we now allow the Gaussian covariance matrix, defining the matrix $\Q$, to be any proper covariance matrix. In other words, we no longer limit ourselves to independent Gaussian inputs. For this, more general setting, we will see in Section \ref{ssec:singleCrossingPointEig} that a ``single crossing point'' property occurs for each and every eigenvalue of the matrix $\Q$. After extending this result to the conditioned case, in Section \ref{ssec:generalConditioned}, we will use the I-MMSE relationship, in Section \ref{ssec:MutualInformationEig}, to show the effect of this property on information-theoretic quantities, and more specifically on the mutual information. We will relate these results to the Fisher information in Section \ref{ssec:fisher}. Finally, in Sections \ref{ssec:BCcovarianceConstraint} and \ref{ssec:compBCcovarianceConstraint} we will put these results to use in the \emph{degraded} BC capacity converse proof, for both the compound and non-compound scenarios.


\subsection{Single Crossing Point for Each Eigenvalue of $\Qt$}
\label{ssec:singleCrossingPointEig}

In this section we prove the main result of this paper: showing that each eigenvalue of the matrix $\Q$ has at most a single negative-to-nonnegative zero crossing. This is, to our understanding, not an intuitive extension of the ``single crossing point'' property, which emphasizes the importance of the eigenvalues in the analysis of MIMO scenarios.


For the proof of the main theorem, we require the following lemma, which might also be of interest on its own.
\begin{lem} \label{lem:LemmaLowerBound}
The following lower bound holds:
\begin{align} \label{eq:LemmaLowerBound}
\Jacob_{\vectort}\Q(\rvec{x}, \Cov{\rvec{x_G}}, \vectort) \succeq 2 \left( \MSE{x}(\vectort)
\ChanDChan(\vectort) \MSE{x}^\T (\vectort) - \EM_G(\vectort)
\ChanDChan(\vectort) \EM_G^\T(\vectort) \right)
\end{align}
where $\ChanDChan(\vectort)$ was defined in (\ref{eq:definitionB}) and assumed a positive semidefinite diagonal matrix for all $\vectort$ (see Lemma \ref{lem:path}).
\end{lem}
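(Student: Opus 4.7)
The plan is to compute $\Jacob_\vectort \Q(\rvec{x}, \Cov{x_G}, \vectort)$ piece by piece using the decomposition $\Q = \EM_G(\Cov{x_G}, \vectort) - \MSE{x}(\vectort)$, to observe that the purely Gaussian sandwich cancels against the one on the right-hand side, and finally to close the bound with a matrix-valued Jensen inequality.

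First, for the Gaussian term, differentiating $\EM_G(\Cov{x_G}, \vectort) = (\Cov{x_G}^{-1} + \Chan(\vectort)^\T \Chan(\vectort))^{-1}$ by the standard inverse derivative and using the diagonality of both $\Chan(\vectort)$ and $\Jacob_\vectort \Chan(\vectort)$ from Lemma \ref{lem:path} (which makes $\Jacob_\vectort[\Chan^\T\Chan] = 2\ChanDChan(\vectort)$) gives
\[ \Jacob_\vectort \EM_G(\Cov{x_G}, \vectort) = -2\,\EM_G(\Cov{x_G}, \vectort)\,\ChanDChan(\vectort)\,\EM_G(\Cov{x_G}, \vectort)^\T. \]
For the arbitrary-input term I would invoke the Palomar--Verd\'u Jacobian \cite[Eq.~(131)]{Palomar2}, already used componentwise in the proof of Theorem \ref{thm:DiagonalsUniqueCrossingPoint}, and reassemble the resulting sum over channel entries into full matrix form. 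The diagonality of $\Chan(\vectort)$ and $\Jacob_\vectort \Chan(\vectort)$ collapses all cross indices into a single symmetric sandwich and yields
\[ \Jacob_\vectort \MSE{x}(\vectort) = -2\,\Esp{\CMSE{x}{y}\,\ChanDChan(\vectort)\,\CMSE{x}{y}^\T}. \]

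Substituting both derivatives into the target inequality, the $-2\,\EM_G\ChanDChan\EM_G^\T$ contribution appears on both sides and cancels, reducing the lemma to the single matrix inequality
\[ \Esp{\CMSE{x}{y}\,\ChanDChan(\vectort)\,\CMSE{x}{y}^\T} \succeq \MSE{x}(\vectort)\,\ChanDChan(\vectort)\,\MSE{x}^\T(\vectort). \]
Since $\MSE{x}(\vectort) = \Esp{\CMSE{x}{y}}$ and $\ChanDChan(\vectort) \succeq \mat{0}$ by Lemma \ref{lem:path}, this is precisely the matrix-valued Jensen inequality $\Esp{\mat{M}\mat{B}\mat{M}^\T} \succeq \Esp{\mat{M}}\mat{B}\Esp{\mat{M}}^\T$ applied to $\mat{M} = \CMSE{x}{y}$ and $\mat{B} = \ChanDChan(\vectort)$; it follows from expanding the manifestly positive semidefinite quantity $\Esp{(\mat{M}-\Esp{\mat{M}})\mat{B}(\mat{M}-\Esp{\mat{M}})^\T} \succeq \mat{0}$.

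The main obstacle is establishing the full matrix form of $\Jacob_\vectort \MSE{x}(\vectort)$. The diagonal entries of this identity already appear inside \eqref{eq:proofThmSingleCrossingInd}, but here I need the entire matrix expression, including off-diagonal entries. The saving grace is that the diagonality of $\ChanDChan(\vectort)$ forces all the mixed channel-derivative terms in Palomar's Jacobian to contract cleanly into the symmetric sandwich $\CMSE{x}{y}\,\ChanDChan(\vectort)\,\CMSE{x}{y}^\T$, matching the structure of the Gaussian derivative and enabling the final Jensen step.
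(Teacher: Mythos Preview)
Your proposal is correct and follows essentially the same route as the paper's proof: both compute $\Jacob_\vectort \MSE{x}(\vectort)$ entrywise via the Palomar--Verd\'u Jacobian \cite[Eq.~(131)]{Palomar2}, exploit the diagonality of $\Chan(\vectort)$ and $\Jacob_\vectort\Chan(\vectort)$ to collapse the result into the sandwich form $-2\,\Esp{\CMSE{x}{y}\,\ChanDChan(\vectort)\,\CMSE{x}{y}^\T}$ (the paper writes this column-by-column as $-2\sum_l [\ChanDChan(\vectort)]_{ll}\,\Esp{[\CMSE{x}{y}]_l[\CMSE{x}{y}]_l^\T}$, which is the same thing), and then apply the matrix Jensen inequality to pull the expectation outside. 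The only cosmetic difference is that the paper obtains the Gaussian derivative as the special case $\CMSEr{x_G}{y}=\EM_G$ of the general formula rather than by differentiating the closed-form inverse, but the two computations coincide.
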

\begin{IEEEproof}
See Appendix \ref{app:LemmaLowerBound}.
\end{IEEEproof}
We are now ready to proceed to the main result of the paper:
\begin{thm} \label{thm:TheoremSingleCrossingEig}
Each eigenvalue of $\Q(\rvec{x}, \Cov{\rvec{x_g}}, \vectort)$ has, \emph{at most}, a single negative-to-nonnegative zero crossing of
the horizontal axis.
\end{thm}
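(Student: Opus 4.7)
The plan is to differentiate individual eigenvalues of $\Q(\vectort)$ along the path and apply Lemma~\ref{lem:LemmaLowerBound} to show that the derivative is nonnegative at every zero of an eigenvalue. Concretely, let $\lambda_k(\vectort)$ be a continuously-varying eigenvalue of $\Q(\vectort)$ with associated unit-norm eigenvector $\vec{v}_k(\vectort)$. Wherever $\lambda_k$ is simple, it varies smoothly in $\vectort$ and the Hellmann--Feynman (envelope) identity gives
\[
\frac{d\lambda_k}{d\vectort}(\vectort) = \vec{v}_k(\vectort)^\T \, \Jacob_\vectort \Q(\vectort) \, \vec{v}_k(\vectort),
\]
since the contribution coming from $d\vec{v}_k/d\vectort$ cancels because $\|\vec{v}_k\|=1$.

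The central step is to evaluate this derivative at an arbitrary zero crossing $\vectort_0$. The condition $\lambda_k(\vectort_0)=0$ is precisely $\MSE{x}(\vectort_0)\vec{v}_k = \EM_{G}(\Cov{x_G},\vectort_0)\vec{v}_k$; denote this common vector by $\vec{w}$. Plugging the bound of Lemma~\ref{lem:LemmaLowerBound} into the quadratic form and invoking the symmetry of $\MSE{x}$ and $\EM_{G}$, each of the two quadratic terms in the lower bound reduces to $\vec{w}^\T \ChanDChan(\vectort_0)\vec{w}$, so that
\[
\frac{d\lambda_k}{d\vectort}\bigg|_{\vectort_0} \geq 2\bigl(\vec{w}^\T \ChanDChan(\vectort_0)\vec{w} - \vec{w}^\T \ChanDChan(\vectort_0)\vec{w}\bigr) = 0.
\]

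With this nonnegativity at every zero in hand, I would conclude by contradiction in the same spirit as the scalar proof of Theorem~\ref{thm:ScalarUniqueCrossingPoint}. If $\lambda_k$ had two distinct negative-to-nonnegative crossings at $\vectort_0 < \vectort_1$, by continuity the eigenvalue must return to strictly negative values between them, so there exists an intermediate nonnegative-to-negative crossing $\tilde\vectort \in (\vectort_0,\vectort_1)$. At $\tilde\vectort$, the calculation above yields $d\lambda_k/d\vectort \geq 0$, contradicting the requirement that $\lambda_k$ strictly descends into negative values immediately after $\tilde\vectort$.

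The main obstacle is that the derivative bound is only non-strict ($\geq 0$), so the contradiction must handle the degenerate scenario in which $\lambda_k$ touches zero tangentially and descends only at higher order. I expect this to be dealt with by a perturbation argument: replacing $\rvec{x}$ by $\rvec{x}+\epsilon\rvec{z}$ for an independent standard Gaussian $\rvec{z}$ should yield strict positivity of $d\lambda_k/d\vectort$ at any zero of $\lambda_k$ (for $\epsilon>0$), and passing $\epsilon \to 0$ using continuity of the eigenvalue functions recovers the general claim. A secondary subtlety is non-simple eigenvalues, where Hellmann--Feynman must be replaced by its subspace version; fortunately the calculation above applies uniformly to every unit vector annihilated by $\Q(\vectort_0)$, so the nonnegativity bound persists on the entire zero-eigenspace and the single-crossing conclusion is unaffected.
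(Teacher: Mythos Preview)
Your core computation is correct: at any $\vectort_0$ with $\Q(\vectort_0)\vec{v}_k=0$ you have $\MSE{x}(\vectort_0)\vec{v}_k=\EM_{G}(\vectort_0)\vec{v}_k$, and plugging into Lemma~\ref{lem:LemmaLowerBound} gives $\vec{v}_k^\T\Jacob_\vectort\Q(\vectort_0)\vec{v}_k\geq 0$. The problem is that this is only a non-strict inequality \emph{exactly at the zero}, and that is not enough to forbid a nonnegative-to-negative crossing. A function like $\lambda_k(\vectort)=-(\vectort-\tilde\vectort)^3$ has derivative $0$ at $\tilde\vectort$ yet still crosses from positive to negative. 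Your proposed cure---adding $\epsilon\rvec{z}$---does not help: Lemma~\ref{lem:LemmaLowerBound} applied to the perturbed input still yields the same lower bound, and at a zero of the perturbed eigenvalue the two quadratic terms cancel exactly as before, so the bound is again $\geq 0$, not $>0$. There is no reason to expect the perturbation to produce strict positivity (for Gaussian $\rvec{x}$ it certainly does not, since $\CMSEr{x_\epsilon}{y}$ remains deterministic).

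The paper avoids this difficulty by evaluating the derivative not at the zero $\vectort_0$ but at a nearby point $\vectort^\star>\vectort_0$ where the relevant eigenvalues are \emph{strictly} negative and, by a mean-value argument, also have strictly negative derivative. At $\vectort^\star$ the matrices $\EM_G(\vectort^\star)$ and $\MSE{x}(\vectort^\star)$ are simultaneously diagonalized by a congruence $\mat{V}(\vectort^\star)$ (this is possible for any pair of positive semidefinite matrices), and the analysis is carried out for the congruent matrix $\pQ(\vectort,\vectort^\star)=\mat{V}(\vectort^\star)^{-\T}\Q(\vectort)\mat{V}(\vectort^\star)^{-1}$, whose eigenvectors at $\vectort=\vectort^\star$ are the canonical basis. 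The strict negativity of the eigenvalues now reads $[\mats\Sigma_{\rvec{x}}]_{ii}>[\mats\Sigma_G]_{ii}$ on the relevant block, and a separate matrix inequality (if $\mat{D}_1\succeq\mat{D}_2\succeq 0$ are diagonal and $\mat{A}\succeq 0$ then $\mu_{\max}(\mat{D}_1\mat{A}\mat{D}_1-\mat{D}_2\mat{A}\mat{D}_2)\geq 0$) forces at least one of the eigenvalue-derivatives to be $\geq 0$, contradicting that all were $<0$. Sylvester's law of inertia then transfers the conclusion back to $\Q(\vectort)$. The simultaneous diagonalization and the passage to $\vectort^\star$ are the ingredients your argument is missing; without them you cannot upgrade the $\geq 0$ at the zero to an honest contradiction.
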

\begin{IEEEproof}
Loosely speaking, the proof is based on proving that, once an eigenvalue has
become (or is) nonnegative, it cannot become negative. Thus, from the (weak)
continuity of the eigenvalues as a function of $\vectort$, that follows from
\cite[App.~D]{Horn}, the eigenvalues can cross the horizontal axis, at most,
once.
Also from continuity arguments, it is easy to see that we must limit our study
of the eigenvalues of $\Q( \rvec{x}, \Cov{x_g}, \vectort)$ to the values of $\vectort$ where the matrix
$\Q( \rvec{x}, \Cov{x_g}, \vectort)$ is singular (i.e., a subset of its eigenvalues are zero) as it is
the only possible situation where a zero crossing can occur.
%
Finally, throughout this proof and for the sake of simplicity we will use the
simplified notation $\Q( \rvec{x}, \Cov{x_g}, \vectort) = \Q(\vectort)
\triangleq \EM_G(\vectort) - \MSE{x}(\vectort)$ because the entire
proof is given for any constant setting of the input random vector $\rvec{x}$
and the Gaussian covariance $\Cov{x_g}$.

We begin by stating a few supporting results and giving some preliminary definitions.
\begin{lem} \label{lem:simultaneouslyD}
Let $\mat{A}$ and $\mat{B}$ be two $\dim$-dimensional positive semidefinite
matrices, \ie, $ \mat{A} \succeq \mat{0}$, $\mat{B} \succeq \mat{0}$. Then,
there exists an invertible matrix $\mat{S}$ such that both $\mat{S} \mat{A}
\mat{S}^T$ and $\mat{S} \mat{B} \mat{S}^T$ are diagonal matrices.
\end{lem}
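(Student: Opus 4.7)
The plan is to reduce to the classical simultaneous diagonalization by congruence of one PD and one PSD matrix, handling the possible shared null space of $\mathbf{A}$ and $\mathbf{B}$ as a separate block.

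First I would consider $\mathbf{M}=\mathbf{A}+\mathbf{B}$, which is itself PSD. For any $\mathbf{v}\in\ker(\mathbf{M})$, we have $\mathbf{v}^T\mathbf{A}\mathbf{v}+\mathbf{v}^T\mathbf{B}\mathbf{v}=0$, and since both terms are nonnegative and the matrices are PSD, this forces $\mathbf{A}\mathbf{v}=\mathbf{B}\mathbf{v}=\mathbf{0}$. Hence $\ker(\mathbf{M})=\ker(\mathbf{A})\cap\ker(\mathbf{B})\triangleq N$. Pick an orthonormal basis of $\mathbb{R}^\dim$ adapted to the decomposition $N^\perp\oplus N$ and let $\mathbf{U}$ be the corresponding orthogonal change of basis. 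Then in this basis both matrices have the block form
\begin{equation}
\mathbf{U}\mathbf{A}\mathbf{U}^T=\begin{pmatrix}\mathbf{A}_{11}&\mathbf{0}\\\mathbf{0}&\mathbf{0}\end{pmatrix},\qquad \mathbf{U}\mathbf{B}\mathbf{U}^T=\begin{pmatrix}\mathbf{B}_{11}&\mathbf{0}\\\mathbf{0}&\mathbf{0}\end{pmatrix},
\end{equation}
with $\mathbf{A}_{11},\mathbf{B}_{11}\in\mathbb{S}^{r}_{+}$ where $r=\dim(N^\perp)$. By the same argument applied on $N^\perp$, the only common null vector of $\mathbf{A}_{11}$ and $\mathbf{B}_{11}$ is zero, so $\mathbf{A}_{11}+\mathbf{B}_{11}\succ\mathbf{0}$.

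Next I would apply the standard two-step congruence on the top block. Write $\mathbf{A}_{11}+\mathbf{B}_{11}=\mathbf{L}\mathbf{L}^T$ with $\mathbf{L}$ invertible (Cholesky). Then $\mathbf{L}^{-1}\mathbf{A}_{11}\mathbf{L}^{-T}$ is symmetric PSD, so there exists an orthogonal $\mathbf{V}$ with $\mathbf{V}^T\mathbf{L}^{-1}\mathbf{A}_{11}\mathbf{L}^{-T}\mathbf{V}=\mathbf{D}_A$ diagonal. Since $\mathbf{L}^{-1}(\mathbf{A}_{11}+\mathbf{B}_{11})\mathbf{L}^{-T}=\mathbf{I}_r$, we automatically get
\begin{equation}
\mathbf{V}^T\mathbf{L}^{-1}\mathbf{B}_{11}\mathbf{L}^{-T}\mathbf{V}=\mathbf{I}_r-\mathbf{D}_A,
\end{equation}
which is also diagonal. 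Set $\mathbf{S}_{11}=\mathbf{V}^T\mathbf{L}^{-1}$ (invertible as a product of invertible matrices); both $\mathbf{S}_{11}\mathbf{A}_{11}\mathbf{S}_{11}^T$ and $\mathbf{S}_{11}\mathbf{B}_{11}\mathbf{S}_{11}^T$ are diagonal.

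Finally, I would assemble the full transformation by padding $\mathbf{S}_{11}$ with the identity on the $N$ block and composing with $\mathbf{U}$, namely
\begin{equation}
\mathbf{S}=\begin{pmatrix}\mathbf{S}_{11}&\mathbf{0}\\\mathbf{0}&\mathbf{I}_{\dim-r}\end{pmatrix}\mathbf{U}.
\end{equation}
This $\mathbf{S}$ is invertible, and both $\mathbf{S}\mathbf{A}\mathbf{S}^T$ and $\mathbf{S}\mathbf{B}\mathbf{S}^T$ are block diagonal with the top $r\times r$ block diagonal and the bottom $(\dim-r)\times(\dim-r)$ block equal to zero, hence diagonal. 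The only subtle point is the careful handling of $N$; once that is isolated, the rest is the textbook argument. No further obstacle arises.
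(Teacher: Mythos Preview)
Your proof is correct. Both you and the paper begin identically: isolate the common null space $N=\ker(\mathbf{A})\cap\ker(\mathbf{B})$ via an orthogonal change of basis and reduce to the restrictions $\mathbf{A}_{11},\mathbf{B}_{11}$ on $N^\perp$, where the common null space is now trivial.

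The difference is in how the reduced problem is handled. The paper argues that, on $N^\perp$, the two PSD matrices have no common \emph{isotropic} vector (since $\mathbf{x}^T\mathbf{A}_{11}\mathbf{x}=0$ forces $\mathbf{A}_{11}\mathbf{x}=0$ for PSD matrices, and similarly for $\mathbf{B}_{11}$), and then invokes a classical result from Horn and Johnson (\emph{Topics in Matrix Analysis}, Th.~1.7.17) stating that two symmetric matrices with no common isotropic vector are simultaneously diagonalizable by congruence. Your route is more elementary and fully constructive: you observe that $\mathbf{A}_{11}+\mathbf{B}_{11}\succ\mathbf{0}$, Cholesky-factor it, and note that the congruence sending the sum to $\mathbf{I}_r$ automatically diagonalizes both summands after a further orthogonal rotation, since $\mathbf{B}_{11}$ is carried to $\mathbf{I}_r-\mathbf{D}_A$. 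This avoids any appeal to the isotropic-vector theorem and yields an explicit $\mathbf{S}$; the price is that it is specific to the PSD setting (which is all that is needed here), whereas the cited theorem covers more general symmetric pairs.
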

\begin{IEEEproof} See Appendix \ref{app:simultaneouslyD}.
\end{IEEEproof}

Let us consider the simultaneous decomposition of $\{\EM_G(\vectort),
\MSE{x}(\vectort)\}$ according to Lemma \ref{lem:simultaneouslyD} as:
\begin{gather} \label{eq:gen_decomp}
\begin{split}
\EM_G(\vectort) &= \mat{V}(\vectort)^\T \mats{\Sigma}_G(\vectort)
\mat{V}(\vectort) \\
\MSE{x}(\vectort) &= \mat{V}(\vectort)^\T \mats{\Sigma}_{\rvec{x}}(\vectort)
\mat{V}(\vectort)
\end{split}
\end{gather}
where $\mat{V}(\vectort)$ is an invertible matrix and $\mats{\Sigma}_G(\vectort)$ and
$\mats{\Sigma}_{\rvec{x}}(\vectort)$ are diagonal matrices.
It will be convenient to define $\pQ (\vectort, \tau)$, for $\tau \geq
0$, according to
\begin{gather} \label{eq:qtilde}
\pQ (\vectort, \tau) = \mat{V}(\tau)^{-\T} \Q(\vectort )
\mat{V}(\tau)^{-1}
\end{gather}
where $\mat{V}(\tau)$ is the same as defined in (\ref{eq:gen_decomp}).

The remainder of the proof is split into two parts. In the first part we will
prove that each eigenvalue of $\pQ( \vectort, \tau)$ has at most a single
negative-to-nonnegative zero crossing. In the second part, we will show that
this property transfers to $\Q( \vectort )$, thus completing the proof.
Coincidentally, both parts of the proof will be based on contradiction
arguments, \ie, we assume that the opposite of what we want to prove is true
and, then, end up with an inconsistency.

\subsubsection{Single crossing point for the eigenvalues of $\pQ( \vectort, \tau)$}
Let us start by presenting a result on the differentiability of the eigenvalues
of a symmetric matrix with respect to some scalar parameter $\vectort$, which
was studied by Rellich in \cite[Ch.~1]{rellich:54}\footnote{Rellich studied the
eigenvalue differentiability for Hermitian matrices. We specialized his result
for the real case studied in this paper.}:
\begin{lem}{\cite[Th.~in p.~57]{rellich:54}} \label{lem:rellich}
Suppose that $\mat{A}(\vectort)$ is an $\dim$-dimensional symmetric matrix
defined on some open interval $\vectort \in (\vectort_1, \vectort_2)$. Suppose
that the derivative $\Jacob_{\vectort} \mat{A}(\vectort)$ exists and it is
continuous for each $\vectort \in (\vectort_1, \vectort_2)$. Then,
there exist $\dim$ functions $\lambda_i(\vectort)$, $i=1,\ldots, n$ with
continuous
derivatives in $\vectort \in (\vectort_1, \vectort_2)$, such that
\begin{gather}
 \mat{A}(\vectort) \vec{u}_i(\vectort) = \lambda_i(\vectort)
\vec{u}_i(\vectort), \quad i = 1, \ldots, \dim
\end{gather}
for some properly chosen orthonormal system of vectors $\vec{u}_i(\vectort)$, $i
= 1, \ldots, \dim$.
\end{lem}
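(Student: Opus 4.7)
The plan is to prove Rellich's theorem by a local-to-global argument: construct $C^1$ eigenvalue functions on a neighborhood of each $t_0\in(\vectort_1,\vectort_2)$ together with a compatible orthonormal frame of eigenvectors, then patch the local pieces together. The only delicate point is handling $t_0$ where $\mat{A}(t_0)$ has eigenvalues of multiplicity $\geq 2$, since at such a crossing the eigenvalues \emph{ordered by magnitude} $\lambda_1(t)\le\cdots\le\lambda_\dim(t)$ are in general only Lipschitz; the content of the theorem is that one can instead follow individual eigenvalue branches through the crossing in a $C^1$ manner.

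First I would fix $t_0$ and let $\mu_1<\cdots<\mu_k$ be the distinct eigenvalues of $\mat{A}(t_0)$ with multiplicities $m_j$. For each $\mu_j$, enclose it by a positively oriented contour $\Gamma_j$ in the complex plane that separates it from all other $\mu_\ell$, and introduce the total Riesz projection
\begin{equation*}
P_j(t)\;=\;\frac{1}{2\pi i}\oint_{\Gamma_j}(zI-\mat{A}(t))^{-1}\,dz.
\end{equation*}
Because $\mat{A}(t)$ is $C^1$ and no eigenvalue of $\mat{A}(t)$ crosses $\Gamma_j$ for $t$ near $t_0$, the projection $P_j(t)$ is $C^1$ and has constant rank $m_j$ on a neighborhood of $t_0$. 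Applying $P_j(t)$ to an orthonormal basis of $\mathrm{range}\,P_j(t_0)$ and Gram-Schmidt-izing produces a $C^1$ orthonormal frame $\{v_{j,1}(t),\ldots,v_{j,m_j}(t)\}$ of the invariant subspace associated with the cluster of eigenvalues of $\mat{A}(t)$ near $\mu_j$.

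Second, in this frame the restriction of $\mat{A}(t)$ to the invariant subspace is a $C^1$ symmetric $m_j\times m_j$ matrix $\mat{M}_j(t)$ with $\mat{M}_j(t_0)=\mu_j\mat{I}$. Using the continuity of $\Jacob_\vectort\mat{A}$ I would write
\begin{equation*}
\mat{M}_j(t)-\mu_j\mat{I}\;=\;(t-t_0)\,\mat{N}_j(t),\qquad \mat{N}_j(t)\;=\;\int_0^1\Jacob_\vectort\mat{M}_j\big(t_0+s(t-t_0)\big)\,ds,
\end{equation*}
so $\mat{N}_j$ is continuous and symmetric. The eigenvalues of $\mat{A}(t)$ near $\mu_j$ are therefore $\mu_j+(t-t_0)\nu_{j,i}(t)$, where the $\nu_{j,i}(t)$ are continuous eigenvalues of $\mat{N}_j(t)$ by Weyl's inequality; each such function is $C^1$ at $t_0$ with derivative equal to the corresponding eigenvalue of $\mat{N}_j(t_0)=\Jacob_\vectort\mat{A}(t_0)\big|_{\mathrm{range}\,P_j(t_0)}$. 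Away from $t_0$ the branches either split into simple eigenvalues (where the implicit function theorem applied to the characteristic polynomial finishes the job), or the construction is iterated inside the invariant subspaces of $\mat{N}_j(t_0)$; induction on $m_j$ terminates.

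Finally, I would patch by a connectedness argument on $(\vectort_1,\vectort_2)$: on each open interval where a given labeling of branches is in force the $\lambda_i$ and $\vec{u}_i$ are $C^1$, and whenever two patches meet the local branches differ by a permutation of labels that can be absorbed into the global labeling so that the resulting $\lambda_i$ remain $C^1$. The main obstacle, and the heart of the argument, is the splitting step inside the invariant subspace: the factorization $\mat{M}_j(t)-\mu_j\mat{I}=(t-t_0)\mat{N}_j(t)$ relies crucially on the scalar nature of the parameter $\vectort$---in two or more parameters the analogous statement fails (simple $2\times 2$ examples such as $\mat{A}(s,t)=\bigl(\begin{smallmatrix}s&t\\t&-s\end{smallmatrix}\bigr)$ have eigenvalues $\pm\sqrt{s^2+t^2}$, which are not differentiable at the origin). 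Once this one-parameter splitting is in hand, the remainder is a standard patching argument.
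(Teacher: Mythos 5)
The paper itself offers no proof of this lemma --- it is imported verbatim from Rellich's book --- so your attempt is really a reconstruction of Rellich's classical argument rather than a variant of anything in the text. The parts you do carry out are sound: the Riesz-projection reduction to a $C^1$ symmetric family $M_j(t)$ on the cluster subspace, the factorization $M_j(t)-\mu_j\mat{I}=(t-t_0)N_j(t)$ with $N_j$ continuous and symmetric, the conclusion that any continuous eigenvalue branch through $t_0$ is differentiable \emph{at} $t_0$ with derivative in the spectrum of $N_j(t_0)=P_j(t_0)\,\Jacob_{\vectort}\mat{A}(t_0)\,P_j(t_0)$, and the closing remark that the one-parameter hypothesis is essential.

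The gap is the passage from this pointwise information to a $C^1$ selection on a neighborhood, which is precisely the hard content of Rellich's theorem and is not supplied by your sketch. First, the proposed induction does not run: $N_j(t)$ is only continuous (it is an average of $\Jacob_{\vectort}M_j$), so it does not inherit the $C^1$ hypothesis of the theorem, and when $\mat{A}(t_0)$ restricted to the cluster is a multiple of the identity there is no dimension reduction at $t_0$ at all; ``iterate inside the invariant subspaces of $N_j(t_0)$'' is therefore not an available step, and the invariant subspaces of $N_j(t_0)$ do not control $\mat{A}(t)$ on any neighborhood of $t_0$. Second, the patching argument presumes that crossings are locally isolated, i.e.\ that near every point there are open intervals on which a fixed labeling of branches is ``in force''; for a merely $C^1$ family the crossing set can be an arbitrary closed set, in particular crossings may accumulate at $t_0$, so no such interval adjacent to $t_0$ need exist and the permutation-at-interfaces bookkeeping does not define a selection. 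Third, even for a single branch you never verify continuity of the derivative at $t_0$: at nearby points $t$ the derivative of a branch is an eigenvalue of the compression of $\Jacob_{\vectort}\mat{A}(t)$ onto the corresponding eigenspace of $\mat{A}(t)$, and these eigenspaces need not converge as $t\to t_0$ (Rellich's own $C^\infty$ example with no continuous choice of eigenvectors), so the convergence of these derivatives to the particular eigenvalue of $N_j(t_0)$ assigned to that branch is exactly the delicate matching argument at the core of Rellich's proof. As it stands, the proposal establishes differentiability of eigenvalue branches at each fixed point, but not the existence of a globally $C^1$ parametrization, which is what the lemma asserts.
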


Since $\pQ( \vectort, \tau)$ is a symmetric matrix whose derivative
$\Jacob_{\vectort} \pQ( \vectort, \tau)$ exists, Lemma \ref{lem:rellich} ensures
the existence of $\dim$ continuous and differentiable functions such that they
are equal to the eigenvalues of the matrix $\pQ(\vectort, \tau)$, for any
choice of $\tau$. These functions will be denoted from now on by
$\lambda_i(\vectort, \tau)$, for $i=1,\ldots, \dim$.

Now, let us assume that, at $\vectort = \vectort_0$, $k$ of these
eigenvalues (with $k \leq \dim$) are equal to zero, \ie, $\lambda_i(\vectort_0,
\tau) = 0$, for $i=1,\ldots, k$. Furthermore, we also assume that, from these
$k$ eigenvalues that are zero at $\vectort = \vectort_0$, $s$ of them (with
$s\leq k$) have a nonnegative-to-negative zero crossing at $\vectort =
\vectort_0$. To sum up, we assume that the differentiable functions
$\lambda_i(\vectort, \tau)$, with $i=1,\ldots, s$ have a nonnegative-to-negative
zero crossing at $\vectort = \vectort_0$.

Let us now present a property of differentiable functions that contain
nonnegative-to-negative zero crossings:
\begin{lem} \label{lem:ptnzc_diff}
Assume that $f(t)$ has a nonnegative-to-negative zero crossing at $t = t_0$ and
that $f(t)$ is differentiable in a neighborhood of $t_0$. Then, there exists a
positive value $\varepsilon$ such that
\begin{align}
 f(t) &< 0, \quad t \in (t_0, t_0 + \varepsilon), \label{eq:fptnzc_first} \\
 \Jacob_{t} f(t) &< 0, \quad t \in (t_0, t_0 + \varepsilon).
\label{eq:fptnzc_second}
\end{align}
\end{lem}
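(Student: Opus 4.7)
The first claim, $f(t)<0$ on $(t_0,t_0+\varepsilon)$, is immediate from Definition~\ref{dfn:nonnegative2negative}: a nonnegative-to-negative zero crossing at $t_0$ provides, by definition, a positive $\epsilon$ such that $f(t)<0$ for $t\in(t_0,t_0+\epsilon)$, and any $\varepsilon\le\epsilon$ suffices.

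The plan for the derivative is to first pin down the sign of $f'(t_0)$ via the definition of the derivative, and then to propagate this sign to a full right neighborhood. Differentiability of $f$ at $t_0$ combined with $f(t_0)=0$ lets one write
\[
  f'(t_0)=\lim_{t\to t_0^+}\frac{f(t)}{t-t_0}=\lim_{t\to t_0^-}\frac{f(t)}{t-t_0}.
\]
On the right the numerator is strictly negative and the denominator positive, and on the left the numerator is nonnegative and the denominator negative, so in both one-sided limits the quotient is nonpositive; hence $f'(t_0)\le 0$.

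The next step is to upgrade this pointwise statement at $t_0$ to the strict inequality $f'(t)<0$ throughout an interval $(t_0,t_0+\varepsilon)$. Here I would rely on the continuity of $f'$ that is in fact available in the application: the lemma is invoked on the eigenvalue functions $\lambda_i(\vectort,\tau)$ of a smoothly parametrized symmetric matrix, which have continuous derivatives by Rellich's theorem (Lemma~\ref{lem:rellich}). In the generic subcase $f'(t_0)<0$, continuity of $f'$ at $t_0$ alone supplies such a right neighborhood. In the degenerate subcase $f'(t_0)=0$, the plan is a contradiction argument: assume $f'(t^{*})\ge 0$ at some point $t^{*}\in(t_0,t_0+\varepsilon)$ arbitrarily close to $t_0$; applying the mean value theorem to $f$ on $[t_0,t^{*}]$ produces an intermediate point at which $f'$ is strictly negative of magnitude $|f(t^{*})|/(t^{*}-t_0)$; combining this with the continuous dependence of $f'$ and with the sign pattern imposed by the crossing near $t_0$, one aims to derive an incompatibility that excludes any such oscillation of $f'$ on a sufficiently small right interval.

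The main obstacle is precisely this degenerate subcase $f'(t_0)=0$: moving from $f'(t_0)\le 0$ as a pointwise fact to the strict inequality $f'(t)<0$ on a whole interval is the nontrivial step, and it is where one must carefully exploit both the $C^1$ regularity inherited from Rellich's theorem and the specific sign structure of a nonnegative-to-negative crossing, rather than settling for mere differentiability of $f$ at $t_0$.
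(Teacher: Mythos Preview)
Your treatment of (\ref{eq:fptnzc_first}) matches the paper exactly: both appeal directly to Definition~\ref{dfn:nonnegative2negative}. For (\ref{eq:fptnzc_second}) the paper says only that it ``follows easily from the mean value theorem and elementary calculus,'' so your route via $f'(t_0)\le 0$ followed by continuity of $f'$ is more explicit and somewhat different in spirit.

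The obstacle you flag in the degenerate subcase $f'(t_0)=0$ is not merely hard --- it is fatal to the statement as written: even under the $C^1$ hypothesis you import from Lemma~\ref{lem:rellich}, the conclusion can fail. Take $f(t)=t^2$ for $t\le 0$ and $f(t)=-t^2+3t^3\sin(1/t)$ for $t>0$. Then $f\in C^1$ with $f'(0)=0$, and $f$ has a nonnegative-to-negative crossing at $0$, yet $f'(t)=-t\bigl(2+3\cos(1/t)\bigr)+9t^2\sin(1/t)$ equals $t>0$ whenever $\cos(1/t)=-1$, so $f'>0$ on a sequence tending to $0^+$ and no right interval has $f'<0$ throughout. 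Your proposed contradiction argument therefore cannot be completed. What the mean value theorem genuinely delivers is the weaker fact that for every $\delta\in(0,\epsilon)$ there exists some $s\in(t_0,t_0+\delta)$ with $f(s)<0$ and $f'(s)<0$: pick any $t_1\in(t_0,t_0+\delta)$ and apply MVT on $[t_0,t_1]$ to get $f'(s)=f(t_1)/(t_1-t_0)<0$. This is presumably the content behind the paper's one-liner, and it is worth checking whether this pointwise conclusion (rather than a full interval) already suffices for the contradiction at (\ref{eq:lambdaQt_critic_repeat}), where a common point $t^\star$ is needed for several eigenvalue functions simultaneously.
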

\begin{proof}
From Definition \ref{dfn:nonnegative2negative}, (\ref{eq:fptnzc_first})
follows immediately for any $\varepsilon \leq \epsilon$. The proof for
(\ref{eq:fptnzc_second}) follows easily from the mean value theorem and
elementary calculus.
\end{proof}

Applying Lemma \ref{lem:ptnzc_diff} to the set of functions $\lambda_i(\vectort,
\tau)$, with $i=1,\ldots, s$, we readily obtain:
\begin{gather} \label{eq:lambdaQt}
 \left.
\begin{array}{r}
\lambda_i(\vectort, \tau) <
0, \quad \vectort \in (\vectort_0, \vectort_0 + \varepsilon_i(\tau)) \\
\Jacob_{\vectort} \lambda_i(\vectort, \tau) < 0, \quad \vectort \in
(\vectort_0, \vectort_0 + \varepsilon_i(\tau))
\end{array}
\right\} \quad i = 1,\ldots, s
\end{gather}
where we have written $\varepsilon_i(\tau)$ to make explicit the dependence of
$\varepsilon_i$ on the specific value of $\tau$. For the sake of convenience, we
want to eliminate the dependence of $\varepsilon_i$ on $\tau$. A possible
method to eliminate this dependence is to define
\begin{gather}
 \varepsilon_i^\star = \inf_{\tau \in [\vectort_0, \vectort_0 + M]}
\varepsilon_i(\tau) = \min_{\tau
\in [\vectort_0, \vectort_0 + M]} \varepsilon_i(\tau) > 0
\end{gather}
where, for the sake of convenience, we have restricted the values of $\tau$ in
the interval $[\vectort_0, \vectort_0 + M]$, with $M$ being an arbitrary fixed positive value
(observe that since $\varepsilon_i(\tau)$ can be made arbitrarily small we can
always guarantee that $M > \varepsilon_i(\tau) \geq \varepsilon_i^\star$), and
where the second equality follows from the fact that the optimization set is a
closed interval and the third one follows from $\varepsilon_i(\tau) > 0$,
$\forall \tau$.

Consequently, after this simplification, we have that, assuming that the
differentiable functions $\lambda_i(\vectort, \tau)$, with $i=1,\ldots, s$ have
a nonnegative-to-negative zero crossing at $\vectort = \vectort_0$, they must
fulfill:
\begin{gather} \label{eq:lambdaQt_simpl}
 \left.
\begin{array}{r}
\lambda_i(\vectort, \tau) <
0, \quad \vectort \in (\vectort_0, \vectort_0 + \varepsilon_i^\star) \\
\Jacob_{\vectort} \lambda_i(\vectort, \tau) < 0, \quad \vectort \in
(\vectort_0, \vectort_0 + \varepsilon_i^\star)
\end{array}
\right\} \quad i = 1,\ldots, s.
\end{gather}
Now, we can particularize the expression above for the case where $\vectort =
\vectort_0 + \varepsilon_i^\star/2 \triangleq \vectort^\star$ and where we
also choose $\tau = \vectort^\star$. We obtain
\begin{gather} \label{eq:lambdaQt_critic_repeat}
 \left.
\begin{array}{r}
\lambda_i(\vectort^\star, \vectort^\star) <
0 \\ \left. \Jacob_{\vectort} {\lambda}_i(\vectort, \vectort^\star)
\right|_{\vectort = \vectort^\star} < 0
\end{array}
\right\} \quad i = 1,\ldots, s.
\end{gather}

From this point our goal is to prove that the two
conditions in (\ref{eq:lambdaQt_critic_repeat}) cannot both hold at the same
time. For that purpose, we need an expression for the derivative of the
eigenvalue function $\Jacob_{\vectort} {\lambda}_i(\vectort, \vectort^\star)$.
Since we have that $\lambda_i(\vectort_0, \tau) = 0$ for $i=1,\ldots,k$ (\ie,
the multiplicity of the zero eigenvalue is $k$) we cannot guarantee that the
multiplicity of the eigenvalue $\lambda_i(\vectort^\star, \vectort^\star)$ is
equal to one. From this point, we assume that the multiplicity of
$\lambda_i(\vectort^\star, \vectort^\star)$ is $l$.

Consequently, we now require the following result by Lancaster in
\cite[Th.~7]{lancaster:64} (it is also reproduced in
\cite[Ch.~8, Sec.~12, Th.~13]{Magnus}), which gives us an
expression for the derivatives of the multiple eigenvalues\footnote{The
assumptions \cite[Th.~7]{lancaster:64} are different than those in Lemma
\ref{lem:rellich}, but, once existence of the derivatives of the
eigenvalues has been established, their expression has to be the same.}:
\begin{lem}{\cite[Th.~7]{lancaster:64}} \label{lem:lancaster}
Under the assumptions in Lemma \ref{lem:rellich}, let's consider the case
where $\mat{A}(\vectort)$ has a repeated eigenvalue $\lambda_0$ with
multiplicity $l$, \ie., $\lambda_1(\vectort) = \lambda_2(\vectort) = \ldots =
\lambda_l(\vectort) = \lambda_0$. Assume further that the $\dim \times l$ matrix
$\mat{U}(\vectort)$ spans the space associated with the repeated eigenvalues
(\ie, $\mat{U}(\vectort)$ contains one particular set of eigenvectors associated
with the $l$ repeated eigenvalue). Then, the $l$ derivatives of the eigenvalues,
which coincide at $\lambda_0$ are the eigenvalues of the matrix
\begin{gather}
\mat{U}(\vectort)^\T \Jacob_{\vectort} \mat{A}(\vectort) \mat{U}(\vectort).
\end{gather}
\end{lem}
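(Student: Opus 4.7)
The plan is to prove Lancaster's formula by differentiating the eigenvalue relation through a particular smooth basis of the eigenspace produced by Rellich's lemma, then showing that the claim about the eigenvalues of $\mat{U}(\vectort)^\T \Jacob_\vectort \mat{A}(\vectort) \mat{U}(\vectort)$ is invariant under orthogonal changes of basis within the degenerate eigenspace.

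First, I would invoke Lemma~\ref{lem:rellich} to produce smooth scalar functions $\lambda_1(\vectort),\ldots,\lambda_\dim(\vectort)$ together with an orthonormal family of smooth eigenvectors $\vec{v}_1(\vectort),\ldots,\vec{v}_\dim(\vectort)$ in a neighborhood of the reference point $\vectort_0$. After reordering, the hypothesis $\lambda_1(\vectort_0)=\cdots=\lambda_l(\vectort_0)=\lambda_0$ holds, and the columns of $\mat{V}(\vectort_0) \triangleq [\vec{v}_1(\vectort_0),\ldots,\vec{v}_l(\vectort_0)]$ form an orthonormal basis of the $\lambda_0$-eigenspace. The crucial payoff of Rellich's construction is that the derivatives $\Jacob_\vectort \lambda_i(\vectort_0)$ actually exist in the classical sense even though the eigenvalues coincide at $\vectort_0$; without such a smooth choice of eigenvectors the ``derivatives'' of eigenvalue curves through a crossing point are in general ill-defined.

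Next, I would differentiate the identity $\mat{A}(\vectort)\vec{v}_i(\vectort) = \lambda_i(\vectort)\vec{v}_i(\vectort)$ in $\vectort$ and left-multiply the result by $\vec{v}_j(\vectort_0)^\T$ for arbitrary $i,j \in \{1,\ldots,l\}$. Using the symmetry of $\mat{A}(\vectort_0)$ we have $\vec{v}_j(\vectort_0)^\T \mat{A}(\vectort_0) = \lambda_0 \vec{v}_j(\vectort_0)^\T$, so that the term $\vec{v}_j(\vectort_0)^\T \mat{A}(\vectort_0)\, \Jacob_\vectort \vec{v}_i(\vectort_0)$ cancels against $\lambda_i(\vectort_0)\,\vec{v}_j(\vectort_0)^\T\,\Jacob_\vectort \vec{v}_i(\vectort_0)$ (since $\lambda_i(\vectort_0)=\lambda_0$), leaving
\begin{gather*}
\vec{v}_j(\vectort_0)^\T\, \Jacob_\vectort \mat{A}(\vectort_0)\, \vec{v}_i(\vectort_0) \;=\; \Jacob_\vectort \lambda_i(\vectort_0)\, \delta_{ij}.
\end{gather*}
In matrix form this says that $\mat{V}(\vectort_0)^\T\, \Jacob_\vectort \mat{A}(\vectort_0)\, \mat{V}(\vectort_0)$ is the $l \times l$ diagonal matrix whose $(i,i)$ entry is $\Jacob_\vectort \lambda_i(\vectort_0)$, so the conclusion of the lemma holds for the specific Rellich basis.

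Finally, I would remove the dependence on the particular basis produced by Rellich's lemma. Any other orthonormal basis $\mat{U}(\vectort_0)$ of the same $l$-dimensional eigenspace is related to $\mat{V}(\vectort_0)$ by $\mat{U}(\vectort_0) = \mat{V}(\vectort_0)\mat{O}$ for some orthogonal matrix $\mat{O} \in \R^{l\times l}$, whence
\begin{gather*}
\mat{U}(\vectort_0)^\T\, \Jacob_\vectort \mat{A}(\vectort_0)\, \mat{U}(\vectort_0) \;=\; \mat{O}^\T \bigl[ \mat{V}(\vectort_0)^\T\, \Jacob_\vectort \mat{A}(\vectort_0)\, \mat{V}(\vectort_0) \bigr] \mat{O},
\end{gather*}
which is similar to the diagonal matrix of derivatives $\Jacob_\vectort \lambda_i(\vectort_0)$ and therefore shares its spectrum. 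The main obstacle in this proof is conceptual rather than computational: one must recognize that for repeated eigenvalues the derivative of a naively ordered eigenvalue curve need not exist in the usual sense, and that Rellich's lemma is precisely the tool that restores differentiability by allowing the eigenvectors (and hence the labelling of the eigenvalues) to rotate smoothly through the degeneracy. The symmetry of $\mat{A}(\vectort)$ is then what forces the off-diagonal cross terms to vanish in the projection step.
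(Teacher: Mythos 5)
The paper itself does not prove this lemma (it is imported from Lancaster and Magnus), so the only question is whether your self-contained argument is sound, and it has one genuine gap, located in its very first step: you assume that Lemma~\ref{lem:rellich} delivers a \emph{differentiable} orthonormal family of eigenvectors $\vec{v}_1(\vectort),\ldots,\vec{v}_\dim(\vectort)$ through the degeneracy. As quoted, Rellich's theorem only guarantees continuously differentiable eigenvalue functions; the orthonormal system $\vec{u}_i(\vectort)$ is merely asserted to exist for each fixed $\vectort$, with no regularity in $\vectort$. This is not a cosmetic omission: for a family that is only $C^1$ (indeed even $C^\infty$) in $\vectort$, an eigenvector selection that is differentiable---or even continuous---through the crossing point need not exist at all (Rellich's classical $2\times 2$ example with eigenvalues $\pm e^{-1/\vectort^2}$ and eigenvectors oscillating like trigonometric functions of $1/\vectort$ as $\vectort\to 0$). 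Consequently the identity $\mat{A}(\vectort)\vec{v}_i(\vectort)=\lambda_i(\vectort)\vec{v}_i(\vectort)$ cannot in general be differentiated the way you do, and the sentence claiming that Rellich's lemma ``restores differentiability by allowing the eigenvectors to rotate smoothly'' attributes to that lemma precisely what it does not assert. Everything downstream of that step is fine: the cancellation of the $\mat{A}(\vectort_0)\Jacob_{\vectort}\vec{v}_i$ terms via symmetry, and the observation that replacing one orthonormal basis of the eigenspace by another conjugates $\mat{U}^\T\Jacob_{\vectort}\mat{A}\,\mat{U}$ by an orthogonal matrix and hence leaves its spectrum unchanged.

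The gap can be closed in either of two standard ways. One is to strengthen the hypothesis to analytic dependence on $\vectort$, where Rellich does provide analytic eigenvalue and eigenvector branches and your computation goes through verbatim. The other keeps the $C^1$ hypothesis and avoids eigenvector derivatives altogether via degenerate first-order perturbation: write $\mat{A}(\vectort_0+h)=\mat{A}(\vectort_0)+h\,\Jacob_{\vectort}\mat{A}(\vectort_0)+o(h)$, use the fact that the eigenvalues of this matrix lying near $\lambda_0$ are $\lambda_0+h\mu_i+o(h)$, where $\mu_i$ are the eigenvalues of $\mat{U}(\vectort_0)^\T\Jacob_{\vectort}\mat{A}(\vectort_0)\mat{U}(\vectort_0)$, and match this multiset (sorted, separately for $h>0$ and $h<0$) against the expansions $\lambda_0+h\,\Jacob_{\vectort}\lambda_i(\vectort_0)+o(h)$ of the Rellich branches; letting $h\to 0$ identifies the $l$ derivatives with the $\mu_i$. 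This is essentially what the paper's footnote alludes to when it notes that once existence of the derivatives is established, their expression must agree with Lancaster's. A minor further point: your argument, like Lancaster's statement and its use in the paper (where $\mat{U}=\mat{1}_{\dim,l}$), requires $\mat{U}$ to have orthonormal columns, not merely to span the eigenspace; it is worth making that hypothesis explicit.
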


Using Lemma \ref{lem:lancaster}, we can write
\begin{align}
\left. \Jacob_{\vectort} {\lambda}_i(\vectort, \tau=\vectort^\star)
\right|_{\vectort = \vectort^\star} &= \mu_i\left( \mat{1}^\T_{\dim, l}  \left.
\Jacob_{\vectort} \pQ(\vectort, \tau=\vectort^\star) \right|_{\vectort = \vectort^\star} \mat{1}_{\dim,l}
\right) \label{eq:mui_first} \\
&= \mu_i\left( \mat{1}^\T_{\dim, l} \mat{V}(\vectort^\star)^{-\T} \left.
\Jacob_{\vectort} \Q(\rvec{x}, \Cov{x_g}, \vectort) \right|_{\vectort = \vectort^\star} \mat{V}(\vectort^\star)^{-1} \mat{1}_{\dim,l}
\right) \\
&\geq \mu_i \left( \left[ \mats{\Sigma}_{\rvec{x}}(\vectort^\star)
\mat{C}(\vectort^\star)
\mats{\Sigma}_{\rvec{x}}(\vectort^\star) - \mats{\Sigma}_G(\vectort^\star) \mat{C}(\vectort^\star) \mats{\Sigma}_G(\vectort^\star)
\right]_{1:l, 1:l} \right) \label{eq:before_last_repeated} \\
&= \mu_i \left( \left[ \mats{\Sigma}_{\rvec{x}}(\vectort^\star) \right]_{1:l,
1:l} \left[\mat{C}(\vectort^\star)\right]_{1:l, 1:l}
\left[\mats{\Sigma}_{\rvec{x}}(\vectort^\star)\right]_{1:l, 1:l} - \left[\mats{\Sigma}_G(\vectort^\star)\right]_{1:l, 1:l} \left[\mat{C}(\vectort^\star)\right]_{1:l, 1:l} \left[ \mats{\Sigma}_G(\vectort^\star) \right]_{1:l, 1:l} \right) \label{eq:last_repeated}
\end{align}
where $\mu_i(\mat{A})$ denotes the eigenvalue function of a generic matrix
$\mat{A}$. Observe that, thanks to the fact that $\pQ(\vectort^\star, \vectort^\star) =
\mats{\Sigma}_G(\vectort^\star) - \mats{\Sigma}_{\rvec{x}}(\vectort^\star)$ is a
diagonal matrix, in (\ref{eq:mui_first}) we have chosen
\begin{gather}
 \mat{U} = \mat{1}_{\dim, l} \triangleq \left(
\begin{array}{c}
 \mat{I}_l \\ \mat{0}_{\dim-l, l}
\end{array}
\right)
\end{gather}
with $\mat{I}_l$ being the $l\times l$ identity matrix and $\mat{0}_{\dim-l, l}$
being the $(\dim-l)\times l$ zero matrix. Moreover, in (\ref{eq:before_last_repeated}),
we have used the fact that $\mat{A} \succeq \mat{B}$ implies both that $\mat{C}^\T \mat{A}
\mat{C} \succeq \mat{C}^\T \mat{B} \mat{C}$ and that $\mu_i(\mat{A}) \geq
\mu_i(\mat{B})$ \cite[Cor.~7.7.4(c)]{Horn} and the lower bound on the
derivative of the matrix $\Q(\vectort)$ given in Lemma
\ref{lem:LemmaLowerBound}. We further used the definition:
 \begin{gather}
\mat{C}(\vectort^\star) = \mat{V}(\vectort^\star) \ChanDChan(\vectort^\star) \mat{V}(\vectort^\star)^\T.
\end{gather}
Observe that, since $\ChanDChan(\vectort^\star)$ is a positive semidefinite diagonal matrix (see Lemma \ref{lem:path}), we have $\mat{C}(\vectort^\star) \succeq \mat{0}$, which
further implies that $[\mat{C}(\vectort^\star)]_{ii} \geq 0$, for all $i$. Finally, the upper-left $l\times l$
sub-matrix of matrix $\mat{A}$ has been denoted by $[\mat{A}]_{1:l, 1:l}$, and the last transition in (\ref{eq:last_repeated}) is due to the fact that both $\mats{\Sigma}_{\rvec{x}}(\vectort^\star)$ and $\mats{\Sigma}_G(\vectort^\star)$ are diagonal matrices.

In order to proceed with the proof, we require the following lemma.
\begin{lem} \label{lem:amendment}
Let's consider a positive semidefinite matrix $\mat{A}$ and two diagonal
positive semidefinite matrices $\mat{D}_1$ and $\mat{D}_2$ such that $\mat{D}_1
\succeq \mat{D}_2 \succeq \mat{0}$. Then, we have that
\begin{eqnarray}
\mu_{\max} ( \mat{D}_1 \mat{A} \mat{D}_1 - \mat{D}_2 \mat{A} \mat{D}_2 ) \geq
0
\end{eqnarray}
where $\mu_{\max}$ denotes the maximum eigenvalue function.
\end{lem}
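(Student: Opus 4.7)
The plan is to show that the maximum eigenvalue is nonnegative by showing that the \emph{trace} of $\mat{D}_1 \mat{A} \mat{D}_1 - \mat{D}_2 \mat{A} \mat{D}_2$ is nonnegative; since the maximum eigenvalue of a symmetric matrix is at least its average eigenvalue, this will suffice.

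First, I would observe that $\mat{D}_1 \mat{A} \mat{D}_1 - \mat{D}_2 \mat{A} \mat{D}_2$ is symmetric (each summand is a congruence transform of a symmetric matrix), so its eigenvalues are real and sum to its trace. Applying the cyclic property of the trace gives
\begin{equation}
\Tr(\mat{D}_1 \mat{A} \mat{D}_1 - \mat{D}_2 \mat{A} \mat{D}_2) = \Tr\bigl( \mat{A}(\mat{D}_1^2 - \mat{D}_2^2) \bigr).
\end{equation}

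Next, I would argue that $\mat{D}_1^2 - \mat{D}_2^2 \succeq \mat{0}$. Because $\mat{D}_1$ and $\mat{D}_2$ are diagonal and positive semidefinite, the ordering $\mat{D}_1 \succeq \mat{D}_2 \succeq \mat{0}$ is equivalent to the entrywise inequalities $[\mat{D}_1]_{ii} \geq [\mat{D}_2]_{ii} \geq 0$, from which $[\mat{D}_1]_{ii}^2 \geq [\mat{D}_2]_{ii}^2$ and hence $\mat{D}_1^2 \succeq \mat{D}_2^2$ follow immediately. Since both $\mat{A}$ and $\mat{D}_1^2 - \mat{D}_2^2$ are positive semidefinite, the standard fact $\Tr(\mat{A}\mat{B}) = \Tr(\mat{A}^{1/2} \mat{B} \mat{A}^{1/2}) \geq 0$ for $\mat{A},\mat{B} \succeq \mat{0}$ gives
\begin{equation}
\sum_{i=1}^{\dim} \mu_i(\mat{D}_1 \mat{A} \mat{D}_1 - \mat{D}_2 \mat{A} \mat{D}_2) = \Tr\bigl( \mat{A}(\mat{D}_1^2 - \mat{D}_2^2) \bigr) \geq 0.
\end{equation}

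Finally, since the eigenvalues of a symmetric matrix are real and their sum is nonnegative, at least one of them must be nonnegative; in particular the maximum eigenvalue satisfies $\mu_{\max}(\mat{D}_1 \mat{A} \mat{D}_1 - \mat{D}_2 \mat{A} \mat{D}_2) \geq 0$, which is the desired conclusion. There is no real obstacle here: the only subtlety is recognizing that the diagonality of $\mat{D}_1, \mat{D}_2$ is what lets us pass from $\mat{D}_1 \succeq \mat{D}_2$ to $\mat{D}_1^2 \succeq \mat{D}_2^2$ (which would fail in general for non-commuting PSD matrices), and invoking the trace/cyclic-property reduction to the product of two PSD matrices.
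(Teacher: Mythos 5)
Your proof is correct, but it takes a genuinely different route from the paper's. The paper lower-bounds the top eigenvalue of the difference via a Weyl-type subadditivity inequality, $\mu_{\max}(\mat{X} - \mat{Y}) \geq \mu_{\max}(\mat{X}) - \mu_{\max}(\mat{Y})$ for symmetric $\mat{X}, \mat{Y}$, then moves each term to $\mu_{\max}(\mat{A}^{1/2}\mat{D}_i^2\mat{A}^{1/2})$ by a similarity argument and invokes Loewner monotonicity of eigenvalues to compare. You instead lower-bound $\mu_{\max}$ by the average eigenvalue and show the trace is nonnegative via the cyclic property and $\Tr(\mat{A}\mat{B}) \geq 0$ for PSD $\mat{A},\mat{B}$. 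Both arguments exploit the diagonality of $\mat{D}_1,\mat{D}_2$ in the same place — to pass from $\mat{D}_1 \succeq \mat{D}_2$ to $\mat{D}_1^2 \succeq \mat{D}_2^2$, which indeed fails for noncommuting PSD matrices — but yours is the more elementary path: it avoids Weyl's inequality and eigenvalue monotonicity entirely, and in fact establishes the slightly stronger intermediate fact that the \emph{sum} of eigenvalues is nonnegative. The paper's route, by contrast, directly compares the top eigenvalues of the two summands, which is a stronger statement about $\mu_{\max}(\mat{D}_1\mat{A}\mat{D}_1)$ versus $\mu_{\max}(\mat{D}_2\mat{A}\mat{D}_2)$; either fact suffices for the lemma as stated.
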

\begin{IEEEproof}
See Appendix \ref{app:amendment}.
\end{IEEEproof}

Now, using the fact that $\mat{C}(\vectort^\star)$ is positive semidefinite, and the
first condition in (\ref{eq:lambdaQt_critic_repeat}) that
${\lambda}_i(\vectort^\star, \vectort^\star) < 0$ for $i = 1,\ldots,l$, which
further implies that $\left[ \mats{\Sigma}_{\rvec{x}}(\vectort^\star)\right]_{1:l,
1:l} \succ \left[\mats{\Sigma}_G(\vectort^\star)\right]_{1:l, 1:l} \succeq \mat{0}$
we can use Lemma \ref{lem:amendment} to conclude that,
\begin{gather}
\mu_{\max} \left( \left[ \mats{\Sigma}_{\rvec{x}}(\vectort^\star) \right]_{1:l,
1:l} \left[\mat{C}(\vectort^\star)\right]_{1:l, 1:l}
\left[\mats{\Sigma}_{\rvec{x}}(\vectort^\star)\right]_{1:l, 1:l} -
\left[\mats{\Sigma}_G(\vectort^\star)\right]_{1:l, 1:l}
\left[\mat{C}(\vectort^\star)\right]_{1:l, 1:l} \left[ \mats{\Sigma}_G(\vectort^\star)
\right]_{1:l, 1:l} \right) \geq 0.
\end{gather}
Last result together with (\ref{eq:mui_first})-(\ref{eq:last_repeated}) implies
that there exists some $i \in [1, l]$ such that ${\lambda}_i(\vectort^\star,
\vectort^\star) < 0$ and $\left. \Jacob_{\vectort} {\lambda}_i(\vectort,
\tau=\vectort^\star) \right|_{\vectort = \vectort^\star} \geq 0$, which clearly
contradicts the conditions in (\ref{eq:lambdaQt_critic_repeat}).

Since the contradiction described above holds for any arbitrary values for $k$,
$s$, and $l$ (under the condition $l\leq s \leq k \leq n$), we have thus proved
that no nonnegative-to-negative zero crossing can occur for the eigenvalues of
$\pQ(\vectort, \tau)$ or, equivalently, we have proved that the eigenvalues
of $\pQ(\vectort, \tau)$ have at most a single negative-to-nonnegative zero
crossing of the horizontal axis.

\subsubsection{Single crossing point for the eigenvalues of $\Q( \vectort )$}
\label{proof:second_stage}
The relation between the sign of the eigenvalues of $\Q( \vectort )$
and those of $\pQ(\vectort, \tau)$ is stated in the following lemma.
\begin{lem} \label{lem:conserv_eig}
For all $\tau$ and as a function of $\vectort$, the number of positive, zero, and
negative eigenvalues of $\Q(\vectort )$ and $\pQ(\vectort,
\tau)$
coincide.
\end{lem}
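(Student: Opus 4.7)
The plan is to invoke Sylvester's law of inertia. Recall from (\ref{eq:qtilde}) that
\begin{equation*}
\pQ(\vectort, \tau) = \mat{V}(\tau)^{-\T} \Q(\vectort) \mat{V}(\tau)^{-1},
\end{equation*}
where $\mat{V}(\tau)$ is the invertible matrix obtained from the simultaneous diagonalization in (\ref{eq:gen_decomp}). Thus, for every fixed $\tau$, the matrix $\pQ(\vectort, \tau)$ is obtained from $\Q(\vectort)$ by a congruence transformation with the invertible matrix $\mat{V}(\tau)^{-1}$.

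First I would observe that $\Q(\vectort) = \EM_G(\vectort) - \MSE{x}(\vectort)$ is symmetric, since both MMSE matrices are symmetric by their definition in (\ref{eq:MMSEmatrix}) and (\ref{eq:MMSEmatrix_Gaussin}). Consequently $\pQ(\vectort, \tau)$ is also symmetric. The congruence relation $\pQ(\vectort, \tau) = \mat{S}^\T \Q(\vectort) \mat{S}$, with $\mat{S} = \mat{V}(\tau)^{-1}$ nonsingular, then falls exactly in the setting of Sylvester's law of inertia, which asserts that congruent real symmetric matrices have identical inertia, i.e.\ the same number of strictly positive, strictly negative, and zero eigenvalues.

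Applying the law therefore yields the conclusion: for every $\tau$, and as $\vectort$ varies, the counts of positive, zero, and negative eigenvalues of $\pQ(\vectort, \tau)$ coincide with those of $\Q(\vectort)$. Since Sylvester's law is a standard result, there is essentially no obstacle here; the only point worth checking is that $\mat{V}(\tau)^{-1}$ is indeed invertible (immediate from the invertibility of $\mat{V}(\tau)$ guaranteed by Lemma \ref{lem:simultaneouslyD}) and that both matrices are symmetric. This short lemma is the bridge that transfers the no-nonnegative-to-negative-crossing property just proved for the eigenvalues of $\pQ(\vectort, \tau)$ back to the eigenvalues of $\Q(\vectort)$, thereby completing the proof of Theorem~\ref{thm:TheoremSingleCrossingEig}.
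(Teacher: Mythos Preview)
Your argument is correct and matches the paper's own proof: the paper likewise observes that the result follows directly from the definition of $\pQ(\vectort,\tau)$ in (\ref{eq:qtilde}) together with Sylvester's law of inertia for congruent matrices. Your additional remarks on symmetry and invertibility of $\mat{V}(\tau)^{-1}$ just make explicit the hypotheses needed to apply Sylvester's law, so the approach is essentially identical.
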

\begin{proof}
The proof follows straightforwardly from the definition of
$\pQ(\vectort, \tau)$, given in equation (\ref{eq:qtilde}), and Sylvester's law of inertia for
congruent matrices \cite[p.~5]{bhatia:07}.
\end{proof}
In the first part of the proof we have shown that $\pQ( \vectort, \tau)$ has,
for each eigenvalue, at most, a single negative-to-nonnegative zero crossing.
From this and Lemma \ref{lem:conserv_eig}, we can conclude that the number of
negative eigenvalues of both functions cannot increase. Now, let's assume that
$\Q(\vectort )$ has an eigenvalue of multiplicity $s$ with
a nonnegative-to-negative zero crossing at $\vectort_0$, \ie,
$\mu_i(\Q(\vectort_0 ) )= 0$ and $\mu_i(\Q(\vectort )) < 0$ for $\vectort \in
(\vectort_0, \vectort_0 + \varepsilon)$, for some positive $\varepsilon$ and for
$i=1,\ldots,s$. In order to refrain from increasing the number of negative
eigenvalues, $s$ negative eigenvalues at $\vectort_0$ must become zero. However,
if we examine the number of eigenvalues at $\vectort_0 + \Delta$ for a
sufficiently small $\Delta$, the eigenvalues that were negative at $\vectort_0$
are still negative at $\vectort_0 + \Delta$, and the total number of negative
eigenvalues has increased. Thus, contradicting the possibility of a
nonnegative-to-negative zero crossing of the multiplicity $s$ eigenvalue of
$\Q(\vectort )$. This is valid for any arbitrary $\vectort_0$, thus concluding
our proof.
\end{IEEEproof}

The following corollary is a simple consequence from Theorem
\ref{thm:TheoremSingleCrossingEig}.
\begin{cor} \label{cor:CorollarySingleCrossingEig}
If for a given $\vectort'$ the function $\Q( \rvec{x}, \Cov{x_g}, \vectort' ) \succeq \mat{0}$ then
for all $\vectort \geq \vectort'$ the function $\Q( \rvec{x}, \Cov{x_g}, \vectort ) \succeq \mat{0}$.
\end{cor}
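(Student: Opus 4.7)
The plan is to derive the corollary as an immediate consequence of Theorem \ref{thm:TheoremSingleCrossingEig}, making essential use of the ``once nonnegative, stays nonnegative'' reading of the single-crossing property that is really being proved there. By Lemma \ref{lem:rellich}, invoked inside the proof of that theorem, the eigenvalues of $\Q(\rvec{x}, \Cov{x_g}, \vectort)$ can be tracked as $\dim$ continuous (in fact $C^1$) functions $\lambda_1(\vectort), \ldots, \lambda_\dim(\vectort)$. The hypothesis $\Q(\rvec{x}, \Cov{x_g}, \vectort') \succeq \mat{0}$ is then equivalent to $\lambda_i(\vectort') \geq 0$ for every $i$, and the desired conclusion amounts to $\lambda_i(\vectort) \geq 0$ for every $i$ and every $\vectort \geq \vectort'$.

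I would argue by contradiction. Suppose that for some index $i$ and some $\vectort'' > \vectort'$ we have $\lambda_i(\vectort'') < 0$. Consider $\vectort_0 = \sup\{\vectort \in [\vectort', \vectort''] : \lambda_i(\tau) \geq 0 \text{ for all } \tau \in [\vectort', \vectort]\}$. The continuity of $\lambda_i$, together with $\lambda_i(\vectort') \geq 0$ and the defining property of the supremum, forces $\lambda_i(\vectort_0) = 0$, $\lambda_i \geq 0$ on a left neighborhood of $\vectort_0$, and the existence of points arbitrarily close to $\vectort_0$ on the right at which $\lambda_i < 0$. A routine continuity refinement then isolates a point $\vectort_\star \geq \vectort_0$ at which $\lambda_i$ realizes a full nonnegative-to-negative zero crossing in the sense of Definition \ref{dfn:nonnegative2negative}, contradicting Theorem \ref{thm:TheoremSingleCrossingEig}, which guarantees that each eigenvalue has at most one negative-to-nonnegative zero crossing and, equivalently, admits no nonnegative-to-negative zero crossing.

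All the real substance of the result lives inside Theorem \ref{thm:TheoremSingleCrossingEig}; the corollary is a pure bookkeeping observation about the sign-monotonicity of each individual eigenvalue as $\vectort$ grows. I anticipate no genuine obstacle. The only mildly delicate point is ensuring that the supremum construction produces a bona fide nonnegative-to-negative crossing rather than merely a zero with oscillatory sign behavior nearby, but this is a standard continuity argument, and in any event the ``number of negative eigenvalues cannot increase'' claim established in the second stage of the proof of Theorem \ref{thm:TheoremSingleCrossingEig} already delivers exactly the monotonicity the corollary asks for.
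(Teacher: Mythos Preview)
Your proposal is correct and matches the paper's intent; the paper itself offers no proof beyond calling the corollary a simple consequence of Theorem \ref{thm:TheoremSingleCrossingEig}. One phrasing quibble: ``at most one negative-to-nonnegative crossing'' is not logically \emph{equivalent} to ``no nonnegative-to-negative crossing'' (consider a function that is nonnegative, then negative, then nonnegative---it has one of the former and one of the latter), so you should cite the latter property as what is actually established inside the proof of the theorem, exactly as you do in your final paragraph when you invoke the ``number of negative eigenvalues cannot increase'' claim from the second stage of that proof; that claim already yields the corollary in one line.
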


\subsection{The Conditioned Case} \label{ssec:generalConditioned}

The results of the previous section can be simply extended to the conditioned case. Given an extension of the lower bound on the derivative of $\Q$, the extension of all other results is trivial. Thus, we briefly give the extension to the lower bound with a full proof (given in Appendix \ref{app:LemmaLowerBoundConditioned}) and then for completeness restate the main result of this paper, for the conditioned case, without detailing the proof, which follows identically to the proof given above.



\begin{lem} \label{lem:LemmaLowerBoundConditioned}
The following lower bound holds:
\begin{eqnarray} \label{eq:LemmaLowerBoundConditioned}
\Jacob_{\vectort} \Q(\rvec{x}|\rvec{u}, \Cov{\rvec{x_g}}, \vectort) & \succeq & 2 \left( \MSE{x|u}(\vectort)
\ChanDChan(\vectort) \MSE{x|u}^\T(\vectort)- \EM_G(\vectort)
\ChanDChan(\vectort) \EM_G^\T(\vectort) \right)
\end{eqnarray}
where $\ChanDChan(\vectort)$ was defined in (\ref{eq:definitionB}), and assumed a positive semidefinite diagonal matrix for all $\vectort$ (see Lemma \ref{lem:path}).
\end{lem}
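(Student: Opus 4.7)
The plan is to reduce the conditioned bound to the unconditioned bound already proved in Lemma \ref{lem:LemmaLowerBound}, by first applying the inequality pointwise in $\rvec{u}=\rvecr{u}$ and then taking expectation over $\rvec{u}$. The only delicate step will be handling the quadratic term $\MSE{x|u}(\vectort,\rvecr{u})\,\ChanDChan(\vectort)\,\MSE{x|u}^\T(\vectort,\rvecr{u})$ after averaging, which I would control with a matrix Jensen's inequality enabled by $\ChanDChan(\vectort)\succeq\mat{0}$ (Lemma \ref{lem:path}).

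First I would fix an arbitrary realization $\rvec{u}=\rvecr{u}$ and view the conditional distribution $P_{\rvec{x}|\rvec{u}=\rvecr{u}}$ as defining an arbitrary input $\rvec{x}_{\rvecr{u}}$ in the sense of (\ref{eq:defineE_X_u_2}). Since Lemma \ref{lem:LemmaLowerBound} holds for every input law, applying it to $\rvec{x}_{\rvecr{u}}$ gives
\begin{equation*}
\Jacob_{\vectort}\Q(\rvec{x}|\rvec{u}=\rvecr{u},\Cov{\rvec{x_g}},\vectort)
\succeq 2\bigl(\MSE{x|u}(\vectort,\rvecr{u})\,\ChanDChan(\vectort)\,\MSE{x|u}^\T(\vectort,\rvecr{u})
-\EM_G(\vectort)\,\ChanDChan(\vectort)\,\EM_G^\T(\vectort)\bigr),
\end{equation*}
noting that the $\EM_G$ term is insensitive to the conditioning since it depends only on $\Cov{\rvec{x_g}}$ and $\Chan(\vectort)$. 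Taking expectation over $\rvec{u}$ on both sides, and exchanging $\Esp{\cdot}$ with $\Jacob_{\vectort}$ (justified by the same smoothness of MMSE quantities used throughout the paper), the left-hand side becomes $\Jacob_{\vectort}\Q(\rvec{x}|\rvec{u},\Cov{\rvec{x_g}},\vectort)$ by (\ref{eq:EuFromE_X_u}) and (\ref{eq:QmatrixConditioned}), while the $\EM_G$ term on the right remains deterministic.

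The key remaining step is to establish
\begin{equation*}
\Esp{\MSE{x|u}(\vectort,\rvec{u})\,\ChanDChan(\vectort)\,\MSE{x|u}^\T(\vectort,\rvec{u})}
\succeq \MSE{x|u}(\vectort)\,\ChanDChan(\vectort)\,\MSE{x|u}^\T(\vectort).
\end{equation*}
Writing $\mat{M}(\rvec{u})\triangleq\MSE{x|u}(\vectort,\rvec{u})$, whose expectation is $\MSE{x|u}(\vectort)$, this is equivalent to
\begin{equation*}
\Esp{(\mat{M}(\rvec{u})-\Esp{\mat{M}(\rvec{u})})\,\ChanDChan(\vectort)\,(\mat{M}(\rvec{u})-\Esp{\mat{M}(\rvec{u})})^\T}\succeq\mat{0},
\end{equation*}
which holds because the integrand is a positive semidefinite matrix of the form $\mat{Z}\,\ChanDChan(\vectort)\,\mat{Z}^\T$ with $\ChanDChan(\vectort)\succeq\mat{0}$ by Lemma \ref{lem:path}. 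Combining this with the averaged inequality from the previous paragraph yields exactly (\ref{eq:LemmaLowerBoundConditioned}).

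The main obstacle is the matrix Jensen step: one could be tempted to simply use $\Esp{\mat{M}\ChanDChan\mat{M}^\T}\succeq\Esp{\mat{M}}\,\ChanDChan\,\Esp{\mat{M}}^\T$ without justification, but it is essential here that $\ChanDChan(\vectort)$ is positive semidefinite, which is precisely what the parallel-channel restriction and the monotone path construction in Lemma \ref{lem:path} were set up to guarantee. Everything else is routine conditioning and interchange of expectation with the $\vectort$-derivative.
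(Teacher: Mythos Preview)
Your proposal is correct and is essentially the paper's argument repackaged: the paper recomputes the derivative of $\MSE{x|u}(\vectort)$ from (\ref{eq:derivative_matrixE}) and applies Jensen once over the joint $(\rvec{y},\rvec{u})$ to pass from $\Esp{[\CMSE{x}{y,u}]_l[\CMSE{x}{y,u}]_l^\T}$ to $[\MSE{x|u}(\vectort)]_l[\MSE{x|u}(\vectort)]_l^\T$, whereas you invoke Lemma \ref{lem:LemmaLowerBound} as a black box (which already did Jensen over $\rvec{y}$) and then do a second Jensen over $\rvec{u}$. Both routes land on the same bound for the same reason, and your explicit justification that $\ChanDChan(\vectort)\succeq\mat{0}$ is what makes the matrix Jensen step go through is exactly the point.
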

\begin{IEEEproof}
See Appendix \ref{app:LemmaLowerBoundConditioned}.
\end{IEEEproof}
Thus, the following theorem follows:
\begin{thm} \label{thm:TheoremSingleCrossingEig_u}
Each eigenvalue of $\Q( \rvec{x}|\rvec{u}, \Cov{x_g}, \vectort)$ has, \emph{at most}, a single negative-to-nonnegative zero crossing
of the horizontal axis.
\end{thm}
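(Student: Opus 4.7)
The plan is to mirror the proof of Theorem \ref{thm:TheoremSingleCrossingEig} word-for-word, with the sole structural change being that everywhere the MMSE matrix $\MSE{x}(\vectort)$ appears, we instead use the conditional MMSE matrix $\MSE{x|u}(\vectort)$ defined in (\ref{eq:EuFromE_X_u}). The key input that makes this substitution legitimate is Lemma \ref{lem:LemmaLowerBoundConditioned}, which supplies the lower bound
\[
\Jacob_{\vectort}\Q(\rvec{x}|\rvec{u},\Cov{x_g},\vectort) \succeq 2\bigl(\MSE{x|u}(\vectort)\ChanDChan(\vectort)\MSE{x|u}^\T(\vectort) - \EM_G(\vectort)\ChanDChan(\vectort)\EM_G^\T(\vectort)\bigr),
\]
which is the exact conditioned analogue of the bound used in the unconditioned proof.

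First I would perform the simultaneous diagonalization of the pair $\{\EM_G(\vectort),\MSE{x|u}(\vectort)\}$ via Lemma \ref{lem:simultaneouslyD}, obtaining an invertible $\mat{V}(\vectort)$ and diagonal $\mats{\Sigma}_G(\vectort)$, $\mats{\Sigma}_{\rvec{x}|\rvec{u}}(\vectort)$. Define the congruent transform $\pQ(\vectort,\tau)=\mat{V}(\tau)^{-\T}\Q(\rvec{x}|\rvec{u},\Cov{x_g},\vectort)\mat{V}(\tau)^{-1}$ exactly as in (\ref{eq:qtilde}). By Sylvester's law of inertia (Lemma \ref{lem:conserv_eig}), the inertia of $\pQ(\vectort,\tau)$ and of $\Q(\rvec{x}|\rvec{u},\Cov{x_g},\vectort)$ coincide for every $\tau$, so it suffices to establish the single-crossing property for the eigenvalues of $\pQ(\vectort,\tau)$ and then transfer the conclusion.

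Next I would run the contradiction argument for $\pQ(\vectort,\tau)$: assume some eigenvalue $\lambda_i(\vectort,\tau)$ has a nonnegative-to-negative zero crossing at $\vectort_0$ with multiplicity $l$. Rellich's theorem (Lemma \ref{lem:rellich}) gives continuously differentiable eigenvalue functions, and Lemma \ref{lem:ptnzc_diff} yields a small neighborhood on which both $\lambda_i<0$ and $\Jacob_{\vectort}\lambda_i<0$. Choosing $\vectort^\star$ in this neighborhood and $\tau=\vectort^\star$, Lancaster's formula (Lemma \ref{lem:lancaster}) expresses $\Jacob_{\vectort}\lambda_i(\vectort,\vectort^\star)|_{\vectort=\vectort^\star}$ as an eigenvalue of the upper-left $l\times l$ block of $\mat{V}(\vectort^\star)^{-\T}\Jacob_{\vectort}\Q\,\mat{V}(\vectort^\star)^{-1}$. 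Invoking the conditioned lower bound from Lemma \ref{lem:LemmaLowerBoundConditioned} together with monotonicity of eigenvalues under congruence (\cite[Cor.~7.7.4(c)]{Horn}) reduces the question to
\[
\mu_{\max}\bigl([\mats{\Sigma}_{\rvec{x}|\rvec{u}}]_{1:l,1:l}[\mat{C}]_{1:l,1:l}[\mats{\Sigma}_{\rvec{x}|\rvec{u}}]_{1:l,1:l} - [\mats{\Sigma}_G]_{1:l,1:l}[\mat{C}]_{1:l,1:l}[\mats{\Sigma}_G]_{1:l,1:l}\bigr) \geq 0,
\]
which follows from Lemma \ref{lem:amendment} because $[\mats{\Sigma}_{\rvec{x}|\rvec{u}}]_{1:l,1:l}\succ [\mats{\Sigma}_G]_{1:l,1:l}\succeq \mat{0}$ on the crossing block, contradicting $\Jacob_{\vectort}\lambda_i<0$.

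Finally I would transfer the result from $\pQ$ back to $\Q(\rvec{x}|\rvec{u},\Cov{x_g},\vectort)$ by the same inertia argument used in Section \ref{proof:second_stage}: if $\Q(\rvec{x}|\rvec{u},\Cov{x_g},\vectort)$ admitted a nonnegative-to-negative crossing at some $\vectort_0$, then for sufficiently small $\Delta>0$ the count of negative eigenvalues at $\vectort_0+\Delta$ would strictly exceed the count at $\vectort_0-\Delta$, violating the inertia correspondence with $\pQ$. I do not anticipate a genuine obstacle here beyond bookkeeping: the conditioning only enters through $\MSE{x|u}$ replacing $\MSE{x}$, and Lemma \ref{lem:LemmaLowerBoundConditioned} already handles the one place where the dependence on the input distribution is nontrivial (namely, the Jensen-type inequality used to produce the quadratic lower bound on $\Jacob_{\vectort}\Q$). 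Every other step is purely matrix-analytic and is insensitive to whether we condition on $\rvec{u}$ or not.
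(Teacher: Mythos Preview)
Your proposal is correct and matches the paper's own approach exactly: the paper's proof consists of the single sentence ``The proof follows the same steps as those in the proof of Theorem \ref{thm:TheoremSingleCrossingEig},'' relying on Lemma \ref{lem:LemmaLowerBoundConditioned} to replace Lemma \ref{lem:LemmaLowerBound}. Your detailed walkthrough is precisely the intended mirroring, with $\MSE{x|u}$ substituted for $\MSE{x}$ throughout and every matrix-analytic step (simultaneous diagonalization, Rellich, Lancaster, Lemma \ref{lem:amendment}, Sylvester inertia) carried over unchanged.
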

\begin{IEEEproof} The proof follows the same steps as those in the proof of Theorem \ref{thm:TheoremSingleCrossingEig}.
\end{IEEEproof}


\subsection{Properties of the Mutual Information}
\label{ssec:MutualInformationEig}

So far we have seen the ``single crossing point'' property of the matrix $\Q(\rvec{x}, \Cov{x_g},
\vectort)$, or more precisely, of its eigenvalues. As seen, this property also extends naturally to the conditioned case. In this section our goal is to relate this result to the mutual information between the input and the output of a parallel Gaussian channel. As expected, the advantage of this result is in the comparison between the mutual information assuming that the input to the channel has an arbitrary distribution and the mutual information assuming that it has a Gaussian distribution with an arbitrary covariance, $\Cov{x_g}$. Our goal is to make use of this result through the I-MMSE relationship, as given in equations (\ref{eq:lineIntegral})-(\ref{eq:lineIntegral2}) and (\ref{eq:lineIntegral_cond})-(\ref{eq:lineIntegral_cond2}). The results given in this section can be viewed as supporting theorem/lemmas, that make our ``single crossing point'' property applicable through the use of the I-MMSE relationship.

For clarity we will write the results in this section only for, the more general, conditioned case, from which one can easily derive the respective unconditioned theorems.

According to equation (\ref{eq:lineIntegral_cond2}) the difference between the mutual information assuming that the input to the channel has an arbitrary distribution and the mutual information assuming that it has a Gaussian distribution with an arbitrary covariance, $\Cov{x_g}$, is
\begin{align}
\Igen{\rvec{x_g}}{\rvec{y}(\vectort)} - \Icond{\rvec{x}}{\rvec{y}(\vectort)}{\rvec{u}} &= \int_{\tau=0}^{\vectort} \Tr \left(
\ChanDChan(\tau) ( \EM_G(\tau) - \MSE{\rvec{x}|\rvec{u}}(\tau) )
\right) \d \tau \nonumber \\
&= \int_{\tau=0}^{\vectort} \Tr \left(
\ChanDChan(\tau) \Q( \rvec{x}|\rvec{u}, \Cov{x_g}, \tau)
\right) \d \tau.
\end{align}
Thus, we are interested in the properties of
\begin{align}
\Tr \left(
\ChanDChan(\vectort) \Q( \rvec{x}|\rvec{u}, \Cov{x_g}, \vectort)
\right) = \sum_{i=1}^{\dim} \lambda_i( \ChanDChan(\vectort) \Q( \rvec{x}|\rvec{u}, \Cov{x_g}, \vectort) )
\end{align}
where we have used the fact that the trace of a matrix $\mat{A}$ is the sum of its eigenvalues \cite[Th.~1.2.12]{Horn}. The following theorem extends the ``single crossing point'' property of the eigenvalues of $\Q( \rvec{x}|\rvec{u}, \Cov{x_g}, \vectort)$ to the eigenvalues of $\ChanDChan(\vectort) \Q( \rvec{x}|\rvec{u}, \Cov{x_g}, \vectort)$.
\begin{thm} \label{thm:BQ}
Each eigenvalue of $\ChanDChan(\vectort) \Q( \rvec{x}|\rvec{u}, \Cov{x_g}, \vectort)$ has, \emph{at most}, a single negative-to-nonnegative zero crossing of the
horizontal axis. Moreover, the eigenvalues of $\ChanDChan(\vectort) \Q( \rvec{x}|\rvec{u}, \Cov{x_g}, \vectort)$  have the following property:
\begin{eqnarray} \label{eq:eigenvaluesBQ}
\opn{sign} \left\{ \lambda_i ( \ChanDChan(\vectort) \Q( \rvec{x}|\rvec{u}, \Cov{x_g}, \vectort))  \right\} \in \left\{0, \opn{sign} \left\{ \lambda_i ( \Q( \rvec{x}|\rvec{u}, \Cov{x_g}, \vectort) ) \right\}  \right\} \textrm{.}
\end{eqnarray}
\end{thm}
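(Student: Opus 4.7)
The plan is to reduce the eigenvalue analysis of the generally non-symmetric product $\ChanDChan(\vectort)\Q(\vectort)$ to that of a congruent symmetric matrix. Since $\ChanDChan(\vectort)$ is positive semidefinite and diagonal by Lemma~\ref{lem:path}, it admits a symmetric PSD square root $\ChanDChan^{1/2}(\vectort)$. Writing $\ChanDChan\Q = \ChanDChan^{1/2}(\ChanDChan^{1/2}\Q)$ and invoking the classical identity that $\mat{A}\mat{B}$ and $\mat{B}\mat{A}$ share the same non-zero eigenvalues, the non-zero spectrum of $\ChanDChan\Q$ coincides with the spectrum of the symmetric matrix $\ChanDChan^{1/2}\Q\ChanDChan^{1/2}$. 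Consequently all eigenvalues of $\ChanDChan\Q$ are real, and Rellich's theorem (Lemma~\ref{lem:rellich}) supplies continuous and piecewise-differentiable representatives $\tilde{\lambda}_i(\vectort)$.

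To establish the sign-inclusion property (\ref{eq:eigenvaluesBQ}), I would invoke the extension of Sylvester's law of inertia to possibly singular congruences: for any symmetric $\Q$ and any matrix $\mat{S}$, the numbers of positive and negative eigenvalues of $\mat{S}\Q\mat{S}^\T$ satisfy $\tilde{n}_+\le n_+(\Q)$ and $\tilde{n}_-\le n_-(\Q)$, with equality when $\mat{S}$ is invertible. Applying this with $\mat{S}=\ChanDChan^{1/2}$ pointwise in $\vectort$ and ordering both sets of eigenvalues decreasingly yields the desired element-wise inclusion: the positive eigenvalues occupy the top indices and the negative ones the bottom indices, and the inertia bound forces every strictly positive (respectively negative) entry of $\tilde{\lambda}_i$ to coincide in index with a strictly positive (negative) entry of $\lambda_i(\Q)$.

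The at-most-one negative-to-nonnegative crossing property then follows by combining this sign inclusion with Theorem~\ref{thm:TheoremSingleCrossingEig_u}. By that theorem each $\lambda_i(\Q)$ is negative on some initial interval $[0,\vectort^{\star}_i)$ and nonnegative on $[\vectort^{\star}_i,\infty)$. The sign inclusion immediately forces $\tilde{\lambda}_i\ge 0$ on $[\vectort^{\star}_i,\infty)$, ruling out neg-to-nonneg crossings in this regime. In $[0,\vectort^{\star}_i)$ one has $\tilde{\lambda}_i\le 0$, so any such crossing here is a transition from strictly negative to zero. To exclude multiple such transitions I would track the inertia of $\ChanDChan^{1/2}\Q\ChanDChan^{1/2}$ continuously in $\vectort$, arguing as in the second half of the proof of Theorem~\ref{thm:TheoremSingleCrossingEig}: the count of strictly negative eigenvalues of $\ChanDChan\Q$ can never exceed that of $\Q$, so every such crossing must be matched by a sign change in some $\lambda_j(\Q)$, of which only a single neg-to-nonneg one is available.

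The main obstacle is precisely the possible singularity of $\ChanDChan(\vectort)$ along the path, where Sylvester's inertia law gives only weak inequalities and an eigenvalue of $\ChanDChan\Q$ can transiently vanish when the rank of $\ChanDChan$ drops. To handle this rigorously I would likely use a perturbation argument, replacing $\ChanDChan(\vectort)$ by $\ChanDChan(\vectort)+\varepsilon\mat{I}_\dim$ (for which inertia is exactly preserved and the sign-inclusion argument delivers the single-crossing property cleanly), and then taking $\varepsilon\to 0$ while appealing to the continuity of eigenvalues in the matrix entries to transfer the property to the limit. Coupling this with Lancaster's formula for derivatives of repeated eigenvalues, as invoked in the proof of Theorem~\ref{thm:TheoremSingleCrossingEig}, handles the points of multiplicity and completes the argument.
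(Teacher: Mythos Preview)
Your proposal is essentially the paper's approach: pass from $\ChanDChan\Q$ to the symmetric congruence $\ChanDChan^{1/2}\Q\,\ChanDChan^{1/2}$, invoke Sylvester's inertia law, and then import the single-crossing behaviour from Theorem~\ref{thm:TheoremSingleCrossingEig_u} via the inertia-counting argument used in the second part of the proof of Theorem~\ref{thm:TheoremSingleCrossingEig}. That core is correct.

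Where you diverge is in the treatment of singular $\ChanDChan(\vectort)$. The paper does not need a perturbation $\ChanDChan+\varepsilon\mat{I}_\dim$; instead it exploits the \emph{diagonal} structure of $\ChanDChan(\vectort)$ directly. If $[\ChanDChan(\vectort)]_{ii}=0$, the entire $i$th row of $\ChanDChan(\vectort)\Q$ vanishes, so one eigenvalue is identically zero and the remaining spectrum is that of the principal submatrix obtained by deleting row and column $i$; because $\ChanDChan$ is diagonal, this submatrix is again $\ChanDChan'\Q'$ with $\ChanDChan'$ diagonal. Iterating peels off all zero diagonal entries and reduces to the non-singular case. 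This is simpler and avoids the genuine difficulty in your perturbation route: passing a ``single crossing'' property through an $\varepsilon\to 0$ limit is delicate, since crossing locations can drift and the limit of functions each having one crossing need not have only one. Your generalized-inertia inequality $\tilde n_\pm\le n_\pm(\Q)$ is already enough to run the counting argument without perturbation, so you can (and should) drop the $\varepsilon$-step entirely.
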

\begin{IEEEproof}
For a non-singular $\ChanDChan(\vectort)$ and due to similarity \cite[Cor.~1.3.4]{Horn} we can write the following,
\begin{eqnarray} \label{eq:lambda_i}
\lambda_i ( \ChanDChan(\vectort) \Q( \rvec{x}|\rvec{u}, \Cov{x_g}, \vectort) )  = \lambda_i( \ChanDChan^{\frac{1}{2}}(\vectort) \Q( \rvec{x}|\rvec{u}, \Cov{x_g}, \vectort) \ChanDChan^{\frac{1}{2}}(\vectort) ).
\end{eqnarray}
Recalling that $\ChanDChan(\vectort)$ is a positive semidefinite diagonal matrix, we have an eigenvalue of a congruent transformation. Thus, the proof follows similarly to the second part of the proof of Theorem \ref{thm:TheoremSingleCrossingEig} (given in Section \ref{proof:second_stage}), concluding the preservation of the signs of the eigenvalues of $\Q( \rvec{x}|\rvec{u}, \Cov{x_g}, \vectort)$ in $\ChanDChan(\vectort) \Q( \rvec{x}|\rvec{u}, \Cov{x_g}, \vectort)$ and, as a result, concluding that all eigenvalues have, \emph{at most} a single, negative-to-nonnegative zero crossing of the horizontal axis.

If $\ChanDChan(\vectort)$ is singular, we can assume without loss of generality that the $i^{th}$ diagonal element is zero. Due to that, the $i^{th}$ row of $\ChanDChan(\vectort) \Q( \rvec{x}|\rvec{u}, \Cov{x_g}, \vectort)$ is all zeros, that is, one of the eigenvalues of $\ChanDChan(\vectort) \Q( \rvec{x}|\rvec{u}, \Cov{x_g}, \vectort)$ is zero (and its sign is also zero). The rest of the eigenvalues can be calculated from the reduced problem, the matrix $\ChanDChan(\vectort) \Q( \rvec{x}|\rvec{u}, \Cov{x_g}, \vectort)$ without the $i^{th}$ row and column. Recalling that $\ChanDChan(\vectort)$ is a diagonal matrix, this is simply the product of $\ChanDChan(\vectort)$ and $\Q( \rvec{x}|\rvec{u}, \Cov{x_g}, \vectort)$ both without the $i^{th}$ row and column. This procedure can be repeated as long as the reduced $\ChanDChan(\vectort)$ matrix is singular. When the reduced matrix is non-singular, we again follow the proof of Theorem \ref{thm:TheoremSingleCrossingEig}.

Thus, we have shown that the eigenvalues preserve the sign of the eigenvalues of $\Q( \rvec{x}|\rvec{u}, \Cov{x_g}, \vectort)$ with the additional possibility of falling to zero when $\ChanDChan(\vectort)$ becomes singular.
\end{IEEEproof}

The next two lemmas provide the link between the above results, regarding the behavior of the eigenvalues of the matrix $\Q( \rvec{x}|\rvec{u}, \Cov{x_g}, \vectort)$ and the matrix $\ChanDChan(\vectort) \Q( \rvec{x}|\rvec{u}, \Cov{x_g}, \vectort)$, and the mutual information. Thus, they facilitate the usage of these results on information theory problems, as will be shown in the sequel. More particularly, so far we discussed the behavior of each and every eigenvalue of the matrix $\Q( \rvec{x}|\rvec{u}, \Cov{x_g}, \vectort)$ and the matrix $\ChanDChan(\vectort) \Q( \rvec{x}|\rvec{u}, \Cov{x_g}, \vectort)$, which holds true for any proper choice of $\Cov{x_g}$ with no regards to the random vector $\rvec{x}$. The next two lemmas identify the existence of specific Gaussian inputs which have unique properties with respect to the given random vector $\rvec{x}$.


\begin{lem} \label{lem:GaussianExistenceConditioned}
Assume $\rvec{x} \in \R^{\dim}$ is an arbitrary distributed random vector.
For any $\vectortEq \in [0, \infty)$, there exists a Gaussian input covariance matrix $\Cov{x_g}$
such that the following hold
\begin{enumerate}
\item $\Cov{x_g} \preceq \Cov{x}$ \label{req:covariance}
\item $\Icond{\rvec{x}}{\rvec{y}(\vectortEq)}{\rvec{u}} = \Igen{\rvec{x_g}}{\rvec{y_g}(\vectortEq)}$ \label{req:MI}
\item $\Q( \rvec{x}|\rvec{u}, \Cov{x_g}, \vectortEq) \succeq \mat{0}$ \label{req:Q}
\end{enumerate}
\end{lem}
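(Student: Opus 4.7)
The plan is to construct $\Cov{x_g}$ via a continuity and intermediate-value argument on the positive semidefinite cone. I would anchor at the choice $\Cov{x_g} = \Cov{x}$: at this anchor, conditions 1 and 3 hold trivially (condition 3 because a Gaussian input maximizes the MMSE matrix for fixed covariance while conditioning only reduces MMSE, yielding $\EM_G(\Cov{x}, \vectortEq) \succeq \MSE{x}(\vectortEq) \succeq \MSE{x|u}(\vectortEq)$). Moreover, by the Markov chain $\rvec{u}-\rvec{x}-\rvec{y}$ and the fact that the Gaussian maximizes mutual information at fixed covariance, $\Icond{\rvec{x}}{\rvec{y}(\vectortEq)}{\rvec{u}} \leq \Igen{\rvec{x}}{\rvec{y}(\vectortEq)} \leq \Igen{\rvec{x_g}}{\rvec{y_g}(\vectortEq)}\big|_{\Cov{x_g} = \Cov{x}}$, so the mismatch $F(\Cov{x_g}) := \Igen{\rvec{x_g}}{\rvec{y_g}(\vectortEq)} - \Icond{\rvec{x}}{\rvec{y}(\vectortEq)}{\rvec{u}}$ satisfies $F(\Cov{x}) \geq 0$.

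Next, I would restrict attention to the convex compact interval $\mathcal{K} := \{\Cov{x_g} : \Cov{x_g}^{\min} \preceq \Cov{x_g} \preceq \Cov{x}\}$, where $\Cov{x_g}^{\min} := (\MSE{x|u}(\vectortEq)^{-1} - \Chan(\vectortEq)^\T \Chan(\vectortEq))^{-1}$ is the unique minimal covariance satisfying $\Q(\rvec{x}|\rvec{u}, \Cov{x_g}^{\min}, \vectortEq) = \mat{0}$. The inverse is well-defined and $\Cov{x_g}^{\min} \preceq \Cov{x}$, both consequences of $\MSE{x|u}(\vectortEq) \preceq \EM_G(\Cov{x}, \vectortEq)$. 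By monotonicity of $\EM_G$ in its first argument, condition 3 is satisfied everywhere on $\mathcal{K}$, and $F$ is continuous and monotone there. Consequently, if $F(\Cov{x_g}^{\min}) \leq 0$, then combined with $F(\Cov{x}) \geq 0$ the intermediate value theorem on any continuous path in $\mathcal{K}$ connecting the endpoints delivers a $\Cov{x_g}$ satisfying all three conditions.

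The main obstacle is the inequality $F(\Cov{x_g}^{\min}) \leq 0$. I would tackle this via the I-MMSE integral representation
\begin{equation*}
F(\Cov{x_g}^{\min}) = \int_0^{\vectortEq} \Tr\bigl(\ChanDChan(\tau)\, \Q(\rvec{x}|\rvec{u}, \Cov{x_g}^{\min}, \tau)\bigr)\, d\tau,
\end{equation*}
together with two pieces of boundary data: $\Q(\Cov{x_g}^{\min}, \vectortEq) = \mat{0}$ by construction, and $\Q(\Cov{x_g}^{\min}, 0) = \Cov{x_g}^{\min} - \CovMat_{\rvec{x}|\rvec{u}} \preceq \mat{0}$, the latter because $\MSE{x|u}(\vectortEq) \preceq \EM_G(\CovMat_{\rvec{x}|\rvec{u}}, \vectortEq)$ (obtained by combining the Gaussian MMSE-maximality with Jensen's inequality applied to the concavity of $\EM_G$ in its covariance argument) implies $\Cov{x_g}^{\min} \preceq \CovMat_{\rvec{x}|\rvec{u}}$. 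The single-crossing-point property of each eigenvalue of $\ChanDChan(\vectort) \Q(\vectort)$ provided by Theorem \ref{thm:BQ}, together with its sign-preservation link to the eigenvalues of $\Q(\vectort)$ and the two boundary facts above, is intended to force each such eigenvalue to be nonpositive on $[0, \vectortEq]$, whence the integrand is nonpositive there and $F(\Cov{x_g}^{\min}) \leq 0$ follows. The delicate technical point is ruling out the scenario where an individual eigenvalue rises to strictly positive values in the interior of $(0, \vectortEq)$ before returning to zero at $\vectortEq$; excluding this case is the hard part of the proof and should leverage the full matrix-valued single-crossing machinery of Theorem \ref{thm:TheoremSingleCrossingEig_u} together with Corollary \ref{cor:CorollarySingleCrossingEig}.
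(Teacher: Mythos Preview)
Your overall architecture coincides with the paper's: anchor at $\Cov{x_g}=\Cov{x}$ where $F\geq 0$, anchor at $\Cov{x_g}^{\min}$ (the unique covariance with $\EM_G(\Cov{x_g}^{\min},\vectortEq)=\MSE{x|u}(\vectortEq)$) where one needs $F\leq 0$, and then run an intermediate-value argument along a path joining the two (the paper parametrizes this path as $\Q=(1-\nu)\mat{C}$ with $\mat{C}=\MSElin{x}(\vectortEq)-\MSE{x|u}(\vectortEq)$, but that is just a convenient choice). The part you flag as ``the hard part'' is genuinely the crux, and your direct plan for it does not close.

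Concretely: with $\Q(\vectortEq)=\mat{0}$ exactly, the single-crossing theorems do \emph{not} preclude an eigenvalue of $\Q(\cdot)$ being strictly positive on an interior interval and merely touching zero at $\vectortEq$ from above; ``no nonnegative-to-negative crossing'' is compatible with that trajectory. So you cannot infer that $\Tr(\ChanDChan(\tau)\Q(\tau))\leq 0$ on $[0,\vectortEq]$ from the boundary data $\Q(0)\preceq\mat{0}$ and $\Q(\vectortEq)=\mat{0}$. The paper sidesteps this with a perturbation/contradiction argument (its Lemma establishing $F(\Cov{x_g}^{\min})\leq 0$): take any $\Cov{x_g}^{\star}\prec\Cov{x_g}^{\min}$; then $\Q^{\star}(\vectortEq)\prec\mat{0}$ \emph{strictly}, and now the no-nonnegative-to-negative-crossing property forces every eigenvalue of $\Q^{\star}$ to be strictly negative on all of $[0,\vectortEq]$, hence (via Theorem~\ref{thm:BQ}) $\Tr(\ChanDChan(\tau)\Q^{\star}(\tau))\leq 0$ throughout and $F(\Cov{x_g}^{\star})<0$. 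If $F(\Cov{x_g}^{\min})>0$, continuity of $\Cov{x_g}\mapsto\tfrac12\log|\Identity+\Chan(\vectortEq)\Cov{x_g}\Chan(\vectortEq)^{\T}|$ lets you choose $\Cov{x_g}^{\star}$ close enough that $F(\Cov{x_g}^{\star})\geq 0$, a contradiction. That strict-perturbation step is the missing idea in your proposal; once you insert it, the rest of your outline goes through essentially as the paper's proof.
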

\begin{IEEEproof} See Appendix \ref{app:GaussianExistenceConditionedProof}. \end{IEEEproof}

Note that the above claim can be extended to a general non-singular
$\Chan(\vectortEq)$, that is, \emph{not necessarily diagonal}, by defining
$\widetilde{\rvec{x}} \equiv \Chan(\vectortEq) \rvec{x}$.
Due to the non-singularity of $\Chan(\vectortEq)$, the mutual information is unchanged, \ie, $\Icond{\widetilde{\rvec{x}}}{\rvec{y}(\vectortEq)}{\rvec{u}} = \Icond{\rvec{x}}{\rvec{y}(\vectortEq)}{\rvec{u}}$.
Requirements \ref{req:covariance} and \ref{req:Q} are preserved under any congruent
transformation, specifically under the transformation $\Chan^{-1}(\vectortEq)$.

The next lemma is an extension of Lemma \ref{lem:GaussianExistenceConditioned} that will prove useful in the sequel.
\begin{lem} \label{lem:GaussianExistenceExtensionConditioned}
Assume that for a given input distribution on the pair $(\rvec{u}, \rvec{x})$ there exists a Gaussian random vector, $\rvec{x_g}^{\ss{ub}}$, with covariance $\Cov{x_g}^{\ss{ub}}$ such that for some $\vectortEq \in [0,\infty)$ we have that,
\begin{enumerate}
\item $\Icond{\rvec{x}}{\rvec{y}(\vectortEq)}{\rvec{u}} \leq \Igen{\rvec{x_g}^{\ss{ub}}}{\rvec{y_g}^{\ss{ub}}(\vectortEq)}$
\item $\Q( \rvec{x}|\rvec{u}, \Cov{x_g}^{\ss{ub}}, \vectortEq) \succeq \mat{0}$
\end{enumerate}
Thus, there exists a Gaussian random vector, $\rvec{x_g}$, with covariance $\Cov{x_g}$
such that the following holds:
\begin{enumerate}
\item $\Cov{x_g} \preceq \Cov{x_g}^{\ss{ub}}$
\item $\Icond{\rvec{x}}{\rvec{y}(\vectortEq)}{\rvec{u}} = \Igen{\rvec{x_g}}{\rvec{y_g}(\vectortEq)}$
\item $\Q( \rvec{x}|\rvec{u}, \Cov{x_g}, \vectortEq) \succeq \mat{0}$
\end{enumerate}
\end{lem}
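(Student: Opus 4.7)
The plan is to adapt the construction used in Lemma~\ref{lem:GaussianExistenceConditioned}, with $\Cov{x_g}^{\ss{ub}}$ playing the role previously taken by $\Cov{x}$. I would parametrize a one-parameter family of candidate Gaussian covariances by
\[
\Cov{x_g}(\alpha) \triangleq \alpha\,\Cov{x_g}^{\ss{ub}}, \qquad \alpha \in [0,1].
\]
Since $\Cov{x_g}(\alpha) \preceq \Cov{x_g}^{\ss{ub}}$ for every $\alpha \in [0,1]$, property~1 is automatic for whichever value of $\alpha$ is ultimately selected. The task then reduces to choosing $\alpha$ so as to satisfy properties~2 and~3 simultaneously.

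To obtain property~2, I would invoke the intermediate value theorem applied to the scalar function
\[
g(\alpha) \triangleq \Igen{\rvec{x_g}(\alpha)}{\rvec{y_g}(\alpha,\vectortEq)} = \tfrac{1}{2}\log\bigl|\Identity + \alpha\,\Chan(\vectortEq)\,\Cov{x_g}^{\ss{ub}}\,\Chan(\vectortEq)^{\T}\bigr|,
\]
which is continuous and non-decreasing in $\alpha$, with $g(0)=0 \leq \Icond{\rvec{x}}{\rvec{y}(\vectortEq)}{\rvec{u}}$ and, by hypothesis~1, $g(1) \geq \Icond{\rvec{x}}{\rvec{y}(\vectortEq)}{\rvec{u}}$. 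Hence some $\alpha^{\star} \in [0,1]$ achieves $g(\alpha^{\star}) = \Icond{\rvec{x}}{\rvec{y}(\vectortEq)}{\rvec{u}}$, and setting $\Cov{x_g} \triangleq \alpha^{\star}\,\Cov{x_g}^{\ss{ub}}$ secures property~2.

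The main obstacle is property~3, namely $\Q(\rvec{x}|\rvec{u}, \alpha^{\star}\Cov{x_g}^{\ss{ub}}, \vectortEq) \succeq \mat{0}$: as $\alpha$ decreases from $1$, $\EM_G(\alpha\,\Cov{x_g}^{\ss{ub}}, \vectortEq)$ contracts in the PSD order, so $\Q(\cdot,\vectortEq)$ could in principle exit the PSD cone before $\alpha^{\star}$ is reached. My plan here is to combine the I-MMSE identity \eqref{eq:lineIntegral_cond2}, which rewrites the MI equality as
\[
\int_0^{\vectortEq} \Tr\!\left( \ChanDChan(\tau)\,\Q(\rvec{x}|\rvec{u}, \alpha^{\star}\Cov{x_g}^{\ss{ub}}, \tau) \right) d\tau = 0,
\]
with the single-crossing properties of Theorem~\ref{thm:TheoremSingleCrossingEig_u} and Theorem~\ref{thm:BQ}, and with the fact that at the scale $\alpha = 1$ hypothesis~2 together with Corollary~\ref{cor:CorollarySingleCrossingEig} yields $\Q(\rvec{x}|\rvec{u}, \Cov{x_g}^{\ss{ub}}, \vectort) \succeq \mat{0}$ for all $\vectort \geq \vectortEq$. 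Arguing by contradiction, if $\Q(\alpha^{\star}\Cov{x_g}^{\ss{ub}}, \vectortEq)$ carried a negative eigenvalue, the eigenvalue single-crossing in $\vectort$ would preclude any subsequent nonnegative-to-negative crossing, forcing a sign structure on the integrand that, once confronted with the nonnegative integrand at $\alpha = 1$ and with the monotonicity of $\Q$ in $\alpha$ (from the monotonicity of $\EM_G$ in the covariance), becomes incompatible with the vanishing integral displayed above. Making this eigenvalue bookkeeping rigorous, in particular handling eigenvalue multiplicities at the transition $\alpha = \alpha^{\star}$ by means of the simultaneous-decomposition device of Lemma~\ref{lem:simultaneouslyD} used in the proof of Theorem~\ref{thm:TheoremSingleCrossingEig}, would constitute the bulk of the technical work.
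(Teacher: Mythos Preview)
Your parametrization $\Cov{x_g}(\alpha)=\alpha\,\Cov{x_g}^{\ss{ub}}$ is not the one the paper uses, and the difference is exactly what turns property~3 from ``the bulk of the technical work'' into a triviality. Following the proof of Lemma~\ref{lem:GaussianExistenceConditioned} (after reducing w.l.o.g.\ to $\Chan(\vectortEq)=\Identity$), the paper sets $\mat{C}\triangleq \EM_{G}^{\ss{ub}}(\vectortEq)-\MSE{x|u}(\vectortEq)=\Q(\rvec{x}|\rvec{u},\Cov{x_g}^{\ss{ub}},\vectortEq)\succeq\mat{0}$ (this is hypothesis~2) and interpolates linearly in the \emph{inverse} space,
\[
(\Cov{x_g}+\Identity)^{-1}=(\Cov{x_g}^{\ss{ub}}+\Identity)^{-1}+\nu\,\mat{C},\qquad \nu\in[0,1].
\]
Because $\EM_G(\Cov{x_g},\vectortEq)=\Identity-(\Cov{x_g}+\Identity)^{-1}$, this yields $\Q(\rvec{x}|\rvec{u},\Cov{x_g},\vectortEq)=(1-\nu)\,\mat{C}$, so property~3 holds \emph{by construction} for every $\nu\in[0,1]$, and property~1 ($\Cov{x_g}\preceq\Cov{x_g}^{\ss{ub}}$) is equivalent to $\nu\geq 0$. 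The intermediate-value step for property~2 then proceeds exactly as in Lemma~\ref{lem:GaussianExistenceConditioned}: at $\nu=0$ one has $\Cov{x_g}=\Cov{x_g}^{\ss{ub}}$ and hypothesis~1 gives the upper endpoint, while at $\nu=1$ one has $\Q(\vectortEq)=\mat{0}$ and Lemma~\ref{lem:existenceOfR_gMin} supplies the lower endpoint.

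Your linear scaling does not enjoy this structure, and the contradiction sketch you give for property~3 has a real gap. From $\int_0^{\vectortEq}\Tr(\ChanDChan(\tau)\,\Q(\tau))\,d\tau=0$ together with the per-eigenvalue single-crossing in $\vectort$ you cannot conclude $\Q(\vectortEq)\succeq\mat{0}$: distinct eigenvalues of $\Q(\cdot)$ may cross at different times, so one eigenvalue can remain strictly negative on all of $[0,\vectortEq]$ while others have already crossed and become positive, and the sign-preservation of Theorem~\ref{thm:BQ} then allows the weighted trace to integrate to zero by cancellation. Neither the monotonicity of $\Q$ in $\alpha$ nor the $\alpha=1$ information precludes this, since those facts control $\Q$ only in the PSD order, not eigenvalue by eigenvalue. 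The fix is not more eigenvalue bookkeeping; it is to change the one-parameter family so that $\Q(\cdot,\vectortEq)$ moves along the ray $\{(1-\nu)\mat{C}:\nu\in[0,1]\}$ inside $\PSD{\dim}$, as above.
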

\begin{IEEEproof}
The proof follows the proof of Lemma \ref{lem:GaussianExistenceConditioned}, where instead of using $\MSElin{x}(\vectortEq)$ (\ref{eq:MMSE_L}) as a trivial upper bound we use:
\begin{eqnarray} \label{eq:GaussianExistenceExtension_proof}
\EM_{G}^{\ss{ub}}(\vectortEq) = \Identity - (\Cov{x_g}^{\ss{ub}} + \Identity)^{-1}
\end{eqnarray}
and the assumptions stated above.
\end{IEEEproof}

\subsection{Connections to Fisher Information} \label{ssec:fisher}

In addition to the MMSE matrix, another important quantity in estimation theory is the Fisher information matrix \cite{MMSE}. Its connection to information theory has been established in the late 1950's and has been attributed to de Bruijn \cite{STAM}. The de Bruijn identity relates the derivative of the differential entropy to the Fisher information matrix defined as\footnote{For any differentiable function $f: \R^{\dim} \to \R$, its gradient at any $\rvecr{y}$ is a column vector
$\Gradient f(\rvecr{y}) = \left[ \Jacob_{y_1} f(\rvecr{y} ), \ldots,\Jacob_{y_{\dim}} f(\rvecr{y} )   \right]^\T$.}:
\begin{align}
\Fisher( \rvec{y} ) = \Esp[1]{ \left[ \Gradient \log \pdf{\rvec{y}}{\rvec{y}} \right] \left[ \Gradient \log \pdf{\rvec{y}}{\rvec{y}} \right]^\T }
\end{align}
where the expectation is over $\rvec{y}$. Note that this is a special form of the Fisher Information matrix (with respect to a translation parameter) which does not involve an explicit parameter as in its most general definition \cite{MMSE}. In \cite{IMMSE} the authors have shown that the de Bruijn identity is equivalent to the I-MMSE relationship. Using this connection, the de Bruijn identity has been extended to a multivariate version in \cite[Th.~4]{Palomar}. For our purposes we will use the following notation:
\begin{align}
\Fisher_{\rvec{x}}( \Chan ) = \Fisher( \Chan \rvec{x} + \rvec{n} )
\end{align}
when we have some arbitrary input distribution on the random vector $\rvec{x}$. For the case of a Gaussian distribution on $\rvec{x}$ with covariance matrix $\Cov{x_g}$ we will write $\Fisher_{G}( \Cov{x_g}, \Chan ) $. We further note that, as in the case of the MMSE matrix, whenever the channel coefficients depend on other parameters, $\Chan = \Chan( \phi)$, we will write $\Fisher_{\rvec{x}}( \phi)$. We can now extend the idea of the the matrix $\Q$ to the Fisher Information, using the following definition:
\begin{align}
\QFisher ( \rvec{x}, \Cov{x_g}, \phi) = \Fisher_{\rvec{x}} ( \phi ) - \Fisher_G ( \Cov{x_g}, \phi) .
\end{align}
As in the case of the matrix $\Q$, the matrix $\QFisher$ has some distinct properties. Using the relationship between the two matrices we can derive these properties directly from the results of the previous sections.
We first require the following lemma, given by Palomar and Verd$\acute{\textrm{u}}$ in \cite{Palomar}.
\begin{lem}{\cite[App.~E]{Palomar}} \label{lem:FisherMMSE}
Assuming the Gaussian additive noise channel (\ref{eq:generalModel}), the following connection between the Fisher Information matrix and the MMSE matrix holds:
\begin{gather}
 \J{y} = \mat{I}_\dim - \Chan \MSE{x} \Chan^\T
\end{gather}
\end{lem}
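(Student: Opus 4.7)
The plan is to compute the score function of the output density explicitly and then recognize the Fisher matrix as a covariance that decomposes cleanly via the law of total covariance. I would begin from the observation that, since $\rvec{y} = \Chan \rvec{x} + \rvec{n}$ with $\rvec{n} \sim \N{\vecs{0}}{\Identity}$ independent of $\rvec{x}$, the density of $\rvec{y}$ is the convolution of the density of $\Chan \rvec{x}$ with the standard Gaussian kernel. Differentiating this convolution under the expectation yields the well-known identity
\[
\Gradient \log \pdf{y}{\rvecr{y}} \;=\; \CEsp{\Chan \rvec{x}}{\rvec{y} = \rvecr{y}} - \rvecr{y} \;=\; -\CEsp{\rvec{n}}{\rvec{y} = \rvecr{y}},
\]
where the last equality uses $\rvecr{y} - \Chan \rvec{x} = \rvec{n}$ under the conditioning. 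The one technical point here, and arguably the only subtle step in the whole proof, is justifying the interchange of gradient and expectation; this is standard because Gaussian convolution produces a $C^{\infty}$ density with an integrable gradient, so dominated convergence applies.

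Substituting this score expression into the definition of the Fisher matrix gives
\[
\J{y} \;=\; \Esp[1]{\CEsp{\rvec{n}}{\rvec{y}} \CEsp{\rvec{n}}{\rvec{y}}^\T}.
\]
Next, I would apply the law of total covariance to $\rvec{n}$. Since $\Esp{\rvec{n}} = \vecs{0}$ and $\Esp{\rvec{n} \rvec{n}^\T} = \Identity$,
\[
\Identity \;=\; \Esp[1]{\CEsp{\rvec{n}}{\rvec{y}} \CEsp{\rvec{n}}{\rvec{y}}^\T} + \Esp[1]{(\rvec{n} - \CEsp{\rvec{n}}{\rvec{y}})(\rvec{n} - \CEsp{\rvec{n}}{\rvec{y}})^\T}.
\]
The second term is the residual covariance, and using $\rvec{n} = \rvec{y} - \Chan \rvec{x}$ together with $\CEsp{\rvec{y}}{\rvec{y}} = \rvec{y}$ one obtains $\rvec{n} - \CEsp{\rvec{n}}{\rvec{y}} = -\Chan(\rvec{x} - \CEsp{\rvec{x}}{\rvec{y}})$. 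Hence the residual covariance equals $\Chan \MSE{x} \Chan^\T$, and rearranging produces $\J{y} = \Identity - \Chan \MSE{x} \Chan^\T$, which is the claim.

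The argument is valid for an arbitrary channel matrix $\Chan$ (not necessarily diagonal) and any input distribution for which the relevant densities and expectations are well-defined, matching the generality of the statement attributed to \cite{Palomar}. In essence, the proof is a combination of the Gaussian score identity (Tweedie's formula) with the orthogonality principle of MMSE estimation; there is no genuinely hard step beyond the routine smoothness justification mentioned above.
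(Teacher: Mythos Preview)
Your proof is correct and complete. It differs from the paper's own argument, which is essentially a one-line citation: the paper simply invokes equation~(106) of \cite{Palomar}, sets the noise covariance $\tilde{\mats{\Sigma}}_{\rvec{n}}$ to the identity, and observes that the MMSE matrix appearing there is that of $\rvec{z}=\Chan\rvec{x}$, so $\MSE{z}=\Chan\MSE{x}\Chan^\T$. Your route, by contrast, is self-contained: you derive the score identity $\Gradient\log\pdf{y}{\rvecr{y}}=-\CEsp{\rvec{n}}{\rvec{y}=\rvecr{y}}$ directly from the Gaussian convolution structure, and then obtain the decomposition via the law of total covariance for $\rvec{n}$ together with the orthogonality relation $\rvec{n}-\CEsp{\rvec{n}}{\rvec{y}}=-\Chan(\rvec{x}-\CEsp{\rvec{x}}{\rvec{y}})$. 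The paper's version is shorter because it outsources the work to the cited reference; yours exposes the mechanism (Tweedie/Stein-type score formula plus MMSE orthogonality) and would stand on its own without access to \cite{Palomar}.
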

\begin{IEEEproof} The result follows directly from equation (106) in \cite{Palomar} by setting $\tilde{\mat{\Sigma}}_n$ equal to the identity matrix and recalling that the MMSE matrix in (106) is the MMSE matrix of $\rvec{z} = \Chan \rvec{x}$, from which it follows that $\MSE{z} = \Chan \MSE{x} \Chan^\T$.
\end{IEEEproof}
We can now state the main result of this section.
\begin{thm} \label{thm:FisherInfroamtionTheorem}
The matrix $\QFisher( \rvec{x}, \Cov{x_g}, \vectort)$ is related to the matrix $\Q(\rvec{x}, \Cov{x_g}, \vectort)$ as follows:
\begin{align} \label{eq:FisherInfroamtionTheorem}
\QFisher( \rvec{x}, \Cov{x_g}, \vectort) = \Chan(\vectort) \Q(\rvec{x}, \Cov{x_g}, \vectort) \Chan(\vectort)^\T .
\end{align}
Moreover, the properties given in Sections \ref{sec:independentGaussian} and \ref{sec:generalGaussian} for the matrix $\Q(\rvec{x}, \Cov{x_g}, \vectort)$, transfer to the matrix $\QFisher( \rvec{x}, \Cov{x_g}, \vectort)$.
\end{thm}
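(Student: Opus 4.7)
The plan is to prove the identity first by direct computation, and then argue that the sign structure needed for all the earlier ``single crossing point'' properties is preserved under the congruence $\mat{A} \mapsto \Chan(\vectort) \mat{A} \Chan(\vectort)^\T$.

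For the identity, I would apply Lemma \ref{lem:FisherMMSE} twice: once with the arbitrary input $\rvec{x}$, giving $\Fisher_{\rvec{x}}(\vectort) = \mat{I}_\dim - \Chan(\vectort) \MSE{x}(\vectort) \Chan(\vectort)^\T$, and once with a Gaussian input of covariance $\Cov{x_g}$, for which the MMSE matrix reduces to $\EM_G(\Cov{x_g}, \vectort)$ as noted after (\ref{eq:CMMSE_matrix}), yielding $\Fisher_G(\Cov{x_g}, \vectort) = \mat{I}_\dim - \Chan(\vectort) \EM_G(\Cov{x_g}, \vectort) \Chan(\vectort)^\T$. Subtracting, the identity matrices cancel and the remaining factor is exactly $\Chan(\vectort)(\EM_G(\Cov{x_g}, \vectort) - \MSE{x}(\vectort))\Chan(\vectort)^\T = \Chan(\vectort) \Q(\rvec{x}, \Cov{x_g}, \vectort) \Chan(\vectort)^\T$, establishing (\ref{eq:FisherInfroamtionTheorem}).

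For transferring the properties, the key observation is that $\Chan(\vectort)$ is positive semidefinite and diagonal (Section \ref{sec:scalartovector}, Lemma \ref{lem:path}), so $\QFisher$ is a congruence transformation of $\Q$. I would handle the three phases in turn. In the phase-three (eigenvalue) case, Sylvester's law of inertia \cite[p.~5]{bhatia:07} guarantees that whenever $\Chan(\vectort)$ is non-singular, the number of positive, zero, and negative eigenvalues of $\QFisher$ and of $\Q$ agree, so any negative-to-nonnegative zero crossing of an eigenvalue of $\Q$ corresponds to an eigenvalue crossing of $\QFisher$ of the same type; when $\Chan(\vectort)$ is singular the argument proceeds exactly as in the proof of Theorem \ref{thm:BQ} (reducing to the non-singular block and observing that the singular directions contribute only zero eigenvalues), so no sign reversal can be created. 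In the phase-two (diagonal) case, since $\Chan(\vectort)$ is diagonal, $[\QFisher]_{ii} = [\Chan(\vectort)]_{ii}^2 [\Q]_{ii}$, which inherits the single negative-to-nonnegative crossing of $[\Q]_{ii}$, with a possible extra zero only where $[\Chan(\vectort)]_{ii} = 0$. In the phase-one (scalar) case, $\Chan(\vectort) = \sqrt{\scalart}\,\mat{I}_\dim$, so $\Tr(\mat{A}\QFisher) = \scalart\,\Tr(\mat{A}\Q)$, which preserves the sign of $\q_{\mat{A}}$ for $\scalart > 0$.

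The main subtlety I expect is the singular $\Chan(\vectort)$ case for the eigenvalue result of Section \ref{sec:generalGaussian}: congruence only preserves inertia strictly under invertible transformations, so one must verify (as is done in Theorem \ref{thm:BQ}) that the eigenvalues that ``collapse'' to zero when $\Chan(\vectort)$ loses rank do so without first passing through a sign reversal. Once this is dispatched by the same row/column reduction used in the proof of Theorem \ref{thm:BQ}, together with the monotonicity/crossing results of Theorems \ref{thm:DiagonalsUniqueCrossingPoint}, \ref{thm:DiagonalsUniqueCrossingPoint_Cond}, \ref{thm:TheoremSingleCrossingEig}, and \ref{thm:TheoremSingleCrossingEig_u}, the transfer of properties to $\QFisher$ is immediate.
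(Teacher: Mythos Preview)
Your proposal is correct and follows essentially the same approach as the paper: the identity is obtained by applying Lemma \ref{lem:FisherMMSE} to both inputs and subtracting, and the transfer of properties is argued via the fact that $\Chan(\vectort)$ is diagonal positive semidefinite (for the diagonal-entry results of Section \ref{sec:independentGaussian}) and via congruence/Sylvester's law of inertia (for the eigenvalue results of Section \ref{sec:generalGaussian}), with the singular case handled exactly as you indicate. The paper's own proof is more terse---it simply points to Lemma \ref{lem:FisherMMSE} and to part two of the proof of Theorem \ref{thm:TheoremSingleCrossingEig}---but your more explicit treatment, including the reduction argument borrowed from Theorem \ref{thm:BQ} for singular $\Chan(\vectort)$, is entirely in line with what the paper has in mind.
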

\begin{IEEEproof}
Equation (\ref{eq:FisherInfroamtionTheorem}) is obtained through the use of Lemma \ref{lem:FisherMMSE}. The properties given in Section \ref{sec:independentGaussian} regarding the matrix $\Q(\rvec{x}, \Cov{x_g}, \vectort)$ transfer to the matrix $\QFisher( \rvec{x}, \Cov{x_g}, \vectort)$, due to the fact that $\Chan(\vectort)$ is a diagonal positive semidefinite matrix for all $\vectort$. The properties given in Section \ref{sec:generalGaussian} regarding the matrix $\Q(\rvec{x}, \Cov{x_g}, \vectort)$ transfer to the matrix $\QFisher( \rvec{x}, \Cov{x_g}, \vectort)$, since it is simply a congruent transformation of $\Q(\rvec{x}, \Cov{x_g}, \vectort)$ (this was explained in detail in part two of the proof of Theorem \ref{thm:TheoremSingleCrossingEig}).
\end{IEEEproof}

\subsection{Application: The Degraded Parallel Gaussian BC Capacity
Region under Covariance Constraint} \label{ssec:BCcovarianceConstraint}

In this section we show that the result of Section \ref{ssec:MutualInformationEig} can be used to provide a converse proof for the \emph{degraded} parallel Gaussian BC capacity region under a covariance constraint. We consider the following model:
\begin{align} \label{eq:app:BCcovarianceModel}
\rvec{y}_1[\timeIn] &= \Chan_{1} \rvec{x}[\timeIn] + \rvec{n}_1[\timeIn] \nonumber \\
\rvec{y}_2[\timeIn] &= \Chan_{2} \rvec{x}[\timeIn] + \rvec{n}_2[\timeIn]
\end{align}
where $\rvec{n}_1[\timeIn]$ and $\rvec{n}_2[\timeIn]$ are standard additive Gaussian noise vectors independent for different time indices $\timeIn$, and $\Chan_{1}$ and $\Chan_{2}$ are diagonal positive semidefinite matrices such that $\Chan_{1} \preceq \Chan_{2}$. $\rvec{x} \in \R^{\dim}$ is the random input vector, and it is assumed independent for different time indices $\timeIn$.

We consider a covariance constraint:
\begin{align} \label{eq:app:BCcovarianceConstraint}
\Cov{x} \preceq \mat{S}
\end{align}
where $\mat{S}$ is some positive definite matrix.

Since we have a \emph{degraded} BC, we can use the single-letter expression as given in (\ref{eq:app:perAntenna_singleLetterExpr}). As in Section \ref{ssec:applicationInd}, we will follow the proof given for the scalar Gaussian BC in \cite{PROP,PROP_full}. Using Lemma \ref{lem:path} we can construct a path such that:
\begin{align} \label{eq:app:covariancePath}
\Chan(\vectort_2) & = \Chan_2 \nonumber \\
\Chan(\vectort_1) & = \Chan_1 \nonumber \\
\Chan(0) & = \mat{0}
\end{align}
where $0 \leq \vectort_1 \leq \vectort_2$ and $\Chan(\vectort)$ is diagonal for all $\vectort \in [0, \vectort_2]$.

Now, assume a pair $(\rvec{u}, \rvec{x})$ with covariance $\Cov{x}$ for $\rvec{x}$. According to Lemma \ref{lem:GaussianExistenceConditioned}, there exists a Gaussian random vector with covariance $\Cov{x_g}$ such that the following properties hold:
\begin{enumerate}
\item $\Cov{x_g} \preceq \Cov{x}$. \label{application:covariance:item1}

\item $\Icond{\rvec{x}}{\rvec{y}(\vectort_1)}{\rvec{u}} = \Igen{\rvec{x_g}}{\rvec{y_g}(\vectort_1)}$. \label{application:covariance:item2}

\item $\Q( \rvec{x|u}, \Cov{x_g}, \vectort) \succeq \mat{0}$ for all $\vectort \geq \vectort_1$. \label{application:covariance:item3}

\end{enumerate}
Using the I-MMSE relationship (\ref{eq:lineIntegral_cond2}) we can write,
\begin{align} \label{eq:app:covariance_IMMSE_1}
\Igen{\rvec{x_g}}{\Chan(\vectort) \rvec{x_g} + \rvec{n}} - \Icond{\rvec{x}}{\Chan(\vectort) \rvec{x} + \rvec{n}}{\rvec{u}} &=
\int_{\tau=0}^{\vectort} \Tr \left(
\ChanDChan(\tau) \Q(\rvec{x}|\rvec{u},\Cov{x_g},\tau) \right) \d \tau \\
\label{eq:app:covariance_IMMSE_2}
& = \int_{\tau=0}^{\vectort} \sum_{i=1}^{\dim} \lambda_i \left(
\ChanDChan(\tau) \Q(\rvec{x}|\rvec{u}, \Cov{x_g} ,\tau) \right) \d \tau .
\end{align}
Using the above properties on (\ref{eq:app:covariance_IMMSE_2}) we have that for any $\vectort' > \vectort_1$,
\begin{align} \label{eq:app:covariance_conclusingFromThm}
\Igen{\rvec{x_g}}{\Chan(\vectort') \rvec{x_g} + \rvec{n}} - \Icond{\rvec{x}}{\Chan(\vectort') \rvec{x} + \rvec{n}}{\rvec{u}}
&= \int_{\tau=0}^{\vectort_1} \Tr \left(
\ChanDChan(\tau) \Q(\rvec{x}|\rvec{u},\Cov{x_g} ,\tau) \right)
\d \tau \nonumber \\
& + \int_{\tau=\vectort_1}^{\vectort'} \Tr \left(
\ChanDChan(\tau) \Q(\rvec{x}|\rvec{u},\Cov{x_g} ,\tau) \right)
\d \tau \\ \label{eq:app:covariance_conclusingFromThm2}
&= 0 + \int_{\tau=\vectort_1}^{\vectort'} \sum_{i=1}^{\dim} \lambda_i \left(
\ChanDChan(\tau) \Q(\rvec{x}|\rvec{u},\Cov{x_g},\tau) \right)
\d \tau \geq 0
\end{align}
where (\ref{eq:app:covariance_conclusingFromThm2}) follows from property \ref{application:covariance:item2}, and
the inequality follows from property \ref{application:covariance:item3} and Theorem \ref{thm:BQ}.

Thus, we have shown the existence of a Gaussian random vector, $\rvec{x}_G$, with covariance matrix $\Cov{x_g}$, with the following properties:
\begin{align}
\Icond{\rvec{x}}{\rvec{y}(\vectort_1)}{\rvec{u}} &= \Igen{\rvec{x_g}}{\rvec{y_g}(\vectort_1)} \nonumber \\
\Icond{\rvec{x}}{\rvec{y}(\vectort_2)}{\rvec{u}} &\leq \Igen{\rvec{x_g}}{\rvec{y_g}(\vectort_2)} \nonumber \\
\Cov{x_g} & \preceq \Cov{x}
\end{align}
Using these properties on the single-letter expression (\ref{eq:app:perAntenna_singleLetterExpr}) we obtain the following outer bound,
\begin{align} \label{eq:app:covariance_final}
\rate_1 & \leq \Igen{\rvec{u}}{\rvec{y}_1} = \Igen{\rvec{x}}{\rvec{y}_1} - \Icond{\rvec{x}}{\rvec{y}_1}{\rvec{u}} \nonumber \\
& \leq \frac{1}{2} \log | \Identity + \Chan_1 \mat{S} \Chan_1^\T |
- \frac{1}{2} \log | \Identity + \Chan_1 \Cov{x_g} \Chan_1^\T |
= \frac{1}{2} \log  \frac{| \Identity + \Chan_1 \mat{S} \Chan_1^\T |}{| \Identity + \Chan_1 \Cov{x_g} \Chan_1^\T |} \\
\rate_2 & \leq \Icond{\rvec{x}}{\rvec{y}_2}{\rvec{u}} \leq
\frac{1}{2} \log | \Identity + \Chan_2 \Cov{x_g} \Chan_2^\T |
\end{align}
This outer bound is tight and the
achievability is well-known using superposition coding. 
This approach can be extended to the M-user scenario as shown in Appendix \ref{app:converseProofBCCovarianceMUsers}.

\subsection{Application: The Compound Degraded Parallel Gaussian BC
Capacity Region under Covariance Constraint}
\label{ssec:compBCcovarianceConstraint}

In this section we show that the results of Section \ref{ssec:singleCrossingPointEig} can also be used to provide a converse proof for the compound \emph{degraded} parallel Gaussian BC capacity region under a covariance constraint. We consider the following model,
\begin{align} \label{eq:modelCompundMIMO}
\rvec{y}_{i_j}^j[\timeIn] = \Chan_{i_j}^j \rvec{x}[\timeIn] + \rvec{n}_{i_j}^j[\timeIn], \quad j = 1,\ldots,M, \quad i_j = 1,\ldots,K_j
\end{align}
where $\rvec{n}_{i_j}^j$, $j = 1,..,M, i_j = 1,...,K_j$ are standard additive Gaussian noise vectors independent for different time indices $\timeIn$, and
$\Chan_{i_j}^j$, $j = 1,...,M, i_j = 1,...,K_j$ are diagonal positive definite matrices such that:
\begin{eqnarray} \label{eq:app_compound:degradedCompound_1}
\Chan_{i_j}^j \preceq \Chan_{i_{(j+1)}}^{j+1} \quad \forall j=1,\ldots,M ,i_j \in \{1, \ldots, K_j\} ,i_{j+1} \in \{1, \ldots, K_{j+1}\}.
\end{eqnarray}
Since these matrices are diagonal, there exist matrices $\Chan_{(j+1)j}^\star$ for $j=1,\ldots,M-1$ such that
\begin{eqnarray} \label{eq:app_compound:degradedCompound_2}
\Chan_{i_j}^j \preceq \Chan_{(j+1)j}^\star \preceq \Chan_{i_{(j+1)}}^{j+1} \quad \forall j=1,\ldots,M-1 ,i_j \in \{1, \ldots, K_j\} ,i_{j+1} \in \{1, \ldots, K_{j+1}\}.
\end{eqnarray}
Note that the equivalence between conditions (\ref{eq:app_compound:degradedCompound_1}) and (\ref{eq:app_compound:degradedCompound_2}) is not true in general (for non-diagonal matrices), as explained in \cite{HananCompound}.
$\rvec{x} \in \R^{\dim}$ is the random input vector, and it is assumed independent for different time indices $\timeIn$. We consider a covariance constraint:
%
\begin{eqnarray} \label{eq:app_compound:constraint}
\Cov{x} \preceq \mat{S}
\end{eqnarray}
where $\mat{S}$ is some positive definite matrix.

Before proceeding, we provide the following single-letter expression for the capacity region of this $M$ user memoryless channel. This is a simple extension of \cite[Lem.~4]{HananCompound}.

\begin{lem} \label{lem:LemmaCompoundMemoryless}
Consider a memoryless compound BC with input $\rvec{x}$, $M$ outputs $\rvec{y}_{i_j}^j$, $j=1,\ldots,M, i_j = 1, \ldots, K_j$, and auxiliary random outputs $\rvec{y}_{(j+1)j}^\star$ with $j \in \{1,\ldots,M-1\}$. All outputs are defined by their conditional probability functions: $\pdffun_{\rvec{y}_{i_j}^j|\rvec{x}}$ and $\pdffun_{\rvec{y}_{(j+1)j}^\star|\rvec{x}}$. Furthermore, assume that these outputs are stochastically \emph{degraded} such that there exists some distribution such that $\rvec{x} - \rvec{y}_{i_M}^M - \rvec{y}_{M(M-1)}^\star - \rvec{y}_{i_{M-1}}^{M-1} - \rvec{y}_{(M-1)(M-2)}^\star -\ldots - \rvec{y}_{i_{2}}^{2} - \rvec{y}_{21}^\star - \rvec{y}_{i_1}^1$ form a Markov chain for every choice of $i_1,i_2,\ldots,i_M$. The capacity region of this channel is given by the union of the rate tuples satisfying
\begin{eqnarray} \label{eq:capacityRegionMemoryless}
\rate_j \leq \min_{i_j = 1,\ldots,K_j} \Icond{\rvec{v}_{j}}{\rvec{y}_{i_j}^j}{\rvec{v}_{j-1}}
\end{eqnarray}
where $\rvec{v}_0 \equiv \emptyset$, $\rvec{v}_M \equiv \rvec{x}$ and the union is over all probability distributions satisfying
\begin{eqnarray} \label{eq:capacityRegionMemoryless_MC}
\rvec{v}_0 - \rvec{v}_1 -\ldots - \rvec{v}_{M-1} - \rvec{v}_M - \rvec{x} - \rvec{y}_{i_M}^M - \rvec{y}_{M(M-1)}^\star - \rvec{y}_{i_{M-1}}^{M-1} - \rvec{y}_{(M-1)(M-2)}^\star - \ldots \rvec{y}_{i_{2}}^{2} -\rvec{y}_{21}^\star - \rvec{y}_{i_1}^1 \textrm{.}
\end{eqnarray}
\end{lem}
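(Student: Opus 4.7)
The plan is to obtain this single-letter capacity expression by standard arguments adapted to the compound degraded setting, mirroring the two-user derivation of \cite[Lem.~4]{HananCompound} and extending it inductively to $M$ users. The proof naturally splits into an achievability part based on $M$-level superposition coding and a converse based on Fano's inequality combined with the degradedness structure in (\ref{eq:capacityRegionMemoryless_MC}).

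For achievability, I would build a nested codebook: first a $2^{n\rate_1}$-size codebook of $\rvec{v}_1$-sequences drawn i.i.d.\ from the marginal of $\rvec{v}_1$; then, for each chosen $\rvec{v}_1$, a conditionally i.i.d.\ $2^{n\rate_2}$-size codebook of $\rvec{v}_2$-sequences; and so on up to $\rvec{v}_M \equiv \rvec{x}$. Receiver $j$, which observes some unknown $\rvec{y}_{i_j}^j$, uses joint typicality to decode $(\rvec{v}_1, \ldots, \rvec{v}_j)$. Because the same code must work for every $i_j \in \{1,\ldots, K_j\}$, the standard compound-channel analysis yields the constraint $\rate_j < \min_{i_j} \Icond{\rvec{v}_j}{\rvec{y}_{i_j}^j}{\rvec{v}_{j-1}}$. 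The stochastic degradedness, propagated through the auxiliary outputs $\rvec{y}_{(j+1)j}^\star$, guarantees that whenever user $j$ can decode $(\rvec{v}_1, \ldots, \rvec{v}_j)$, users $j+1, \ldots, M$ can decode them as well, so the layered structure is consistently recovered from the weakest to the strongest receiver.

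For the converse, apply Fano's inequality to user $j$: $n \rate_j \leq \Igen{W_j}{\rvec{y}_{i_j}^{j,n}} + n \epsilon_n$ for every $i_j$. Introduce auxiliary variables via the Bergmans/Gallager construction, bundling the higher-indexed messages $W_{j+1},\ldots, W_M$ together with an appropriate prefix of past observations at a lower-indexed user into a per-time-slot $\rvec{v}_{j,m}$ (with $\rvec{v}_{0,m} \equiv \emptyset$ and $\rvec{v}_{M,m}$ rich enough to recover $\rvec{x}_m$). Using the chain rule and the Markov chain (\ref{eq:capacityRegionMemoryless_MC}) one arrives at $n \rate_j \leq \sum_{m=1}^n \Icond{\rvec{v}_{j,m}}{\rvec{y}_{i_j,m}^j}{\rvec{v}_{j-1,m}} + n \epsilon_n$, and a standard time-sharing argument converts the per-symbol sum into the single-letter expression. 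Since the same set of auxiliaries is used for every realization $i_j$, the minimum over $i_j$ can be pulled inside the mutual information on the right-hand side.

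The main obstacle is precisely the compound-channel subtlety resolved in \cite[Lem.~4]{HananCompound}: verifying that the auxiliaries $\rvec{v}_1, \ldots, \rvec{v}_{M-1}$ can be defined once, independently of $i_j$, so that $\min_{i_j}$ appears inside a single mutual-information expression rather than as a union over per-channel auxiliary distributions. The key enabler is the assumed degradedness via the auxiliary outputs $\rvec{y}_{(j+1)j}^\star$: it provides a single ``worst'' chain along which the auxiliary construction propagates uniformly across all channel realizations, allowing the two-user argument of \cite{HananCompound} to be lifted inductively to general $M$. Once this inductive step is carried out, the achievability and converse bounds coincide and yield the claimed region.
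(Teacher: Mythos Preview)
Your plan matches the paper's proof: achievability via $M$-level superposition coding and successive decoding, and a converse via Fano's inequality, the chain rule, and a single-letter auxiliary construction that does not depend on the channel index $i_j$. You correctly identify that the auxiliary outputs $\rvec{y}_{(j+1)j}^\star$ are what make a realization-independent choice of auxiliaries possible.

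Two points of precision are worth tightening. First, the message bundling is reversed: in the paper's indexing ($\rvec{v}_0=\emptyset$, $\rvec{v}_M=\rvec{x}$, user $1$ weakest), the auxiliary at level $j$ carries the \emph{lower}-indexed messages $W_1,\ldots,W_j$, not $W_{j+1},\ldots,W_M$. Second, and more substantively, the ``appropriate prefix of past observations'' must be the past of the auxiliary output $\rvec{y}_{(j+1)j}^\star$, not the past of any actual user's channel $\rvec{y}_{i_{j'}}^{j'}$; the paper takes $\rvec{v}_j(l)=\bigl(W_1,\ldots,W_j,\rvec{y}_{(j+1)j}^\star(1,\ldots,l-1)\bigr)$. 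This is precisely what makes $\rvec{v}_j$ independent of every $i_j$ and allows the $\min_{i_j}$ to sit inside a single mutual-information expression, which is the step you flagged as the main obstacle.
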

\begin{proof}
See Appendix \ref{appendix:CompoundMemoryless}.
\end{proof}

Using Lemma \ref{lem:LemmaCompoundMemoryless} we prove the following theorem,

\begin{thm} \label{thm:Compound}
The capacity region of the compound \emph{degraded} parallel Gaussian BC (\ref{eq:modelCompundMIMO}), is given by the following expression:
\begin{align} \label{eq:capacityRegion}
\rate_M & \leq \min_{i_M=1,\ldots,K_M} \frac{1}{2} \log \left| \Chan_{i_M}^M {\Cov{g}}_M \left( \Chan_{i_M}^{M} \right)^\T + \Identity \right| \nonumber \\
\rate_j & \leq \min_{i_j=1,\ldots,K_j} \frac{1}{2} \log \frac{\left| \Chan_{i_j}^j \sum_{l=j}^M {\Cov{g}}_l \left( \Chan_{i_j}^{j}\right)^\T + \Identity \right|}{\left| \Chan_{i_j}^j \sum_{l=j+1}^{M} {\Cov{g}}_l \left( \Chan_{i_j}^{j}\right)^\T + \Identity \right|}, \quad \forall j = 1,\ldots,M-1
\end{align}
where ${\Cov{g}}_j$ are some positive semidefinite matrices such that $\mat{0} \preceq \sum_{l=1}^{M} {\Cov{g}}_l \preceq \mat{S}$.
\end{thm}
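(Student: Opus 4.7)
The achievability follows from Lemma \ref{lem:LemmaCompoundMemoryless} by specializing the auxiliaries to independent Gaussian increments $\rvec{v}_j - \rvec{v}_{j-1}$ with covariances $\Cov{g_j}$, a standard superposition construction. I focus the plan on the converse.

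\textbf{Setup.} By Lemma \ref{lem:LemmaCompoundMemoryless}, for any achievable tuple there is an admissible Markov chain $\rvec{v}_0 - \rvec{v}_1 - \ldots - \rvec{v}_M = \rvec{x}$ with $\Cov{x}\preceq \mat{S}$ and $\rate_j \leq \min_{i_j}\Icond{\rvec{v}_j}{\rvec{y}_{i_j}^j}{\rvec{v}_{j-1}}$. Using Lemma \ref{lem:path}, I construct a single path $\Chan(\vectort)$ passing through the zero matrix at $\vectort=0$, through each auxiliary boundary matrix $\Chan_{(j+1)j}^\star$ of (\ref{eq:app_compound:degradedCompound_2}), and through a ``reference'' channel $\Chan_{i_j^\star}^j$ per layer (chosen below). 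The ordering in (\ref{eq:app_compound:degradedCompound_2}) makes such a path monotone, and the I-MMSE identity (\ref{eq:lineIntegral_cond2}) then writes every relevant mutual information as a scalar integral in $\vectort$.

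\textbf{Layered construction of $\{\Cov{g_j}\}$.} I build cumulative covariances $\Sigma_j = \sum_{l\leq j}\Cov{g_l}$ inductively. At $j=1$, apply Lemma \ref{lem:GaussianExistenceConditioned} to $\rvec{v}_1$ conditioned on $\rvec{v}_0$, at the value of $\vectort$ corresponding to the minimizer of the target Gaussian rate over $\{\Chan_{i_1}^1\}$; this returns $\Sigma_1=\Cov{g_1}\preceq\Cov{v_1\mid v_0}$ with matched mutual information there and $\Q(\rvec{v}_1\mid\rvec{v}_0,\Sigma_1,\cdot)\succeq\mat{0}$ at that point. Corollary \ref{cor:CorollarySingleCrossingEig} then extends the PSD property to all larger $\vectort$ along the path, and Theorem \ref{thm:BQ} applied in the I-MMSE integral (exactly as in the two-user derivation of Section \ref{ssec:BCcovarianceConstraint}) upper-bounds $\Icond{\rvec{v}_1}{\rvec{y}_{i_j}^j}{\rvec{v}_0}$ by the corresponding Gaussian expression for every $(j,i_j)$ above that point. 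For the step $j\to j+1$, I feed the Gaussian bound from layer $j$ into Lemma \ref{lem:GaussianExistenceExtensionConditioned} as the ``upper-bound Gaussian'' $\Cov{x_g}^{\ss{ub}}$; it returns $\Sigma_{j+1}\succeq\Sigma_j$ with PSD increment $\Cov{g_{j+1}}=\Sigma_{j+1}-\Sigma_j$, matched mutual information at $\Chan_{i_{j+1}^\star}^{j+1}$, and $\Q\succeq\mat{0}$ at that point; Theorem \ref{thm:BQ} again propagates the upper bound to every channel of layer $j+1$.

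\textbf{Conclusion and main obstacle.} After $M$ iterations, $\mat{0}\preceq\Sigma_M\preceq\Cov{x}\preceq\mat{S}$ and the rate bounds match (\ref{eq:capacityRegion}), completing the converse. The principal difficulty is the \emph{simultaneous} maintenance of (i) the covariance chain $\Sigma_{j-1}\preceq\Sigma_j\preceq\mat{S}$, (ii) equality at the reference channel of each layer so the next induction step has a tight baseline, and (iii) upper-bound dominance on every compound channel of that layer despite the possible lack of pairwise Loewner ordering among the $\Chan_{i_j}^j$. Items (i) and (ii) are reconciled precisely by Lemma \ref{lem:GaussianExistenceExtensionConditioned}, which is designed to recycle the previous-layer Gaussian bound as the ceiling for the next layer's matching construction. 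Item (iii) is handled by the auxiliary boundary matrices $\Chan_{(j+1)j}^\star$ of (\ref{eq:app_compound:degradedCompound_2}), which majorize every $\Chan_{i_j}^j$ at their level and therefore lie on the upward cone where Corollary \ref{cor:CorollarySingleCrossingEig} and Theorem \ref{thm:BQ} guarantee dominance of the Gaussian bound on each member of the compound family.
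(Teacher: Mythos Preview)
Your plan has two concrete gaps, and the second one is the more serious.

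\textbf{Applying the matching lemma to the wrong object.} Lemma~\ref{lem:GaussianExistenceConditioned} (and its extension Lemma~\ref{lem:GaussianExistenceExtensionConditioned}) are statements about $\Icond{\rvec{x}}{\rvec{y}(\vectort)}{\rvec{u}}$, where $\rvec{y}=\Chan(\vectort)\rvec{x}+\rvec{n}$ and the I-MMSE integral involves $\MSE{x|u}$. You invoke the lemma ``to $\rvec{v}_1$ conditioned on $\rvec{v}_0$'' and obtain $\Sigma_1\preceq\CovMat_{\rvec{v}_1|\rvec{v}_0}$, then claim a direct upper bound on $\Icond{\rvec{v}_1}{\rvec{y}_{i_j}^j}{\rvec{v}_0}$. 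But $\rvec{v}_1$ is not the channel input, so there is no I-MMSE representation of $\Icond{\rvec{v}_1}{\rvec{y}}{\rvec{v}_0}$ in terms of an MMSE matrix of $\rvec{v}_1$, and Lemma~\ref{lem:GaussianExistenceConditioned} does not apply. The paper (and Appendix~\ref{app:converseProofBCCovarianceMUsers}) instead uses the chain-rule decomposition $\Icond{\rvec{v}_j}{\rvec{y}}{\rvec{v}_{j-1}}=\Icond{\rvec{x}}{\rvec{y}}{\rvec{v}_{j-1}}-\Icond{\rvec{x}}{\rvec{y}}{\rvec{v}_{j}}$ and applies the matching lemma to $\rvec{x}\,|\,\rvec{v}_j$; the resulting covariances are $\Cov{x_g}_j\preceq\Cov{x}$ (not $\preceq\CovMat_{\rvec{v}_j|\rvec{v}_{j-1}}$), with nestedness $\Cov{x_g}_{j}\preceq\Cov{x_g}_{j-1}$ established exactly via Lemma~\ref{lem:GaussianExistenceExtensionConditioned} after using the Markov relation $\MSE{x|v_{j}}\preceq\MSE{x|v_{j-1}}$.

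\textbf{Matching at the wrong location.} Even with the correct object, matching at a ``reference'' realization $\Chan_{i_j^\star}^{j}$ inside layer $j$ does not deliver the two-sided control the compound setting needs. The bound on $\rate_j$ requires simultaneously an \emph{upper} bound on $\Icond{\rvec{x}}{\rvec{y}_{i_j}^j}{\rvec{v}_{j-1}}$ and a \emph{lower} bound on $\Icond{\rvec{x}}{\rvec{y}_{i_j}^j}{\rvec{v}_{j}}$, valid for \emph{every} $i_j$. The compound channels within a layer are not mutually Loewner ordered, so a matching point at one of them cannot sit on the correct side of all the others along any monotone path. Your remark that $\Chan_{(j+1)j}^\star$ ``majorizes every $\Chan_{i_j}^j$'' is exactly the point, but it must be used as the \emph{matching} location, not merely as a waypoint: the paper matches $\Icond{\rvec{x}}{\rvec{y}}{\rvec{v}_j}$ at $\vectort_{(j+1)j}$ (via Lemma~\ref{lem:appendix:covarianceMUserInduction} applied to the auxiliary outputs $\rvec{y}_{(j+1)j}^\star$), so that $\Q\succeq\mat{0}$ there; then Corollary~\ref{cor:CorollarySingleCrossingEig} and Theorem~\ref{thm:BQ} give $\int_0^{\vectort}\Tr(\ChanDChan\Q)\,\d\tau\leq 0$ for all $\vectort\leq\vectort_{(j+1)j}$ (hence for every $\vectort_{i_j}$, yielding the needed lower bound) and $\geq 0$ for all $\vectort\geq\vectort_{(j+1)j}$ (hence for every $\vectort_{i_{j+1}}$, yielding the needed upper bound at the next layer). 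This single boundary matching therefore feeds both adjacent layers at once and works for all compound realizations simultaneously. Your choice of reference as ``the minimizer of the target Gaussian rate'' is additionally circular, since that target depends on the very covariances you are constructing.
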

\begin{IEEEproof}
According to Lemma \ref{lem:path} (and the remark after this lemma) for any set of $\{ i_1, i_2,\ldots,i_M\}$ where $i_j \in K_j$ we can construct a diagonal path such that
\begin{align} \label{eq:app:CompoundConverseProofPath}
\Chan(\vectort_{i_j}) & = \Chan_{i_j}, \quad j=1,\ldots,M \nonumber \\
\Chan(\vectort_{(j+1)j}) & = \Chan_{(j+1)j}^\star, \quad j=1,\ldots,M-1 \nonumber \\
\Chan(\vectort=0) & = \mat{0}
\end{align}
with $0 \leq \vectort_{i_1} \leq \vectort_{21} \leq \vectort_{i_2} \leq \ldots \leq \vectort_{i_j} \leq \vectort_{(j+1)j} \leq \vectort_{i_{j+1}} \leq \ldots \leq \vectort_{i_M}$.
Now, let's examine a tuple of rates on the boundary of the capacity region: $(\rate_1^{opt}, \rate_2^{opt},\ldots,\rate_M^{opt})$. Assume that this tuple has been attained by the joint distribution $\Parg{V_{1}, \ldots, V_{M-1}, \rvec{x} }$ on the tuple with covariance $\Cov{x} \preceq \mat{S}$ as required by the constraint (\ref{eq:app_compound:constraint}).

We begin by looking at the following partial Markov chain:
\begin{eqnarray} \label{eq:capacityRegionMemoryless_MC_partial}
\rvec{v}_0 - \rvec{v}_1 -\ldots - \rvec{v}_{M-1} - \rvec{v}_M - \rvec{x} - \rvec{y}_{M(M-1)}^\star - \rvec{y}_{(M-1)(M-2)}^\star - \ldots -\rvec{y}_{21}^\star.
\end{eqnarray}
Now, assuming that $\rvec{y}_{(j+1)j}^\star$ are the outputs, we can use Lemma \ref{lem:appendix:covarianceMUserInduction} which states that
there exist $M$ Gaussian inputs $\rvec{x}_{G_j}$, with covariance matrices ${\Cov{x_g}}_j$ such that,
\begin{align} \label{eq:UsingMUserCovarianceResult}
\int_0^{\vectort_{(j+1)j}} \Tr \left(
\ChanDChan(\tau) \Q(\rvec{x}|\rvec{v}_j,{\Cov{x_g}}_j,\tau) \right) \d \tau & = 0  \\ \label{eq:UsingMUserCovarianceResult2}
\int_0^{\vectort_{(j+2)j}} \Tr \left(
\ChanDChan(\tau) \Q(\rvec{x}|\rvec{v}_{j},{\Cov{x_g}}_j,\tau) \right) \d \tau & \geq 0, \quad \quad \forall j=1,\ldots,M-2
\end{align}
and such that $\mat{0} \preceq {\Cov{x_g}}_j \preceq {\Cov{x_g}}_{j-1}$, for $j=2,\ldots,M-1$ and $\mats{0} \preceq  {\Cov{x_g}}_{1} \preceq \mats{S}$. Furthermore,
\begin{align}
\Q( \rvec{x}|\rvec{v}_{j},{\Cov{x_g}}_j,\vectort_{(j+1)j}) \succeq \mat{0}
\end{align}
for all $j = 1,\ldots, M-1$.

Using this result, and according to Corollary \ref{cor:CorollarySingleCrossingEig} we know that $\Q( \rvec{x}|\rvec{v}_{j},{\Cov{x_g}}_j,\vectort) \succeq \mat{0}$ for all $\vectort \geq \vectort_{(j+1)j}$. This holds for any diagonal path, such that $\Chan( \vectort_{(j+1)j} ) = \Chan^\star_{(j+1)j}$. Now, using Theorem \ref{thm:BQ} and (\ref{eq:UsingMUserCovarianceResult}) we can conclude that,
\begin{align} \label{eq:UsingMUserCovarianceResult3}
\int_0^{\vectort} \Tr \left(
\ChanDChan(\tau) \Q(\rvec{x}|\rvec{v}_j,{\Cov{x_g}}_j,\tau) \right) \d \tau & \leq 0, \quad \forall \vectort \leq \vectort_{(j+1)j} \nonumber \\
\int_0^{\vectort} \Tr \left(
\ChanDChan(\tau) \Q(\rvec{x}|\rvec{v}_j,{\Cov{x_g}}_j,\tau) \right) \d \tau & \geq 0, \quad \forall \vectort \geq \vectort_{(j+1)j}.
\end{align}
Due to the Markov chain:
\begin{eqnarray} \label{eq:usingMC}
\rvec{v}_j - \rvec{v}_{j+1} - \rvec{x} - \rvec{y}_{i_{j+1}}^{j+1} - \rvec{y}_{(j+1)j}^\star - \rvec{y}_{i_{j}}^{j}
\end{eqnarray}
(\ref{eq:UsingMUserCovarianceResult3}) is particularly valid for,
\begin{align} \label{eq:UsingMUserCovarianceResultParticularly}
\int_0^{\vectort_{i_j}} \Tr \left(
\ChanDChan(\tau) \Q(\rvec{x}|\rvec{v}_j,{\Cov{x_g}}_j,\tau) \right) \d \tau & \leq 0, \quad \forall i_{j} \in K_{j} \nonumber \\
\int_0^{\vectort_{i_{j+1}}} \Tr \left(
\ChanDChan(\tau) \Q(\rvec{x}|\rvec{v}_j,{\Cov{x_g}}_j,\tau) \right) \d \tau & \geq 0, \quad \forall i_{j+1} \in K_{j+1}
\end{align}
for any $j=1,\ldots,M-1$. Equations (\ref{eq:capacityRegionMemoryless}) can be written explicitly, as follows:
\begin{align} \label{eq:explicitlyCompound}
\rate_M & \leq \min_{i_M = 1,\ldots,K_M}  \Icond{\rvec{x}}{ \rvec{y}_{i_M}^M }{ \rvec{v}_{M-1}} \nonumber \\
\rate_{M-1} & \leq \min_{i_{M-1} = 1,\ldots,K_{M-1}}  \Icond{\rvec{x}}{\rvec{y}_{i_{M-1}}^{M-1} }{\rvec{v}_{M-2}} - \Icond{\rvec{x}}{ \rvec{y}_{i_{M-1}}^{M-1} }{ \rvec{v}_{M-1}} \nonumber \\
\rate_{M-2} & \leq \min_{i_{M-2} = 1,\ldots,K_{M-2}}  \Icond{\rvec{x}}{\rvec{y}_{i_{M-2}}^{M-2} }{\rvec{v}_{M-3}} - \Icond{\rvec{x}}{ \rvec{y}_{i_{M-2}}^{M-2} }{ \rvec{v}_{M-2}} \nonumber \\
\vdots \nonumber \\
\rate_{2} & \leq \min_{i_2 = 1,\ldots,K_2}  \Icond{\rvec{x}}{ \rvec{y}_{i_{2}}^{2} }{ \rvec{v}_{1}} - \Icond{\rvec{x}}{\rvec{y}_{i_{2}}^{2} }{ \rvec{v}_{2}} \nonumber \\
\rate_1 & \leq \min_{i_1 = 1,\ldots,K_1}  \Icond{\rvec{x}}{ \rvec{y}_{i_1}^1 }{ \rvec{v}_0 \equiv \emptyset} - \Icond{\rvec{x}}{ \rvec{y}_{i_1}^1 }{ \rvec{v}_{1}}.
\end{align}
Using (\ref{eq:UsingMUserCovarianceResultParticularly}) and the trivial bound on $\Igen{\rvec{x}}{\rvec{y}_{i_1}^1}$ we can upper bound these expressions as follows:
\begin{align} \label{eq:upperBoundCompound}
\rate_M & \leq \min_{i_M = 1,\ldots,K_M}  \Igen{\rvec{x}_{G_{M-1}}}{ \Chan_{i_M}^M \rvec{x}_{G_{M-1}} + \rvec{n}} \nonumber \\
\rate_{M-1} & \leq  \min_{i_{M-1} = 1,\ldots,K_{M-1}}  \Igen{\rvec{x}_{G_{M-2}}}{ \Chan_{i_{M-1}}^{M-1} \rvec{x}_{G_{M-2}} + \rvec{n}} - \Igen{\rvec{x}_{G_{M-1}}}{ \Chan_{i_{M-1}}^{M-1} \rvec{x}_{G_{M-1}} + \rvec{n}} \nonumber \\
\rate_{M-2} & \leq \min_{i_{M-2} = 1,\ldots,K_{M-2}}  \Igen{\rvec{x}_{G_{M-3}}}{ \Chan_{i_{M-2}}^{M-2} \rvec{x}_{G_{M-3}} + \rvec{n}} - \Igen{\rvec{x}_{G_{M-2}}}{ \Chan_{i_{M-2}}^{M-2} \rvec{x}_{G_{M-2}} + \rvec{n}} \nonumber \\
\vdots \nonumber \\
\rate_{2} & \leq \min_{i_2 = 1,\ldots,K_2}  \Igen{\rvec{x}_{G_{1}}}{ \Chan_{i_{2}}^{2} \rvec{x}_{G_{1}}+ \rvec{n}} - \Igen{\rvec{x}_{G_{2}}}{ \Chan_{i_{2}}^{2} \rvec{x}_{G_{2}}+ \rvec{n}} \nonumber \\
\rate_1 & \leq \min_{i_1 = 1,\ldots,K_1}  \frac{1}{2} \log \left| \Identity  + \Chan_{i_1}^1 \mats{S} \left( \Chan_{i_1}^{1}\right)^\T \right| - \Igen{\rvec{x}_{G_{1}}}{ \Chan_{i_{1}}^{1} \rvec{x}_{G_{1}} + \rvec{n}}.
\end{align}
Defining,
\begin{align} \label{app:compound:applicationCovariance_definingC}
{\Cov{g}}_1 & = \mats{S} - {\Cov{x_g}}_1 \nonumber \\
{\Cov{g}}_j & = {\Cov{x_g}}_{j-1} - {\Cov{x_g}}_{j}, \quad \forall j=2,\ldots,M-1 \nonumber \\
{\Cov{g}}_M & = {\Cov{x_g}}_{M-1}
\end{align}
(\ref{eq:upperBoundCompound}) becomes the following set of upper bound,
\begin{align} \label{eq:tpperBoundCompound}
\rate_M & \leq \min_{i_M=1,\ldots,K_M} \frac{1}{2} \log \left| \Identity + \Chan_{i_M}^M {\Cov{g}}_M \left( \Chan_{i_M}^{M} \right)^T \right| \nonumber \\
\rate_j & \leq \min_{i_j=1,\ldots,K_j} \frac{1}{2} \log \frac{\left| \Identity +  \Chan_{i_j}^j \sum_{l=j}^M {\Cov{g}}_l \left( \Chan_{i_j}^{j}\right)^T \right|}{\left| \Identity + \Chan_{i_j}^j \sum_{l=j+1}^{M} {\Cov{g}}_l \left( \Chan_{i_j}^{j}\right)^T  \right|}, \quad \forall j = 1,\ldots,M-1
\end{align}
where ${\Cov{g}}_j$ are some positive semidefinite matrices such that $\mat{0} \preceq \sum_{l=1}^{M}{\Cov{g}}_l = \mats{S}$.

The above upper bounds can be attained simultaneously using a joint Gaussian distribution on the tuple,
\begin{align}
\left( \rvec{v}_0 \equiv \emptyset , \rvec{v}_{1}, \ldots, \rvec{v}_{M-1}, \rvec{v}_M \equiv \rvec{x} \right)
\end{align}
as follows:
\begin{align} \label{eq:appendix:achievable_Covariance}
\rvec{v}_j = \rvec{v}_{j-1} + \rvec{u}_{j}
\end{align}
where $\rvec{u}_j \sim \mathcal{N} \left( \mat{0}, {\Cov{g}}_j \right)$ for $j = 1,\ldots, M$, independent of each other, and where ${\Cov{g}}_j$ are positive semidefinite matrices such that $\sum_{l=1}^M {\Cov{g}}_l \preceq \mats{S}$. This concludes the proof of the capacity region.
\end{IEEEproof}

\section{Summary} \label{sec:conclusions}
In this work we extended the ``single crossing point'' property from the scalar setting to the parallel MIMO setting. We have shown three different ``single crossing point'' properties, given in three phases of extension from scalar-to-vector. These properties cannot be trivially deduced from each other. All three emphasize the basic optimality of the Gaussian input distribution in the Gaussian regime. The most general of these properties, given in the third phase, shows a ``single crossing point'' property for each of the eigenvalues of the matrix $\Q(\rvec{x}, \Cov{\rvec{x_G}}, \vectort)$, the difference between the MMSE matrix assuming an arbitrary Gaussian input, and the MMSE matrix assuming an arbitrary input distribution. We demonstrate the applicability of these properties on several information theoretic problems: a proof of a special case of Shannon's vector EPI, a converse proof of the capacity region
of the parallel \emph{degraded} MIMO broadcast channel (BC) under per-antenna power constrains and under covariance
constraint, and a converse proof of the capacity region of the compound parallel \emph{degraded} MIMO BC under
covariance constraint.

An open question is: can we extend the ``single crossing point'' property to the general MIMO channel? Note that, although the optimality of the Gaussian input is known for several MIMO Gaussian multi-terminal problems, we cannot necessarily conclude the existence of a ``single crossing point'' property.
However, the implications of a general ``single crossing point'' property go beyond the specific applications shown here, and are also of interest on their own. 



\appendix

\subsection{Proofs of Lemmas} \label{app:prooflemmas}


\subsubsection{Proof of Lemma \ref{lem:AequivI}} \label{app:AequivI}

Since $\mat{A}$ is positive semidefinite we can always write $\mat{A} = \alpha
\bar{\mat{A}}\bar{\mat{A}}^\T$ such that $\Tr \left( \bar{\mat{A}}
\bar{\mat{A}}^\T \right) = \dim$ and $\alpha \geq 0$. Then, it can be checked that
\begin{IEEEeqnarray}{rCl}
\q_{\mat{A}}(\rvec{x}, \sigma^2, \scalart)
 & = & \frac{\sigma^2}{1 + \sigma^2\scalart} \Tr \left( \mat{A} \right) - \Tr
\left( \mat{A} \MSE{x}(\scalart) \right) \\
 & = & \alpha \left( \dim \frac{\sigma^2}{1 + \sigma^2\scalart} - \Tr \left(
\bar{\mat{A}}^\T \MSE{x}(\scalart) \bar{\mat{A}} \right) \right) \\
 & = & \alpha \left( \dim \frac{\sigma^2}{1 + \sigma^2\scalart} - \Tr \left(
\MSE{{\bar{\mat{A}}^\T x}}(\scalart) \right) \right) \\
 & = & \alpha \q_{\mat{I}_\dim}(\hat{\rvec{x}}, \sigma^2, \scalart)
\label{eq:AequivI}
\label{eq:equiv}
\end{IEEEeqnarray}
where we have defined $\hat{\rvec{x}} = \bar{\mat{A}}^\T \rvec{x}$. Now, from
(\ref{eq:AequivI}) and the fact that $\alpha \geq 0$, the desired result follows.
\begin{flushright}
\IEEEQED
\end{flushright}

\subsubsection{Proof of Lemma \ref{lem:GaussianOnTop}} \label{app:GaussianOnTop}
Let us consider the random vector $\rvec{x} \in \R^\dim$, whose covariance is
given by $\Cov{x}$ and denote its eigenvalues by $\egvaliCov{x}{i}$. Recalling
the model in (\ref{eq:modelSimplest}), it is well known that
$\MSE{x}(\scalart) \preceq \Cov{x} - \scalart\Cov{x} ( \scalart
\Cov{x} + \mat{I}_\dim)^{-1} \Cov{x}$, \cite{MMSE}. Thus, we have that
\begin{align}
\Tr\left( \MSE{x}(\scalart) \right)
 &\leq \Tr \left( \Cov{x} - \scalart\Cov{x} ( \scalart \Cov{x} +
\mat{I}_\dim)^{-1} \Cov{x} \right) \label{eq:linearEstimator1} \\
 &= \sum_{i=1}^{\dim}  \left( \egvaliCov{x}{i} - \frac{\scalart
\egvaliCov{x}{i}^2}{ 1 + \scalart \egvaliCov{x}{i}} \right)
\label{eq:linearEstimator2} \\
 &= \sum_{i=1}^{\dim} \frac{\egvaliCov{x}{i}}{1 + \scalart \egvaliCov{x}{i}}.
\label{eq:linearEstimator3}
\end{align}

Now, realizing that the right hand side in (\ref{eq:linearEstimator3}) is a
Schur-concave function (it follows directly from the concavity of
$\frac{\lambda}{1 + \scalart \lambda}$) and that, from the statement of
Lemma \ref{lem:GaussianOnTop}, we have that $\sum_{i=1}^{\dim} \egvaliCov{x}{i}
\leq \dim \sigma^2$, it follows directly from majorization theory
\cite{marshall1979inequalities} that the right hand side in
(\ref{eq:linearEstimator3}) is maximized when $\egvaliCov{x}{i}$ are uniformly
distributed, \ie, $\egvaliCov{x}{i} = \sigma^2$.
\begin{flushright}
\IEEEQED
\end{flushright}

\subsubsection{Proof of Lemma \ref{lem:dqAdgamma}} \label{app:dqAdgamma}
From the definition in (\ref{eq:defqA}), it follows that
\begin{gather} \label{eq:dqAdgamma_prev}
\Jacob_\scalart \q_{\mat{A}}(\rvec{x}, \sigma^2, \scalart) = -
\frac{\sigma^4}{(1 + \sigma^2\scalart)^{2}} \Tr(\mat{A}) - \Jacob_\scalart \Tr
\left( \mat{A} \MSE{x}(\scalart) \right).
\end{gather}
The expression for $\Jacob_\scalart \Tr \left( \mat{A}
\MSE{x}(\scalart) \right)$ can be computed from the
results in \cite{Palomar2} and applying the chain rule as
\begin{IEEEeqnarray}{rCl}
\Jacob_\scalart \Tr \left( \mat{A}
\MSE{x}(\scalart) \right)
& = & \Jacob_{\MSE{x}(\scalart)} \Tr \left( \mat{A} \MSE{x}(\scalart) \right)
\cdot \Jacob_\Chan \MSE{x}(\scalart) \cdot \Jacob_\scalart \Chan \\
& = & \vecop^\T \big( \mat{A}^\T \big) \Dup{\dim} \big(- 2 \Dup{\dim}^\pinv
\Esp{\CMSE{x}{y} \otimes \CMSE{x}{y}} \big(\mat{I}_\dim \otimes \Chan^\T \big)
\big) \frac{1}{2\sqrt{\scalart}} \vecop(\mat{I}_\dim) \\
& = & - \vecop^\T \big( \mat{A}^\T \big) \Sym{\dim} \Esp{\CMSE{x}{y} \otimes
\CMSE{x}{y}} \vecop(\mat{I}_\dim) \\
& = & -\Tr \left( \mat{A} \Esp{\CMSE{x}{y}^2} \right) \label{eq:dqAdgamma_ap}
\end{IEEEeqnarray}
where we have used that $\Chan = \sqrt{\gamma} \mat{I}_\dim$, $\Sym{\dim}
\Esp{\CMSE{x}{y} \otimes \CMSE{x}{y}} = \Esp{\CMSE{x}{y} \otimes \CMSE{x}{y}}
\Sym{\dim}$, and $\Sym{\dim} \vecop(\mat{I}_\dim) = \vecop (\mat{I}_\dim)$ (see
\cite[App.~A]{Palomar2} for the definitions of the matrices $\Dup{\dim}$ and
$\Sym{\dim}$ and some of their properties).

Plugging (\ref{eq:dqAdgamma_ap}) in (\ref{eq:dqAdgamma_prev}), the desired
result follows.
\begin{flushright}
\IEEEQED
\end{flushright}

\subsubsection{Proof of Lemma \ref{lem:indGaussianOnTop}}
\label{app:indGaussianOnTop}

For any arbitrarily distributed random vector $\rvec{x}$, with zero mean
(assumed w.l.o.g.) and covariance matrix given by $\Cov{x}$, it is well known
that $\MSE{x}(\vectort) \preceq \EM_{G}(\Cov{x}, \vectort)$, from which it follows
that \cite[Obs.~7.1.2]{Horn}
\begin{eqnarray} \label{eq:Lemma1_1}
[ \MSE{x}(\vectort) ]_{ii} \leq [ \EM_{G}(\Cov{x}, \vectort) ]_{ii}
\end{eqnarray}
where we recall that $\EM_{G}(\Cov{x}, \vectort)$ is the MMSE matrix attained
assuming a zero mean Gaussian input with covariance matrix equal to $\Cov{x}$.
Observe that equality in (\ref{eq:Lemma1_1}) is attained if and only if
$\rvec{x} \sim \NZ{\Cov{x}}$.

Furthermore, from the fact that dependence among entries can only improve the
MMSE, we have:
\begin{eqnarray} \label{eq:Lemma1_2}
[ \EM_{G}(\Cov{x}, \vectort) ]_{ii} \leq [
\EM_{G}(\mat{I}_\dim \circ \Cov{x}, \vectort) ]_{ii} = \frac{[\Cov{x}]_{ii}}{1 +
[\Chan(\vectort)]_{ii}^2 [\Cov{x}]_{ii}}
\end{eqnarray}
where $\EM_{G}(\mat{I}_\dim \circ \Cov{x}, \vectort)$ represents the MMSE matrix
when the entries of the input vector are independent Gaussian random
variables (thus, with diagonal covariance matrix). Observe that
equality in (\ref{eq:Lemma1_2}) is obtained if and only if the entries of the
Gaussian distribution in the left hand side are independent.

Now, the desired result follows immediately from the fact that the right
hand side in (\ref{eq:Lemma1_2}) is an increasing function of $[\Cov{x}]_{ii}$.
\begin{flushright}
\IEEEQED
\end{flushright}


\subsubsection{Proof of Lemma \ref{lem:LemmaLowerBound}} \label{app:LemmaLowerBound}

We first provide the derivative of the MMSE with respect to the parameter $\vectort$.
Using equation (\ref{eq:proofThmSingleCrossingInd}), we have
\begin{eqnarray} \label{eq:chainRuleDiagGeneralized}
\Jacob_{\vectort} \left[ \MSE{x}(\vectort) \right]_{ij} & = & \sum_{l}
\Jacob_{ \left[ \Chan(\vectort) \right]_{ll} } \left[ \MSE{x}(\vectort) \right]_{ij}
\left[ \Jacob_{\vectort} \Chan(\vectort) \right]_{ll} \textrm{.} 
\end{eqnarray}
Using the result (\cite[eq.~(131)]{Palomar2}),
\begin{eqnarray} \label{eq:resultPalomar2_generalGaussian}
\Jacob_{\left[ \Chan(\vectort) \right]_{ll}} \left[ \MSE{x}(\vectort) \right]_{ij} & = & - \Esp[1]{
\left[ \CMSE{x}{y} \right]_{jl} \left[ \CMSE{x}{y}
\Chan(\vectort)^\T \right]_{il} + \left[ \CMSE{x}{y} \right]_{il} \left[
\CMSE{x}{y} \Chan(\vectort)^\T\right]_{jl} } \nonumber \\
& = & - \Esp[1]{   \left[ \CMSE{x}{y} \right]_{jl}
\left[\CMSE{x}{y}\right]_{il} \left[ \Chan(\vectort) \right]_{ll}
+ \left[ \CMSE{x}{y} \right]_{il} \left[ \CMSE{x}{y} \right]_{jl}
\left[ \Chan(\vectort) \right]_{ll}  } \nonumber \\
& = & -2 \left[ \Chan(\vectort) \right]_{ll} \mathbb{E} \left\{ \left[\CMSE{x}{y} \right]_{jl} \left[\CMSE{x}{y} \right]_{il} \right\}
\end{eqnarray}
where $\CMSEr{x}{y}$ was defined in (\ref{eq:CMMSE_matrix}).
The second
equality in equation (\ref{eq:resultPalomar2_generalGaussian}) is due to the fact
that $\Chan(\vectort)$ is diagonal. Thus, we can write the derivative of $\left[ \MSE{x}(\vectort) \right]_{ij}$ as
\begin{eqnarray} \label{eq:derivativeEij}
\Jacob_{\vectort} \left[ \MSE{x}(\vectort) \right]_{ij} & = & -2 \sum_{l}  \left[ \Chan(\vectort) \right]_{ll} \Esp[1]{  \left[ \CMSE{x}{y} \right]_{jl} \left[ \CMSE{x}{y} \right]_{il} } \left[ \Jacob_{\vectort} \Chan(\vectort) \right]_{ll} \nonumber \\
& = & -2 \sum_{l} \left[\ChanDChan(\vectort) \right]_{ll} \Esp[1]{ \left[ \CMSE{x}{y} \right]_{jl} \left[ \CMSE{x}{y} \right]_{il} }
\end{eqnarray}
since $\left[ \ChanDChan(\vectort) \right]_{ll} =
\left[ \Chan(\vectort)  \right]_{ll} \left[ \Jacob_{\vectort} \Chan( \vectort) \right]_{ll}$ (\ref{eq:definitionB}). We can put this
expression into a matrix form as follows:
\begin{eqnarray} \label{eq:derivative_matrixE}
\Jacob_{\vectort} \MSE{x}(\vectort) = -2 \sum_{l} \left[ \ChanDChan(\vectort) \right]_{ll}
\Esp[1]{  \left[ \CMSE{x}{y} \right]_l \left[ \CMSE{x}{y} \right]_{l}^\T }
\end{eqnarray}
where $\left[ \CMSEr{x}{y} \right]_{l}$ is the $l^{th}$ column of the
matrix $\CMSEr{x}{y}$. Using the fact that for
a Gaussian input distribution $\CMSEr{x}{y}$ does not depend
on $\rvec{y}$ and thus $\CMSEr{x}{y} = \Esp[1]{ \CMSEr{x}{y}  } = \EM_{G}(\vectort)$
\cite{Palomar2}, we can obtain the following lower bound on the derivative of
the matrix $\Q(\rvec{x}, \Cov{x_G},\vectort)$:
\begin{eqnarray} \label{eq:matrixhDerivative}
\Jacob_{\vectort} \Q(\rvec{x}, \Cov{x_G},\vectort)
& = & 2 \sum_{l} \left[ \ChanDChan(\vectort)  \right]_{ll} \left( \Esp[1] { \left[ \CMSE{x}{y} \right]_l \left[ \CMSE{x}{y} \right]_{l}^\T } - \left[ \EM_G \right]_l \left[ \EM_G \right]_l^\T \right) \nonumber \\
& \succeq & 2 \sum_{l} \left[ \ChanDChan(\vectort)  \right]_{ll} \left( \Esp[1] {
\left[ \CMSE{x}{y} \right]_l } \Esp[1] { \left[ \CMSE{x}{y} \right]_{l} }^\T - \left[ \EM_G \right]_l \left[\EM_G\right]_l^\T \right) \nonumber \\
& = & 2 \sum_{l} \left[ \ChanDChan(\vectort)  \right]_{ll} \left( {\MSE{x}}_l {\MSE{x}}_l^\T  - \left[ \EM_G \right]_l {\left[ \EM_G \right]_l}^\T \right) \nonumber \\
& = & 2 \left( \MSE{x}(\vectort) \ChanDChan(\vectort)
\MSE{x}^\T(\vectort) - \EM_G(\vectort) \ChanDChan(\vectort)
\EM_G^\T(\vectort) \right) \nonumber
\end{eqnarray}
where the inequality is due to Jensen. This concludes the proof of the lemma.
\begin{flushright}
\IEEEQED
\end{flushright}

\subsubsection{Proof of Lemma \ref{lem:simultaneouslyD}} \label{app:simultaneouslyD}

Since $\mat{A}$ and $\mat{B}$ are two general positive semidefinite matrices,
the dimension of the intersection of their null spaces, denoted by $N(\cdot)$
fulfills
\begin{gather}
\opn{dim} N(\mat{A}) \cap N(\mat{B}) = k, \quad  0 \leq k \leq n.
\end{gather}
Let $\{ \vec{u}_1, \ldots, \vec{u}_n \}$ be an orthonormal basis of the
$\dim$-dimensional space such that $\{ \vec{u}_1, \ldots, \vec{u}_k \}$ is an
orthonormal basis of $N(\mat{A}) \cap N(\mat{B})$ and define $\mat{U} =
[\vec{u}_1 \ldots \vec{u}_n]$. We thus have
\begin{gather}
\mat{U}^\T \mat{A} \mat{U} = \left(
\begin{array}{cc}
\mat{0} & \mat{0} \\
\mat{0} & \mat{A}'
\end{array}\right), \quad
\mat{U}^\T \mat{B} \mat{U} = \left(
\begin{array}{cc}
\mat{0} & \mat{0} \\
\mat{0} & \mat{B}'
\end{array}\right),
\end{gather}
where $\mat{A}'$ and $\mat{B}'$ are the non-zero $(\dim-k)\times(\dim-k)$ lower right
square sub-matrices of $\mat{U}^\T \mat{A} \mat{U}$ and $\mat{U}^\T \mat{B}
\mat{U}$, respectively. Observe that now we have $N(\mat{A}') \cap N(\mat{B}') =
\{ \emptyset \}$.

Now, from \cite[Sec.~4.5,Prob.~8(e)]{Horn}, we have that $\mat{A}$ and
$\mat{B}$ are simultaneously diagonalizable by an invertible matrix $\mat{S}$
if and only if $\mat{A}'$ and $\mat{B}'$ are also simultaneously
diagonalizable. Consequently, we have reduced our proof to showing the
simultaneous diagonalization of two positive semidefinite matrices such that the
dimension of the intersection of their null spaces is 0.

From this point, we can thus assume the following:
\begin{align}
 \mat{A} = \widetilde{\mat{A}}^\T \widetilde{\mat{A}} &\succeq \mat{0},
\label{eq:ATA} \\
 \mat{B} = \widetilde{\mat{B}}^\T \widetilde{\mat{B}} &\succeq \mat{0}, \\
 \opn{dim} N(\mat{A}) \cap N(\mat{B}) &=  0. \label{eq:NAcapNB}
\end{align}

The next step is to prove that $\mat{A}$ and $\mat{B}$ have no common isotropic
vector, which is defined in \cite[Def.~1.7.14]{horn:91} as a vector $\vec{x}
\neq \vecs{0}$ such that $\vec{x}^\T \mat{A} \vec{x} = 0$ and $\vec{x}^\T
\mat{B} \vec{x} = 0$ are both simultaneously fulfilled.

Using the expression in (\ref{eq:ATA}), we have that
\begin{gather}
\vec{x}^\T \mat{A} \vec{x} = 0 \Leftrightarrow \vec{x}^\T \widetilde{\mat{A}}^\T
\widetilde{\mat{A}} \vec{x} = 0 \Leftrightarrow \widetilde{\mat{A}} \vec{x} =
0,
\end{gather}
which can also be applied to $\vec{x}^\T \mat{B} \vec{x} = 0$. Consequently, if
a vector $\vec{x}$ fulfills $\vec{x}^\T \mat{A} \vec{x} = 0$ and
$\vec{x}^\T \mat{B} \vec{x} = 0$, we have necessarily that $\vec{x} \in
N(\widetilde{\mat{A}}) \cap N(\widetilde{\mat{B}})$. However, since
$N(\widetilde{\mat{A}}) \subseteq N(\mat{A})$ and, similarly,
$N(\widetilde{\mat{B}}) \subseteq N(\mat{B})$, from (\ref{eq:NAcapNB}) we have
that $\opn{dim} N(\widetilde{\mat{A}}) \cap N(\widetilde{\mat{B}}) = 0$, which
implies that $\mat{A}$ and $\mat{B}$ have no common isotropic
vector. Now, from \cite[Th.~1.7.17]{horn:91} we have that $\mat{A}$
and $\mat{B}$ are simultaneously diagonalizable.
\begin{flushright}
\IEEEQED
\end{flushright}

\subsubsection{Proof of Lemma \ref{lem:amendment}} \label{app:amendment}

For this proof we require the following result:
\begin{lem}
Let's consider two positive semidefinie matrices $\mat{A}$ and $\mat{B}$. Then we have
\begin{gather}
\mu_{\max} ( \mat{A} - \mat{B}) \geq \mu_{\max}( \mat{A}) -
\mu_{\max} ( \mat{B} )
\end{gather}
where we recall that $\mu_{\max}(\mat{A})$ denotes the maximum eigenvalue of
matrix $\mat{A}$.
\end{lem}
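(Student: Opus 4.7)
The plan is a one-line application of the variational (Rayleigh-quotient) characterization of the largest eigenvalue of a symmetric matrix,
$\mu_{\max}(\mat{M}) = \sup_{\|\vec{x}\|=1} \vec{x}^\T \mat{M}\vec{x}$.
First I would let $\vec{v}$ be a unit eigenvector of $\mat{A}$ corresponding to $\mu_{\max}(\mat{A})$ and use the variational characterization applied to $\mat{A}-\mat{B}$ to get
\begin{equation*}
\mu_{\max}(\mat{A}-\mat{B}) \;\geq\; \vec{v}^\T (\mat{A}-\mat{B}) \vec{v} \;=\; \mu_{\max}(\mat{A}) - \vec{v}^\T \mat{B}\vec{v}.
\end{equation*}
The second step is to upper-bound $\vec{v}^\T \mat{B}\vec{v} \leq \mu_{\max}(\mat{B})$, which is again immediate from the Rayleigh characterization applied to $\mat{B}$ at the unit vector $\vec{v}$. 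Substituting this bound into the previous display yields $\mu_{\max}(\mat{A}-\mat{B}) \geq \mu_{\max}(\mat{A}) - \mu_{\max}(\mat{B})$, which is exactly the claim.

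Equivalently, the inequality is nothing more than a rearrangement of Weyl's subadditivity of the maximum eigenvalue for symmetric matrices. Writing $\mat{A} = (\mat{A}-\mat{B}) + \mat{B}$ and applying Weyl gives $\mu_{\max}(\mat{A}) \leq \mu_{\max}(\mat{A}-\mat{B}) + \mu_{\max}(\mat{B})$, and moving $\mu_{\max}(\mat{B})$ to the left-hand side produces the desired inequality. Because each step is a direct appeal to the variational principle for eigenvalues of symmetric matrices, I do not anticipate a genuine obstacle: the positive semidefiniteness of $\mat{A}$ and $\mat{B}$ is in fact not used beyond ensuring that the matrices are symmetric (so that $\mu_{\max}$ is real and the Rayleigh formula applies). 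The PSD assumption simply reflects the form in which these matrices appear in the outer Lemma~\ref{lem:amendment}, where they are built from $\mat{D}_i\mat{A}\mat{D}_i$ with $\mat{D}_1,\mat{D}_2 \succeq \mat{0}$.
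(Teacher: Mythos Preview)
Your proof is correct and essentially identical to the paper's. The paper simply cites Weyl's inequality \cite[Th.~4.3.1]{Horn} together with $\mu_{\min}(-\mat{B}) = -\mu_{\max}(\mat{B})$, which is exactly your second formulation; your first (Rayleigh-quotient) argument is just the standard one-line proof of that special case of Weyl. Your remark that positive semidefiniteness is not actually needed---only symmetry---is also correct.
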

\begin{IEEEproof}
The proof follows directly from \cite[Th.~4.3.1]{Horn} recalling that
$\mu_{\min}( - \mat{B}) = - \mu_{\max}( \mat{B})$.
\end{IEEEproof}
Now, it is clear that for a positive semidefinie matrix $\mat{A}$ and two
positive semidefinite diagonal matrices $\mat{D}_1$ and $\mat{D}_2$ we have
that $\mat{D}_i \mat{A} \mat{D}_i \succeq \mat{0}$, $i = 1, 2$. Now,
from the above lemma we have,
\begin{eqnarray}
\mu_{\max} (\mat{D}_1 \mat{A} \mat{D}_1 - \mat{D}_2 \mat{A} \mat{D}_2 )
& \geq & \mu_{\max} (\mat{D}_1 \mat{A} \mat{D}_1) - \mu_{\max}
(\mat{D}_2 \mat{A} \mat{D}_2) \\ \nonumber
&=& \mu_{\max} (\mat{A}^{\frac{1}{2}}\mat{D}_1^2 \mat{A}^{\frac{1}{2}} ) -
\mu_{\max} (\mat{A}^{\frac{1}{2}}\mat{D}_2^2 \mat{A}^{\frac{1}{2}})
\end{eqnarray}
where the last equality follows from \cite[Th.~1.3.20]{Horn} and the remark,
for square matrices, in the paragraph preceding it. Finally, since $\mat{D}_1
\succeq \mat{D}_2 \succeq \mat{0}$ and they are both diagonal, we have that
$\mat{D}_1^2 \succeq \mat{D}_2^2 \succeq \mat{0}$ and, using
\cite[Obs.~7.7.2]{Horn} and \cite[Cor.~7.7.4]{Horn} we can write,
\begin{gather}
\mu_{\max} (\mat{A}^{\frac{1}{2}}\mat{D}_1^2 \mat{A}^{\frac{1}{2}} ) \geq
\mu_{\max} (\mat{A}^{\frac{1}{2}}\mat{D}_2^2 \mat{A}^{\frac{1}{2}})
\end{gather}
from which the desired result follows.
\begin{flushright}
\IEEEQED
\end{flushright}

\subsubsection{Proof of Lemma \ref{lem:LemmaLowerBoundConditioned}} \label{app:LemmaLowerBoundConditioned}

We extend the lower bound derived in Lemma
\ref{lem:LemmaLowerBound} to the conditioned case, that is, we assume $\rvec{u}- \rvec{x} - \rvec{y}$.
From (\ref{eq:derivative_matrixE}), for the conditioned case we have the following:
\begin{eqnarray} \label{eq:derivative_matrixE_conditioned}
\Jacob_{\vectort} \MSE{x|u}(\vectort,\rvecr{u}) & = & -2 \sum_{l} \left[ \ChanDChan(\vectort) \right]_{ll}
\Esp[1] { \left[ \CMSE{\rvec{x}_{\rvecr{u}}}{y} \right]_l \left[ \CMSE{\rvec{x}_{\rvecr{u}}}{y} \right]_l^\T } \nonumber \\
& = & -2 \sum_{l} \left[ \ChanDChan(\vectort) \right]_{ll} \Esp[1] { \left[\CMSE{x}{y,\rvec{u}=\rvecr{u}} \right]_l \left[\CMSE{x}{y,\rvec{u}=\rvecr{u}} \right]_l^\T } \textrm{.}
\end{eqnarray}
Taking expectation according to $\rvec{u}$ on both sides we have:
\begin{eqnarray} \label{eq:derivative_matrixE_conditionedExpectation}
\Jacob_{\vectort} \MSE{x|u}(\vectort) = \Esp[1] {  \Jacob_{\vectort} \MSE{x|u}(\vectort,\rvec{u}) }
=  -2 \sum_{l} \left[ \ChanDChan(\vectort) \right]_{ll} \Esp[1]{ \left[ \CMSE{x}{y,u} \right]_l \left[\CMSE{x}{y,u} \right]_{l}^\T } \textrm{.}
\end{eqnarray}
The derivative of $\Q(\rvec{x}|\rvec{y}, \Cov{x_g}, \vectort)$ is then given by:
\begin{eqnarray} \label{eq:derivativeQ_conditioned}
\Jacob_{\vectort} \Q(\rvec{x}|\rvec{y}, \Cov{x_g}, \vectort) & = & 2 \sum_{l} \left[ \ChanDChan(\vectort) \right]_{ll}
\left( \Esp[1] {  \left[ \CMSE{x}{y,u} \right]_l \left[\CMSE{x}{y,u} \right]_{l}^\T } -  \Esp[1]{  \left[ \CMSE{x_g}{y} \right]_l \left[ \CMSE{x_g}{y} \right]_{l}^\T }\right)  \nonumber \\
& = & 2 \sum_{l} \left[\ChanDChan(\vectort) \right]_{ll} \left( \Esp[1] {
\left[ \CMSE{x}{y,u}\right]_l \left[ \CMSE{x}{y,u} \right]_{l}^\T } -  \left[ \EM_G(\vectort) \right]_l \left[ \EM_G(\vectort) \right]_l^\T \right)  \nonumber \\
& \succeq & 2 \sum_{l} \left[ \ChanDChan(\vectort) \right]_{ll} \left( \Esp[1] {
\left[ \CMSE{x}{y,u} \right]_l } \Esp[1] { \left[ \CMSE{x}{y,u} \right]_{l}^\T }-  \left[\EM_G(\vectort) \right]_l \left[ \EM_G(\vectort)\right]_l^\T \right)  \nonumber \\
& = & 2 \sum_{l} \left[ \ChanDChan(\vectort) \right]_{ll} \left( \left[ \MSE{x|u}(\vectort) \right]_l \left[ \MSE{x|u}(\vectort) \right]_l^\T -  \left[ \EM_G(\vectort) \right]_l \left[ \EM_G(\vectort) \right]_l^\T \right)  \nonumber \\
& = & 2 \left( \MSE{x|u}(\vectort) \ChanDChan(\vectort) \MSE{x|u}^\T(\vectort)
- \EM_G(\vectort) \ChanDChan(\vectort) \EM_G^\T(\vectort)   \right)
\end{eqnarray}
where the inequality is due to Jensen. This completes the proof of the lemma.
\begin{flushright}
\IEEEQED
\end{flushright}

\subsubsection{Proof of Lemma \ref{lem:GaussianExistenceConditioned}} 
\label{app:GaussianExistenceConditionedProof}
We first claim that w.l.o.g. we can restrict the proof to $\Chan(\vectortEq) = \Identity$. This is shown by redefining $\widetilde{\rvec{x}} = \Chan(\vectortEq) \rvec{x}$. Now, if $\Chan(\vectortEq)$ is non-singular, then this redefinition does not change the mutual information \ie, $\Icond{\widetilde{\rvec{x}}}{\rvec{y}(\vectortEq)}{\rvec{u}} = \Icond{{\rvec{x}}}{\rvec{y}(\vectortEq)}{\rvec{u}}$, and requirements \ref{req:covariance} and \ref{req:Q} are preserved under any congruent transformation. If $\Chan(\vectortEq)$ is singular, the problem can first be reduced in size, since $\Chan(\vectortEq)$ is diagonal for all $\vectort$. Thus, from this point on, we will assume $\Chan(\vectortEq) = \Identity$.

We provide a constructive proof, and show how one can build a Gaussian input distribution such that all three requirements are fulfilled. We begin by rewriting requirement \ref{req:covariance} as a condition on the matrix $\Q( \rvec{x}|\rvec{u}, \Cov{x_g}, \vectortEq)$ rather then on the covariance matrix $\Cov{x_g}$. We do so by defining a new matrix, which is the distance of the MMSE matrix $\MSE{x|u}(\vectortEq)$ from the linear MSE matrix $\MSElin{x}(\vectortEq)$. We proceed by showing that, there exists a fraction such that, by defining $\Q( \rvec{x}|\rvec{u}, \Cov{x_g}, \vectortEq)$ to be that fraction of the newly defined matrix, we comply also with requirement \ref{req:MI}.

As explained above, we begin by rewriting requirement \ref{req:covariance} in terms of the matrix $\Q( \rvec{x}|\rvec{u}, \Cov{x_g}, \vectortEq)$.
Requirement \ref{req:Q} is already a requirement on the matrix $\Q( \rvec{x}|\rvec{u}, \Cov{x_g}, \vectortEq)$ and is as follows,
\begin{eqnarray} \label{eq:definingJ}
\Q( \rvec{x}|\rvec{u}, \Cov{x_g}, \vectortEq) = \EM_G(\vectortEq) - \MSE{x|u}(\vectortEq) \succeq \mat{0}.
\end{eqnarray}
The MMSE for the Gaussian input is:
\begin{align} \label{eq:MMSE_G}
\EM_G(\vectortEq) &= \Cov{x_g} - \Cov{x_g}(\Cov{x_g} + \Identity)^{-1} \Cov{x_g} \nonumber \\
&= \Cov{x_g} - \Cov{x_g}(\Cov{x_g} + \Identity)^{-1} (\Cov{x_g} + \Identity) + \Cov{x_g}(\Cov{x_g} + \Identity)^{-1}  \nonumber \\
&= \Cov{x_g}(\Cov{x_g} + \Identity)^{-1} \nonumber \\
&= (\Cov{x_g} + \Identity)(\Cov{x_g} + \Identity)^{-1} - (\Cov{x_g} + \Identity)^{-1} \nonumber \\
&= \Identity - (\Cov{x_g} + \Identity)^{-1} .
\end{align}
From equation (\ref{eq:definingJ}) $\Cov{x_g}$ complies with
the following:
\begin{align} \label{eq:Cg}
(\Cov{x_g} + \Identity)^{-1} = \Identity -
\MSE{x|u}(\vectortEq) - \Q( \rvec{x}|\rvec{u}, \Cov{x_g}, \vectortEq).
\end{align}
Note that the above equation connects $\Q( \rvec{x}|\rvec{u}, \Cov{x_g}, \vectortEq)$ with $\Cov{x_g}$. Thus, given a specific substitution of $\Q( \rvec{x}|\rvec{u}, \Cov{x_g}, \vectortEq)$ we have a complete definition of the Gaussian input distribution.
Similarly, the MMSE assuming an optimal linear estimator of $\rvec{x}$ (only from
$\rvec{y}(\vectortEq)$) is given by:
\begin{align} \label{eq:MMSE_L}
\MSElin{x} = \Identity - (\Cov{x} + \Identity)^{-1}
\end{align}
and we have that,
\begin{eqnarray} \label{eq:LinearUpperBound}
\MSE{x|u}(\vectortEq) \preceq \MSElin{x}(\vectortEq) \quad \forall \vectort \textrm{.}
\end{eqnarray}
Thus, we can define:
\begin{align} \label{eq:definingC}
\mat{C} & \equiv \MSElin{x}(\vectortEq)
- \MSE{x|u}(\vectortEq) = \Identity - (\Cov{x} + \Identity)^{-1} - \MSE{x|u}(\vectortEq) \succeq \mat{0} \nonumber \\
\MSE{x|u}(\vectortEq) & = \Identity - (\Cov{x} +
\Identity)^{-1} - \mat{C}, \quad \mat{C} \succeq
\mat{0} \textrm{.}
\end{align}
Note that $\mat{C}$ is completely defined by the input random vector, $\rvec{x}$.
Inserting (\ref{eq:definingC}) into equation (\ref{eq:Cg}) we have:
\begin{align} \label{eq:Cg2}
(\Cov{x_g} + \Identity)^{-1} & = \Identity - \left[\Identity - (\Cov{x} + \Identity)^{-1} - \mat{C}\right] - \Q( \rvec{x}|\rvec{u}, \Cov{x_g}, \vectortEq) \nonumber \\
& =  (\Cov{x} + \Identity)^{-1} + \mat{C} -
\Q( \rvec{x}|\rvec{u}, \Cov{x_g}, \vectortEq), \quad \Q( \rvec{x}|\rvec{u}, \Cov{x_g}, \vectortEq) \succeq \mat{0}, \quad
\mat{C} \succeq \mat{0} \textrm{.}
\end{align}
We now require the following supporting lemma,
\begin{lem} \label{lem:existenceOfR_gMin}
Assume $\rvec{x} \in \R^{\dim}$ is an arbitrary distributed random vector. For any $t' \in [0, \infty)$
there exists a Gaussian random vector, $\rvec{x_g}$, with covariance matrix $\Cov{x_g}$ such that,
\begin{enumerate}
\item $\mat{0} \preceq \Cov{x_g} \preceq \Cov{x}$ \label{proof:req:Rmin}
\item $\Q( \rvec{x}|\rvec{u}, \Cov{x_g}, \vectort') = \mat{0}$ \label{proof:req:Qeq0}
\item $\Igen{\rvec{x_g}}{\rvec{y_g}(\vectort')} \leq \Icond{\rvec{x}}{\rvec{y}(\vectort')}{\rvec{u}}$ \label{proof:req:MI}
\end{enumerate}
\end{lem}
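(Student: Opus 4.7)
The strategy is to construct $\Cov{x_g}$ explicitly as the unique Gaussian covariance matching the conditional MMSE matrix at $\vectort'$, verify the covariance ordering, and then establish the mutual-information inequality via the conditional I-MMSE identity combined with the single-crossing property of the eigenvalues of $\Q$.

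First, as in the opening of the proof of Lemma \ref{lem:GaussianExistenceConditioned}, I may assume without loss of generality that $\Chan(\vectort') = \Identity$: when $\Chan(\vectort')$ is nonsingular, the invertible substitution $\widetilde{\rvec{x}} = \Chan(\vectort')\rvec{x}$ leaves the conditional mutual information invariant and preserves each of the required PSD orderings under congruence (and the diagonal path $\Chan(\tau)$ is simply re-scaled, staying diagonal and monotone); when $\Chan(\vectort')$ is singular, the annihilated coordinates contribute nothing and can be discarded. Under this reduction, the matching equation $\EM_{G}(\Cov{x_g}, \vectort') = (\Cov{x_g}^{-1} + \Identity)^{-1} = \MSE{x|u}(\vectort')$ uniquely determines
\begin{eqnarray*}
\Cov{x_g} = (\MSE{x|u}(\vectort')^{-1} - \Identity)^{-1},
\end{eqnarray*}
so property \ref{proof:req:Qeq0} holds by construction.

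Next, for property \ref{proof:req:Rmin}, the linear-MMSE upper bound $\MSE{x|u}(\vectort') \preceq \MSE{x}(\vectort') \preceq \MSElin{x}(\vectort') = \Identity - (\Cov{x} + \Identity)^{-1} \prec \Identity$ guarantees $\MSE{x|u}(\vectort')^{-1} \succeq \Identity + \Cov{x}^{-1}$, so $\Cov{x_g}$ is well-defined and positive definite, and $\Cov{x_g}^{-1} \succeq \Cov{x}^{-1}$, equivalently $\Cov{x_g} \preceq \Cov{x}$.

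The third property is the main obstacle. I apply the conditional I-MMSE identity (\ref{eq:lineIntegral_cond2}) to express
\begin{eqnarray*}
\Igen{\rvec{x_g}}{\rvec{y_g}(\vectort')} - \Icond{\rvec{x}}{\rvec{y}(\vectort')}{\rvec{u}} = \int_0^{\vectort'} \Tr\bigl(\ChanDChan(\tau) \Q(\rvec{x}|\rvec{u}, \Cov{x_g}, \tau)\bigr) d\tau,
\end{eqnarray*}
and aim to show this is nonpositive. At $\tau = \vectort'$ the integrand vanishes by construction; at $\tau = 0$, applying Jensen's inequality to the operator-concave map $R \mapsto (R^{-1} + \Identity)^{-1}$ together with the pointwise conditional linear-MMSE bound $\MSE{x|u=\rvecr{u}}(\vectort') \preceq (\Cov{x|u=\rvecr{u}}^{-1} + \Identity)^{-1}$ yields $\MSE{x|u}(\vectort') \preceq (\MSE{x|u}(0)^{-1} + \Identity)^{-1}$, hence $\Cov{x_g} \preceq \MSE{x|u}(0)$, giving $\Q(0) \preceq \mat{0}$. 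Expanding the integrand as a sum of eigenvalues of $\ChanDChan(\tau) \Q(\tau)$ and invoking Theorem \ref{thm:BQ}, each such eigenvalue has at most a single negative-to-nonnegative zero crossing and no nonnegative-to-negative crossings; combined with Rellich's continuity (Lemma \ref{lem:rellich}) and the boundary analysis above, I argue that every such eigenvalue stays nonpositive throughout $[0, \vectort']$. The subtle step---and the main technical hurdle---is ruling out an eigenvalue that becomes strictly positive somewhere in $(0, \vectort')$ and returns to zero at $\vectort'$. I would address this by a contradiction argument: such behavior, combined with the ``no nonnegative-to-negative crossing'' property, forces $\Q(\tau) \succeq \mat{0}$ on an interior subinterval ending at $\vectort'$, and the lower bound of Lemma \ref{lem:LemmaLowerBoundConditioned} evaluated at $\tau = \vectort'$, which vanishes precisely because $\EM_G(\vectort') = \MSE{x|u}(\vectort')$, supplies the local monotonicity of $\Q$ at $\vectort'$ needed to contradict $\Q(\vectort') = \mat{0}$.
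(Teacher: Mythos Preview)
Your construction of $\Cov{x_g}$ and the verification of properties \ref{proof:req:Rmin} and \ref{proof:req:Qeq0} are correct and essentially match the paper (the paper writes $\Cov{x_g} = (\Identity - \MSE{x|u}(\vectort'))^{-1} - \Identity$, which is algebraically the same as your expression). Your additional observation that $\Q(0)\preceq\mat{0}$, via operator concavity of $R\mapsto \Identity-(R+\Identity)^{-1}$, is also correct, though the paper does not use it.

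The gap is in property \ref{proof:req:MI}. Your plan is to show that every eigenvalue of $\ChanDChan(\tau)\Q(\tau)$ stays nonpositive on $[0,\vectort']$, using the endpoint facts $\Q(0)\preceq\mat{0}$, $\Q(\vectort')=\mat{0}$, together with the single-crossing property. But the single-crossing property does \emph{not} rule out the scenario you yourself flag: an eigenvalue of $\Q$ may cross from negative to nonnegative at some $a\in(0,\vectort')$, be strictly positive on part of $(a,\vectort')$, and still land at $0$ at $\vectort'$ --- this is fully consistent with ``no nonnegative-to-negative crossing.'' Your proposed resolution via Lemma \ref{lem:LemmaLowerBoundConditioned} does not close the gap: at $\vectort'$ the lower bound indeed gives $\Jacob_\vectort\Q(\vectort')\succeq\mat{0}$, but an eigenvalue can approach $0$ from above with derivative equal to $0$ (e.g.\ like $(\vectort'-\tau)^2$), so no contradiction is produced. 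Nor does the preceding claim ``forces $\Q(\tau)\succeq\mat{0}$ on an interior subinterval ending at $\vectort'$'' follow, since other eigenvalues may remain negative on that subinterval.

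The paper's argument avoids this difficulty by a perturbation/limit trick rather than a direct integrand bound. It first takes any $\Cov{x_g}^\star\prec\Cov{x_g}$, so that $\Q(\rvec{x}|\rvec{u},\Cov{x_g}^\star,\vectort')\prec\mat{0}$ \emph{strictly}. Now every eigenvalue of $\Q^\star$ is strictly negative at $\vectort'$, hence (by the single-crossing property) has not yet had its negative-to-nonnegative crossing and is negative on all of $(0,\vectort']$; Theorem \ref{thm:BQ} then makes the integrand strictly negative and yields $\Igen{\rvec{x_g}^\star}{\rvec{y_g}^\star(\vectort')}<\Icond{\rvec{x}}{\rvec{y}(\vectort')}{\rvec{u}}$. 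Finally one assumes for contradiction that $\Igen{\rvec{x_g}}{\rvec{y_g}(\vectort')}>\Icond{\rvec{x}}{\rvec{y}(\vectort')}{\rvec{u}}$, shrinks all eigenvalues of $\Cov{x_g}$ by a small $\epsilon$ (using continuity of $\tfrac12\sum_i\log(1+\lambda_i)$) to produce a $\Cov{x_g}^\star\prec\Cov{x_g}$ still satisfying $\Igen{\rvec{x_g}^\star}{\rvec{y_g}^\star(\vectort')}\geq\Icond{\rvec{x}}{\rvec{y}(\vectort')}{\rvec{u}}$, contradicting the strict inequality just established. The key point is that the strict ordering $\Q^\star(\vectort')\prec\mat{0}$ is what makes the single-crossing property decisive; with the exact equality $\Q(\vectort')=\mat{0}$ you work with, it is not.
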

\begin{IEEEproof}
See Appendix \ref{app:existenceOfR_gMin}.
\end{IEEEproof}
Note that according to Lemma \ref{lem:existenceOfR_gMin} we have that for $\Q( \rvec{x}|\rvec{u}, \Cov{x_g}, \vectortEq) = \mat{0}$ there exists a Gaussian random vector, $\rvec{x_g}$, which ensures $\Igen{\rvec{x_g}}{\rvec{y_g}(\vectortEq)} \leq \Icond{\rvec{x}}{\rvec{y}(\vectortEq)}{\rvec{u}}$.
On the other hand, if $\Q( \rvec{x}|\rvec{u}, \Cov{x_g}, \vectortEq) = \mat{C}$ we have, according to (\ref{eq:Cg2}),
that $\Cov{x_g} = \Cov{x}$ in which case 
we have $\Igen{\rvec{x_g}}{\rvec{y_g}(\vectortEq)} \geq \Icond{\rvec{x}}{\rvec{y}(\vectortEq)}{\rvec{u}}$. Moreover, from (\ref{eq:Cg2}) we can observe that instead of requirement \ref{req:covariance} \ie, $\Cov{x_g} \preceq \Cov{x}$, we may simply require
$\Q( \rvec{x}|\rvec{u}, \Cov{x_g}, \vectortEq) \preceq \mat{C}$ (\ref{eq:Cg2}), thus requirements \ref{req:covariance} and \ref{req:Q} can be written as follows,
\begin{align} \label{app:lem:GaussianExistenceConditioned:eq13}
\mat{0} \preceq \Q( \rvec{x}|\rvec{u}, \Cov{x_g}, \vectortEq) \preceq \mat{C}
\end{align}
where $\mat{C}$ is defined in
equation (\ref{eq:definingC}).
The question is whether there exists such a
$\Q( \rvec{x}|\rvec{u}, \Cov{x_g}, \vectortEq) \preceq \mat{C}$ that will also attain requirement \ref{req:MI} \ie, $\Igen{\rvec{x_g}}{\rvec{y_g}(\vectortEq)} = \Icond{\rvec{x}}{\rvec{y}(\vectortEq)}{\rvec{u}} \equiv
\alpha$. From the above mentioned we know that,
\begin{eqnarray} \label{eq:boundingAlpha}
\frac{1}{2} \log | \Identity + \Cov{x_g}^1| \leq \alpha
\leq \frac{1}{2} \log | \Identity + \Cov{x_g}^2|
\end{eqnarray}
where,
\begin{align} \label{eq:defineCg1}
\Identity + \Cov{x_g}^1 & = \left( (\Cov{x}
+ \Identity)^{-1} + \mat{C} \right)^{-1}  \\
\label{eq:defineCg2} \Identity + \Cov{x_g}^2 & =
\Identity + \Cov{x} .
\end{align}
Thus, (\ref{eq:boundingAlpha}) can be rewritten as:
\begin{eqnarray} \label{eq:boundingAlpha_2}
\frac{1}{2} \log | \left( (\Cov{x} + \Identity)^{-1} +
\mat{C} \right)^{-1}| \leq & \alpha &
\leq \frac{1}{2} \log | \Identity + \Cov{x}| \nonumber \\
\frac{1}{2} \log \frac{ |\Identity|}{|  (\Cov{x} +
\Identity)^{-1} + \mat{C} |} \leq & \alpha & \leq
\frac{1}{2} \log \frac{|\Identity|}{| (\Identity +
\Cov{x})^{-1}|} .
\end{eqnarray}
We now need the following result,
\begin{lem} \label{lem:r_of_nu}
Let's define the function:
\begin{eqnarray}
r(\nu) = \frac{1}{2} \log
\frac{|\mat{A}|}{|  \mat{B} + \mat{\Delta} \nu
|}.
\end{eqnarray}
For $\mat{A} \succ \mat{0}$,
$\mat{B} \succ \mat{0}$ and $\mat{\Delta}
\succeq \mat{0}$, the function, $r(\nu)$ is continuous and monotonically decreasing
in $\nu$ for $0 \leq \nu \leq 1$.
\end{lem}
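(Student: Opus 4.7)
The plan is to observe that $\frac{1}{2}\log|\mat{A}|$ is a constant in $\nu$, so continuity and monotonicity of $r$ reduce to the corresponding properties of $s(\nu) \triangleq \log|\mat{B} + \mat{\Delta}\nu|$ on $[0,1]$. Concretely, since $r(\nu) = \frac{1}{2}\log|\mat{A}| - \frac{1}{2}s(\nu)$, the claim will follow once I show that $s$ is continuous and monotonically \emph{non-decreasing} on $[0,1]$ (the sign flip turning this into non-increase of $r$).

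For continuity, I would first note that $\nu \mapsto \mat{B} + \mat{\Delta}\nu$ is affine, and that the hypotheses $\mat{B} \succ \mat{0}$ and $\mat{\Delta} \succeq \mat{0}$ give $\mat{B} + \mat{\Delta}\nu \succeq \mat{B} \succ \mat{0}$ for every $\nu \in [0,1]$, so the determinant stays strictly positive on the interval. Composing this affine map with the continuous functions $\det(\cdot)$ and $\log(\cdot)|_{(0,\infty)}$ then yields continuity of $s$, and hence of $r$.

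For monotonicity, the cleanest route is a direct positive-definite order comparison. For any $0 \leq \nu_1 < \nu_2 \leq 1$, $(\mat{B} + \mat{\Delta}\nu_2) - (\mat{B} + \mat{\Delta}\nu_1) = (\nu_2 - \nu_1)\mat{\Delta} \succeq \mat{0}$, so $\mat{B} + \mat{\Delta}\nu_2 \succeq \mat{B} + \mat{\Delta}\nu_1 \succ \mat{0}$. The standard monotonicity of the determinant on the positive-definite cone (\eg, \cite[Cor.~7.7.4]{Horn}) then gives $s(\nu_2) \geq s(\nu_1)$, and hence $r(\nu_2) \leq r(\nu_1)$. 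An equivalent argument would be to differentiate via Jacobi's formula, yielding $s'(\nu) = \Tr\left((\mat{B} + \mat{\Delta}\nu)^{-1}\mat{\Delta}\right) \geq 0$, since the trace of the product of a positive-definite matrix and a positive-semidefinite matrix is non-negative. No step here is delicate; the only subtlety is in verifying that $\mat{B} + \mat{\Delta}\nu$ never leaves the positive-definite cone on $[0,1]$, which is immediate from $\mat{B} \succ \mat{0}$ and $\mat{\Delta}\nu \succeq \mat{0}$, and which is precisely what legitimizes both $\log|\cdot|$ and the invocation of determinant monotonicity.
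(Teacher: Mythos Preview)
Your argument is correct and self-contained: you reduce to monotonicity of $\log|\mat{B}+\nu\mat{\Delta}|$, verify positive definiteness on $[0,1]$, and conclude via either the determinant order on the PSD cone or Jacobi's formula. The paper does not give its own proof here but simply refers to \cite[Lem.~10]{EkremUlukusMIMO}, whose argument is the same standard one you wrote out, so there is no substantive difference in approach.
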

\begin{IEEEproof}
The proof is similar to the proof
of Lemma 10 in \cite{EkremUlukusMIMO}.
\end{IEEEproof}
In our case we have:
\begin{align}
\label{eq:ALemma10} \mat{A} & = \Identity \succ
\mat{0} \\ \label{eq:BLemma10} \mat{B} & =
(\Identity + \Cov{x})^{-1} \succ \mat{0} \\
\label{eq:DeltaLemma10} \mat{\Delta} & = \mat{C}
\succeq \mat{0}
\end{align}
and,
\begin{eqnarray} \label{eq:boundingAlpha2}
\left.\frac{1}{2} \log \frac{| \mat{A}|}{|\mat{B} + \nu
\mat{\Delta}|} \right|_{\nu = 1} \leq \left. \alpha \leq \frac{1}{2}
\log \frac{| \mat{A}|}{|\mat{B} + \nu
\mat{\Delta}|} \right|_{\nu = 0} .
\end{eqnarray}
Thus, according to Lemma \ref{lem:r_of_nu}, there exists a $\nu^\star$ such that, $r(\nu^\star) = \alpha$. That
is,
\begin{align} \label{eq:alphaEquals}
\alpha & = \frac{1}{2} \log \frac{|\mat{A}|}{|\mat{B} +
\nu^\star \mat{\Delta}|} = \frac{1}{2} \log
\frac{|\Identity|}{| (\Cov{x} + \Identity)^{-1} +
\nu^\star \mat{C} |} \nonumber \\
& = \frac{1}{2} \log | \Identity +
\Cov{x_g^\star} | =  \frac{1}{2} \log \frac{|\Identity|}{
|(\Cov{x} + \Identity)^{-1} + \mat{C} -
\Q( \rvec{x}|\rvec{u}, \Cov{x_g}, \vectortEq)^\star |}
\end{align}
where the last equality is due to equation (\ref{eq:Cg2}).
That is,
\begin{eqnarray} \label{eq:settingJ_star}
\Q( \rvec{x}|\rvec{u}, \Cov{x_g}, \vectortEq)^\star = (1 - \nu^\star) \mat{C}
\end{eqnarray}
and since $0 \leq \nu^\star \leq 1$ we have that $\mat{0} \preceq
\Q( \rvec{x}|\rvec{u}, \Cov{x_g}, \vectortEq)^\star \preceq \mat{C}$, as required. To conclude, we can construct a Gaussian input distribution, complying with all three requirements, as follows,
\begin{eqnarray} \label{eq:concludingCgStar}
\Identity + \Cov{x_g^\star} = \left( (\Cov{x} +
\Identity)^{-1} + \mat{C} - \Q( \rvec{x}|\rvec{u}, \Cov{x_g}, \vectortEq)^\star
\right)^{-1} = \left( (\Cov{x} +
\Identity)^{-1} + \nu^\star \mat{C} \right)^{-1}
\end{eqnarray}
where $\nu^\star$ is derived from the equality in (\ref{eq:alphaEquals}).
This completes the proof of the lemma.
\begin{flushright}
\IEEEQED
\end{flushright}

\subsubsection{Proof of Lemma \ref{lem:existenceOfR_gMin}} \label{app:existenceOfR_gMin}

We first show that there exists a covariance matrix $\Cov{x_g}$ such that requirements \ref{proof:req:Rmin} and \ref{proof:req:Qeq0} are fulfilled. Then, we will show, using contradiction, that requirement \ref{proof:req:MI} is also fulfilled.

First note that, requirement \ref{proof:req:Qeq0} \ie, $\Q( \rvec{x}|\rvec{u}, \Cov{x_g}, \vectort') = \mat{0}$ completely defines $\Cov{x_g}$:
\begin{align} \label{eq:C_g_min}
\Identity - (\Cov{x_g} + \Identity)^{-1} & =  \MSE{x|u}(\vectort') \nonumber \\
(\Identity - \MSE{x|u}(\vectort'))^{-1} - \Identity & =  \Cov{x_g}
\end{align}
where we have used the expression in (\ref{eq:MMSE_G}), and using the expression in (\ref{eq:MMSE_L}) we can show that $\Identity - \MSE{x|u}(\vectort')$ is an invertible matrix since,
\begin{align} \label{eq:invertible}
\MSE{x|u}(\vectort') \preceq \MSElin{x}(\vectort') = \Identity - (\Cov{x} + \Identity)^{-1} \prec \Identity .
\end{align}
We now need to check that the first requirement holds:
\begin{align} \label{eq:regionTop}
\Cov{x_g} = (\Identity - \MSE{x|u}(\vectort'))^{-1} - \Identity & \succeq \mat{0} \nonumber \\
\Identity - \MSE{x|u}(\vectort') & \preceq \Identity \nonumber \\
\mat{0} & \preceq \MSE{x|u}(\vectort')
\end{align}
and,
\begin{align} \label{eq:regionBottom}
\MSE{x|u}(\vectort')  & \preceq \MSElin{x}(\vectort') = \Identity - (\Cov{x} + \Identity)^{-1} \nonumber \\
(\Cov{x} + \Identity)^{-1} & \preceq \Identity - \MSE{x|u}(\vectort')  \nonumber \\
\Cov{x} + \Identity & \succeq (\Identity - \MSE{x|u}(\vectort'))^{-1} \nonumber \\
\Cov{x} - (\Identity - \MSE{x|u}(\vectort'))^{-1} +  \Identity & \succeq \mat{0} \nonumber \\
\Cov{x} & \succeq \Cov{x_g} .
\end{align}
Thus, we have shown that given an arbitrary input we can find the required $\Cov{x_g}$.

We now want to show that $\Igen{\rvec{x_g}}{\rvec{y_g}(\vectort')} \leq \Icond{\rvec{x}}{\rvec{y}(\vectort')}{\rvec{u}}$.
For any Gaussian random vector $\rvec{x_g}^\star$ with covariance $\Cov{x_g}^\star$ such that $\Cov{x_g}^\star \prec \Cov{x_g}$ we have $\EM_G^\star(\vectort') \prec \EM_G(\vectort')$ and thus, $\Q(\rvec{x}|\rvec{u}, \Cov{x_g}^\star, \vectort') \prec \mat{0}$. Using Theorem \ref{thm:BQ}, assuming that we do not have $\ChanDChan(\vectort) = \mat{0}$ for all $0 \leq \vectort \leq \vectort'$\footnote{$\ChanDChan(\vectort) = \mat{0}$ for all $0 \leq \vectort \leq \vectort'$ then all mutual informations equal to zero regardless of the input distribution and the lemma holds trivially.}, and the I-MMSE relationship (\ref{eq:lineIntegral_cond2}) we have,
\begin{eqnarray} \label{eq:initialStatment}
\Igen{\rvec{x_g}^\star}{\rvec{y_g}^\star(\vectort')} < \Icond{\rvec{x}}{\rvec{y}(\vectort')}{\rvec{u}} .
\end{eqnarray}
Now let's assume that,
\begin{eqnarray} \label{eq:toContradict}
\Igen{\rvec{x_g}}{\rvec{y_g}(\vectort')} > \Icond{\rvec{x}}{\rvec{y}(\vectort')}{\rvec{u}} .
\end{eqnarray}
The function $\Igen{\rvec{x_g}}{\rvec{y_g}(\vectort')}$ is continuous in the value of its eigenvalues, since,
\begin{eqnarray} \label{eq:continuityEigenvalues}
\Igen{\rvec{x_g}}{\rvec{y_g}(\vectort')} = \frac{1}{2} \sum_{i=1}^{n} \log \left( 1 + \lambda_i( \Cov{x_g} ) \right) .
\end{eqnarray}
We can construct $\Cov{x_g}^\star$ by reducing by $\epsilon$ the value of all eigenvalues of $\Cov{x_g}$. According to (\ref{eq:toContradict}) we can find a small enough $\epsilon$, such that the following inequality still holds:
\begin{eqnarray} \label{eq:toContradict2}
\Igen{\rvec{x_g}}{\rvec{y_g}(\vectort')} > \Igen{\rvec{x_g}^\star}{\rvec{y}(\vectort')} \geq \Icond{\rvec{x}}{\rvec{y}(\vectort')}{\rvec{u}}
\end{eqnarray}
but this contradicts (\ref{eq:initialStatment}) and by that proves that,
\begin{eqnarray} \label{eq:final}
\Igen{\rvec{x_g}}{\rvec{y_g}(\vectort')} \leq \Icond{\rvec{x}}{\rvec{y}(\vectort')}{\rvec{u}} .
\end{eqnarray}
This concludes the proof of the lemma.
\begin{flushright}
\IEEEQED
\end{flushright}


\subsection{Converse Proof of BC Capacity Under Per-Antenna Constraints for
M-Users} \label{app:converseProofBCPerAntennaMUsers}

We consider the \emph{degraded} parallel Gaussian BC channel:
\begin{eqnarray} \label{eq:appendix:model3}
\rvec{y}_j[\timeIn] & = & \Chan_j \rvec{x}[\timeIn] + \rvec{n}_j[\timeIn] \quad j=1, \ldots,M
\end{eqnarray}
where $\rvec{n}_j[\timeIn]$, $j=1,..,M$ are standard
additive Gaussian noise vectors independent for different time indices $\timeIn$ (and can be considered independent of each other), and $\Chan_j$, $j=1,..,M$ are diagonal positive semidefinite matrices such that
$\Chan_j \preceq \Chan_{j+1}$, for all $j=1,\ldots, M-1$. $\rvec{x} \in \R^{\dim}$ is the random input vector and it is assumed independent for different time indices $\timeIn$.

We consider a per-antenna power constraint:
\begin{eqnarray} \label{eq:appendix:constraintPerAntenna}
\left[ \Esp[1]{\rvec{x}\rvec{x}^\T } \right]_{ii} \leq \Power_i \quad \forall i, 1 \leq i \leq \dim  \text{.}
\end{eqnarray}

Since we have a \emph{degraded} BC, we can use the single-letter expression given in \cite{Comments}:
\begin{align} \label{eq:appendix:degradedCapacityRegion}
\rate_j \leq \Icond{\rvec{v}_j}{\rvec{y}_j}{\rvec{v}_{j-1}} \quad j=1,..,M
\end{align}
where $\rvec{v}_j$ are auxiliary random variables, $\rvec{v}_M \equiv \rvec{x}$, $\rvec{v}_0 \equiv \emptyset$, and the union is over all probability distributions satisfying
\begin{eqnarray} \label{eq:appendix:capacityRegionGaussian_MC}
\rvec{v}_0 - ... - \rvec{v}_{M-1} - \rvec{v}_M - \rvec{x} - \rvec{y}_M - \rvec{y}_{M-1} - ... - \rvec{y}_2 - \rvec{y}_1 \textrm{.}
\end{eqnarray}
This is an extension of the proof given for the two user case. We begin by rewriting the single-letter expression (\ref{eq:appendix:degradedCapacityRegion}) as follows:
\begin{eqnarray} \label{eq:appendix:degradedCapacityRegion_rewrite}
\rate_j \leq \Icond{\rvec{x}}{\rvec{y}_j}{\rvec{v}_{j-1}} - \Icond{\rvec{x}}{\rvec{y}_j}{\rvec{v}_j} \quad j=1,..,M
\end{eqnarray}
and more explicitly:
\begin{eqnarray} \label{eq:appendix:degradedCapacityRegion_explicitly}
\rate_1 & \leq & \Igen{\rvec{x}}{\rvec{y}_1} - \Icond{\rvec{x}}{\rvec{y}_1}{\rvec{v}_1} \nonumber \\
\rate_2 & \leq & \Icond{\rvec{x}}{\rvec{y}_2}{\rvec{v}_1} - \Icond{\rvec{x}}{\rvec{y}_2}{\rvec{v}_2} \nonumber \\
\vdots \nonumber \\
\rate_{M-2} & \leq & \Icond{\rvec{x}}{\rvec{y}_{M-2}}{\rvec{v}_{M-3}} - \Icond{\rvec{x}}{\rvec{y}_{M-2}}{\rvec{v}_{M-2}} \nonumber \\
\rate_{M-1} & \leq & \Icond{\rvec{x}}{\rvec{y}_{M-1}}{\rvec{v}_{M-2}} - \Icond{\rvec{x}}{\rvec{y}_{M-1}}{\rvec{v}_{M-1}} \nonumber \\
\rate_M & \leq & \Icond{\rvec{x}}{\rvec{y}_{M}}{\rvec{v}_{M-1}} \textrm{.}
\end{eqnarray}
According to Lemma \ref{lem:path} (and the remark after this lemma) we can construct a diagonal path such that
\begin{eqnarray} \label{eq:appendix:converseProofPath}
\Chan(\vectort_j) & = & \Chan_{j} \quad j=1,...,M \nonumber \\
\Chan(0) & = & \mat{0}
\end{eqnarray}
with $0 \leq \vectort_1 \leq \vectort_2 \leq \ldots \leq \vectort_M$. Now, assume a distribution $\pdffun_{\{\rvec{v}_0 \equiv \emptyset, \rvec{v}_1,\ldots ,\rvec{v}_{M-1}, \rvec{v}_M \equiv \rvec{x}\}}$ on the tuple $(\rvec{v}_0 \equiv \emptyset, \rvec{v}_1,\ldots ,\rvec{v}_{M-1}, \rvec{v}_M \equiv \rvec{x})$ with covariance matrix $\Cov{\rvec{x}}$. We begin by proving the following lemma:

\begin{lem} \label{lem:perAntennaMusers}
There exist $M$ independent Gaussian inputs $\rvec{x}_{G_j}$, with covariance matrices $\mat{\Lambda}_{G_j}$ such that,
\begin{eqnarray} \label{eq:appendix:goal}
\int_0^{\vectort_j} \BQ( \rvec{x}|\rvec{v}_{j}, \mat{\Lambda}_{G_j}, \tau) \d \tau & = & 0 \nonumber \\
\int_0^{\vectort_{j+1}} \BQ( \rvec{x}|\rvec{v}_{j}, \mat{\Lambda}_{G_j}, \tau) \d \tau & \geq & 0, \quad \forall i \quad \forall j=1,...,M-1
\end{eqnarray}
and such that $\left[\mat{\Lambda}_{G_{j}} \right]_{ii} \leq \left[\mat{\Lambda}_{G_{j-1}} \right]_{ii}$, for $j=2,..,M-1$ and $\left[\mat{\Lambda}_{G_1} \right]_{ii} \leq \left[\Cov{\rvec{x}} \right]_{ii}$.
\end{lem}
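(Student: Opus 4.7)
The plan is to proceed by induction on $j=1,\ldots,M-1$, following the intermediate-value construction underlying Theorem \ref{thm:MIparallel}.

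\textbf{Base case ($j=1$).} I apply Theorem \ref{thm:MIparallel} to the Markov chain $\rvec{v}_1 - \rvec{x} - \rvec{y}$ with $\vectortEq = \vectort_1$. Its constructive proof (see equation (\ref{eq:thm:MIparallel_proof_3})) actually delivers a diagonal $\mat{\Lambda}_{G_1}$ with $[\mat{\Lambda}_{G_1}]_{ii} \leq [\Cov{\rvec{x}}]_{ii}$ that satisfies the per-coordinate identity $\int_0^{\vectort_1} \BQ(\rvec{x}|\rvec{v}_1, \mat{\Lambda}_{G_1}, \tau)\, \d \tau = 0$ for every $i$. Item 2 of Corollary \ref{cor:BQ_cond} then forces the single negative-zero-positive crossing of $\BQ(\rvec{x}|\rvec{v}_1, \mat{\Lambda}_{G_1}, \cdot)$ to lie inside $[0,\vectort_1]$, so the integrand is nonnegative on $[\vectort_1,\infty)$ and the second requirement at $\vectort_2$ follows automatically.

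\textbf{Inductive step.} Assume $\mat{\Lambda}_{G_1}, \ldots, \mat{\Lambda}_{G_j}$ have been built satisfying the claimed properties. Fix an index $i$. By Item 4 of Corollary \ref{cor:BQ_cond}, the map $[\mat{\Lambda}_{G_{j+1}}]_{ii} \mapsto \int_0^{\vectort_{j+1}} \BQ(\rvec{x}|\rvec{v}_{j+1}, \mat{\Lambda}_{G_{j+1}}, \tau)\, \d \tau$ is continuous and monotonically nondecreasing; it is nonpositive at $[\mat{\Lambda}_{G_{j+1}}]_{ii}=0$ (the Gaussian MMSE vanishes there), and, by the key claim below, is nonnegative at $[\mat{\Lambda}_{G_{j+1}}]_{ii}=[\mat{\Lambda}_{G_j}]_{ii}$. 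The intermediate value theorem supplies an $[\mat{\Lambda}_{G_{j+1}}]_{ii}\in[0,[\mat{\Lambda}_{G_j}]_{ii}]$ for which the integral vanishes. Running this index-by-index produces the diagonal $\mat{\Lambda}_{G_{j+1}}$ that satisfies both the first identity and the required ordering $[\mat{\Lambda}_{G_{j+1}}]_{ii} \leq [\mat{\Lambda}_{G_j}]_{ii}$. The forward inequality at $\vectort_{j+2}$ (when $j+1\leq M-2$) is then automatic from the single-crossing property, exactly as in the base case.

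\textbf{Key claim and main obstacle.} The crux is to verify $\int_0^{\vectort_{j+1}}\BQ(\rvec{x}|\rvec{v}_{j+1},\mat{\Lambda}_{G_j},\tau)\,\d\tau\geq 0$: the lemma demands a per-coordinate (diagonal) comparison rather than the simpler scalar comparison of mutual informations, which the usual chain rule does not cleanly separate across coordinates. The resolution exploits the Markov chain $\rvec{v}_j - \rvec{v}_{j+1} - \rvec{x} - \rvec{y}$, which gives $\CEsp{\rvec{x}}{\rvec{y},\rvec{v}_j,\rvec{v}_{j+1}} = \CEsp{\rvec{x}}{\rvec{y},\rvec{v}_{j+1}}$; the law of total variance then delivers the PSD ordering $\EM_{\rvec{x}|\rvec{v}_{j+1}}(\vectort) \preceq \EM_{\rvec{x}|\rvec{v}_j}(\vectort)$, and hence the diagonal inequality $[\EM_{\rvec{x}|\rvec{v}_{j+1}}(\vectort)]_{ii} \leq [\EM_{\rvec{x}|\rvec{v}_j}(\vectort)]_{ii}$. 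Since $[\ChanDChan(\vectort)]_{ii}\geq 0$ (Lemma \ref{lem:path}) and $\EM_G(\vectort)$ does not depend on the conditioning, this gives the pointwise bound $\BQ(\rvec{x}|\rvec{v}_{j+1},\mat{\Lambda}_{G_j},\vectort) \geq \BQ(\rvec{x}|\rvec{v}_j,\mat{\Lambda}_{G_j},\vectort)$; integrating over $[0,\vectort_{j+1}]$ and invoking the inductive hypothesis $\int_0^{\vectort_{j+1}}\BQ(\rvec{x}|\rvec{v}_j,\mat{\Lambda}_{G_j},\tau)\,\d\tau \geq 0$ closes the claim and completes the induction.
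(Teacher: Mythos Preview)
Your proposal is correct and follows essentially the same inductive argument as the paper: both use the Markov relation $\rvec{v}_j - \rvec{v}_{j+1} - \rvec{x}$ to obtain the pointwise inequality $\BQ(\rvec{x}|\rvec{v}_{j+1},\mat{\Lambda}_{G_j},\vectort) \geq \BQ(\rvec{x}|\rvec{v}_j,\mat{\Lambda}_{G_j},\vectort)$, integrate to invoke the induction hypothesis, and then use the monotonicity of $\BQ$ in $[\mat{\Lambda}_G]_{ii}$ (Corollary \ref{cor:BQ_cond}, Item 4) to locate $\mat{\Lambda}_{G_{j+1}}$ via an intermediate-value argument. Your write-up is if anything more explicit about the law-of-total-variance step that underpins the key inequality, but the structure and ingredients match the paper's proof.
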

\begin{IEEEproof}
We will prove the above using induction.

\emph{The case of $j=1$:}
This is identical to the proof given in Section \ref{ssec:applicationInd}.

\emph{For a general $j$:} We assume the above holds for $j$ and prove for $j+1$.
Due to the Markov relation (\ref{eq:appendix:capacityRegionGaussian_MC}) we have that,
\begin{eqnarray}
\BQ( \rvec{x}|\rvec{v}_{j+1}, \mat{\Lambda}_{G_j}, \vectort) = \BQ( \rvec{x}|\rvec{v}_{j} \rvec{v}_{j+1}, \mat{\Lambda}_{G_j}, \vectort), \quad \forall \vectort
\end{eqnarray}
and thus,
\begin{eqnarray} \label{eq:appendix:proofConverse4}
\BQ( \rvec{x}|\rvec{v}_{j+1}, \mat{\Lambda}_{G_j}, \vectort) = \BQ( \rvec{x}|\rvec{v}_{j} \rvec{v}_{j+1}, \mat{\Lambda}_{G_j}, \vectort) \geq \BQ( \rvec{x}|\rvec{v}_{j}, \mat{\Lambda}_{G_j}, \vectort) \quad \forall \vectort
\end{eqnarray}
where the inequality is, again, due to the Markov relation (\ref{eq:appendix:capacityRegionGaussian_MC}) and the definition of the function $\BQ( \rvec{x}|\rvec{v}_{j}, \mat{\Lambda}_{G_j}, \vectort)$, given in equation (\ref{eq:BQ_cond}).
This provides us with the following inequality,
\begin{eqnarray} \label{eq:appendix:proofConverse5}
\int_0^{\vectort_{j+1}} \BQ( \rvec{x}|\rvec{v}_{j+1}, \mat{\Lambda}_{G_j}, \tau) \d \tau \geq \int_0^{\vectort_{j+1}} \BQ( \rvec{x}|\rvec{v}_{j}, \mat{\Lambda}_{G_j}, \tau) \d \tau \geq 0
\end{eqnarray}
where the first inequality is due to (\ref{eq:appendix:proofConverse4}) and the second is due to the induction assumption on $j$ (\ref{eq:appendix:goal}).
Again, following the same derivation as in the proof in Section \ref{ssec:applicationInd}, we know that there exists an independent Gaussian input with covariance $\mat{\Lambda}_{G_{j+1}}$ such that,
\begin{eqnarray} \label{eq:appendix:proofConverse6}
\int_0^{\vectort_{j+1}} \BQ( \rvec{x}|\rvec{v}_{j+1}, \mat{\Lambda}_{G_{j+1}}, \tau) \d \tau & =  & 0 \\ \label{eq:appendix:proofConverse7}
\int_0^{\vectort'} \BQ( \rvec{x}|\rvec{v}_{j+1}, \mat{\Lambda}_{G_{j+1}}, \tau) \d \tau & \geq  & 0
\quad \forall \vectort' > \vectort_{j+1}, \quad \forall i
\end{eqnarray}
where (\ref{eq:appendix:proofConverse7}) is true specifically for $\vectort' = \vectort_{j+2}$. Finally, from (\ref{eq:appendix:proofConverse6}) and (\ref{eq:appendix:proofConverse5}) and the monotonically increasing property of $\BQ( \rvec{x}|\rvec{v}_{j+1}, \mat{\Lambda}_{G}, \tau)$ in $\left[\mat{\Lambda}_{G} \right]_{ii}$ (fourth property of Corollary \ref{cor:BQ_cond}), and the fact that it is independent of all other entries in $\mat{\Lambda}_{G}$, we can conclude that
$\left[\mat{\Lambda}_{G_{j+1}} \right]_{ii} \leq \left[\mat{\Lambda}_{G_j}\right]_{ii}$.
This concludes the proof of the induction.
\end{IEEEproof}
Now, inserting the above bounds (\ref{eq:appendix:goal}) (with the addition of the trivial bound on $\Igen{\rvec{x}}{\rvec{y}_1}$, under the per-antenna constraint (\ref{eq:appendix:constraintPerAntenna})) into the single-letter expression in (\ref{eq:appendix:degradedCapacityRegion_explicitly}) we obtain the following outer bound
\begin{eqnarray} \label{eq:appendix:explicitly_outerBound}
\rate_1 & \leq & \frac{1}{2} \log | \Identity + \Chan_1 \mats{P} \Chan_1^T | - \frac{1}{2} \log | \Identity + \Chan_1 \mat{\Lambda}_{G_1} \Chan_1^T | \nonumber \\
\rate_2 & \leq & \frac{1}{2} \log | \Identity + \Chan_2 \mat{\Lambda}_{G_1} \Chan_2^T | - \frac{1}{2} \log | \Identity + \Chan_2 \mat{\Lambda}_{G_2} \Chan_2^T | \nonumber \\
\vdots \nonumber \\
\rate_{M-2} & \leq & \frac{1}{2} \log | \Identity + \Chan_{M-2} \mat{\Lambda}_{G_{M-3}} \Chan_{M-2}^T | - \frac{1}{2} \log | \Identity + \Chan_{M-2} \mat{\Lambda}_{G_{M-2}} \Chan_{M-2}^T | \nonumber \\
\rate_{M-1} & \leq & \frac{1}{2} \log | \Identity + \Chan_{M-1} \mat{\Lambda}_{G_{M-2}} \Chan_{M-1}^T | - \frac{1}{2} \log | \Identity + \Chan_{M-1} \mat{\Lambda}_{G_{M-1}} \Chan_{M-1}^T | \nonumber \\
\rate_M & \leq & \frac{1}{2} \log | \Identity + \Chan_{M} \mat{\Lambda}_{G_{M-1}} \Chan_{M}^T | \textrm{.}
\end{eqnarray}
where $\mats{P}$ is a diagonal matrix with $\left[ \mats{P}\right]_{ii} = P_i$, and $\mat{\Lambda}_{G_j}$ are positive semidefinite diagonal matrices such that $\mat{0} \preceq \mat{\Lambda}_{G_M} \preceq \mat{\Lambda}_{G_{M-1}} \preceq \ldots \preceq \mat{\Lambda}_{G_2} \preceq \mat{\Lambda}_{G_1} \preceq \boldsymbol{P}$. The
achievability of this outer bound is well-known using superposition coding.
\begin{flushright}
\IEEEQED
\end{flushright}

\subsection{Converse Proof of BC Capacity Under Covariance Constraints for
M-Users}
\label{app:converseProofBCCovarianceMUsers}

We consider the same setting as in Appendix \ref{app:converseProofBCPerAntennaMUsers}, given in (\ref{eq:appendix:model3}), but now with a covariance constraint,
\begin{align} \label{appendix:BCcovarianceConstraint}
\Cov{x} \preceq \mat{S}
\end{align}
where $\mat{S}$ is some positive definite matrix.

As in Appendix \ref{app:converseProofBCPerAntennaMUsers}, since we have a \emph{degraded} BC, we can use the single-letter expression given explicitly in equation (\ref{eq:appendix:degradedCapacityRegion_explicitly}), with auxiliary random variables complying with the Markov chain as detailed in (\ref{eq:appendix:capacityRegionGaussian_MC}). Furthermore, we construct a path as was done in equation (\ref{eq:appendix:converseProofPath}). Now, assume distribution $\pdffun_{\{\rvec{v}_0 \equiv \emptyset, \rvec{v}_1,\ldots ,\rvec{v}_{M-1}, \rvec{v}_M \equiv \rvec{x}\}}$ on the tuple $(\rvec{v}_0 \equiv \emptyset, \rvec{v}_1,\ldots ,\rvec{v}_{M-1}, \rvec{v}_M \equiv \rvec{x})$ with covariance matrix $\Cov{\rvec{x}}$. We begin by proving the following lemma,

\begin{lem} \label{lem:appendix:covarianceMUserInduction}
There exist $M$ Gaussian inputs $\rvec{x}_{G_j}$, with covariance matrices ${\Cov{x_g}}_j$ such that,
\begin{align} \label{eq:appendix:goal_covariance}
\int_0^{\vectort_j} \Tr \left(
\ChanDChan(\tau) \Q(\rvec{x}|\rvec{v}_j,{\Cov{x_g}}_j,\tau) \right) \d \tau & = 0 \nonumber \\
\int_0^{\vectort_{j+1}} \Tr \left(
\ChanDChan(\tau) \Q(\rvec{x}|\rvec{v}_{j},{\Cov{x_g}}_j,\tau) \right) \d \tau & \geq 0, \quad \quad \forall j=1,\ldots,M-1
\end{align}
and such that $\mat{0} \preceq {\Cov{x_g}}_j \preceq {\Cov{x_g}}_{j-1}$, for $j=2,\ldots,M$ and $\mat{0} \preceq  {\Cov{x_g}}_{1} \preceq \mats{S}$. Furthermore, $\Q( \rvec{x}|\rvec{v}_{j},{\Cov{x_g}}_j,\vectort_j) \succeq \mat{0}$, for all $j = 1,\ldots, M$.
\end{lem}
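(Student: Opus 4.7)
\begin{IEEEproof}[Proof proposal]
The plan is to proceed by induction on $j$, mirroring the structure of Lemma \ref{lem:perAntennaMusers} but with the full covariance-valued version of the ``single crossing point'' machinery (Lemma \ref{lem:GaussianExistenceConditioned}, Lemma \ref{lem:GaussianExistenceExtensionConditioned}, Corollary \ref{cor:CorollarySingleCrossingEig}, and Theorem \ref{thm:BQ}) replacing the diagonal per-entry arguments.

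For the base case $j=1$, I would apply Lemma \ref{lem:GaussianExistenceConditioned} to the pair $(\rvec{v}_0\equiv\emptyset,\rvec{x})$ at $\vectort_{\mathrm{eq}}=\vectort_1$. This produces a Gaussian covariance ${\Cov{x_g}}_1$ with ${\Cov{x_g}}_1\preceq \Cov{x}\preceq \mats{S}$, with $\Igen{\rvec{x}_{G_1}}{\rvec{y}_{G_1}(\vectort_1)}=\Igen{\rvec{x}}{\rvec{y}(\vectort_1)}$, and with $\Q(\rvec{x},{\Cov{x_g}}_1,\vectort_1)\succeq\mat{0}$. Writing the equality of mutual informations through the I-MMSE relation (\ref{eq:lineIntegral_cond2}) yields the first equality in (\ref{eq:appendix:goal_covariance}) for $j=1$. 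The condition at $\vectort_2$ then follows because Corollary \ref{cor:CorollarySingleCrossingEig} propagates $\Q\succeq \mat{0}$ to all $\vectort\geq\vectort_1$, and Theorem \ref{thm:BQ} guarantees that $\ChanDChan(\tau)\Q(\rvec{x},{\Cov{x_g}}_1,\tau)$ has only nonnegative eigenvalues for $\tau\geq\vectort_1$, so the integrand is nonnegative on $[\vectort_1,\vectort_2]$ while vanishing in total on $[0,\vectort_1]$.

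For the inductive step, I assume the conclusion for $j$ and build ${\Cov{x_g}}_{j+1}$. The key observation is the data-processing inequality along the chain $\rvec{v}_j-\rvec{v}_{j+1}-\rvec{x}-\rvec{y}$, which gives $\MSE{x|v_{j+1}}(\vectort)\preceq \MSE{x|v_j}(\vectort)$ for every $\vectort$, and hence
\begin{equation*}
\Q(\rvec{x}|\rvec{v}_{j+1},{\Cov{x_g}}_j,\vectort)\succeq \Q(\rvec{x}|\rvec{v}_j,{\Cov{x_g}}_j,\vectort).
\end{equation*}
Since $\ChanDChan(\tau)\succeq\mat{0}$ is diagonal, this monotonicity and the inductive hypothesis at $\vectort_{j+1}$ imply
\begin{equation*}
\int_0^{\vectort_{j+1}}\Tr\bigl(\ChanDChan(\tau)\Q(\rvec{x}|\rvec{v}_{j+1},{\Cov{x_g}}_j,\tau)\bigr)\,\d\tau\ \geq\ 0,
\end{equation*}
which translates via the I-MMSE identity into $\Icond{\rvec{x}}{\rvec{y}(\vectort_{j+1})}{\rvec{v}_{j+1}}\leq \Igen{\rvec{x}_{G_j}}{\rvec{y}_{G_j}(\vectort_{j+1})}$. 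Combined with $\Q(\rvec{x}|\rvec{v}_{j+1},{\Cov{x_g}}_j,\vectort_{j+1})\succeq\mat{0}$ (which also follows from Corollary \ref{cor:CorollarySingleCrossingEig} applied to the inductive PSD conclusion for ${\Cov{x_g}}_j$ at $\vectort_j\leq \vectort_{j+1}$, together with the monotonicity above), I can invoke Lemma \ref{lem:GaussianExistenceExtensionConditioned} with $\Cov{x_g}^{\ss{ub}}={\Cov{x_g}}_j$. This produces ${\Cov{x_g}}_{j+1}\preceq {\Cov{x_g}}_j$ that matches the conditional mutual information exactly at $\vectort_{j+1}$ and satisfies $\Q(\rvec{x}|\rvec{v}_{j+1},{\Cov{x_g}}_{j+1},\vectort_{j+1})\succeq\mat{0}$.

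Finally, the second integral condition at $\vectort_{j+2}$ for the new covariance follows by a now-standard argument: Corollary \ref{cor:CorollarySingleCrossingEig} extends the PSD property of $\Q(\rvec{x}|\rvec{v}_{j+1},{\Cov{x_g}}_{j+1},\vectort)$ to all $\vectort\geq\vectort_{j+1}$, Theorem \ref{thm:BQ} converts this into nonnegativity of the eigenvalues (hence the trace) of $\ChanDChan(\tau)\Q(\rvec{x}|\rvec{v}_{j+1},{\Cov{x_g}}_{j+1},\tau)$ on $[\vectort_{j+1},\vectort_{j+2}]$, and adding the vanishing integral on $[0,\vectort_{j+1}]$ yields the claim. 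The main subtlety, and the place where the MIMO setting makes things less transparent than in Lemma \ref{lem:perAntennaMusers}, is the propagation of the PSD property of $\Q$ along the Markov chain simultaneously with the monotonicity ${\Cov{x_g}}_{j+1}\preceq {\Cov{x_g}}_j$; this is exactly what Lemma \ref{lem:GaussianExistenceExtensionConditioned} was engineered to provide, so the induction closes cleanly once the data-processing upper bound on the conditional mutual information at $\vectort_{j+1}$ has been established.
\end{IEEEproof}
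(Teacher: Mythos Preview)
Your overall strategy is exactly the paper's: induction on $j$, with Lemma~\ref{lem:GaussianExistenceConditioned} supplying the base case, Lemma~\ref{lem:GaussianExistenceExtensionConditioned} driving the inductive step, and Corollary~\ref{cor:CorollarySingleCrossingEig} together with Theorem~\ref{thm:BQ} propagating the PSD condition on $\Q$ forward in $\vectort$. The inductive step you wrote matches the paper line by line.

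There is, however, a genuine slip in your base case. For $j=1$ the lemma asks for properties of $\Q(\rvec{x}\,|\,\rvec{v}_1,{\Cov{x_g}}_1,\tau)$, i.e., of the difference between the Gaussian MMSE and the MMSE of $\rvec{x}$ \emph{conditioned on $\rvec{v}_1$}. You instead apply Lemma~\ref{lem:GaussianExistenceConditioned} with conditioning variable $\rvec{v}_0\equiv\emptyset$, which produces ${\Cov{x_g}}_1$ matching the \emph{unconditional} mutual information $\Igen{\rvec{x}}{\rvec{y}(\vectort_1)}$ and yielding $\Q(\rvec{x},{\Cov{x_g}}_1,\vectort_1)\succeq\mat{0}$ rather than $\Q(\rvec{x}\,|\,\rvec{v}_1,{\Cov{x_g}}_1,\vectort_1)\succeq\mat{0}$. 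With that choice the first line of (\ref{eq:appendix:goal_covariance}) does not follow, and the induction never starts. The fix is immediate: invoke Lemma~\ref{lem:GaussianExistenceConditioned} with $\rvec{u}=\rvec{v}_1$ (exactly as in Section~\ref{ssec:BCcovarianceConstraint}), which yields ${\Cov{x_g}}_1\preceq\Cov{x}\preceq\mats{S}$, $\Icond{\rvec{x}}{\rvec{y}(\vectort_1)}{\rvec{v}_1}=\Igen{\rvec{x}_{G_1}}{\rvec{y}_{G_1}(\vectort_1)}$, and $\Q(\rvec{x}\,|\,\rvec{v}_1,{\Cov{x_g}}_1,\vectort_1)\succeq\mat{0}$. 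After this correction your argument coincides with the paper's.
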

\begin{IEEEproof}
We will prove the above using induction.

\emph{The case of $j=1$:}
This is identical to the proof given in Section \ref{ssec:BCcovarianceConstraint}.

\emph{For a general $j$:} We assume the above holds for $j$ and prove for $j+1$.
Due to the Markov relation (\ref{eq:appendix:capacityRegionGaussian_MC}) we have that,
\begin{align}
\EM_{ \rvec{x} | \rvec{v}_{j+1}}(\vectort) = \EM_{ \rvec{x} | \rvec{v}_{j+1},\rvec{v}_{j}}(\vectort) \preceq \EM_{ \rvec{x} | \rvec{v}_{j}}(\vectort), \quad \forall \vectort
\end{align}
from which we can conclude that,
\begin{eqnarray}
\Q( \rvec{x}|\rvec{v}_{j+1},{\Cov{x_g}}_j,\vectort) = \Q( \rvec{x}|\rvec{v}_{j} \rvec{v}_{j+1},{\Cov{x_g}}_j,\vectort), \quad \forall \vectort
\end{eqnarray}
and thus,
\begin{eqnarray}  \label{eq:appendix:proofConverse3_covariance}
\Q( \rvec{x}|\rvec{v}_{j+1},{\Cov{x_g}}_j,\vectort) = \Q( \rvec{x}|\rvec{v}_{j} \rvec{v}_{j+1},{\Cov{x_g}}_j,\vectort) \succeq \Q( \rvec{x}|\rvec{v}_{j} ,{\Cov{x_g}}_j,\vectort), \quad \forall \vectort .
\end{eqnarray}
Since $\ChanDChan( \vectort)$ is a diagonal positive semidefinite matrix for all $\vectort$, this leads to,
\begin{eqnarray} \label{eq:appendix:proofConverse4_covariance}
\Tr \left( \ChanDChan( \vectort) \Q( \rvec{x}|\rvec{v}_{j+1},{\Cov{x_g}}_j,\vectort) \right) \geq \Tr \left( \ChanDChan( \vectort) \Q( \rvec{x}|\rvec{v}_{j} ,{\Cov{x_g}}_j,\vectort) \right), \quad \forall \vectort .
\end{eqnarray}
Now, taking into account the induction assumptions on $j$, together with (\ref{eq:appendix:proofConverse3_covariance}) and (\ref{eq:appendix:proofConverse4_covariance}) we have,
\begin{align} \label{eq:appendix:proofConverse5_covariance}
\Q( \rvec{x}|\rvec{v}_{j+1},{\Cov{x_g}}_j,\vectort_{j+1}) & \succeq \Q( \rvec{x}|\rvec{v}_{j} ,{\Cov{x_g}}_j,\vectort_{j+1}) \succeq \mat{0} \nonumber \\
\int_0^{\vectort_{j+1}} \Tr \left( \ChanDChan( \tau) \Q( \rvec{x}|\rvec{v}_{j+1},{\Cov{x_g}}_j,\tau) \right) \d \tau & \geq \int_0^{\vectort_{j+1}} \Tr \left( \ChanDChan( \tau) \Q( \rvec{x}|\rvec{v}_{j} ,{\Cov{x_g}}_j,\tau) \right) \d \tau \geq 0
\end{align}
which can also be written as:
\begin{align} \label{eq:appendix:proofConverse5_covariance}
\Q( \rvec{x}|\rvec{v}_{j+1},{\Cov{x_g}}_j,\vectort_{j+1}) & \succeq \mat{0} \nonumber \\
\Igen{ \rvec{x_g}_j}{\rvec{y_g}_j(\vectort_{j+1})} & \geq \Icond{\rvec{x}}{\rvec{y}(\vectort_{j+1})}{\rvec{v}_{j+1}}.
\end{align}
These are the two conditions required for Lemma \ref{lem:GaussianExistenceExtensionConditioned}, with $\rvec{x_g}^{ub} \equiv \rvec{x_g}_j$. Thus, according to Lemma \ref{lem:GaussianExistenceExtensionConditioned} there exists a Gaussian random vector with covariance ${\Cov{x_g}}_{j+1}$ such that,
\begin{enumerate}

\item ${\Cov{x_g}}_{j+1} \preceq {\Cov{x_g}}_{j}$ \label{apendix:applicationCovarianceMuser_prop1}

\item $\Igen{ \rvec{x_g}_{j+1}}{\rvec{y_g}_{j+1}(\vectort_{j+1})} = \Icond{\rvec{x}}{\rvec{y}(\vectort_{j+1})}{\rvec{v}_{j+1}}$ \label{apendix:applicationCovarianceMuser_prop2}

\item $\Q( \rvec{x}|\rvec{v}_{j+1},{\Cov{x_g}}_{j+1},\vectort_{j+1}) \succeq \mat{0}$ \label{apendix:applicationCovarianceMuser_prop3}

\end{enumerate}
Property \ref{apendix:applicationCovarianceMuser_prop2} is equivalent to,
\begin{align}
\int_0^{\vectort_{j+1}} \Tr \left(
\ChanDChan(\tau) \Q(\rvec{x}|\rvec{v}_{j+1},{\Cov{x_g}}_{j+1},\tau) \right) \d \tau & = 0
\end{align}
and from property \ref{apendix:applicationCovarianceMuser_prop3}, Corollary \ref{cor:CorollarySingleCrossingEig}, and Theorem \ref{thm:BQ} we can conclude the following:
\begin{align}
\int_0^{\vectort_{j+2}} \Tr \left(
\ChanDChan(\tau) \Q(\rvec{x}|\rvec{v}_{j+1},{\Cov{x_g}}_{j+1},\tau) \right) \d \tau & = \int_0^{\vectort_{j+1}} \Tr \left(
\ChanDChan(\tau) \Q(\rvec{x}|\rvec{v}_{j+1},{\Cov{x_g}}_{j+1},\tau) \right) \d \tau \nonumber \\
& + \int_{\vectort_{j+1}}^{\vectort_{j+2}} \Tr \left(
\ChanDChan(\tau) \Q(\rvec{x}|\rvec{v}_{j+1},{\Cov{x_g}}_{j+1},\tau) \right) \d \tau \nonumber \\
& = 0 + \int_{\vectort_{j+1}}^{\vectort_{j+2}} \Tr \left(
\ChanDChan(\tau) \Q(\rvec{x}|\rvec{v}_{j+1},{\Cov{x_g}}_{j+1},\tau) \right) \d \tau  \geq 0.
\end{align}
Together with property \ref{apendix:applicationCovarianceMuser_prop1}, this concludes the proof of the induction.
\end{IEEEproof}

Lemma \ref{lem:appendix:covarianceMUserInduction} provides us with Gaussian random vectors $\rvec{x_g}_j$ with covariance matrices ${\Cov{x_g}}_j$ with the following properties:
\begin{enumerate}

\item $\mat{0} \preceq {\Cov{x_g}}_j \preceq {\Cov{x_g}}_{j-1}$, for $j=2,\ldots,M-1$ and $\mat{0} \preceq  {\Cov{x_g}}_{1} \preceq \mats{S}$.

\item $\Icond{ \rvec{x}}{\rvec{y}_j}{ \rvec{v}_j} = \frac{1}{2} \log \left| \Identity + \Chan_j {\Cov{x_g}}_j \Chan_j^\T  \right|$, for $j=1,\ldots,M-1$.

\item $\Icond{ \rvec{x}}{\rvec{y}_{j+1}}{ \rvec{v}_j} \leq \frac{1}{2} \log \left| \Identity + \Chan_{j+1} {\Cov{x_g}}_j \Chan_{j+1}^\T  \right|$, for $j=1,\ldots,M-1$

\end{enumerate}
Substituting these results into the single-letter expression (\ref{eq:appendix:degradedCapacityRegion_explicitly}), and defining,
\begin{align} \label{appendix:applicationCovariance_definingC}
{\Cov{g}}_1 & = \mats{S} - {\Cov{x_g}}_1 \nonumber \\
{\Cov{g}}_j & = {\Cov{x_g}}_{j-1} - {\Cov{x_g}}_{j}, \quad \forall j=2,\ldots,M-1 \nonumber \\
{\Cov{g}}_M & = {\Cov{x_g}}_{M-1}
\end{align}
provides the following upper bound,
\begin{eqnarray} \label{eq:appendix:capacityRegion_conclude_Covariance}
\rate_M & \leq & \frac{1}{2} \log \left| \Chan_M {\Cov{g}}_M \left(\Chan_{M}\right)^\T + \Identity \right| \nonumber \\
\rate_j & \leq & \frac{1}{2} \log \frac{\left| \Chan_j \sum_{l=j}^M {\Cov{g}}_l \left(\Chan_{j}\right)^\T + \Identity \right|}{\left| \Chan_j \sum_{l=j+1}^{M}{\Cov{g}}_l \left(\Chan_{j}\right)^\T + \Identity \right|}, \quad \forall j = 1,\ldots,M-1
\end{eqnarray}
where ${\Cov{g}}_j$ are some positive semidefinite matrices such that $\mat{0} \preceq \sum_{l=1}^{M}{\Cov{g}}_l = \mats{S}$. This completes the converse proof.

The above upper bounds can be attained simultaneously using a joint Gaussian distribution on the tuple,
\begin{align}
\left( \rvec{v}_0 \equiv \emptyset , \rvec{v}_{1}, \ldots, \rvec{v}_{M-1}, \rvec{v}_M \equiv \rvec{x} \right)
\end{align}
as follows:
\begin{eqnarray} \label{eq:appendix:achievable_Covariance}
\rvec{v}_j = \rvec{v}_{j-1} + \rvec{u}_{j}
\end{eqnarray}
where $\rvec{u}_j \sim \mathcal{N} \left( \mat{0}, {\Cov{g}}_j \right)$ for $j = 1, ..., M$, independent of each other, and where ${\Cov{g}}_j$ are positive semidefinite matrices such that $\sum_{l=1}^M {\Cov{g}}_l \preceq \mat{S}$.
Thus, we attain the upper bounds for $j=1,...,M-1$ as follows:
\begin{eqnarray} \label{eq:appendix:achievable2_Covariance}
\rate_j & \leq & \Icond{\rvec{v}_{j}}{\rvec{y}_j}{\rvec{v}_{j-1}} \nonumber \\
& = & \Icond{ \rvec{x}}{\rvec{y}_j}{\rvec{v}_{j-1}} - \Icond{ \rvec{x}}{ \rvec{y}_j}{ \rvec{v}_{j}} \nonumber \\
& = & \frac{1}{2} \log \left| \Chan_j \sum_{l=j}^{M} {\Cov{g}}_l \left(\Chan_j \right)^T  + \Identity\right| - \frac{1}{2} \log \left| \Chan_j \sum_{l=j+1}^{M} {\Cov{g}}_l \left(\Chan_j\right)^T  + \Identity \right| \textrm{.}
\end{eqnarray}
For $j=M$ we obtain the following:
\begin{eqnarray} \label{eq:appendix:achievable3_Covariance}
\rate_M & \leq & \Icond{\rvec{x}}{ \rvec{y}_M}{ \rvec{v}_{M-1}} \nonumber \\
& = & \frac{1}{2} \log \left| \Chan_M {\Cov{g}}_M \left(\Chan_M\right)^T + \Identity \right|  \textrm{.}
\end{eqnarray}
Thus, we have shown that (\ref{eq:appendix:capacityRegion_conclude_Covariance}) is the capacity region under the covariance constraint.
\begin{flushright}
\IEEEQED
\end{flushright}

\subsection{Proof of Lemma \ref{lem:LemmaCompoundMemoryless}} \label{appendix:CompoundMemoryless}

The proof of this lemma follows the proof of \cite[Lem.~4]{HananCompound}, which is very similar to the well known proof for the capacity region of a \emph{degraded} BC in \cite{CoverThomas}. The proof of the direct part relies on successive decoding at the stronger user and is practically identical to that found in \cite{CoverThomas}. We will detail the converse proof only.

Let $\bar{\rvec{y}}_{i_j}^j$ denote a sequence of $\dim$ channel outputs of the $i_j$'th realization of user $j$ and let $W_j$ for $j= 1,\ldots,M$ denote the message indices. Furthermore, let $\rvec{y}_{i_j}^j(l)$ be the $l$'th sample of $\bar{\rvec{y}}_{i_j}^j$  and $\rvec{y}_{i_j}^j(1,\ldots,l-1)$ be the set of all samples up to $l-1$ (including). We use similar notation for all other random variables.
As the capacity region depends only on the marginals $\pdffun_{\rvec{y}_{i_j}^j|\rvec{x}}$ we may assume without loss of generality that indeed the mutual distribution is such that
\begin{eqnarray} \label{eq:appendix:compoundMemorylessMC}
(W_1,\ldots,W_M) - \rvec{x} - \rvec{y}_{i_M}^M - \rvec{y}_{M(M-1)}^\star - \rvec{y}_{i_{M-1}}^{M-1} - \rvec{y}_{(M-1)(M-2)}^\star -\ldots - \rvec{y}_{i_{2}}^{2} -\rvec{y}_{21}^\star - \rvec{y}_{i_1}^1
\end{eqnarray}
form a Markov chain for every choice of $i_1,i_2,\ldots,i_M$.

Using Fano's inequality and the fact that $W_j$ are independent messages we can write an upper bound of $\rate_j$ for any $j = 1,\ldots,M$ which holds for every $i_j \in \{1,\ldots, K_j \}$:
\begin{align} \label{eq:capacityRegionMemoryless_proof}
\rate_j & \leq  \frac{1}{n} \Icond{W_j}{\bar{\rvec{y}}_{i_j}^j }{ W_{j-1},\ldots,W_1} + \delta(\dim)  \nonumber \\
& =  \frac{1}{n} \sum_{l=1}^{\dim} \Icond{W_j}{ \rvec{y}_{i_j}^j(l)}{ W_{j-1},\ldots,W_1, \rvec{y}_{i_j}^j(1,\ldots,l-1)} + \delta(\dim) \\
& =  \frac{1}{n} \sum_{l=1}^{\dim} \left( \HdC{ \rvec{y}_{i_j}^j(l)}{ W_{j-1},\ldots,W_1, \rvec{y}_{i_j}^j(1,\ldots,l-1)} - \HdC{ \rvec{y}_{i_j}^j(l) }{ W_{j},\ldots,W_1, \rvec{y}_{i_j}^j(1,\ldots,l-1)}  \right) + \delta(\dim) \nonumber \\ \label{eq:capacityRegionMemoryless_proof_b}
& =  \frac{1}{n} \sum_{l=1}^{\dim} \left( \HdC{ \rvec{y}_{i_j}^j(l)}{ W_{j-1},\ldots,W_1, \rvec{y}_{i_j}^j(1,\ldots,l-1), \rvec{y}_{j(j-1)}^\star(1,\ldots,l-1)}  \right. \nonumber \\
&  \left.- \HdC{ \rvec{y}_{i_j}^j(l) }{ W_{j},\ldots,W_1, \rvec{y}_{i_j}^j(1,\ldots,l-1)} \right) + \delta(\dim) \\ \label{eq:capacityRegionMemoryless_proof_c}
& \leq \frac{1}{n} \sum_{l=1}^{\dim} \left( \HdC{ \rvec{y}_{i_j}^j(l)}{ W_{j-1},\ldots,W_1, \rvec{y}_{i_j}^j(1,\ldots,l-1), \rvec{y}_{j(j-1)}^\star(1,\ldots,l-1)} \right.
\nonumber \\
& \left. - \HdC{ \rvec{y}_{i_j}^j(l)}{ W_{j},\ldots,W_1,\rvec{y}_{(j+1)j}^\star(1,\ldots,l-1), \rvec{y}_{i_j}^j(1,\ldots,l-1)} \right) + \delta(\dim) \\
\label{eq:capacityRegionMemoryless_proof_d}
& = \frac{1}{n} \sum_{l=1}^{\dim} \left( \HdC{ \rvec{y}_{i_j}^j(l)}{ W_{j-1},\ldots,W_1, \rvec{y}_{i_j}^j(1,\ldots,l-1), \rvec{y}_{j(j-1)}^\star(1,\ldots,l-1)} \right.
\nonumber \\
& \left. - \HdC{ \rvec{y}_{i_j}^j(l) }{ W_{j},\ldots,W_1,\rvec{y}_{(j+1)j}^\star(1,\ldots,l-1)} \right) + \delta(\dim) \\
\label{eq:capacityRegionMemoryless_proof_e}
& =  \frac{1}{n} \sum_{l=1}^{\dim} \left( \HdC{ \rvec{y}_{i_j}^j(l)}{ W_{j-1},\ldots,W_1, \rvec{y}_{i_j}^j(1,\ldots,l-1), \rvec{y}_{j(j-1)}^\star(1,\ldots,l-1)} \right.
\nonumber \\
& \left. - \HdC{ \rvec{y}_{i_j}^j(l) }{ W_{j},\ldots,W_1,\rvec{y}_{(j+1)j}^\star(1,\ldots,l-1), \rvec{y}_{j(j-1)}^\star(1,\ldots,l-1)} \right) + \delta(\dim) \\
\label{eq:capacityRegionMemoryless_proof_f}
& \leq \frac{1}{n} \sum_{l=1}^{\dim} \left( \HdC{ \rvec{y}_{i_j}^j(l) }{ W_{j-1},\ldots,W_1, \rvec{y}_{j(j-1)}^\star(1,\ldots,l-1)} \right.
\nonumber \\
&  \left. - \HdC{ \rvec{y}_{i_j}^j(l)}{ W_{j},\ldots,W_1,\rvec{y}_{(j+1)j}^\star(1,\ldots,l-1), \rvec{y}_{j(j-1)}^\star(1,\ldots,l-1)} \right) + \delta(\dim) \\
\label{eq:capacityRegionMemoryless_proof_g}
& = \frac{1}{n} \sum_{l=1}^{\dim} \left( \HdC{ \rvec{y}_{i_j}^j(l)}{ \rvec{v}_{j-1}(l) } - \HdC{ \rvec{y}_{i_j}^j(l) }{ \rvec{v}_{j-1}(l), \rvec{v}_j(l)} \right) + \delta(\dim) \\
& = \frac{1}{n} \sum_{l=1}^{\dim} \Icond{ \rvec{v}_{j}(l)}{ \rvec{y}_{i_j}^j(l) }{ \rvec{v}_{j-1}(l) } + \delta(\dim)
\end{align}
where $\delta(\dim) \to 0$ as $\dim \to \infty$. The equality in (\ref{eq:capacityRegionMemoryless_proof}) is due to the chain rule of mutual information. The equality in (\ref{eq:capacityRegionMemoryless_proof_b}) is due the the Markov chain $(W_1,...,W_M) - \rvec{x} - \rvec{y}_{i_j}^j - \rvec{y}_{j(j-1)}^\star$ and the memoryless nature of the channel, as can be seen in the following identity
\begin{align} \label{eq:capacityRegionMemoryless_proof_addition}
& \pdffun \left\{ \rvec{y}_{i_j}^j(l) | W_{j-1},...,W_1, \rvec{y}_{i_j}^j(1,...,l-1), \rvec{y}_{j(j-1)}^\star(1,...,l-1)  \right\}  \nonumber \\
& = \frac{\pdffun \left\{ \rvec{y}_{i_j}^j(l), \rvec{y}_{j(j-1)}^\star(1,...,l-1) | W_{j-1},...,W_1, \rvec{y}_{i_j}^j(1,...,l-1)  \right\} }{\pdffun \left\{ \rvec{y}_{j(j-1)}^\star(1,...,l-1) | W_{j-1},...,W_1, \rvec{y}_{i_j}^j(1,...,l-1) \right\} } \nonumber \\
& = \frac{\pdffun \left\{\rvec{y}_{i_j}^j(l) | W_{j-1},...,W_1, \rvec{y}_{i_j}^j(1,...,l-1)  \right\} \pdffun \left\{ \rvec{y}_{j(j-1)}^\star(1,...,l-1) | W_{j-1},...,W_1, \rvec{y}_{i_j}^j(1,...,l-1), \rvec{y}_{i_j}^j(l) \right\} }{\pdffun \left\{ \rvec{y}_{j(j-1)}^\star(1,...,l-1) | W_{j-1},...,W_1, \rvec{y}_{i_j}^j(1,...,l-1) \right\} } \nonumber \\
& = \frac{\pdffun \left\{\rvec{y}_{i_j}^j(l) | W_{j-1},...,W_1, \rvec{y}_{i_j}^j(1,...,l-1)  \right\} \pdffun \left\{ \rvec{y}_{j(j+1)}^\star(1,...,l-1) | \rvec{y}_{i_j}^j(1,...,l-1) \right\} }{\pdffun \left\{ \rvec{y}_{j(j+1)}^\star(1,...,l-1) | \rvec{y}_{i_j}^j(1,...,l-1) \right\} } \nonumber \\
& = \pdffun \left\{ \rvec{y}_{i_j}^j(l) | W_{j-1},...,W_1, \rvec{y}_{i_j}^j(1,...,l-1) \right\} \textrm{.} \nonumber
\end{align}
The inequality in (\ref{eq:capacityRegionMemoryless_proof_c}) follows from the fact that conditioning decreases entropy. (\ref{eq:capacityRegionMemoryless_proof_d}) and (\ref{eq:capacityRegionMemoryless_proof_e}) follow, again, form the Markov chain $(W_1,...,W_M) - \rvec{x} - \rvec{y}_{(j+1)j}^\star - \rvec{y}_{i_j}^j - \rvec{y}_{j(j-1)}^\star$ and the memoryless nature of the channel. (\ref{eq:capacityRegionMemoryless_proof_f}) follows from the fact that conditioning decreases entropy. In (\ref{eq:capacityRegionMemoryless_proof_g}) we used the following definition of auxiliary random variables:
\begin{eqnarray} \label{eq:capacityRegionMemoryless_auxiliary}
\rvec{v}_{j}(l) = \left( W_{j},...,W_1, \rvec{y}_{(j+1)j}^\star(1,...,l-1) \right).
\end{eqnarray}
Next we replace the index $l$ with a random variable $I$ which is uniformly distributed over the integers $1,\ldots,\dim$ and define $\rvec{v}_j = \left( \rvec{v}_j(I), I \right), \rvec{x} = \rvec{x}(I), \rvec{y}_{i_j}^j = \rvec{y}_{i_j}^j(I)$. As the channel is memoryless, we get
\begin{eqnarray}
\rate_j \leq \Icond{\rvec{v}_{j}}{ \rvec{y}_{i_j}^j}{ \rvec{v}_{j-1}} + \delta(\dim)
\end{eqnarray}
for all $j$ and for all $i_j$. Note that as the channel is memoryless these auxiliary random variables satisfy the Markov chain defined in (\ref{eq:capacityRegionMemoryless_MC}). Moreover, from this definition one can easily see that $\rvec{v}_0 \equiv \emptyset$ and the largest region will be attained when $\rvec{v}_M \equiv \rvec{x}$. Finally, as the above inequalities hold for every $j = 1, \ldots,M$ and every $i_j = 1,\ldots,K_j$, we complete the proof by taking $\dim$ to infinity.
\begin{flushright}
\IEEEQED
\end{flushright}

\bibliographystyle{IEEEtran}
\bibliography{bib}

\end{document}